\DeclareMathOperator{\Clo}{Clo}
\DeclareMathOperator{\size}{Size}
\DeclareMathOperator{\proj}{pr}
\DeclareMathOperator{\Var}{Var}
\DeclareMathOperator{\VPol}{UnPol}
\DeclareMathOperator{\Expanded}{ExpCov}
\DeclareMathOperator{\ExpShort}{Coverings}
\DeclareMathOperator{\ConOne}{Con}
\DeclareMathOperator{\PCCon}{ConPC}
\DeclareMathOperator{\LinCon}{ConLin}
\DeclareMathOperator{\ConPC}{ConPC}
\DeclareMathOperator{\BD}{BirthDate}
\DeclareMathOperator{\ConLin}{ConLin}
\DeclareMathOperator{\Opt}{Opt}
\DeclareMathOperator{\Congruences}{Con}
\DeclareMathOperator{\LinkedCon}{LinkedCon}
\DeclareMathOperator{\CSP}{CSP}
\DeclareMathOperator{\QCSP}{QCSP}
\DeclareMathOperator{\PCSP}{PCSP}
\DeclareMathOperator{\SurjCSP}{SurjCSP}
\DeclareMathOperator{\VCSP}{VCSP}
\DeclareMathOperator{\CSPWNU}{CSP-WNU}
\DeclareMathOperator{\Kids}{KIDs}
\DeclareMathOperator{\MinVar}{MinVar}
\newcommand{\Solve}{\mbox{\textsc{Solve}}}
\newcommand{\AnswerOrReduce}{\mbox{\textsc{AnswerOrReduce}}}
\newcommand{\CheckCycleConsistency}{\mbox{\textsc{CheckCycleConsistency}}}
\newcommand{\CheckIrreducibility}{\mbox{\textsc{CheckIrreducibility}}}
\newcommand{\CheckWeakerInstance}{\mbox{\textsc{CheckWeakerInstance}}}
\newcommand{\SolveLinearCase}{\mbox{\textsc{SolveLinearCase}}}
\newcommand{\RemoveTriv}{\mbox{\textsc{RemoveTrivialities}}}
\newcommand{\FactorizeInstance}{\mbox{\textsc{FactorizeInstance}}}
\newcommand{\Reduce}{\mbox{\textsc{Reduce}}}
\newcommand{\SolveLinearSystem}{\mbox{\textsc{SolveLinearSystem}}}
\newcommand{\WeakenConstraint}{\mbox{\textsc{WeakenConstraint}}}
\newcommand{\CheckAllTuples}{\mbox{\textsc{CheckAllTuples}}}
\newcommand{\SolveNonlinked}{\mbox{\textsc{SolveNonlinked}}}
\newcommand{\FindOneEquationLinked}{\mbox{\textsc{FindOneEquationLinked}}}
\newcommand{\FindOneEquationNonlinked}{\mbox{\textsc{FindOneEquationNonlinked}}}
\newcommand{\FindEquationsForNonlinked}{\mbox{\textsc{FindEquationsNonlinked}}}
\newcommand{\WeakenEveryConstraint}{\mbox{\textsc{WeakenEveryConstraint}}}
\newtheorem*{THMKeyConjunctionMain}{Theorem~\ref{KeyConjunctionMain}}
\newtheorem*{THMFindPerfectConstraint}{Theorem~\ref{FindPerfectConstraint}}
\newtheorem*{THMCannotLooseSolution}{Theorem~\ref{CannotLooseSolution}}
\newtheorem*{THMParPropertyMain}{Theorem~\ref{ParPropertyMain}}
\newtheorem*{THMParPropertyForSubcontraint}{Theorem~\ref{ParPropertyForSubcontraint}}
\newcommand{\cover}[1]{
{{#1}^{*}}
}
\renewcommand{\le}{\leqslant}
\renewcommand{\ge}{\geqslant}
\theoremstyle{definition}
\theoremstyle{plain}
\newtheorem{thm}{Theorem}[section]
\newtheorem{conj}{Conjecture}
\newtheorem{problem}{Problem}
\newtheorem{cons}{Corollary}[thm]
\newtheorem{lem}[thm]{Lemma}
\newtheorem{remark}{Remark}
\newtheorem{conslem}{Corollary}[thm]
\newtheorem*{THMmainLinearStep}{Theorem~\ref{LinearStep}}
\newtheorem*{thmAbsorptionCenterStep}{Theorem~\ref{AbsorptionCenterStep}}
\newtheorem*{thmPCStepThm}{Theorem~\ref{PCStepThm}}
\newtheorem*{AbsImpliesConsCorollary}{Corollary~\ref{AbsImpliesCons}}
\newtheorem*{CenterImpliesConsCorollary}{Corollary~\ref{CenterImpliesCons}}
\newtheorem*{PCImpliesCorollary}{Corollary~\ref{PCImplies}}
\newtheorem*{LinearImpliesCorollary}{Corollary~\ref{LinearImplies}}
\newtheorem*{CenterLessThanThreeCorollary}{Corollary~\ref{CenterLessThanThree}}
\newtheorem*{PCLessThanThreeCorollary}{Corollary~\ref{PCLessThanThree}}
\begin{document}

\title{A Proof of the CSP Dichotomy Conjecture}

\author{Dmitriy Zhuk\\
Department of Mechanics and Mathematics \\
Lomonosov Moscow State University\\
Moscow, Russia
}
\date{}
\maketitle


\begin{abstract}
Many natural combinatorial problems can be expressed as constraint satisfaction problems.
This class of problems is known to be NP-complete in general, but certain restrictions on the
form of the constraints can ensure tractability.
The standard way to parameterize interesting subclasses of the constraint satisfaction
problem is via finite constraint languages. The main
problem is to classify those subclasses that are solvable in polynomial
time and those that are NP-complete.
It was conjectured that if a constraint language has a weak near-unanimity polymorphism
then the corresponding constraint satisfaction problem is tractable,
otherwise it is NP-complete.

In the paper we present an algorithm that solves Constraint Satisfaction Problem
in polynomial time for constraint languages having a weak near unanimity polymorphism,
which proves the remaining part of the conjecture.
\end{abstract}

\section{Introduction}%

The \emph{Constraint Satisfaction Problem (CSP)} is the problem of deciding whether there is an assignment to a set of variables
subject to some specified constraints. 
Formally, the \emph{Constraint Satisfaction Problem} is defined as a triple $\langle \mathbf{X} , \mathbf{D} , \mathbf{C} \rangle$,
where
\begin{itemize}
\item
$\mathbf{X}=\{x_1,\ldots ,x_n\}$ is a set of variables,
\item
$\mathbf{D}=\{D_{1},\ldots ,D_{n}\}$ is a set of the respective domains,
\item
$\mathbf{C}=\{C_{1},\ldots ,C_{m}\}$ is a set of constraints,
\end{itemize}
where
each variable $x_{i}$ can take on values in the nonempty domain $D_{i}$,
every \emph{constraint} $C_{j}\in \mathbf{C}$ is a pair
$(t_{j},\rho_{j})$ where
$t_{j}$ is a tuple of variables of length $m_{j}$, called the \emph{constraint scope},
and $\rho_{j}$ is an $m_{j}$-ary relation on the corresponding domains,
called the \emph{constraint relation}.

The question is whether there exists \emph{a solution} to
$\langle \mathbf{X} , \mathbf{D} , \mathbf{C} \rangle$,
that is a mapping that assigns a value from $D_{i}$ to every variable $x_{i}$
such that
for each constraint $C_{j}$ the image of the constraint scope is a member of the constraint relation.

In this paper we consider only CSP over finite domains.
The general CSP is known to be NP-complete \cite{Num26, Num30}; however, certain restrictions
on the allowed form of constraints involved may ensure tractability (solvability in polynomial time)
\cite{Num4,Num20,Num22,Num23,CSPconjecture,BulatovAboutCSP}.
Below we provide a formalization to this idea.

To simplify the formulation of the main result we assume that the domain of every variable is a finite set $A$. 
Later we will assume that the domain of every variable is
a unary relation from the constraint language $\Gamma$
(see below).
By $R_{A}$ we denote the set of all finitary relations on $A$,
that is, subsets of $A^{m}$ for some $m$.
Thus, all the constraint relations are from $R_{A}$.

For a set of relations $\Gamma\subseteq R_{A}$ by $\CSP(\Gamma)$
we denote the Constraint Satisfaction Problem where all
the constraint relations are from $\Gamma$.
The set $\Gamma$ is called \emph{a constraint language}.
Another way to formalize the Constraint Satisfaction Problem is
via conjunctive formulas.
Every $h$-ary relation on $A$ can be viewed as
a predicate, that is, a
mapping $A^{h}\rightarrow \{0,1\}$.
Suppose $\Gamma\subseteq R_{A}$, then $\CSP(\Gamma)$ is the following decision problem:
given a formula
$$\rho_{1}(v_{1,1},\ldots,v_{1,n_{1}})
\wedge
\dots
\wedge
\rho_{s}(v_{s,1},\ldots,v_{s,n_{s}}),$$
where $\rho_{1},\dots,\rho_{s}\in \Gamma$, 
and $v_{i,j}\in \{x_{1},\dots,x_{n}\}$ for every $i,j$;
decide whether this formula is satisfiable.

It is well known that many combinatorial problems can be expressed as $\CSP(\Gamma)$
for some constraint language $\Gamma$.
Moreover, for some sets $\Gamma$ the corresponding decision problem can be solved in polynomial time;
while for others it is NP-complete.
It was conjectured that
$\CSP(\Gamma)$ is either in P, or NP-complete \cite{FederVardi}.

\begin{conj}\label{FederVardiConj}
Suppose $\Gamma\subseteq R_{A}$ is a finite set of relations.
Then $\CSP(\Gamma)$ is either solvable in polynomial time, or $NP$-complete.
\end{conj}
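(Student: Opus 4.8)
The plan is to prove the dichotomy along the algebraic line hinted at in the abstract: $\CSP(\Gamma)$ is tractable exactly when $\Gamma$ admits a weak near unanimity polymorphism, and NP-complete otherwise. The first moves are the standard algebraic reductions. Since $\CSP(\Gamma)$ depends only on the relational clone generated by $\Gamma$ (the set $\Inv\Pol(\Gamma)$ of relations primitively positively definable from $\Gamma$), and a pp-definition can be simulated inside a CSP instance with only polynomially many fresh variables, I may freely replace $\Gamma$ by $\Inv\Pol(\Gamma)$. Next I pass to a core of $\Gamma$ — a retract on which every unary polymorphism is a bijection — which is polynomial-time equivalent; for a core one may additionally throw in all singleton unary relations $\{a\}$, $a\in A$, without changing the complexity, and this makes every polymorphism idempotent. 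So it suffices to classify $\CSP(\Gamma)$ when $\mathbb A=(A,\Pol(\Gamma))$ is a finite idempotent algebra and $\Gamma=\Inv(\mathbb A)$.

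For the hardness half, suppose $\mathbb A$ has no weak near unanimity term operation. By the characterization of Mar\'oti and McKenzie this yields a subalgebra $\mathbb B\le\mathbb A$ and a congruence $\theta\in\Congruences(\mathbb B)$ such that $\mathbb B/\theta$ is a two-element algebra all of whose term operations are projections. Every relation on a two-element set is invariant under such an algebra, so its CSP contains, e.g., not-all-equal $3$-SAT and is NP-complete; unwinding the subalgebra-then-quotient construction as a pp-interpretation produces a polynomial-time reduction of this hard problem to $\CSP(\Gamma)$, whence $\CSP(\Gamma)$ is NP-complete. (For the Feder--Vardi dichotomy alone, any NP-hardness witness at this point suffices.)

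The real work is the converse: when $\mathbb A$ does carry a weak near unanimity polymorphism, $\CSP(\Gamma)$ is solvable in polynomial time, and here an explicit algorithm is required. My approach is iterative. Keep the instance locally consistent — cycle-consistent and, say, $(2,3)$-consistent, by the usual propagation — and then either read off a solution directly or perform one of a bounded repertoire of \emph{progress moves}, each preserving at least one solution whenever the instance is satisfiable: shrink some current domain $D_i$, or adjoin a new constraint that is implied by the present ones. Which move applies is read off from the structure of the current relations under the weak near unanimity operation, via three cases. \textbf{Absorbing/central step:} if a proper subuniverse of some domain absorbs the weak near unanimity operation (Barto--Kozik absorption) or contains a central element, restrict that domain to it. \textbf{Linear step:} if a factor of the instance is abelian, extract the corresponding affine subsystem and solve it by Gaussian elimination over the associated module. \textbf{PC step:} otherwise path consistency applied to suitable patterns produces a nontrivial implied constraint. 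Termination follows by measuring the instance by the multiset of domain sizes together with its constraint set and checking that each move strictly decreases this measure.

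The principal obstacle is exactly this last step: controlling the interaction between the affine parts of an instance — where local consistency is demonstrably insufficient and one must genuinely solve linear systems over modules — and the bounded-width parts, where propagation alone suffices. The delicate points are (i) locating, inside a consistent instance, a subinstance that is truly affine and whose solution set meaningfully constrains the remaining variables, all without destroying solvability; (ii) showing that the class of admissible instances is closed under each of the three reductions, so that the induction actually goes through after a reduction is applied; and (iii) bounding the number of iterations by a polynomial, which is what forces the progress measure to be designed with care. Everything else — the Galois-connection reduction, passing to a core, idempotency, and the Mar\'oti--McKenzie structural dichotomy underlying the hardness half — is by now routine.
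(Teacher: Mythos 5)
Your overall architecture (reduce to an idempotent core, cite Mar\'oti--McKenzie for NP-hardness without a WNU, then give a polynomial-time algorithm in the WNU case built from absorption/center, PC, and linear reductions) matches the shape of the paper, and the hardness half plus the standard Galois-connection/core/idempotency preprocessing is indeed routine and correctly handled. But the tractability half, which is the entire content of the result, is not proved: at the crucial points you assert exactly what needs to be shown. The statement that each ``progress move'' --- restricting a domain to a proper binary absorbing subuniverse, a center, or an equivalence class of a congruence with polynomially complete quotient --- ``preserves at least one solution whenever the instance is satisfiable'' is the paper's Theorems~\ref{AbsorptionCenterStep} and~\ref{PCStepThm}, and their proof is not a soft consistency argument: it requires cycle-consistency \emph{and} irreducibility of the instance, the notion of a minimal strategy and of instances crucial in a reduction, the parallelogram property of crucial constraints (Theorem~\ref{ParPropertyMain}), bridges between congruences, and the combinatorial Theorem~\ref{newcolonies}, all feeding into Theorem~\ref{CannotLooseSolution}. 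Nothing in your proposal substitutes for this; you list it as a ``delicate point'' but give no argument, so the induction you invoke has no base to stand on.

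The second, and arguably central, gap is the affine case. Your plan --- ``extract the corresponding affine subsystem and solve it by Gaussian elimination over the associated module'' --- fails as stated, because after factoring each domain by its minimal linear congruence the resulting linear system $\Theta_L$ can be solvable while none of its solutions lifts to the original instance; local consistency plus Gaussian elimination does not detect this. The paper's key new idea is precisely how to proceed then: one shows (Theorem~\ref{LinearStep}) that for a linked, cycle-consistent, irreducible instance that is crucial in a non-liftable solution of $\Theta_L$ there is a constraint that can be augmented by an extra $\mathbb Z_p$-valued variable via a bridge-derived relation $\zeta$, so that the set of liftable solutions of $\Theta_L$ is itself cut out by one more linear equation, which can then be \emph{learned} with polynomially many recursive calls and added to the system, and the whole loop repeats. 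Without this mechanism (or an equivalent one) there is no way to bound the interaction between the affine factor and the rest of the instance, and your termination measure (``multiset of domain sizes together with its constraint set'') is not shown to decrease under the moves you allow, nor is the recursion depth bounded as in Lemma~\ref{RecursionTwoDepth}. So the proposal is a correct high-level plan that identifies the obstacles but leaves the actual proof of the tractability direction --- the substance of the dichotomy theorem --- unestablished.
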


We say that an operation
$f\colon A^{n}\to A$ \emph{preserves} the relation $\rho\in R_{A}$ of arity $m$
if for any tuples $(a_{1,1},\ldots,a_{1,m}),\dots,(a_{n,1},\ldots,a_{n,m})\in \rho$
the tuple
$(f(a_{1,1},\ldots,a_{n,1}),\ldots,f(a_{1,m},\ldots,a_{n,m}))$ is in $\rho$.
We say that an operation \emph{preserves a set of relations $\Gamma$} if it preserves every relation in $\Gamma$.
A mapping $f:A\to A$ is called \emph{an endomorphism} of $\Gamma$ if it preserves $\Gamma$. 

\begin{thm}\label{coreHasSameComplexity}\cite{jeavons1998algebraic}
Suppose $\Gamma\subseteq R_{A}$.
If $f$ is an endomorphism of $\Gamma$, 
then $\CSP(\Gamma)$ is polynomially reducible to $\CSP(f(\Gamma))$ and vice versa,
where $f(\Gamma)$ is a constraint language with domain $f(A)$
defined by $f(\Gamma) = \{f(\rho)\colon \rho\in \Gamma\}$.
\end{thm}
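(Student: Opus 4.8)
The plan is to prove both polynomial reductions directly at the level of instances, using the conjunctive-formula presentation of $\CSP$, and to build everything on the single elementary observation that a unary operation preserving a relation maps that relation into itself.

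First I would record the basic fact. Since $f$ is an endomorphism of $\Gamma$, applying the definition of ``preserves'' with $n=1$ shows that for every $\rho\in\Gamma$ of arity $m$ and every tuple $(a_{1},\dots,a_{m})\in\rho$ the tuple $(f(a_{1}),\dots,f(a_{m}))$ again lies in $\rho$. Hence $f(\rho)\subseteq\rho$ for each $\rho\in\Gamma$, and by construction $f(\rho)=\{(f(a_{1}),\dots,f(a_{m}))\colon(a_{1},\dots,a_{m})\in\rho\}$ is a relation on the domain $f(A)$. Because $\Gamma$ is finite, the relations $f(\rho)$, together with a fixed choice of preimage for each of them, can be precomputed once and stored in a table, so every constraint is transformed in $O(1)$ time and all reductions below are polynomial (indeed essentially many-one, the variables and constraint scopes being left untouched).

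For the reduction from $\CSP(\Gamma)$ to $\CSP(f(\Gamma))$, given an instance written as $\Phi=\rho_{1}(\bar x_{1})\wedge\dots\wedge\rho_{s}(\bar x_{s})$ with each $\rho_{i}\in\Gamma$, I would output $\Phi'=f(\rho_{1})(\bar x_{1})\wedge\dots\wedge f(\rho_{s})(\bar x_{s})$ over the domain $f(A)$. If $\Phi'$ has a solution, then since $f(\rho_{i})\subseteq\rho_{i}$ and $f(A)\subseteq A$ the same assignment solves $\Phi$. Conversely, if $s$ solves $\Phi$, then $f\circ s$ solves $\Phi'$: it takes values in $f(A)$, and for each constraint the image of $\bar x_{i}$ under $s$ lies in $\rho_{i}$, so its image under $f\circ s$ lies in $f(\rho_{i})$ by the displayed description of $f(\rho_{i})$. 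The opposite reduction is symmetric: given an instance of $\CSP(f(\Gamma))$ over $f(A)$, replace each constraint relation $f(\rho_{i})$ by the fixed chosen preimage $\rho_{i}\in\Gamma$; satisfiability over $f(A)$ implies satisfiability of the resulting instance over $A$ because $f(\rho_{i})\subseteq\rho_{i}$, and any solution $s$ over $A$ yields the solution $f\circ s$ over $f(A)$ by the same computation.

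There is essentially no hard step; the only points that need care are bookkeeping ones — checking that moving between the triple presentation and the conjunctive-formula presentation preserves complexity (in the triple version one additionally intersects each domain $D_{i}$ with $f(A)$ when passing to $\CSP(f(\Gamma))$), and checking that fixing an arbitrary preimage for each relation of $f(\Gamma)$ does not affect correctness, since both directions use only the inclusion $f(\rho)\subseteq\rho$ and the identity $f(\rho)=\{(f(a_{1}),\dots,f(a_{m}))\colon(a_{1},\dots,a_{m})\in\rho\}$. I would conclude by emphasizing that the composition-with-$f$ trick is exactly what makes both directions go through, so the equivalence is immediate once $f(\rho)\subseteq\rho$ is established.
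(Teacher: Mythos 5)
Your proof is correct and is the standard argument (compose any solution with $f$ in one direction, use $f(\rho)\subseteq\rho$ in the other); the paper does not prove this theorem itself but cites it, and the cited proof proceeds in exactly this way. The only cosmetic remark is that the precomputation of the relations $f(\rho)$ does not really need finiteness of $\Gamma$, since each constraint relation in an instance is given explicitly and $f$ can be applied to it tuple by tuple in linear time.
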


A constraint language is \emph{a core} if every endomorphism of $\Gamma$ is a bijection.
It is not hard to show that if $f$ is an endomorphism of $\Gamma$ with minimal range,
then $f(\Gamma)$ is a core. Another important fact is that we can add all
singleton unary relations to a core constraint language
without increasing the complexity of its $\CSP$.
By $\sigma_{=a}$ we denote the unary relation $\{a\}$.

\begin{thm}\label{addingIdempotency}\cite{CSPconjecture}
Let $\Gamma\subseteq R_{A}$ be a core constraint language, and
$\Gamma' = \Gamma\cup \{\sigma_{=a}\mid a\in A\}$.
Then $\CSP(\Gamma')$ is polynomially reducible to $\CSP(\Gamma)$.
\end{thm}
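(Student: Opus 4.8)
The plan is to introduce a single $|A|$-ary relation that records the automorphism group of $\Gamma$ and then simulate the singleton constraints by ``pointing at'' its coordinates. Fix an enumeration $A=\{a_1,\dots,a_k\}$ and set
$$\rho^* \;=\; \{\,(f(a_1),\dots,f(a_k)) \mid f \text{ is an endomorphism of } \Gamma\,\} \;\subseteq\; A^k .$$
Since $\Gamma$ is a core, every endomorphism of $\Gamma$ is a bijection, so $\rho^*$ is precisely the orbit of the tuple $(a_1,\dots,a_k)$ under the automorphism group of $\Gamma$. In particular $(a_1,\dots,a_k)\in\rho^*$, and whenever $a_i\ne a_j$ the $i$-th and $j$-th coordinates differ in every tuple of $\rho^*$ (an automorphism is injective). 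Note that $\rho^*$ together with a pp-definition of it over $\Gamma$ depends only on $\Gamma$, not on any input instance.

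First I would show that $\rho^*$ is primitive positive definable from $\Gamma$. By the standard Galois correspondence between $\Pol$ and $\Inv$, it suffices to check that $\rho^*$ is preserved by every polymorphism $f\in\Pol(\Gamma)$, say of arity $n$. Take tuples $t_1,\dots,t_n\in\rho^*$, written $t_i=(\pi_i(a_1),\dots,\pi_i(a_k))$ for automorphisms $\pi_1,\dots,\pi_n$ of $\Gamma$. Then the coordinatewise value $f(t_1,\dots,t_n)$ equals $(g(a_1),\dots,g(a_k))$ where $g(x):=f(\pi_1(x),\dots,\pi_n(x))$. For any $\sigma\in\Gamma$ and any tuple $(b_1,\dots,b_m)\in\sigma$, each $(\pi_i(b_1),\dots,\pi_i(b_m))$ lies in $\sigma$, so applying $f$ to these $n$ tuples (which $f$ preserves) shows $(g(b_1),\dots,g(b_m))\in\sigma$; hence $g$ is an endomorphism of $\Gamma$, therefore by coreness an automorphism, and so $f(t_1,\dots,t_n)\in\rho^*$. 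Consequently $\rho^*\in\langle\Gamma\rangle$, which means $\CSP(\Gamma\cup\{\rho^*\})$ is polynomial-time reducible to $\CSP(\Gamma)$: simply replace every constraint whose relation is $\rho^*$ by the fixed pp-formula defining it, promoting its existentially quantified variables to ordinary CSP variables.

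Then I would give the reduction proper. Given an instance $I$ of $\CSP(\Gamma')$ on variables $x_1,\dots,x_n$, first check that no variable carries two distinct singleton constraints (if it does, output a fixed unsatisfiable instance of $\CSP(\Gamma)$). Otherwise build $I'$ over $\Gamma\cup\{\rho^*\}$ as follows: introduce fresh variables $y_1,\dots,y_k$, add the single constraint $\rho^*(y_1,\dots,y_k)$, delete every unary constraint $\sigma_{=a_i}(x_j)$ and simultaneously substitute $y_i$ for $x_j$ throughout $I$, and retain all remaining constraints (which come from $\Gamma$). Finally expand the $\rho^*$-constraint via its pp-definition to obtain an instance of $\CSP(\Gamma)$. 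This transformation runs in time linear in $|I|$. For correctness: if $\varphi$ solves $I$, then $\varphi$ extended by $y_i\mapsto a_i$ solves $I'$, since $(a_1,\dots,a_k)\in\rho^*$ and $\varphi(x_j)=a_i$ whenever $\sigma_{=a_i}(x_j)$ occurred in $I$. Conversely, if $\psi$ solves $I'$, then $(\psi(y_1),\dots,\psi(y_k))\in\rho^*$, so there is an automorphism $\pi$ of $\Gamma$ with $\psi(y_i)=\pi(a_i)$ for all $i$; the map $\pi^{-1}\circ\psi$, restricted to $x_1,\dots,x_n$, satisfies every $\Gamma$-constraint (a unary automorphism is a polymorphism of $\Gamma$) and sends each $x_j$ that was forced to $a_i$ to $\pi^{-1}(\psi(y_i))=a_i$, hence solves $I$.

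I expect the only nonroutine step to be the verification that $\rho^*$ is invariant under all of $\Pol(\Gamma)$ — equivalently, that composing a polymorphism with automorphisms in each of its arguments again produces an endomorphism, which must therefore be an automorphism. This is exactly where the coreness hypothesis is used essentially; the construction of $I'$ and the two directions of correctness are then straightforward bookkeeping.
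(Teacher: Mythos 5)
Your proposal is correct and is essentially the standard argument behind this theorem, which the paper does not prove itself but cites from \cite{CSPconjecture}: one shows the orbit relation $\rho^{*}=\{(f(a_{1}),\ldots,f(a_{k}))\mid f \text{ an endomorphism of }\Gamma\}$ is preserved by all polymorphisms (coreness turning the composed endomorphism into an automorphism), hence pp-definable over $\Gamma$, and then simulates the constraints $\sigma_{=a_i}$ by identifying variables with the coordinates of a single $\rho^{*}$-constraint. The only loose end is the appeal to ``a fixed unsatisfiable instance of $\CSP(\Gamma)$'' for clashing singletons, which you should either justify or avoid by noting that identifying $y_{i}$ with $y_{j}$ ($i\neq j$) in the $\rho^{*}$-constraint already produces an unsatisfiable $\CSP(\Gamma)$ instance, since every tuple of $\rho^{*}$ has pairwise distinct entries in those coordinates.
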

Therefore, to prove Conjecture~\ref{FederVardiConj} it is sufficient to consider only the case when
$\Gamma$ contains all unary singleton relations.
In other words, all the predicates $x = a$, where $a\in A$, are in the constraint language $\Gamma$.

In \cite{Schaefer} Schaefer
classified all tractable constraint languages over two-element
domain. In \cite{BulatovForThree} Bulatov generalized the result for three-element domain.
His dichotomy theorem was formulated in terms of a $G$-set.
Later, the dichotomy conjecture was formulated in several different forms (see \cite{CSPconjecture}).

The result of Mckenzie and Mar{\'o}ti~\cite{miklos} allows us to formulate the dichotomy conjecture in the following nice way.
An operation $f$ on a set $A$ is called \emph{a weak near-unanimity operation (WNU)} if it satisfies 
$f(y,x,\ldots,x) = f(x,y,x,\ldots,x) = \dots = f(x,x,\ldots,x,y)$
for all $x,y\in A$.
An operation $f$ is called \emph{idempotent} if $f(x,x,\ldots,x) = x$ for all $x\in A$.

\begin{conj}\label{mainconj}
Suppose $\Gamma\subseteq R_{A}$ is a finite set of relations.
Then $\CSP(\Gamma)$ can be solved in polynomial time if there exists a WNU
preserving $\Gamma$;
$\CSP(\Gamma)$ is NP-complete otherwise.
\end{conj}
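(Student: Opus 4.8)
Conjecture~\ref{mainconj} has two halves. The NP\h completeness half is already known and is not the subject of this paper: if no WNU operation preserves $\Gamma$, then by the results of Mar\'oti--McKenzie together with Bulatov--Jeavons--Krokhin the (idempotent) algebra associated with the core of $\Gamma$ has a two\h element subfactor on which every term is a projection; such a subfactor pp\h interprets the relations of a nontrivial $G$-set, hence the relation NOT\h ALL\h EQUAL, which makes $\CSP(\Gamma)$ NP\h complete. So the real content is the tractability half: \emph{if some WNU preserves $\Gamma$, then $\CSP(\Gamma)$ is solvable in polynomial time.} By Theorems~\ref{coreHasSameComplexity} and~\ref{addingIdempotency} it suffices to treat the case in which $\Gamma$ is a core containing all unary singletons; then the WNU polymorphism may be taken idempotent, and every domain arising during the algorithm carries an idempotent\h WNU algebra structure with all its subalgebras, congruences and subpowers definable from $\Gamma$. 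The plan is therefore to build a recursive polynomial\h time algorithm on CSP instances over such algebras.

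The algorithm first enforces a strong local consistency: $(2,3)$\h consistency, then \emph{cycle\h consistency} and \emph{irreducibility} (no constraint can be tightened by projecting a conjunction of constraints taken around a cycle). If some domain becomes empty we answer ``no''; if all domains are singletons we read off a solution. Otherwise we pick a variable whose domain $D$ has more than one element and branch on the algebraic structure of $D$ into four cases: (i) $D$ has a proper \emph{absorbing} subuniverse; (ii) $D$ has a proper \emph{central} subuniverse; (iii) $D$ has a nontrivial quotient that is polynomially complete; (iv) none of these --- the \emph{linear} case. In cases (i)--(iii) we shrink $D$ (to the absorbing or central subuniverse, or to one block of the congruence), recurse, and must argue that satisfiability is preserved; in case (iv) every domain behaves like a module over a finite ring, the whole instance is essentially a system of linear equations, and it is solved by Gaussian elimination --- this is the role of the separate linear\h algebra procedure and of Theorem~\ref{LinearStep}, with Theorems~\ref{AbsorptionCenterStep} and~\ref{PCStepThm} handling the other three steps.

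The heart of the argument is the guarantee that shrinking a domain never loses \emph{all} solutions; this is Theorem~\ref{CannotLooseSolution} and its supporting apparatus. A naive restriction to an absorbing subuniverse genuinely can lose solutions, so instead one develops the theory of \emph{key relations}, of \emph{(perfect) constraints}, and of the \emph{parallelogram property} (Theorems~\ref{KeyConjunctionMain}, \ref{FindPerfectConstraint}, \ref{ParPropertyMain}, \ref{ParPropertyForSubcontraint}) to show that, once the instance is made consistent, the projection of the solution set to $D$ must meet the chosen absorbing or central subuniverse --- or one of a bounded family of candidates --- and analogously for the PC step; each reduction then either succeeds outright or yields a strictly stronger constraint that preserves all solutions, so the recursion terminates after polynomially many refinements.

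I expect the main obstacle to be precisely this ``no lost solution'' invariant in the presence of the linear case. Absorption, centrality, and PC quotients can all be absent while the algebra still fails to be purely affine, so the reductions must be iterated, their interaction with the linear part controlled, and one must verify that the level of consistency maintained by the algorithm is strong enough to keep the invariant valid after every reduction and at every level of the recursion. Making all of this go through uniformly is where essentially all of the combinatorial and universal\h algebraic machinery of the paper --- key and perfect constraints, the parallelogram property, and the fine analysis of absorbing, central and PC subfactors --- is needed.
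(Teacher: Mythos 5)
Your outline does track the paper's overall architecture (core with singletons, cycle-consistency and irreducibility, then absorbing/central/PC reductions justified by Theorems~\ref{AbsorptionCenterStep}, \ref{PCStepThm}, \ref{CannotLooseSolution}, with hardness delegated to Theorem~\ref{MiklosMckenzie}), but your case (iv) contains a genuine gap, and it is exactly where the main content of the paper lives. First, the assertion that when no absorbing subuniverse, center, or PC quotient exists ``every domain behaves like a module over a finite ring'' is not something you can take for granted: the paper needs Theorem~\ref{NextReduction}, proved via Rosenberg's classification of maximal clones, to show that in the residual case each domain $D_{i}$ merely has a \emph{proper congruence} $\sigma_{i}$ whose quotient is linear. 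Second, and more seriously, the instance is \emph{not} ``essentially a system of linear equations solved by Gaussian elimination.'' Only the factorized instance $\Theta_{L}$ is linear; $\Theta$ itself is not, and a solution of $\Theta_{L}$ need not lift to a solution of $\Theta$. Deciding which solutions of $\Theta_{L}$ lift is the central difficulty, and your sketch silently assumes it away. The paper handles it by showing that the liftable tuples form an affine subspace and by \emph{learning} that subspace one equation at a time: it weakens constraints until $\Theta'$ is crucial in a non-liftable reduction, invokes Theorem~\ref{LinearStep} (whose proof requires the bridge, connectedness, and parallelogram machinery) to attach an auxiliary $\mathbb{Z}_{p}$-variable to some constraint, and then determines a new linear equation to add to $Eq$ by polynomially many recursive calls on instances with strictly smaller domains, iterating until the remaining subspace is empty or a liftable point is found. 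Without this mechanism your step (iv) simply does not decide satisfiability of $\Theta$.

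A smaller inaccuracy: the ``no lost solution'' invariant is not that the solution set must meet the chosen subuniverse ``or one of a bounded family of candidates''; Theorems~\ref{AbsorptionCenterStep} and \ref{PCStepThm} assert outright that for a cycle-consistent irreducible instance a solution exists in the reduced domain whenever one exists at all, and Theorem~\ref{CannotLooseSolution} is proved (simultaneously with Theorems~\ref{ParPropertyMain}--\ref{FindPerfectConstraint}) by induction on reductions, not by a candidate-enumeration argument. Also note that the polynomial bound on the recursion is not ``polynomially many refinements'' in general but a bounded \emph{depth} of each type of recursive call (at most $|\Gamma|$ for weakenings, at most $|A|$ for domain reductions), which is what keeps the total running time polynomial.
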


It is not hard to see that
the existence of a WNU preserving $\Gamma$ is equivalent to the existence of
a WNU preserving a core of $\Gamma$,
and also equivalent to
the existence of an idempotent WNU preserving the core.
Hence, Theorems~\ref{coreHasSameComplexity} and \ref{addingIdempotency}
imply that it is sufficient to prove Conjecture~\ref{mainconj}
for a core and an idempotent WNU.

One direction of this conjecture follows from \cite{miklos}.
\begin{thm}\label{MiklosMckenzie}\cite{miklos}
Suppose $\Gamma\subseteq R_{A}$ and 
$\{\sigma_{=a}\mid a\in A\} \subseteq \Gamma$.
If there exists no WNU
preserving $\Gamma$, then $\CSP(\Gamma)$ is NP-complete.
\end{thm}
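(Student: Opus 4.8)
The plan is to prove NP-hardness by locating, inside the relational clone generated by $\Gamma$, a single relation whose CSP is already known to be NP-complete; since membership in NP is immediate, this suffices. Because $\{\sigma_{=a}\mid a\in A\}\subseteq\Gamma$, every polymorphism of $\Gamma$ is idempotent, so $\mathbb A:=(A,\Pol(\Gamma))$ is a finite idempotent algebra and $\Pol(\Gamma)=\Clo(\mathbb A)$.

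First I would translate the hypothesis into the language of universal algebra. By the theorem of Mar\'oti and McKenzie~\cite{miklos}, together with tame congruence theory, for a finite idempotent algebra the existence of a WNU term operation is equivalent to the existence of a Taylor term, and hence to the variety $\mathcal V(\mathbb A)$ omitting the unary type; the latter is in turn equivalent to the nonexistence of a subalgebra $\mathbb B\le\mathbb A$ and a congruence $\theta\in\Congruences(\mathbb B)$ with $|\mathbb B/\theta|=2$ such that every basic operation of $\mathbb B/\theta$ is a projection. Thus the assumption ``no WNU preserves $\Gamma$'' produces such a pair $(\mathbb B,\theta)$, and since a composition of projections is again a projection, \emph{every} term operation of $\mathbb B/\theta$ is a projection.

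Next I would manufacture the hard relation. Writing $\{E_0,E_1\}$ for the two classes of $\theta$, put
\[
\rho=\{(a_1,a_2,a_3)\in B^3 : [a_1]_\theta,\,[a_2]_\theta,\,[a_3]_\theta\text{ are not all equal}\}.
\]
I claim $\rho\in\Inv(\Pol(\Gamma))$. Indeed, take $f\in\Pol(\Gamma)$ of arity $n$ and tuples $\bar t^{\,1},\dots,\bar t^{\,n}\in\rho$; applying $f$ coordinatewise keeps us in $B^3$ since $B$ is a subuniverse, and the $\theta$-class of the $i$-th coordinate equals $\bar f([t^1_i]_\theta,\dots,[t^n_i]_\theta)$, where $\bar f$ is the term operation of $\mathbb B/\theta$ induced by $f|_B$. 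As $\bar f$ is a projection, say onto coordinate $k$, the resulting triple of classes is $([t^k_1]_\theta,[t^k_2]_\theta,[t^k_3]_\theta)$, which is not all equal because $\bar t^{\,k}\in\rho$. By the Galois correspondence between operations and relations on a finite set, $\Inv(\Pol(\Gamma))$ coincides with the relational clone generated by $\Gamma$, so $\rho$ admits a fixed finite primitive positive definition over $\Gamma$.

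Finally I would run the standard reduction. A primitive positive definition of $\rho$ over $\Gamma$ turns into a gadget giving a polynomial-time reduction $\CSP(\{\rho\})\le_{p}\CSP(\Gamma)$ (existential quantifiers become fresh variables, and equalities are handled by variable identification). But $\CSP(\{\rho\})$ is equivalent to positive not-all-equal $3$-satisfiability: an instance built from $\rho$-constraints only forces every variable into $B$, and identifying each value with its $\theta$-class converts solutions into NAE-assignments on $\{E_0,E_1\}$ and, via fixed representatives $e_0\in E_0$, $e_1\in E_1$, back again. Since positive NAE-$3$-SAT is NP-complete by Schaefer's classification~\cite{Schaefer} (its Boolean relation is neither $0$-valid, $1$-valid, bijunctive, Horn, dual-Horn, nor affine), so is $\CSP(\Gamma)$. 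The one genuinely deep input here is the equivalence ``$\Gamma$ has no WNU polymorphism $\Leftrightarrow$ $\mathcal V(\mathbb A)$ admits the unary type $\Leftrightarrow$ some quotient of a subalgebra of $\mathbb A$ is the two-element set with only projections,'' which I would take wholesale from~\cite{miklos} and tame congruence theory; the quotient relation, the Galois correspondence, and the gadget reduction are all routine.
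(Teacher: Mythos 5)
The paper does not actually prove this statement: it is quoted as a black box from Mar\'oti and McKenzie~\cite{miklos} (together with the standard algebraic hardness machinery), so there is no internal proof to compare against. Your reconstruction is the standard argument and is correct in outline, but it is worth being precise about what is being imported from where. The genuinely ``deep input'' is really two separate cited facts: (i) the Mar\'oti--McKenzie equivalence between having a WNU term and having a Taylor term (equivalently, $\mathcal V(\mathbb A)$ omitting the unary type), and (ii) the idempotent characterization of admitting the unary type, namely that a finite \emph{idempotent} algebra without a Taylor term has a subalgebra $\mathbb B$ and a congruence $\theta$ with $|\mathbb B/\theta|=2$ on which every induced operation is a projection; this second fact is due to tame congruence theory together with the Bulatov--Jeavons--Krokhin analysis, not to~\cite{miklos} itself, and it fails without idempotency --- so your use of the singleton relations $\sigma_{=a}$ to force idempotency is not a cosmetic remark but the step that makes the chain of equivalences valid. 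Given $(\mathbb B,\theta)$, your verification that the not-all-equal relation $\rho$ is invariant under $\Pol(\Gamma)$, the appeal to the Geiger/BKKR Galois correspondence over a finite domain to get a fixed pp-definition of $\rho$ (with equalities removed by identifying variables), and the reduction from positive NAE-$3$-SAT are all standard and correct; note also that the pp-definition involves only finitely many relations of $\Gamma$, so the argument gives NP-hardness even when $\Gamma$ is infinite, matching the way the theorem is stated in the paper. In short: your proof is a faithful reconstruction of the known proof that the paper cites rather than reproves.
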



The dichotomy conjecture was proved for many special cases:
for CSPs over undirected graphs \cite{hell1990complexity},
for CSPs over digraphs with no sources or sinks \cite{barto2009csp},
for constraint languages containing all unary relations~\cite{bulatov2003conservative},
and many others.
More information about the algebraic approach to CSP can be found in \cite{bartopolymorphisms}.

In this paper we present an algorithm that
solves $\CSP(\Gamma)$ in polynomial time if $\Gamma$ is preserved by an idempotent WNU,
and therefore prove the dichotomy conjecture.

\begin{thm}
Suppose $\Gamma\subseteq R_{A}$ is a finite set of relations.
Then $\CSP(\Gamma)$ can be solved in polynomial time if there exists a WNU
preserving $\Gamma$;
$\CSP(\Gamma)$ is NP-complete otherwise.
\end{thm}

Another proof of the dichotomy conjecture was announced by Andrei Bulatov~\cite{BulatovProofCSP,BulatovProofCSPFOCS}.
Even though both algorithms appeared at the same time, they are significantly different. 
Bulatov's algorithm uses full strength of the few subpowers algorithm \cite{idziak2010tractability}, 
uses Maroti's trick for trees on top of Mal'tsev \cite{maroti2011tree}, 
while this one just checks some local consistency and solves linear equations over prime fields. 
Also Bulatov's algorithm works for infinite constraint languages, which is not the case for the algorithm presented in this paper.
But its slight modification works even for infinite constraint languages \cite{zhuk2018modification}.

The paper is organized as follows.
In Section~\ref{ZFourExample} we explain the algorithm informally and 
give an example showing how the algorithm works
for a system of linear equations in $\mathbb Z_{4}$.
In Section~\ref{Definition} we give main definitions,
in Section~\ref{Algorithm}
we give a formal description of the algorithm
showing a pseudocode for most functions
and explain the meaning of every function.
In Section~\ref{CorretnessSection} we formulate all theorems 
that are necessary to prove the correctness of the algorithm.
Then, we prove that on every algebra (domain) with
a WNU operation 
there exists a subuniverse of one of four types, which is the main ingredient of the algorithm. 
Additionally, in this section we prove that 
some functions of the algorithm work properly 
and the algorithm actually works in polynomial time.

In Section~\ref{DefinitionSection} we give the remaining definitions.
The proof of the main theorems is divided into three sections.
In Section~\ref{AbsCenterPCLinear} we study  properties of
subuniverses of each of four types (absorbing, central, PC, and linear subuniverses).
In Section~\ref{AuxStatements} we prove all the auxiliary statements, and in the last section we prove the main theorems of this paper formulated in Section~\ref{CorretnessSection}.

In Section~\ref{ConclusionsSection}, 
we discuss open questions
and consequences of this result.
In particular, we consider generalizations 
of the CSP such as Valued CSP, Infinite Domain CSP, Quantified CSP, Promise CSP and so on.

\section{Outline of the algorithm}\label{ZFourExample} 

In this section we give an informal description of the algorithm and show how it 
works for a system of linear equations in $\mathbb Z_{4}$.
The algorithm is based on the following three ingredients:

\begin{itemize}
\item 
Each domain has either one of three kinds of proper strong subsets 
(absorbing, central, polynomially complete) or an equivalence relation modulo which the domain is essentially a product of prime fields
(Theorem \ref{NextReduction}).

\item If a sufficient level of consistency (cycle consistency + irreducibility - see Section \ref{CSPInstancesDef})  
is enforced, then we do not lose all the solutions when we reduce the domain to a proper strong subset 
(that is, if the original instance has a solution, then the reduced instance has a solution as well), which is guaranteed by 
Theorems \ref{AbsorptionCenterStep} and \ref{PCStepThm}.

\item If we cannot reduce the domain in such a way, we are left with an instance 
whose each domain has an equivalence relation modulo which it is a product of prime fields, and all relations are affine subspaces. 
Now we have:

$A$ = the set of all solutions of the instance factorized by the equivalences;

$B$ = the set of all solutions of the factorized instance (where all domains and relations are factorized).

Both $A$ and $B$ are affine subspaces, $A \subseteq B$. 
We would like to know whether $A$ is empty, what we can efficiently compute is $B$ (using Gaussian elimination).
The algorithm gradually makes $B$ smaller (of smaller dimension), while maintaining the property $A \subseteq B$.

First, for some solution from $B$ we check whether $A$ has the same solution, which can be done 
by a recursive call of the algorithm for smaller domains.
If $A$ has it then we are done. 
If $A$ has not then $A\neq B$.
In this case we can make (see Theorem \ref{LinearStep})
the instance weaker  maintaining the property $A'\subsetneq B$ (here $A'$ is $A$ for the weaker instance) until the moment when 
\begin{enumerate}
\item  $A'$ is a subspace of $B$ of codimension one, 
\item or $A=A'=\varnothing$, 
\item or the obtained instance is not linked (it splits into several instances on smaller domains, hence 
$A'$ can be calculated using recursion).
\end{enumerate}
In (1) and (2) $A'$
can be computed by linearly many recursive calls of the algorithm for smaller domains.
In fact, $A'$ can be defined by a linear equation
$c_1x_1 + \dots + c_h x_h = c_{0}$ in a prime field $\mathbb Z_{p}$. 
Then the coefficients $c_0,c_{1},\dots,c_{h}$ can be learned 
(up to a multiplicative constant) by $(p\cdot h+1)$ queries of the form ``$(a_1,\ldots,a_h) \in A'$?''
(see Subsection~\ref{FindingLinearEquationSection} for more details).
To check each query we just need to call
the algorithm recursively for the smaller domains that are the equivalence classes 
corresponding to $a_1,\ldots,a_h$.

We update $B=A'$, return back to the original instance  and continue tightening $B$.
We eventually stop when $B = A$, which gives us the answer to our question:
if $B \neq \varnothing$ then the original instance has a solution, 
if $B = \varnothing$ then it has no solutions.
\end{itemize}

We demonstrate the work of the algorithm on a system of linear equations in $\mathbb Z_{4}$:
\begin{equation}\label{OriginalEquation}
\left\{
\begin{aligned} 
x_{1}+2x_{2}+x_{3}+x_{4}&=2 \\
2x_{1}+x_{2}+x_{3}+x_{4}&=2\\
x_{1}+x_{2} &= 2\\
x_{1}+x_{2} +2x_{4}&= 2
\end{aligned}
\right.
\end{equation}

All the relations (equations) are invariants of the WNU $x_{1}+\dots+x_{5} \;(mod\ 4)$, 
therefore, this system of equations is 
an instance of $\CSP(\Gamma)$, where $\Gamma$ 
is the set of all relations  of arity at most $4$ preserved by the WNU.
Hence, we can apply the algorithm.

First, $\mathbb Z_{4}$ does not have a proper strong subset, which is why 
for every domain there should be
an equivalence relation modulo which it is just a product of prime fields.
In our example it is the modulo 2 equivalence relation.

We factorize our instance modulo 2, and obtain a system of linear equations in $\mathbb Z_{2}$,
where $x_{i}' = x_{i}\;(mod \ 2)$ for every $i$.
\begin{equation}\label{ZTwoEquation}
\left\{
\begin{aligned} 
x_{1}'+x_{3}'+x_{4}'&=0 \\
x_{2}'+x_{3}'+x_{4}'&=0\\
x_{1}'+x_{2}' &= 0\\
x_{1}'+x_{2}' &= 0
\end{aligned}
\right.
\end{equation}

Using Gaussian elimination we solve this system of equations in a field, 
choose independent variables $x_{1}'$ and $x_{3}'$, and write the general solution (the set $B$ in the informal description):
$x_{1}' = x_{1}', x_{2}' = x_{1}', x_{3}' = x_{3}', x_{4}' = x_{1}' + x_{3}'.$

We choose any solution from $B$. Let it be $(0,0,0,0)$ for $x_{1}' = x_{3}' = 0$.
Then we check whether (\ref{OriginalEquation}) has a solution corresponding to $(0,0,0,0)$ by 
restricting every domain to the set $\{0,2\}$ ($x_{i}\mod 2 = 0$).
We recursively call the algorithm for smaller domain and find out that 
(\ref{OriginalEquation}) has no solutions 
inside $\{0,2\}$.
This means that $(0,0,0,0)$ does not belong to $A$ from the informal description, 
therefore $A\subsetneq B$.

Then we try to make the instance weaker so that $A'\subsetneq B$, where $A'$ is 
the intersection of $B$ with the set of all solutions of the new instance factorized by the equivalences.
%
Let us remove the last equation from~(\ref{OriginalEquation})
to obtain a new solution set $A'$.
\begin{equation}\label{SimplifiedEquation}
\left\{
\begin{aligned} 
x_{1}+2x_{2}+x_{3}+x_{4}&=2 \\
2x_{1}+x_{2}+x_{3}+x_{4}&=2\\
x_{1}+x_{2} &= 2
\end{aligned}
\right.
\end{equation}

Again, by solving an instance on the 2-element domain $\{0,2\}$ we find out that 
(\ref{SimplifiedEquation}) has no solutions
corresponding to $(0,0,0,0)$. Therefore, 
we have $A'\subsetneq B$.

We need to check that if we remove one more equation from (\ref{SimplifiedEquation}),
then we get $A' = B$.
Thus, for every weaker instance we need to check that 
for any $a_{1},a_{3}\in \mathbb Z_{2}$ there exists a solution corresponding to $(x_{1}',x_{3}') = (a_{1},a_{3})$.
Since $A'$ is an affine subspace, it is sufficient to check this 
for $(x_{1}',x_{3}') = (0,0)$, $(x_{1}',x_{3}') = (1,0)$, and $(x_{1}',x_{3}') = (0,1)$, 
i.e. for $h+1$ tuples, where $h$ is the dimension of $B$.
Again, to check a concrete solution from $B$ we recursively call the algorithm for 2-element domains. 

Since (\ref{SimplifiedEquation}) is linked, Theorem \ref{LinearStep} guarantees that 
the dimension of $A'$ equals the dimension of $B$ minus one or $A'$ is empty. 
Hence, we need exactly one equation to describe
all pairs $(a_{1},a_{3})$ such that
(\ref{SimplifiedEquation}) has a solution corresponding to $(x_{1}',x_{3}') = (a_{1},a_{3})$.
Let the equation be $c_{1}x_{1}'+c_{3}x_{3}'=c_{0}$. We need to find $c_{1},c_{3},$ and $c_{0}$.
Recursively calling the algorithm for smaller domains, 
we find out that (\ref{SimplifiedEquation}) has a solution $(3,3,0,1)$ corresponding to
$(x_{1}',x_{3}')=(1,0)$ (the solution $(1,1,0,1)$ from $B$) but does not have a solution corresponding to
$(x_{1}',x_{3}')=(0,1)$ (the solution $(0,0,1,1)$ from $B$).
We have 
\begin{equation*}
\left\{
\begin{aligned} 
c_{1}\cdot 0 + c_{3}\cdot 0 &\neq c_{0} \\
c_{1}\cdot 1 + c_{3}\cdot 0 &= c_{0} \\
c_{1}\cdot 0 + c_{3}\cdot 1 &\neq c_{0} 
\end{aligned}
\right.,
\end{equation*}
which implies that $c_{1} = 1$, $c_{3} = 0$, $c_{0} = 1$, and
the equation we are looking for is $x_{1}'=1$.
Thus, we found $A'$.

We add this equation to (\ref{ZTwoEquation}) (update $B= A'$)
and solve the new system of linear equations in $\mathbb Z_{2}$.
\begin{equation}\label{ZTwoEquationNew}
\left\{
\begin{aligned} 
x_{1}'+x_{3}'+x_{4}'&=0 \\
x_{2}'+x_{3}'+x_{4}'&=0\\
x_{1}'+x_{2}' &= 0\\
x_{1}'+x_{2}' &= 0\\
x_{1}' &= 1
\end{aligned}
\right.
\end{equation}
The general solution of this system (the new set $B$) is
$x_{1}'=1$, $x_{2}'=1$, $x_{3}' = x_{3}'$, $x_{4}' = x_{3}'+1$,
where $x_{3}'$ is an independent variable.
Thus, we decreased the dimension of the solution set $B$ by 1 and we still have the property that $A\subseteq B$.
We go back to (\ref{OriginalEquation}), and check whether
it has a solution corresponding to $x_{3}' = 0$ (the solution $(1,1,0,1)$ from $B$).
Again, by solving an instance on the 2-element domain we find out that 
$(1,1,0,1)\notin A$. Therefore $A\subsetneq B$.

The remaining part of the procedure looks trivial but we want to follow the algorithm till the end to make it clear.
Again, we try to make the instance weaker so that $A'\subsetneq B$. 
Let us remove the third equation from~(\ref{OriginalEquation}).
\begin{equation}\label{SimplifiedEquation2}
\left\{
\begin{aligned} 
x_{1}+2x_{2}+x_{3}+x_{4}&=2 \\
2x_{1}+x_{2}+x_{3}+x_{4}&=2\\
x_{1}+x_{2} +2x_{4}&= 2
\end{aligned}
\right.
\end{equation}
By solving this instance on smaller domains we find out that
(\ref{SimplifiedEquation2}) has no solutions
corresponding to $x_{3}'=0$ 
(the solution $(1,1,0,1)$ of $B$). Therefore, we obtained a new 
set $A'\subsetneq B$.

Then we try to remove one more equation from (\ref{SimplifiedEquation2}) maintaining the property 
$A'\subsetneq B$.
We check for every weaker instance that
for any $a_{3}\in \mathbb Z_{2}$ there exists a solution corresponding to $x_{3}' = a_{3}$.

Again, the instance (\ref{SimplifiedEquation2}) is linked, and by Theorem \ref{LinearStep} 
we need exactly one equation to describe
all elements $a_{3}$ such that
(\ref{SimplifiedEquation2}) has a solution corresponding to $x_{3}' = a_{3}$.
Let the equation be $c_{3}x_{3}'=c_{0}$. 
We already checked that it does not hold for $x_{3}'=0$.
By solving an instance on 2-element domains we find out that 
(\ref{SimplifiedEquation2}) has a solution $(3,3,1,0)$ corresponding to the solution 
$(1,1,1,0)$ from $B$ and $x_{3}' = 1$. Thus we have 
\begin{equation*}
\left\{
\begin{aligned} 
c_{3}\cdot 0 &\neq c_{0} \\
c_{3}\cdot 1 &= c_{0} 
\end{aligned}
\right.,
\end{equation*}
which implies $c_{3}=1$, $c_{0} = 1$, and the equation we are looking for is $x_{3}'=1$ (we calculated $A'$).

We add this equation to (\ref{ZTwoEquationNew}) 
(update $B = A'$)
and solve the new system of linear equations in $\mathbb Z_{2}$.
\begin{equation}\label{ZTwoEquationNewNew}
\left\{
\begin{aligned} 
x_{1}'+x_{3}'+x_{4}'&=0 \\
x_{2}'+x_{3}'+x_{4}'&=0\\
x_{1}'+x_{2}' &= 0\\
x_{1}'+x_{2}' &= 0\\
x_{1}' &= 1\\
x_{3}' &= 1
\end{aligned}
\right.
\end{equation}
The only solution of this system is 
$(x_{1}',x_2',x_{3}',x_{4}') = (1,1,1,0)$.
Thus, we decreased the dimension of the solution set $B$ to 0 and we still have the property that $A\subseteq B$.
It remains to check whether the original system (\ref{OriginalEquation}) has a solution 
corresponding to the solution $(1,1,1,0)$ of $B$.
Again, by solving an instance on the 2-element domain we find 
a solution $(3,3,1,0)$ of the original instance. 
Therefore, $(1,1,1,0)\in A$ and we finally reached the condition $A=B$.


\section{Definitions}\label{Definition}
A set of operations is called \emph{a clone} if it is closed under composition and contains all projections.
For a set of operations $M$ by $\Clo(M)$ we denote the clone generated by $M$.

An idempotent WNU $w$ is called \emph{special} if $x \circ (x \circ y) = x \circ y$, where
$x \circ y = w(x,\dots,x,y)$.
It is not hard to show that for any idempotent WNU $w$ on a finite set there exists a special WNU $w'\in\Clo(w)$
(see Lemma 4.7 in \cite{miklos}).

A relation $\rho \subseteq A_{1}\times\dots\times A_{n}$ is called \emph{subdirect} if
for every $i$ the projection of $\rho$ onto the $i$-th coordinate is $A_{i}$.
For a relation $\rho$ by $\proj_{i_1,\ldots,i_{s}}(\rho)$
we denote the projection of $\rho$ onto the coordinates
$i_1,\ldots,i_{s}$.

\subsection{Algebras}
\emph{An algebra} is a pair $\mathbf{A}:=(A;F)$, where $A$ is a finite set, called \emph{universe},
and $F$ is a family of operations on $A$, called \emph{basic operations of $\mathbf{A}$}.
In the paper we always assume that we have a special WNU
$w$ preserving all constraint relations.
Therefore, every domain $D$, which is from the constraint language, can be viewed as an algebra $(D;w)$.
By $\Clo(\mathbf{A})$ we denote the clone generated by all basic operations of $\mathbf{A}$.



An equivalence relation $\sigma$ on the universe of an algebra $\mathbf{A}$ is called \emph{a congruence}
if it is preserved by every operation of the algebra.
A congruence (an equivalence relation) is called \emph{proper}, if it is not equal 
to the full relation $A\times A$.
A subuniverse is called \emph{nontrivial} if it is proper and nonempty.
We use standard universal algebraic notions of term operation, subalgebra,  factor algebra, product of algebras,
see~\cite{bergman2011universal}.
We say that a subalgebra $\mathbf{R} = (R;F_R)$ is
\emph{a subdirect subalgebra} of $\mathbf{A}\times \mathbf{B}$
if $R$ is a subdirect relation in $A\times B$.


\subsection{Polynomially complete algebras}

An algebra $(A;F_{A})$ is called \emph{polynomially complete (PC)}
if the clone generated by $F_{A}$ and all constants on $A$ is the clone of all operations on $A$
(see \cite{istinger1979characterization,lausch2000algebra}).


\subsection{Linear algebra}
An idempotent finite algebra $(A;w_{A})$ is called \emph{linear}
(similar to affine in  \cite{freese1987commutator})
if
it is isomorphic to $(\mathbb{Z}_{p_1}\times\dots\times \mathbb{Z}_{p_s};x_1+\ldots+x_m)$
for prime numbers $p_{1},\ldots,p_{s}$.
Since 
$\mathbf{A}/(\sigma\cap\tau)$ is always isomorphic to 
a subalgebra of 
$\mathbf{A}/\sigma\times
\mathbf{A}/\tau$, 
and since linear algebras are closed under 
products and subalgebras 
by Corollary~\ref{LinearAlgebrasAreClosed},
for every idempotent finite algebra $(B;w_{B})$ there exists a least congruence $\sigma$, called
\emph{the minimal linear congruence}, such that
$(B;w_{B})/\sigma$ is linear.

\subsection{Absorption}
%
Let $B$ be a (probably empty) subuniverse of $\mathbf{A}=(A;F_{A})$.
We say that \emph{$B$ absorbs $\mathbf{A}$}
if there exists $t\in \Clo(\mathbf{A})$ such that
$t(B,B,\dots,B,A,B,\dots,B) \subseteq B$ for any position of $A$.
In this case we also say that \emph{$B$ is an absorbing subuniverse of $\mathbf A$
with a term operation $t$}.
If the operation $t$ can be chosen binary then we say that
$B$ is a binary absorbing subuniverse of $\mathbf A$.
For more information about absorption and its connection with CSP see \cite{barto2017absorption}.

\subsection{Center}
Suppose $\mathbf{A} = (A;w_{A})$ is a finite algebra with a special WNU operation.
$C\subseteq A$ is called a \emph{center}
if there exists an algebra $\mathbf{B} = (B;w_{B})$ with a special WNU operation of the same arity and
a subdirect subalgebra $(R;w_{R})$ of $\mathbf{A}\times\mathbf{B}$ such that
there is no nontrivial binary absorbing subuniverse in $\mathbf{B}$ and
$C = \{a\in A\mid \forall b\in B\colon (a,b)\in R\}.$
This notion was motivated by central relations defining 
maximal clones on finite sets (see section 5.2.5 in \cite{lau}) 
and it is very similar to ternary absorption (see Corollary~\ref{ternaryAbsorption}).

\subsection{CSP instance}\label{CSPInstancesDef}

An instance of the constraint satisfaction problem 
is called \emph{a CSP instance}.
Sometimes we use the same letter for a CSP instance and for the set of all constraints of this instance.
For a variable $z$ by $D_{z}$ we denote the domain of the variable $z$.


We say that $z_{1}-C_{1}-z_{2}-\dots - C_{l-1}-z_{l}$ is
\emph{a path} in a CSP instance $\Theta$ if $z_{i},z_{i+1}$ are in the scope of $C_{i}$ for every $i$.
We say that \emph{a path $z_{1}-C_{1}-z_{2}-\dots- C_{l-1}-z_{l}$  connects $b$ and $c$}
if there exists $a_{i}\in D_{z_{i}}$ for every $i$
such that
$a_{1} = b$, $a_{l} = c$, and
the projection of $C_{i}$ onto $z_{i}, z_{i+1}$
contains the tuple $(a_{i},a_{i+1})$.

A CSP instance is called \emph{1-consistent} if every constraint of the instance is subdirect.
A CSP instance is called \emph{cycle-consistent} if
it is 1-consistent and 
for every variable $z$ and $a\in D_{z}$
any path starting and ending with $z$ in $\Theta$ 
connects $a$ and $a$.
Other types of local consistency and 
its connection with the complexity of CSP
are considered in \cite{kozik2016weak}.
A CSP instance $\Theta$ is called \emph{linked}
if for every variable $z$ occurring in the scope of a constraint of $\Theta$ and every $a,b\in D_{z}$
there exists a path starting and ending with $z$ in $\Theta$ that connects $a$ and $b$.

Suppose $\mathbf{X'}\subseteq\mathbf{X}$.
Then we can define a projection of $\Theta$ onto $\mathbf{X'}$,
that is a CSP instance where variables are elements of $\mathbf{X'}$ and constraints are projections of the constraints of $\Theta$ onto the intersection of their scopes with $\mathbf{X'}$,
ignoring any constraint whose scope
does not intersect $\mathbf{X'}$.
We say that an instance $\Theta$ is \emph{fragmented}
if the set of variables $\mathbf X$ can be divided into 2 disjoint sets $\mathbf{X_1}$ and
$\mathbf{X_2}$ such that
each of them contains a variable from the scope of a constraint of $\Theta$, 
and 
the constraint scope of any constraint of $\Theta$
either has variables only from $\mathbf{X_1}$, or only from $\mathbf{X_2}$.
Thus, if an instance is fragmented, then it can be divided into several nontrivial instances.

A CSP instance $\Theta$ is called \emph{irreducible} if
any instance $\Theta'=(\mathbf X',\mathbf D',\mathbf C')$ 
such that  
$\mathbf X'\subseteq\mathbf X$, $D_{x}'=D_{x}$ for every 
$x\in\mathbf X'$,
and 
every constraint of $\Theta'$ is a projection of 
a constraint from $\Theta$ on some set of variables
is fragmented, or linked, or its solution set is subdirect.



We say that a constraint $C_{1}= ((y_{1},\ldots,y_{t});\rho_{1})$ is \emph{weaker or equivalent to}
a constraint $C_{2}= ((z_{1},\ldots,z_s);\rho_{2})$
if $\{y_{1},\ldots,y_{t}\}\subseteq \{z_{1},\ldots,z_s\}$
and 
$C_{2}$ implies $C_{1}$.
In other words, the second condition says that
the solution set to $\Theta_{1}:=(\{z_{1},\dots,z_{s}\},(D_{z_{1}},\ldots,D_{z_{s}}),C_{1})$
contains the solution set to 
$\Theta_{2}:=(\{z_{1},\dots,z_{s}\},(D_{z_{1}},\ldots,D_{z_{s}}),C_{2})$.
We say that 
$C_{1}$ is \emph{weaker than} 
$C_{2}$ 
if $C_{1}$ is weaker or equivalent to $C_{2}$
but $C_{1}$ does not imply $C_{2}$.

The following remark justifies weakening constraints of the instance in the algorithm
(this remark follows from Lemma~\ref{ExpandedConsistencyLemma}).
\begin{remark}
Suppose $\Theta = \langle\mathbf{X};\mathbf{D};\mathbf{C}\rangle$ and $\Theta' = \langle\mathbf{X'};\mathbf{D'};\mathbf{C'}\rangle$ are CSP instances
such that 
$\mathbf{X'}\subseteq \mathbf{X}$,
$D_{x}'=D_{x}$ for every $x\in \mathbf{X'}$,
and every constraint of $\Theta'$ is weaker or equivalent to a constraint of $\Theta$. If $\Theta$ is cycle-consistent and irreducible,
then so is $\Theta'$.
\end{remark}

%

We say that a variable $y_{i}$ of the constraint $((y_{1},\ldots,y_{t});\rho)$ is \emph{dummy} if 
$\rho$ does not depend on its $i$-th variable.

\begin{remark}
Adding a dummy variable to a constraint 
and removing of a dummy variable do not affect the property of being 
cycle-consistent and irreducible.
\end{remark}

Let $D_{i}'\subseteq D_{i}$ for every $i$.
A constraint $C$ of $\Theta$ is called \emph{crucial in $(D_{1}',\ldots,D_{n}')$}
if it has no dummy variables, $\Theta$ has no solutions in $(D_{1}',\ldots,D_{n}')$ but
the replacement of $C\in\Theta$ by all
weaker constraints gives an instance with a solution in $(D_{1}',\ldots,D_{n}')$.
A CSP instance $\Theta$ is called \emph{crucial in $(D_{1}',\ldots,D_{n}')$} if
it has at least one constraint and 
every constraint of $\Theta$ is crucial in $(D_{1}',\ldots,D_{n}')$.

\begin{remark}\label{GetCrucialInstance}
Suppose $\Theta$ has no solutions in $(D_{1}',\ldots,D_{n}')$.
We can replace each constraint by
its projection onto its non-dummy variables. Then we iteratively replace every constraint
by all weaker constraints having no dummy variables until it is crucial.
Finally, we get
a CSP instance that is crucial in $(D_{1}',\ldots,D_{n}')$.
\end{remark}




\newcommand{\CheckTuple}{\mbox{\textsc{CheckTuple}}}
\newcommand{\type}{\mbox{\textsc{type}}}

\newcommand{\Break}{\State \textbf{break} }

\newcommand{\Output}{\mbox{Output}}
\newcommand{\Changed}{\mbox{Changed}}
\newcommand{\calC}{\mathbf{C}}
\newcommand{\calD}{\mathbf{D}}
\newcommand{\X}{\mathbf{X}}

\section{Algorithm}\label{Algorithm}
\subsection{Main part}\label{AlgorithmMainPart}

Suppose we have a constraint language $\Gamma_{0}$ that is preserved by an idempotent WNU operation.
As it was mentioned before,
$\Gamma_{0}$ is also preserved by a special WNU operation $w$.
Let $k_{0}$ be the maximal arity of the relations in $\Gamma_{0}$.
By $\Gamma$ we denote the set of all relations of arity at most $k_{0}$
that are preserved by $w$.
Obviously, $\Gamma_{0}\subseteq \Gamma$, therefore every instance of $\CSP(\Gamma_{0})$ is an instance of $\CSP(\Gamma)$.

In this section we provide an algorithm that solves $\CSP(\Gamma)$ in polynomial time.
Suppose we have a CSP instance $\Theta = \langle \mathbf{X} , \mathbf{D} , \mathbf{C} \rangle$,
where
$\mathbf{X}=\{x_1,\ldots ,x_n\}$ is a set of variables,
$\mathbf{D}=\{D_{1},\ldots ,D_{n}\}$ is a set of the respective domains,
$\mathbf{C}=\{C_{1},\ldots ,C_{q}\}$ is a set of constraints.
Let the arity of the WNU $w$ be equal to $m$.

The main part of the algorithm (function \Solve)
is an iterative loop; in each pass through the loop, the algorithm calls a subroutine $\AnswerOrReduce$ whose
job is to find a reduction of a domain or to terminate with the final answer.
The reduction returned by the function should satisfy the following property: if $\Theta$ has a solution, 
then it has a solution after the reduction.
If the reduction was found then we apply the function 
$\Reduce$, which takes an instance $\Theta=(\X,\calD,\calC)$ and a domain set 
$\mathbf{D'} = (D_1',\ldots,D_{n}')$, and returns a new instance 
$(\X,\mathbf{D'},\mathbf{C'})$, where 
$\mathbf{C'}=\{((x_{i_{1}},\dots,x_{i_{s}}),\rho\cap (D_{i_{1}}'\times\dots \times D_{i_{s}}'))\mid
    ((x_{i_{1}},\dots,x_{i_{s}}),\rho)\in\calC\}$.

\begin{algorithm}
\begin{algorithmic}[1]
\Function{\Solve}{$\Theta$}
  \State{\textbf{Input:} CSP($\Gamma$) instance $\Theta=(\X,\calD,\calC)$,
$\X=(x_1,\ldots,x_n)$, $\calD = (D_1,\ldots,D_n)$}
\Repeat
  \State{$\Output := \AnswerOrReduce(\Theta)$}
  \If{$\Output = \mbox{``Solution"}$} \Return{``Solution"}
  \EndIf
  \If{$\Output = \mbox{``No solution"}$} \Return{``No solution"}
  \EndIf
  \If{$\Output = (x_i,U)$} \Comment{$\varnothing \ne U\subset D_i$}
    \State{$\Theta:= \Reduce(\Theta,(D_1,\dots,D_{i-1},U,D_{i+1},\dots,D_{n}))$}
    \Comment{Set $D_{i}=U$}
  \EndIf
\Until{Done}
\EndFunction
\end{algorithmic}
\end{algorithm}

%

The function $\AnswerOrReduce$  
(see the pseudocode)
checks different types of consistency such as cycle-consistency and irreducibility, and reduce a domain if the instance is not consistent. 
If it is consistent, then either it reduces a domain to a 
proper strong subset, or 
it uses $\SolveLinearCase$ to 
solve the remaining case.


First, the function $\AnswerOrReduce$
checks whether the instance $\Theta$ is cycle-consistent (function $\CheckCycleConsistency$). 
If it is not cycle-consistent then either some domain can be reduced, or the instance has no solutions. In both cases we terminate the function and return the result. If it is 
cycle-consistent then we go on.

If the size of every domain is one it returns that a solution was found.

Then we check whether the instance is irreducible (function \CheckIrreducibility).  
If it is not irreducible then we return how to reduce some domain or return that there is no solutions, otherwise we go on.

\begin{algorithm}
\begin{algorithmic}[1]
\Function{AnswerOrReduce}{$\Theta$}
\State{\textbf{Input:} CSP($\Gamma$) instance $\Theta=(\X,\calD,\calC)$,
$\X=(x_1,\ldots,x_n)$, $\calD = (D_1,\ldots,D_n)$}

  \State{$\Output := \CheckCycleConsistency(\Theta)$}
  \If{$\Output \neq \mbox{``Ok"}$} \Return{$\Output$}
  \EndIf
    \If{$|D_{i}|=1$ for every $i$}
        \Return{``Solution"}
    \EndIf
  \State{$\Output := \CheckIrreducibility(\Theta)$}
  \If{$\Output \neq \mbox{``Ok"}$} \Return{$\Output$}
  \EndIf
  
  \State{$\Output := \CheckWeakerInstance(\Theta)$}
  \If{$\Output \neq \mbox{``Ok"}$} \Return{$\Output$}
  \EndIf

  \If{$B_{i}$ is a nontrivial binary absorbing subuniverse of $D_{i}$}  \Return{$(x_{i},B_{i})$}
  \EndIf

\If{$C_{i}$ is a nontrivial center of $D_{i}$}  \Return{$(x_{i},C_{i})$}
  \EndIf
  
\If{$\sigma$ is a proper congruence on $D_{i}$ and $(D_{i};w)/\sigma$ is polynomially complete} 

Choose an equivalence class $E$ of $\sigma$ 

\Return{$(x_{i},E)$}
\EndIf
\Return{$\SolveLinearCase(\Theta)$}

\EndFunction
\end{algorithmic}
\end{algorithm}

After that we check a different type of consistency (function $\CheckWeakerInstance$).
We make a copy of $\Theta$, and simultaneously replace every constraint by all weaker constraints without dummy variables. Recursively calling the algorithm, we check that the obtained instance has a solution
with $x_{i}=b$ for every $i\in\{1,2,\ldots,n\}$ and $b\in D_{i}$.
If not, reduce $D_{i}$ to the projection onto $x_{i}$ of the solution set of the obtained instance. Otherwise, go on.

By Theorem~\ref{AbsorptionCenterStep} we 
cannot pass from an
instance having solutions to an instance having no solutions
when reduce a domain to a
nontrivial binary absorbing subuniverse or 
to a nontrivial center. 
Thus, if $D_{i}$ has a nontrivial binary absorbing subuniverse $B_{i}\subsetneq D_{i}$ for some $i$, then we reduce $D_{i}$ to $B_{i}$,
Similarly, if $D_{i}$ has a nontrivial center $C_{i}\subsetneq D_{i}$ for some $i$, then we reduce $D_{i}$ to $C_{i}$

By Theorem~\ref{PCStepThm} 
we 
cannot pass from an
instance having solutions to an instance having no solutions
when reduce a domain to an equivalence class of 
a proper congruence $\sigma$ such that $(D_{i};w)/\sigma$ is polynomially complete.
Thus, if such a congruence on $D_{i}$ exists, we reduce $D_{i}$ to
its equivalence class.

By Theorem~\ref{NextReduction},
it remains to consider the case when
on every domain $D_{i}$
of size greater than 1
there exists a proper congruence $\sigma$
such that 
$(D_{i};w)/\sigma$ is isomorphic to 
$(\mathbb Z_{p};x_1+\dots+x_{m})$ for some $p$.
In this case the problem is solved by the function $\SolveLinearCase$, which  will be described in the next subsection. 
A detailed description and a pseudocode for the functions 
$\CheckCycleConsistency$, $\CheckIrreducibility$, and $\CheckWeakerInstance$
will be given in Subsection \ref{AlgorithmTechnicalities}

\subsection{Linear case}\label{AlgorithmLinearCase}

In this section we define the function $\SolveLinearCase$ (see the pseudocode).
For every $i$ let 
$\sigma_{i}$ be the minimal linear congruence 
on $D_{i}$, which is the smallest congruence $\sigma$ 
such that $(D_{i};w)/\sigma$ is linear.
Then $(D_{i};w)/\sigma_{i}$ is isomorphic to 
$(\mathbb Z_{p_{1}}\times \dots \times \mathbb Z_{p_{l}};x_{1}+\dots+x_{m})$
for prime numbers $p_{1},\ldots,p_{l}$.
Recall that we apply the function $\SolveLinearCase$ 
only if $\sigma_{i}$ is proper for every $i$ such that $|D_{i}|>1$.
We will show that modulo these congruences the instance can be viewed as a system of linear equations in fields.

We denote $D_{i}/\sigma_{i}$ by $L_{i}$ and
define a new CSP instance $\Theta_{L}$ with domains $L_{1},\ldots,L_{n}$ as follows.
To every constraint $((x_{i_1},\ldots,x_{i_s});\rho)\in \Theta$
we assign a constraint
$((x_{i_1}',\ldots,x_{i_s}');\rho')$,
where $\rho'\subseteq L_{i_{1}}\times\dots\times L_{i_{s}}$
and $(E_{1},\ldots,E_{s})\in\rho'\Leftrightarrow
(E_{1}\times\dots\times E_{s})\cap\rho\neq\varnothing.$
The constraints of $\Theta_{L}$ are all constraints that are assigned to the constraints of $\Theta$.
The function generating the instance $\Theta_{L}$ from $\Theta$ is called
$\FactorizeInstance$ in the pseudocode.
Note that $\Theta_{L}$ is a CSP instance but not necessarily an instance in the constraint language $\Gamma$.

Since each $L_{i}$ is isomorphic to some
$\mathbb Z_{m_{1}}\times \dots\times\mathbb Z_{m_s}$, we may define a natural bijective mapping 
$\psi:\mathbb Z_{p_{1}}\times\dots\times \mathbb Z_{p_r}\to
L_{1}\times\dots\times L_{n}$, and 
assign a variable $z_{i}$ to every $\mathbb Z_{p_{i}}$.
Since every relation on $\mathbb Z_{p_{1}}\times \dots \times \mathbb Z_{p_{r}}$ preserved by $x_{1}+\ldots+x_{m}$ is known (see Lemma~\ref{LinearAlgebrasFact}) to be
a conjunction of linear equations,
the instance $\Theta_{L}$ can be viewed as a system of linear equations over $z_{1},\ldots,z_{r}$.
Note that 
every equation is an equation in $\mathbb Z_{p}$ but $p$ can be different for different equations,
and only variables with the same domain $\mathbb Z_{p}$ may appear in one equation.

\begin{algorithm}
\begin{algorithmic}[1]
\Function{SolveLinearCase}{$\Theta$}
\State{\textbf{Input:} CSP($\Gamma$) instance $\Theta=(\X,\calD,\calC)$,
$\X=(x_1,\ldots,x_n)$, $\calD = (D_1,\ldots,D_n)$}

\State{$\Theta_{L} := \FactorizeInstance(\Theta)$}
\State{$Eq := \varnothing$} \Comment{The equations we add to $\Theta_{L}$}
\Repeat
  \State{$\phi := \SolveLinearSystem(\Theta_{L}\cup Eq)$}
  
  \Comment{$\phi(\mathbb Z_{q_{1}}\times\dots\times\mathbb Z_{q_{k}})$ is the solution set of $\Theta_{L}\cup Eq$}
  \If{$\phi  = \varnothing$}  
    \Return{``No solution"}
  \EndIf    
    \If{$\Solve(\Reduce(\Theta, \phi(0,0,\ldots,0)))  = \mbox{``Solution"}$}
    \Return{``Solution"}
    \ElsIf{k=0} 
        \Return{``No solution"} \Comment{$\Theta_{L}$ has just one solution}
    \EndIf

\State{$\Theta':= \RemoveTriv(\Theta)$}
\Repeat \Comment{Try to weaken $\Theta'$}
    \State{$\Changed:= false$}
    \For{$C\in \Theta'$} 
        \State{$\Omega:= \RemoveTriv(\WeakenConstraint(\Theta',C))$}
        \If{$\neg\CheckAllTuples(\Omega,\phi)$}
            \State{$\Theta':=\Omega$}
            \State{$\Changed:= true$}
            \Break
        \EndIf
    \EndFor
\Until{$\neg\Changed$} \Comment{$\Theta'$ cannot be weakened anymore}
  \If{$\Theta'$ is not linked} 
    \State{$Eq := Eq\cup\FindEquationsForNonlinked(\Theta')$}
  \Else 
  \State{$Eq := Eq\cup\{\FindOneEquationLinked(\Theta',\phi)\}$}
\EndIf
\Until{Done}
\EndFunction
\end{algorithmic}
\end{algorithm}

As it was described in Section \ref{ZFourExample}, 
we consider the set $A$, which is the solution set of $\Theta$ factorized by the congruences $\sigma_{1},\ldots,\sigma_{n}$, 
and the set $B$, which is the solution set of $\Theta_{L}$.
We know that $A\subseteq B$ and we want to check whether $A$ is empty.
We iteratively add new equations to the set $\Theta_{L}$ maintaining the 
property that $A\subseteq B$, and therefore reduce the dimension of $B$.
We start with the empty set of equations $Eq$ (line 4 of the pseudocode).


Then we apply the function $\SolveLinearSystem$ that solves the system of linear equations $\Theta_{L}\cup Eq$ 
using Gaussian elimination.
If the system has no solutions then $\Theta$ has no solutions and we are done.
Otherwise, we choose independent variables
$y_{1},\ldots,y_{k}$, then 
the general solution (the set $B$) can be written as an 
affine mapping $\phi\colon\mathbb Z_{q_{1}}\times \dots \times \mathbb Z_{q_{k}}\to 
L_{1}\times\dots\times L_{n}$.
Denote ${Z} = \mathbb Z_{q_{1}}\times \dots \times \mathbb Z_{q_{k}}$,
then any solution of $\Theta_{L}\cup Eq$ can be obtained
as $\phi(a_{1},\ldots,a_{k})$ for some $(a_{1},\ldots,a_{k})\in {Z}$.

Note that for any tuple $(a_{1},\ldots,a_{k})\in {Z}$
we can check recursively whether $\Theta$ has a solution
in $\phi(a_{1},\ldots,a_{k})$ (i.e. whether 
$\phi(a_{1},\ldots,a_{k})\in A$).
To do this, we just need to 
reduce the domains to the solution (function \Reduce)
and solve an easier CSP instance (on smaller domains).
Similarly, we can check whether $\Theta$ has a solution in
$\phi(a_{1},\ldots,a_{k})$
for every $(a_{1},\ldots,a_{k})\in \mathbb {Z}$ (i.e. whether $A=B$).
Since $A$ and $B$ are subuniverses 
of $L_{1}\times\dots\times L_{n}$ (almost subspaces), we just need to check the
existence of a solution in $\phi(0,\ldots,0)$ and
$\phi(0,\ldots,0,1,0,\ldots,0)$  for any position of $1$.
See the pseudocode of the function $\CheckAllTuples$ for the last procedure.

\begin{algorithm}
\begin{algorithmic}[1]
\Function{\CheckAllTuples}{$\Theta$, $\phi$}
\State{\textbf{Input:} CSP($\Gamma$) instance $\Theta$, a solution of a linear system of equations $\phi$}
    \If{$\Solve(\Reduce(\Theta,\phi(0,\ldots,0))) = \mbox{``No solution"}$}
        \Return{$false$}
    \EndIf
   \For{$i=1,2,\ldots,k$}
        \State{$t := (\underbrace{0,\ldots,0,1}_{i},0,\ldots,0)$}
        \If{$\Solve(\Reduce(\Theta,\phi(t))) = \mbox{``No solution"}$}
            \Return{$false$}
        \EndIf
    \EndFor
    \Return {$true$};
\EndFunction
\end{algorithmic}
\end{algorithm}

Let us go back to the function $\SolveLinearCase$.
After solving the linear system we check whether there exists a solution of $\Theta$ corresponding to the solution $\phi(0,0,\ldots,0)$ of $\Theta_{L}\cup Eq$. 
If $k=0$, i.e. 
$\Theta_{L}\cup Eq$ has only one solution, then we denote this solution by $\phi(0,0,\ldots,0)$.
If $\Theta$ has a solution in $\phi(0,\ldots,0)$, then it remains to return the result ``Solution''. 
If it has no solutions and $k=0$ then return the result ``No solution''. 

At this point (line 10 of the pseudocode of 
$\SolveLinearCase$), we have the property that 
the set $B$ is of dimension at least 1, and $A\neq B$
since we found a solution $\phi(0,\ldots,0)$ of the system of linear equations 
without the corresponding solution of $\Theta$.

Then we iteratively remove from $\Theta$ all
constraints that are weaker than some other constraints of $\Theta$, 
remove all constraints without non-dummy variables, 
and replace every constraint by its projection onto non-dummy variables.
This procedure we denote by the function $\RemoveTriv$.
In the pseudocode of  
$\SolveLinearCase$ we denote the obtained instance by $\Theta'$.

Then we try to make the constraints of $\Theta'$ weaker maintaining the property 
that $A'\neq B$, where $A'$ is the solution set of $\Theta'$ factorized by 
the congruences $\sigma_{1},\ldots,\sigma_{n}$.
Precisely, we choose a constraint $C$, replace it by all weaker constraints 
without dummy variables (function $\WeakenConstraint$),
apply $\RemoveTriv$, 
and check using the function $\CheckAllTuples$ whether 
$A' = B$. If not, then we replace $\Theta'$ by the new weaker instance.

Suppose we cannot make any constraint weaker maintaining the property 
$A'\neq B$.
Then $\Theta'$ has no solutions in $\phi(b_{1},\ldots,b_{k})$
for some $(b_{1},\ldots,b_{k})\in {Z}$,
but if we replace any constraint $C\in\Theta'$ by all weaker constraints,
then we get an instance that has a solution
in $\phi(a_{1},\ldots,a_{k})$
for every $(a_{1},\ldots,a_{k})\in {Z}$.
Therefore, 
$\Theta'$ is crucial in
$\phi(b_{1},\ldots,b_{k})$.
Note that by Lemma~\ref{ExpandedConsistencyLemma} the instance $\Theta'$ is still cycle-consistent and irreducible.
Also, $\Theta'$ is not fragmented because it is 
crucial.

Then, in line 20 of the function $\SolveLinearCase$ we have two options.

If $\Theta'$ is not linked then using the function $\FindEquationsForNonlinked$ we calculate its solution set factorized by the congruences (the set $A'$). 
This solution set can be defined by a set of linear equations, 
which we add to $Eq$ and therefore replace $B$ by $A'\cap B$. 
Thus, we made $B$ smaller and we still have the property $A\subseteq B$, 
since $A'$ is the factorized solution set of the instance $\Theta'$, which is weaker than 
$\Theta$.

If $\Theta'$ is linked then by Theorem~\ref{LinearStep}
either $A' = \varnothing$, or 
the dimension of $A'$ is equal to the dimension of $B$ minus 1, 
which allows us to find a new linear equation by polynomially many 
queries ``Does there exist a solution of $\Theta'$ in 
$\phi(a_{1},\ldots,a_{k})$?''.
We calculate this new equation 
by the function $\FindOneEquationLinked$, which will be defined in the next section 
as well as the function $\FindEquationsForNonlinked$. 
Note that the new equation can be ``$0=1$'' if $A'=\varnothing$.

After new equations found, we go back to line 6 of the function 
$\SolveLinearCase$ and solve a system of linear equations again.
Since every time we reduce the dimension of $B$ by at least one, 
the procedure will stop in at most $r$ steps.

\subsection{Finding linear equations}\label{FindingLinearEquationSection}

In this section we define 
the functions 
$\FindOneEquationLinked$, $\FindOneEquationNonlinked$, and  $\FindEquationsForNonlinked$, which 
allow us to find new equations defining the set $A'$.

\begin{algorithm}
\begin{algorithmic}[1]
\Function{\FindOneEquationLinked}{$\Theta,\phi$}
\State{\textbf{Input:} CSP($\Gamma$) instance $\Theta$, a solution of a system of linear equations $\phi$}
    \State{$t := \varnothing$}     \Comment{We search for a tuple $t$ outside of the solution set}    
    \If{$\Solve(\Reduce(\Theta,\phi(0,\ldots,0))) = \mbox{``No solution"}$}
        \State{$t:= (0,\ldots,0)$}
    \Else
        \For{$i=1,2,\ldots,k$}
            \State{$t' := (\underbrace{0,\ldots,0,1}_{i},0,\ldots,0)$}
            \If{$\Solve(\Reduce(\Theta,\phi(t'))) = \mbox{``No solution"}$}
                \State{$t := t'$}
                \Break
            \EndIf
        \EndFor
    \EndIf
    \If{$t = \varnothing$} \Return{``$0=0$''} 
    \EndIf
    \For{$i=1,2,\ldots,k$}
        \State{$b_{i}:=0$}
        \For{$a\in \mathbb Z_{q_{i}}\setminus\{t(i)\}$}
            \State{$t' := t$}
            \State{$t'(i):= a$}
            \If{$\Solve(\Reduce(\Theta,\phi(t'))) = \mbox{``Solution"}$}         \State{$b_{i} := 1/(a-t(i))$}
            \EndIf
        \EndFor
    \EndFor    
    \Return {``$b_{1}(y_{1}-t(1)) +\dots + b_{k}(y_{k}-t(k)) = 1$''}
\EndFunction
\end{algorithmic}
\end{algorithm}

First, we explain how 
the function $\FindOneEquationLinked$ works.
Suppose $V$ is an affine subspace of $\mathbb Z_{p}^{k}$ of dimension $k-1$, thus $V$ is the solution set of a linear equation
$c_1y_1 + \dots + c_k y_k = c_{0}$. Then the coefficients $c_0,c_{1},\dots,c_{k}$ can be learned (up to a multiplicative constant) by $(p\cdot k+1)$ queries of the form ``$(a_1,\ldots,a_k) \in V$?'' as follows.
First, we need at most $(k+1)$ queries to find a tuple $(t_{1},\ldots,t_{k})\notin V$. To do this we 
just check all tuples with 0s and at most one 1 (lines 4-11
of the pseudocode).
Then, to find this equation it is sufficient to check for every $a$ and every $i$
whether the tuple $(t_{1},\ldots,t_{i-1},a,t_{i+1},\ldots,t_{k})$ satisfies this equation (lines 13-19 of the pseudocode).
Here the query is performed by 
the reduction of all domains to the corresponding solution 
(the function $\Reduce$) and a recursive call of the main function 
$\Solve$.

As we said before, we may define a natural bijective mapping $\psi:\mathbb Z_{p_{1}}\times\dots\times \mathbb Z_{p_r}\to
L_{1}\times\dots\times L_{n}$, and 
assume that all relations from $\Theta_{L}$ and $Eq$ are 
systems of linear equations over $z_{1},\ldots,z_{r}$.
Below we explain how the function 
$\FindEquationsForNonlinked$ calculates the solution set of $\Theta'$ factorized by the congruences (the set $A'$) if $\Theta'$ is not linked. It describes the solution set by linear equations over $z_{1},\ldots,z_{r}$.

\begin{algorithm}
\begin{algorithmic}[1]
\Function{\FindEquationsForNonlinked}{$\Theta$}
\State{\textbf{Input:} CSP($\Gamma$) instance $\Theta$}
    \State{$I := \{1\}$}     \Comment{$I$ is the set of independent variables}
    \State{$E := \varnothing$}     \Comment{We start with an empty set of equations}    
   \For{$j=1,2,\ldots,r$}
    \State{$e := \FindOneEquationNonlinked(\Theta,I\cup\{j\})$}     
    \If{$e = \mbox{``$0=0$"}$}  \Comment{$j$-th variable is independent}
        \State{$I := I\cup \{j\}$}
    \ElsIf{$e = \mbox{``$0=1$"}$}
        \Return{$\mbox{``No solution"}$}
    \Else
        \State{$E := E\cup e$}     \Comment{Add the equation we found}    
    \EndIf
    \EndFor
    \Return {$E$}
\EndFunction
\end{algorithmic}
\end{algorithm}

We start with an empty set of equations $E$ and 
claim that the first variable is independent, by $I$ we denote the set of independent variables
(see the pseudocode of  $\FindEquationsForNonlinked$).
Assume that we already found all the equations 
over $z_{1},\ldots,z_{j-1}$, i.e. we described the projection 
of $A'$ onto $z_{1},\ldots,z_{j-1}$.

Then the projection of $A'$ onto 
the independent variables and the $j$-th variable 
is either full or of codimension 1.
Thus, we can 
learn this equation by queries of the form 
``Does there exist
$v\in A'$ such that 
$\proj_{I\cup\{j\}}(v) = (a_{1},\ldots,a_{h})$?''
in the same way as we did in $\FindOneEquationLinked$, 
but now we use $\FindOneEquationNonlinked$.
The only difference in these functions is how we check a query: in $\FindEquationsForNonlinked$
we use the function $\CheckTuple$ instead of 
$\Reduce$ and $\Solve$ (see the pseudocode). 

If the new equation was found and this equation is not trivial then 
we add this equation to $E$ and claim that $z_{j}$ is not independent.
If the equation we found is ``$0=0$'' then 
add $x_{j}$ to the set of independent variables and go to the next variable.

\begin{algorithm}
\begin{algorithmic}[1]
\Function{\FindOneEquationNonlinked}{$\Theta,I$}
\State{\textbf{Input:} CSP($\Gamma$) instance $\Theta$, 
$I = \{i_1,\ldots,i_{h}\}$ a set of variables}
    \State{$t := \varnothing$}     \Comment{We search for a tuple $t$ outside of the solution set}            
    \If{$\neg\CheckTuple(\Theta,I,(0,\ldots,0))$} 
        \State{$t:= (0,\ldots,0)$}
    \Else
        \For{$j=1,2,\ldots,h$}
            \State{$t' := (\underbrace{0,\ldots,0,1}_{j},0,\ldots,0)$}
            \If{$\neg\CheckTuple(\Theta,I,t')$} 
                \State{$t := t'$}
                \Break
            \EndIf
        \EndFor
    \EndIf
    \If{$t = \varnothing$} \Return{``$0=0$''} 
    \EndIf
    \For{$j=1,2,\ldots,h$}
        \State{$b_{j}:=0$}
        \For{$a\in \mathbb Z_{p_{i_{j}}}\setminus\{t(j)\}$}
            \State{$t' := t$}
            \State{$t'(j):= a$}
            \If{$\CheckTuple(\Theta,I,t')$} 
                \State{$b_{j} := 1/(a-t(j))$}
            \EndIf
        \EndFor
    \EndFor    
    \Return {``$b_{1}(z_{i_{1}}-t(1)) +\dots + b_{h}(z_{i_{h}}-t(r)) = 1$''}
\EndFunction
\end{algorithmic}
\end{algorithm}

It remains to explain how the function $\CheckTuple$ works.
As an input it takes an instance $\Theta$, a set of 
variables $I$, and a tuple $t$ of length $|I|$.
The restriction of the variables from $I$ to the 
tuple $t$ implies the restrictions $L_{1}',\ldots,L_{n}'$ of the domains 
$L_{1},\ldots,L_{n}$.
Put $D_{i}' = \bigcup\limits_{E\in L_{i}'} E$
for every $i$. 
Then we add unary constraints $x_{i}\in D_{i}'$ to $\Theta$ and solve the obtained instance by the function 
$\SolveNonlinked$, which works only for non-linked instances and will be defined 
in the next section.

\begin{algorithm}
\begin{algorithmic}[1]
\Function{\CheckTuple}{$\Theta,I,t$}
\State{\textbf{Input:} CSP($\Gamma$) instance $\Theta$, 
$I$ a subset of variables, $t$ a tuple of length $|I|$}
    \State{$R:=\{\alpha\in \mathbb Z_{p_{1}}\times\dots\times\mathbb Z_{p_{r}} \mid \proj_{I}(\alpha) = t\}$}    
    \Comment{We don't really calculate $R$}
    \For{$i=1,2,\ldots,n$}
        \State{$D_{i}':=\bigcup_{E\in\proj_{i}(\psi(R))} E$} \Comment{We calculate $D_{i}'$}
    \EndFor
    \If{$\SolveNonlinked(\Theta\wedge(x_{1}\in D_{1}')\wedge\dots\wedge(x_{n}\in D_{n}')) = \mbox{``Solution''}$}
        \Return true
    \Else 
        ~\Return false
    \EndIf
\EndFunction
\end{algorithmic}
\end{algorithm}

\subsection{Remaining functions}\label{AlgorithmTechnicalities}

In this subsection we define 
the functions 
$\CheckCycleConsistency$, 
$\CheckIrreducibility$,
and
$\CheckWeakerInstance$
which were used in Subsection~\ref{AlgorithmMainPart}, 
and 
function 
$\SolveNonlinked$ from Subsection~\ref{FindingLinearEquationSection}.

First, we define the function 
$\CheckCycleConsistency$. To check cycle-consistency it is sufficient to use constraint propagation providing a variant of (2,3)-consistency (see the pseudocode).
First, for every pair of variables $(x_{i},x_{j})$ we consider the intersections of projections of all constraints onto these variables.
The corresponding relation we denote by $\rho_{i,j}$.
Then, for every $i,j,k\in\{1,2,\ldots,n\}$
we replace
$\rho_{i,j}$ by $\rho_{i,j}'$
where $\rho_{i,j}'(x,y) = \exists  z \; \rho_{i,j}(x,y)\wedge \rho_{i,k}(x,z)\wedge \rho_{k,j}(z,y).$

We repeat this procedure while we can change some $\rho_{i,j}$.
If in the end we get a relation $\rho_{i,j}$ that is not subdirect in $D_{i}\times D_{j}$, 
then we can either reduce $D_{i}$ or $D_{j}$, or,
 if $\rho_{i,j}$ is empty, state that there are no solutions.
If every relation $\rho_{i,j}$ is subdirect in
$D_{i}\times D_{j}$, then we claim (see Lemma \ref{ProofCycleConsistencyFunction}) that the original CSP instance is cycle-consistent.

\begin{algorithm}
\begin{algorithmic}[1]
\Function{\CheckCycleConsistency}{$\Theta$}
\State{\textbf{Input:} CSP($\Gamma$) instance $\Theta$}
    \For{$i,j\in\{1,2,\ldots,n\}$} 
    \Comment{Calculate binary projections $\rho_{i,j}$}
        \State{$\rho_{i,j} := D_{i}\times D_{j}$}
        \For{$C\in\Theta$}        
            \State{$\rho_{i,j} :=\rho_{i,j}\cap \proj_{x_i,x_j} C$}
            \Comment{$\proj_{x_i,x_j} C$ is the projection 
            of $C$ onto $x_{i},x_{j}$}
        \EndFor
    \EndFor
    \Repeat \Comment{Propagate constraints to reduce  $\rho_{i,j}$}
        \State{$\Changed:= false$}
        \For{$i,j,k\in\{1,2,\ldots,n\}$} 
            \State{$\rho_{i,j}'(x,y) := \exists z\;
            \rho_{i,j}(x,y)\wedge \rho_{i,k}(x,z)\wedge 
            \rho_{k,j}(z,y)$}
            \If{$\rho_{i,j} \neq \rho_{i,j}'$}
                \State{$\rho_{i,j}:=\rho_{i,j}'$}
                \State{$\Changed:= true$}               
            \EndIf
        \EndFor
    \Until{$\neg\Changed$} \Comment{We cannot reduce $\rho_{i,j}$ anymore}
    \For{$i,j\in\{1,2,\ldots,n\}$} 
        \If{$\rho_{i,j}=\varnothing$}
            \Return{``No solution"}
        \EndIf
        \If{$\proj_{1}(\rho_{i,j})\neq D_{i}$}
            \Return{$(x_{i},\proj_{1}(\rho_{i,j}))$}
        \EndIf
        \If{$\proj_{2}(\rho_{i,j})\neq D_{j}$}
            \Return{$(x_{j},\proj_{2}(\rho_{i,j}))$}
        \EndIf

    \EndFor
    \Return{\mbox{``Ok"}}
\EndFunction
\end{algorithmic}
\end{algorithm}

Let us explain how 
$\CheckIrreducibility$ works.
For every $k\in\{1,2,\ldots,n\}$ and every maximal
congruence $\sigma_{k}$ on $D_{k}$ we do the following.
We start with the partition $\sigma_{k}$ of the $k$-th variable, 
so we put $I=\{k\}$ (line 4 of the pseudocode), which is the set of variables with a partition.
Then we try to extend the partition of $D_{k}$ to other domains.
We choose a constraint having $x_{k}$ in the scope, choose another variable 
$x_{j}$, 
and consider the projection of $C$ onto $x_{k},x_{j}$, which we denote by $\delta$.
Since $\sigma_{k}$ is maximal, we may have two possibilities:
either all equivalence classes of $\sigma_{k}$ are connected in $\delta$, 
or none of the equivalence classes are connected in $\delta$.
In the second case the partition of $D_{k}$ generates a partition of $D_{j}$ with the same number of classes, and we add $j$ to $I$ (lines 10-15 of the pseudocode).

We continue this procedure while we can add new variables to $I$. 
As a result we get a set $I$ and
a partition of $D_{i}$ for every $i\in I$.
Put $\mathbf{X'} = \{x_{i}\mid i\in I\}$.
Then, the projection of $\Theta$ onto $\mathbf{X'}$ 
can be split into several instances on smaller domains, 
and each of them can be solved using recursion.
Thus, we can check whether the solution set of the projection of the instance onto $\mathbf{X'}$ is subdirect or empty.
If it is empty then we state that there are no solutions.
If it is not subdirect, then we can reduce the corresponding domain.
If it is subdirect, then we
go to the next $k\in\{1,2,\ldots,n\}$ and the next maximal
congruence $\sigma_{k}$ on $D_{k}$, and repeat the procedure.
If for all $k$ and all maximal congruences 
the solution set of the obtained instance is subdirect, 
then the instance is irreducible (see Lemma~\ref{CheckIrreducibilityCorrectness}).

\begin{algorithm}
\begin{algorithmic}[1]
\Function{\CheckIrreducibility}{$\Theta$}
\State{\textbf{Input:} CSP($\Gamma$) instance $\Theta$}
    \For{$k=1,\ldots,n$}
        \For{$\sigma_{k}=\{E_{k}^{1},\ldots,E_{k}^{t}\}$ is a maximal congruence on $D_{k}$}
            \State{$I:=\{k\}$}
            \Repeat
                \State{$\Changed:=false$}
                \For{$C\in\Theta$, $i\in I$, $j\notin I$ such that $x_{i}$ and $x_{j}$ are in the scope of $C$}
                    \State{$\delta:=\proj_{x_{i},x_{j}} C$} 
                    \Comment{$\proj_{x_{i},x_{j}} C$ is the projection of $C$ onto $x_{i},x_{j}$}
                    \For{$u=1,2,\dots,t$}  
                        \Comment{Calculate the partition on $D_{j}$}
                            \State{$E_{j}^{u}:= \{b\in D_{j}\mid \exists a\in E_{i}^{u}: (a,b)\in \delta\}$}
                    \EndFor                    
                    \If{$E_{j}^{1},\dots,E_{j}^{t}$ are disjoint}
                        \State{$I:=I\cup\{j\}$}
                        \State{$\Changed:=true$}
                        \Break
                    \EndIf
                \EndFor
            \Until{$\neg\Changed$}
            \For{$i\in I$}
                \State{$D_{i}':=\varnothing$}
                \For{$a\in D_{i}$}
                    \State{Choose $u$ such that $a\in E_{i}^{u}$}
                    \For{$j=1,2,\ldots,n$}
                        \If{$j=i$}
                            \State{$E_{j} := \{a\}$}
                        \ElsIf{$j\in I$}
                            \State{$E_{j} := E_{j}^{u}$}
                        \Else
                            \State{$E_{j} := D_{j}$}
                        \EndIf                        
                    \EndFor
                    \State{$\mathbf{X'} := \{x_{i}\mid i\in I\}$}                        
                    \If{$\Solve(\proj_{\mathbf{X'}}(\Reduce(\Theta,(E_{1},\ldots,E_{n}))))  = \mbox{``Solution"}$}
                        \State{$D_{i}':=D_{i}'\cup\{a\}$}
                    \EndIf
                \EndFor
                \If{$D_{i}'=\varnothing$}
                    \Return{``No solution"}
                \ElsIf{$D_{i}'\neq D_{i}$} 
                    \Return{$(x_{i},D_{i}')$} 
                \EndIf
            \EndFor
        \EndFor
    \EndFor
    \Return{\mbox{``Ok"}}
\EndFunction
\end{algorithmic}
\end{algorithm}

Define the function
$\CheckWeakerInstance$, which checks that 
if we simultaneously weaken every constraint then the solution set of the obtained instance is subdirect.
Thus, we weaken every constraint of $\Theta$ 
(function $\WeakenEveryConstraint$ in the pseudocode), 
that is, we make a copy of $\Theta$, and replace each constraint by all weaker constraints without dummy variables. Recursively calling the algorithm, check that the obtained instance has a solution
with $x_{i}=b$ for every $i\in\{1,2,\ldots,n\}$ and $b\in D_{i}$.
If not, reduce $D_{i}$ to the projection onto $x_{i}$ of the solution set of the obtained instance. Otherwise, go on.

\begin{algorithm}
\begin{algorithmic}[1]
\Function{\CheckWeakerInstance}{$\Theta$}
  \State{\textbf{Input:} CSP($\Gamma$) instance $\Theta$}
    \State{$\Theta'= \WeakenEveryConstraint(\Theta)$}
    \For{$i=1,\ldots,n$}
        \State{$D_{i}':=\varnothing$}
        \For{$a\in D_{i}$}
            \State{$\Output := \Solve(\Reduce(\Theta',(D_{1},\dots,D_{i-1},
            \{a\},D_{i+1},\dots,D_{n})))$}
            \If{$\Output = \mbox{``Solution"}$}
                \State{$D_{i}':=D_{i}'\cup \{a\}$}
            \EndIf
        \EndFor
        \If{$D_{i}'=\varnothing$}
            \Return{``No solution"}
        \ElsIf{$D_{i}'\neq D_{i}$} 
            \Return{$(x_{i},D_{i}')$} 
        \EndIf
    \EndFor
    \Return{``Ok"}    
\EndFunction
\end{algorithmic}
\end{algorithm}

It remains to define the function
$\SolveNonlinked$, which solves an instance that is not linked and not fragmented (see the pseudocode).
Such an instance can be split into several instances on smaller domains.
First, we consider the set $\mathbf X'$ of all variables appearing
in the constraints of the instance
and take the projection of the instance onto $\mathbf{X'}$.
Then we consider each linked component, that is, elements that can be connected by a path in the instance.
Since the instance is cycle-consistent, 
the division into linked components 
defines a congruence on every domain 
(see Lemma~\ref{LinkedConIsCon}),
and each block of this congruence is a subuniverse 
of the domain.
Thus, each linked component can be viewed as a CSP instance in a constraint language $\Gamma$ on smaller domains, 
which can be solved using the recursion.
If at least one of them has a solution, then the 
original instance has a solution. 

\begin{algorithm}
\begin{algorithmic}[1]
\Function{\SolveNonlinked}{$\Theta$}
  \State{\textbf{Input:} CSP($\Gamma$) instance $\Theta$}
    \State{$\mathbf{X'}:=\Var(\Theta)$}
    \Comment{Choose variables that appear in $\Theta$}
    \State{$\Theta':=\proj_{\mathbf{X'}}(\Theta)$}
    \Comment{Remove variables that never occur}
    \For{a linked component $(D_{1}',\dots,D_{n'}')$ of $\Theta'$} 
        \If{$\Solve(\Reduce(\Theta',(D_{1}',\dots,D_{n'}'))) = \mbox{``Solution"}$}
            \Return{``Solution"}
        \EndIf
    \EndFor
    \Return{``No solution"}    
\EndFunction
\end{algorithmic}
\end{algorithm}


\section{Correctness of the Algorithm}\label{CorretnessSection}
\subsection{Rosenberg completeness theorem}\label{RosenbergSection}
The main idea of the algorithm is based on a beautiful result
obtained by Ivo Rosenberg in 1970, who found all maximal clones on a finite set.
Applying this result to the clone generated by a WNU together with all constant operations,
we can show that every algebra with a WNU operation has a nontrivial binary absorbing subuniverse, or
a nontrivial center, or it is polynomially complete or linear modulo some proper congruence.

\begin{thm}\label{NextReduction}
Suppose $\mathbf{A} = (A;w)$ is a finite algebra, where $w$ is a special WNU of arity $m$.
Then one of the following conditions holds:
\begin{enumerate}
\item there exists a nontrivial binary absorbing subuniverse $B\subsetneq A$,
\item there exists a nontrivial center $C\subsetneq A$,
\item there exists a proper congruence $\sigma$ on $A$ such that
$(A;w)/\sigma$ is polynomially complete,
\item there exists a proper congruence $\sigma$ on $A$ such that
$(A;w)/\sigma$ is isomorphic to $(\mathbb Z_{p};x_{1}+\dots +x_{m})$.
\end{enumerate}
\end{thm}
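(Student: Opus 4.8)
The plan is to derive this structural theorem from Rosenberg's classification of maximal clones applied to the clone $\Clo(\mathbf{A})$ enriched with all constant operations on $A$. Write $\mathbf{A}^{*}$ for the idempotent-free expansion whose clone is generated by $w$ together with all constants; call an algebra of this kind \emph{rich}. Since $w$ is a WNU, the clone $\Clo(w)$ together with the constants cannot be the full clone of all operations unless $(A;w)$ is polynomially complete, which is exactly case (3) with $\sigma$ the equality congruence. So assume $(A;w)$ is not polynomially complete; then the clone $C = \langle \Clo(w)\cup\{\text{constants}\}\rangle$ is a proper clone, hence contained in some maximal clone $M$ on $A$. By Rosenberg's theorem, $M = \Pol(\theta)$ for a relation $\theta$ of one of six types: (i) a bounded partial order, (ii) a prime-power affine relation (the graph of an elementary abelian group operation), (iii) a proper nontrivial equivalence relation, (iv) a permutation relation with all cycles of the same prime length, (v) a prime-arity central relation, (vi) a $h$-regular relation. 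Since $M$ contains all constants, $\theta$ must itself be preserved by all constants, which immediately kills types (i) and (iv) and forces the "central" element structure in type (v): only reflexive diagonal situations survive. After this filtering the surviving cases are: a nontrivial congruence $\theta = \sigma$, an affine relation modulo a congruence, or a central relation modulo a congruence.

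First I would handle the \textbf{equivalence-relation case}. If $\theta$ is a proper nontrivial congruence $\sigma$ of $\mathbf{A}$, pass to the factor $(A;w)/\sigma$ and induct on $|A|$: either the factor is polynomially complete (case 3), or, applying the theorem recursively to the smaller algebra, we obtain one of (1)--(4) for the factor; then I would lift. Lifting case (4) is immediate since a congruence of the factor pulls back to a congruence of $\mathbf{A}$. Lifting a binary absorbing subuniverse or a center of the factor to $\mathbf{A}$ is the part that needs the definitions of \emph{absorption} and \emph{center} to be stable under taking full preimages along a surjective homomorphism — a term witnessing absorption in the factor, composed with the quotient map, witnesses it upstairs on the union of the relevant classes; similarly for centers, using that the defining subdirect relation $R\le \mathbf{A}\times\mathbf{B}$ can be pulled back along $\mathbf{A}\twoheadrightarrow \mathbf{A}/\sigma$. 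Thus the theorem for $\mathbf{A}$ with a proper nontrivial congruence reduces to the congruence-free case.

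So assume $\mathbf{A}$ is \emph{simple} (congruence-free) and not polynomially complete. Now $C$ lies inside $\Pol(\theta)$ where $\theta$ is, after the filtering above and using simplicity to collapse the modulo-a-congruence qualifier, either the graph of an elementary abelian $p$-group operation on $A$ (so $w$ acts as a sum $x_1+\dots+x_m$ of an affine structure), giving case (4) with $p$ prime and $\sigma$ trivial; or $\theta$ is a reflexive central relation of prime arity $h$, i.e. there is a nonempty \emph{center} $C_0 = \{a : (a,b_2,\dots,b_h)\in\theta \text{ for all } b_i\}\subsetneq A$ — I would need to reconcile Rosenberg's notion of "central relation" with the paper's definition of \emph{center} (the one via a subdirect relation into an algebra $\mathbf{B}$ with no binary absorbing subuniverse), which is where I would spend most of the effort: either $\mathbf{A}$ has a proper binary absorbing subuniverse (case 1), or, using the central relation $\theta$, I build the witnessing $\mathbf{B}$ and subdirect $R$ so that $C\subsetneq A$ is a center in the paper's sense (case 2). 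I expect the \textbf{main obstacle} to be precisely this last reconciliation — showing that a prime-arity central relation preserved by $w$ and all constants yields the paper's center, including verifying that the accompanying $\mathbf{B}$ genuinely has no binary absorbing subuniverse — together with the bookkeeping needed to fold the "$h$-regular" case either into one of the four conclusions or into a contradiction with the existence of the WNU $w$; this likely uses a known fact (Barto–Kozik style) that a WNU forbids certain regular relations. The affine case and the congruence induction are comparatively routine once the simple central case is pinned down.
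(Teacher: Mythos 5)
Your overall route (Rosenberg's classification applied to the clone generated by $w$ and all constants, induction on $|A|$ to lift through a congruence, then a case analysis in the simple case) is the same as the paper's, but your case filtering contains a genuine error. You claim that, because the maximal clone must contain all constant operations, the bounded-partial-order case (your type (i)) is ``immediately killed''. It is not: a constant $c$ preserves a binary relation $\theta$ exactly when $(c,c)\in\theta$, and a partial order is reflexive, so the monotone maximal clone contains every constant. This case genuinely occurs (e.g.\ $A=\{0,1\}$ with $w$ the $m$-ary maximum, which preserves $0\le 1$), so your argument has a hole precisely where the paper produces conclusion (2): the least element $0$ of the bounded order is a center, witnessed by $\mathbf{B}=\mathbf{A}$ and the subdirect relation $R=\{(a,b)\mid a\le b\}$, using the standing assumption that $\mathbf A$ has no binary absorbing subuniverse (otherwise we are in case (1) anyway). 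Without restoring this case your proof does not cover all maximal clones, and no amount of work in the affine or central-relation branches repairs it.

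Two further points are left unresolved rather than wrong. For the $h$-regular ($h$-universal) relation you only gesture at ``a Barto--Kozik style fact''; the paper excludes this case by a specific result (Corollary 5.10 of \cite{KeyRelations}), and some such citation or argument is genuinely needed, since constants do preserve these totally reflexive relations. For the central-relation case you correctly identify the reconciliation with the paper's notion of center as the main obstacle; the paper resolves it by taking $\mathbf B=\mathbf A^{k-1}$ with the obvious subdirect $R$ and invoking Lemma~\ref{GenBinAbToBinAb} (a binary absorbing subuniverse of $\mathbf A^{k-1}$ would yield one of $\mathbf A$, contradicting the standing assumption), so the absence of binary absorption in $\mathbf B$ comes for free; your plan is compatible with this but does not supply the argument. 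Also note that in the affine case the identification $w(x_1,\ldots,x_m)=x_1+\dots+x_m$ is not automatic from preservation of the affine relation alone; the paper imports it from Lemma~6.4 of \cite{KeyRelations}.
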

\begin{proof}
Let us prove this statement by induction on the size of $A$.
If we have a nontrivial binary absorbing subuniverse in $A$ then there is nothing to prove.
Assume that $A$ has no nontrivial binary absorbing subuniverse.
Let $M$ be the clone generated by $w$ and all constant operations on $A$.
If $M$ is the clone of all operations, then
$(A;w)$ is polynomially complete.

Otherwise, by Rosenberg's Theorem \cite{rosmax}, $M$ belongs to one of the following maximal clones. 
\begin{enumerate}
\item Maximal clone of monotone operations, that is, the clone of operations preserving a partial order
relation with the greatest and the least element;

\item Maximal clone of autodual operations, that is, the clone of operations preserving the graph of a permutation of a prime order
without a fixed element;

\item Maximal clone defined by an equivalence relation;

\item Maximal clone of quasi-linear operations;

\item Maximal clone defined by a central relation;

\item Maximal clone defined by an $h$-regularly generated (or $h$-universal) relation.
\end{enumerate}

Let us consider all the cases.
\begin{enumerate}
\item 
As we assumed, there is no nontrivial binary absorbing subuniverse on $A$.
Hence, the least element of the partial order can be viewed as a center by letting $\mathbf B = \mathbf A$ 
and using the partial order relation as a subdirect subuniverse of $\mathbf{A}\times \mathbf{B}$ (the least element is connected 
with all other elements in the partial order relation).
Thus, we have a nontrivial center in $A$.
\item Constants are not autodual operations. This case cannot happen.
\item Let $\delta$ be a maximal congruence on $\mathbf{A}$.
We consider a factor algebra
$(A;w)/\delta$ and apply the inductive assumption.
\begin{enumerate}
\item
If $\mathbf{A}/\delta$ has a binary absorbing subuniverse
$B'\subseteq A/\delta$, then 
$\bigcup_{E\in B'}E$ is a binary absorbing subuniverse of $A$ with the same term operation.
\item If $\mathbf{A}/\delta$ has a nontrivial center $C'\subseteq A/\delta$ witnessed by a subdirect relation 
$R'\subseteq A/\delta\times B$, then
$\bigcup_{E\in C'}E$ is a nontrivial center of $A$ 
witnessed by 
$R = \bigcup_{(E,b)\in R'} E\times \{b\}$.

\item Suppose $(\mathbf{A}/\delta)/\sigma $ is polynomially complete.
Since $\delta$ is a maximal congruence, $\sigma$ is the equality relation
and $\mathbf{A}/\delta$ is polynomially complete.
\item Suppose $(\mathbf{A}/\delta)/\sigma $ is isomorphic to $(\mathbb Z_{p};x_{1}+\dots +x_{m})$.
Since $\delta$ is a maximal congruence, $\sigma$ is the equality relation
and $\mathbf{A}/\delta$ is isomorphic to $(\mathbb Z_{p};x_{1}+\dots +x_{m})$.
\end{enumerate}
\item By Lemma 6.4 from \cite{KeyRelations},
we know that $w(x_{1},\ldots,x_{m}) = x_{1}+\dots +x_{m}$, where $+$ is the operation in an abelian group.
We assume that $\mathbf{A}$ has no nontrivial congruences, otherwise we refer to case (3).
Then the algebra $\mathbf{A}$ is simple and isomorphic to $(\mathbb Z_{p};x_{1}+\dots +x_{m})$ for a prime number $p$.
\item Let $\rho$ be a central relation of arity $k$ preserved by $w$.
It is not hard to see that
the existence of a nontrivial binary absorbing subuniverse on
$\underbrace{\mathbf{A}\times\dots\times\mathbf{A}}_{k-1}$
implies the existence of a nontrivial binary absorbing subuniverse on $\mathbf{A}$
(see Lemma~\ref{GenBinAbToBinAb}).
Since there is no nontrivial binary absorbing subuniverse 
on $\mathbf{A}$ and 
the relation $\rho$ contains all tuples 
$(b_{1},\ldots,b_{k})$ such that $b_{1}$ is from the center of $\rho$, the center of $\rho$ is a center of $A$
by letting $\mathbf B = \underbrace{\mathbf{A}\times\dots\times\mathbf{A}}_{k-1}$.
\item By Corollary 5.10 from \cite{KeyRelations} this case cannot happen.
\end{enumerate}
\end{proof}

\subsection{The algorithm is polynomial}

\begin{lem}\label{RecursionDepth}
The depth of the recursion in the algorithm is less than $|A|+|\Gamma|$.
\end{lem}

\begin{proof}
We use the recursion in the functions
$\SolveLinearCase,
\FindOneEquationLinked,$
$\CheckAllTuples, 
\CheckIrreducibility,
\CheckWeakerInstance,
\SolveNonlinked.
$
In each
of them but 
$\CheckWeakerInstance$
we reduce all domains of size greater than 1
before using the recursion
and we never increase the domain.
Therefore, every path in the recursion tree 
contains at most $|A|$ calls of 
the function $\Solve$ in the above functions.

Let us consider the function
$\CheckWeakerInstance$.
First, we introduce a partial order on the set of relations in $\Gamma$.
We say that $\rho_{1}\le\rho_{2}$ if one of the following conditions hold
\begin{enumerate}
\item the arity of $\rho_{1}$ is less than the arity of $\rho_{2}$.
\item the arities of $\rho_{1}$ and $\rho_{2}$
are equal,
$\proj_{i}(\rho_{1})\subseteq \proj_{i}(\rho_{2})$ for every $i$,
$\proj_{j}(\rho_{1})\neq \proj_{j}(\rho_{2})$ for some $j$.
\item the arities of $\rho_{1}$ and $\rho_{2}$
are equal,
$\proj_{i}(\rho_{1})= \proj_{i}(\rho_{2})$ for every $i$,
and $\rho_{1}\supseteq \rho_{2}$.
\end{enumerate}
We can check that in the algorithm 
we never make any relation bigger,
and every time 
we use recursion in $\CheckWeakerInstance$
we make every constraint relation strictly smaller.
Since our constraint language $\Gamma$ is finite,
every path in the recursion tree 
contains at most $|\Gamma|$ calls of 
the function $\Solve$ in $\CheckWeakerInstance$.
Therefore the depth of
the recursion tree is bounded by 
$|A|+|\Gamma|$.
\end{proof}

\begin{cons}
The algorithm is polynomial.
\end{cons}
\begin{proof}
Since the depth of the recursive tree 
is bounded by $|A|+|\Gamma|$, 
it remains to show that each loop in each function is polynomial.

In the function $\Solve$ we 
go through the loop at most 
$n\cdot |A|$ times, which is polynomially many.

In the function 
$\SolveLinearCase$ 
we go through the external \textbf{repeat} loop 
at most $r$ times, where $r$ is the dimension 
of 
$L_{1}\times\dots\times L_{n}$. 
Therefore, $r$ is bounded 
by $|A|\cdot n$.
We go through the inner \textbf{repeat} loop 
at most 
$|\Gamma|\cdot N$ times, 
where $N$ is the number of constraints of the instance.

In the function 
$\CheckCycleConsistency$
we go through the \textbf{repeat} loop 
at most $|\Gamma|\cdot n^{2}$ times, 
because every time we change at least one relation $\rho_{i,j}$, which is from $\Gamma$, 
and we have $n^{2}$ of them.

In the function 
$\CheckIrreducibility$
we go through the \textbf{repeat} loop 
at most $n$ times, since we always add an element to $I$.

All other loops are \textbf{for} loops,
and polynomial bounds for them 
follow from the description of the algorithm.
Therefore, the algorithm is polynomial.
\end{proof}
\subsection{Correctness of the auxiliary functions}

\begin{lem}\label{ProofCycleConsistencyFunction}
If the function 
$\CheckCycleConsistency$ returns 
\mbox{``Ok"} then the instance is cycle-consistent, 
if it returns \mbox{``No solution"}
then the instance has no solutions, 
if it returns $(x_{i},D)$
then any solution of the instance 
has $x_{i}\in D$.
\end{lem}
\begin{proof}
Assume that the function returned 
\mbox{``Ok"}. 
Since every relation $\rho_{i,j}$ in the end of the algorithm is 
subdirect, 
the instance is 1-consistent.
Consider a path
$x_{i_1}-C_{1}-x_{i_2}-\dots-x_{i_{l-1}}-C_{l-1}-
x_{i_l}$
starting and ending 
with $x_{i_{1}}=x_{i_l}$. 
Since the projection 
of $C_{j}$ onto $x_{i_{j}},x_{i_{j+1}}$ 
contains $\rho_{i_{j},i_{j+1}}$ for every $j$, 
to show that the instance is cycle-consistent, 
it is sufficient to prove 
that the formula
$$\delta(x_{i_1}) = \exists x_{i_{2}}\dots\exists 
x_{i_{l-1}}\;\rho_{i_{1},i_{2}}(x_{i_1},x_{i_2})\wedge\dots
\wedge\rho_{i_{l-1},i_{l}}(x_{i_{l-1}},x_{i_l})
$$
defines $D_{i_1}$.
This follows from the fact that 
we terminated the function when for all $i,j,k$ 
$$\rho_{i,j}(x,y) = \exists z\;            \rho_{i,j}(x,y)\wedge \rho_{i,k}(x,z)\wedge \rho_{k,j}(z,y).$$

The remaining part follows from the fact that 
all the constraints 
$\rho_{i,j}(x_{i},x_{j})$ were derived from the original constraints,
and therefore they should hold for any solution.
\end{proof}

\begin{lem}\label{CheckIrreducibilityCorrectness}
If the function 
$\CheckIrreducibility$ returns 
\mbox{``Ok"} then the instance is irreducible, 
if it returns \mbox{``No solution"}
then the instance has no solutions, 
if it returns $(x_{i},D)$
then any solution of the instance 
has $x_{i}\in D$.
\end{lem}
\begin{proof}
Assume that $\CheckIrreducibility$ returned \mbox{``Ok"}
but the instance is not irreducible.
Then, there exists an instance 
$\Theta'$ such that every constraint of $\Theta'$ is a projection of a constraint from 
the original instance $\Theta$ on some set of variables,
and $\Theta'$ is not fragmented, not linked, and its solution set is not subdirect.
Let $\mathbf{X'}$ be the set of all variables occurring in $\Theta'$. 
Choose a variable $x_{k}\in \mathbf{X'}$. 
If we consider the set of all pairs $(a,b)\in D_{k}^{2}$ such that $a$ and $b$ 
can be connected by a path in $\Theta'$ then we get a congruence (see Lemma~\ref{LinkedConIsCon}).
Since $\Theta'$ is not linked, there should be a maximal congruence 
$\sigma_{k}$ containing the congruence. 
This congruence was chosen in the line 4
of the pseudocode.

Since $\Theta'$ is not fragmented, there exists a path in $\Theta'$ from $x_{k}$ to any other variable 
from $\mathbf{X'}$. Following this path we can always define a partition 
on the next variable using the partition on the previous one.
Since every constraint of $\Theta'$ is a projection 
of a constraint from $\Theta$, we could define the same partitions on $\Theta$ 
(see the pseudocode of the function).
We just need to show that 
on every domain $D_{i}$ we can generate a unique partition using $\sigma_{k}$ 
(the order in which we add elements to $I$ and the way how we choose constraints is not important).
Consider two paths from $x_{k}$ to $x_{i}$ defining two partitions.
We glue together the beginnings of these paths and get a path from $x_{i}$ to $x_{i}$
connecting these partitions.
Since the instance is cycle-consistent, these partitions should be equal.
Thus, we showed that starting from the congruence $\sigma_{k}$ (in the pseudocode)
we get a unique partition on every variable $x_{i}\in\mathbf{X'}$. 
Therefore, we actually checked in the algorithm that 
the solution set of $\Theta'$ is subdirect, which gives us a contradiction.
Hence, $\Theta$ is irreducible.

The remaining part follows from the fact that 
$D_{i}'$ is the set of all possible evaluations of $x_{i}$ in  solutions of a weaker instance.
\end{proof}

\subsection{Main theorems without a proof}


To explain the correctness of the algorithm in Section~\ref{Algorithm}
we used the following main facts, which  
will be proved in Section~\ref{MainProofs}.

\begin{thm}\label{AbsorptionCenterStep}
Suppose $\Theta$ is a cycle-consistent irreducible CSP instance, and 
$B$ is a nontrivial binary absorbing subuniverse or a nontrivial center of $D_{i}$.
Then $\Theta$ has a solution if and only if
$\Theta$ has a solution with $x_{i}\in B$.
\end{thm}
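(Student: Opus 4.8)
The plan is to prove the ``only if'' direction; the other direction is trivial since any solution with $x_i\in B$ is in particular a solution. So assume $\Theta$ has a solution, and I want to produce one with $x_i\in B$. The key idea is to run the argument on the \emph{solution set} of $\Theta$, viewed as a subdirect relation. Since $\Theta$ is cycle-consistent and irreducible, I will first want to reduce to the case where the solution set $S$ is actually subdirect on $D_1\times\dots\times D_n$: if $\Theta$ is fragmented, one handles each fragment independently (the fragment containing $x_i$ gets $B$, the others are unchanged), and if $\Theta$ is linked, I will argue that linkedness forces the propagation of the restriction $x_i\in B$ through the instance in a controlled way. The irreducibility hypothesis is exactly what lets me assume, after passing to an appropriate projection, that $S$ is subdirect.

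The technical heart is the ``propagation'' statement promised in the introduction (Section~\ref{AbsCenterPCLinear}): restricting one variable to a binary absorbing subuniverse, or to a center, of its domain forces every other variable to be restricted to a binary absorbing subuniverse, respectively a center, of its domain, \emph{and this restricted instance is still consistent}. Concretely, for the binary-absorbing case I would use the following two facts about absorption. First, if $R\subseteq \mathbf{A}\times\mathbf{B}$ is subdirect and $B_0$ binary absorbs $\mathbf A$ (via a binary term $t$), then the projection onto the second coordinate of $R\cap(B_0\times B)$ binary absorbs $\mathbf B$ — one just applies $t$ coordinatewise to a tuple in $R\cap(B_0\times B)$ and an arbitrary tuple in $R$, and uses that $t(B_0,A)\subseteq B_0$. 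Second, a nonempty binary absorbing subuniverse of a linked (in the algebraic sense: subdirectly irreducible-like) structure is nonempty and the restriction stays subdirect — this is where cycle-consistency and linkedness are used, to guarantee that after restricting $x_i$ to $B$ and propagating, no domain becomes empty. For the center case, the analogous propagation must be done using the defining property of a center (existence of the witnessing algebra $\mathbf B$ with no binary absorbing subuniverse and the subdirect $R\subseteq\mathbf A\times\mathbf B$), and one checks that centers pull back along subdirect relations in the same way absorbing subuniverses do; this likely requires a separate lemma from Section~\ref{AbsCenterPCLinear}.

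So the steps, in order, are: (1) reduce to $\Theta$ not fragmented, handling fragments separately; (2) if not linked, decompose into linked components and restrict only the component of $x_i$; (3) with $\Theta$ cycle-consistent, irreducible, and (essentially) linked, invoke the propagation lemma to restrict \emph{every} domain $D_j$ to a binary absorbing subuniverse $B_j$ (with $B_i\subseteq B$) or to a center $C_j$, obtaining a new instance $\Theta'$ on the smaller domains; (4) check $\Theta'$ is again cycle-consistent (minimal/uniform reductions preserve cycle-consistency, cf. Section~\ref{AuxStatements}) and has strictly smaller domains, so by induction on $\sum_j|D_j|$ it has a solution, which is the desired solution of $\Theta$ with $x_i\in B_i\subseteq B$. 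The main obstacle I anticipate is step (3): proving that the restriction propagates \emph{without killing consistency} — i.e., that the restricted instance $\Theta'$ is still nonempty and cycle-consistent — for which irreducibility and the ``previous reductions cannot harm'' machinery seem essential, and the center case will need noticeably more care than the binary-absorbing case because centers are defined extrinsically via an auxiliary algebra rather than by an internal term condition.
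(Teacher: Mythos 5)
There is a genuine gap, and it sits exactly where the real difficulty of this theorem lies. Your steps (1)--(3) are broadly in line with the paper: the propagation facts you cite are Corollaries~\ref{AbsImpliesCons} and \ref{CenterImpliesCons}, and the paper packages the propagation (together with the nonemptiness checks via Lemma~\ref{AbsLessThanThree}, Corollary~\ref{PathStability} and Lemma~\ref{CenterLessThanThree}) into Lemma~\ref{ConstraintPropagation} and Theorems~\ref{NextReductionOne}, \ref{NextReductionTwo}, which produce a minimal 1-consistent absorbing or central reduction $D^{(1)}$ with $D_{i}^{(1)}\subseteq B$. The problem is step (4). You conclude that the reduced instance $\Theta'$ has a solution ``by induction on $\sum_j |D_j|$'', but there is no statement available to induct on: the theorem itself, applied to $\Theta'$, only says that \emph{if} $\Theta'$ has a solution then it has one with $x_i$ in a given absorbing set or center --- it does not provide that $\Theta'$ has a solution at all, which is precisely what you need. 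Nor can you fall back on ``cycle-consistent, irreducible, nonempty domains $\Rightarrow$ solvable'': that implication is false in general (already for systems of linear equations over $\mathbb{Z}_p$ one has unsatisfiable instances that are cycle-consistent and irreducible), and the paper never claims it.

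What is missing is the content of Theorem~\ref{CannotLooseSolution}: a minimal 1-consistent \emph{nonlinear} reduction of a cycle-consistent irreducible instance cannot destroy all solutions. In the paper, the present theorem is a two-line corollary of Theorems~\ref{NextReductionOne}, \ref{NextReductionTwo} and \ref{CannotLooseSolution}; the latter is proved by a long simultaneous induction occupying Section~\ref{MainProofs} (crucial instances, the parallelogram property via Theorem~\ref{ParPropertyMain}, extraction of a linked connected component via Theorem~\ref{FindPerfectConstraint}, bridges and Corollary~\ref{LinkedLink}), not by an elementary induction on domain sizes. Your closing remark that the ``previous reductions cannot harm'' machinery ``seems essential'' correctly identifies where the work is, but the proposal as written replaces that work with an unjustified induction step, so the argument does not go through.
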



\begin{thm}\label{PCStepThm}
Suppose $\Theta$ is a cycle-consistent irreducible CSP instance,
there does not exist a nontrivial binary absorbing subuniverse or a nontrivial center on $D_{j}$
for every $j$,
$(D_{i};w)/\sigma$ is a polynomially complete algebra, 
and
$E$ is an equivalence class of $\sigma$.
Then $\Theta$ has a solution if and only if
$\Theta$ has a solution with $x_{i}\in E$.
\end{thm}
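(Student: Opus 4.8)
The plan is to prove the nontrivial direction: if $\Theta$ has a solution, then it has one with $x_i \in E$. I would argue by contradiction and by induction on the total size $\sum_j |D_j|$ of the instance, exploiting the assumption that no domain has a binary absorbing subuniverse or a center, which is exactly what Theorem~\ref{AbsorptionCenterStep} needs to be unavailable as a shortcut and forces the PC structure to do the work. First I would reduce $\Theta$ to a crucial instance: replace each constraint by all weaker constraints (as in Remark~\ref{GetCrucialInstance}) so that $\Theta$ has no solution in the product of the reduced domains $(D_1,\dots,D_{i-1},E,D_{i+1},\dots,D_n)$ but becomes solvable after weakening any single constraint. Note that since $\Theta$ is assumed cycle-consistent and irreducible, and weakening preserves these properties (this is proved in the auxiliary section, ``minimal reductions preserve cycle-consistency and irreducibility''), we may assume $\Theta$ itself is crucial in that tuple.

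The heart of the argument is to use polynomial completeness of $(D_i;w)/\sigma$ to build, from several hypothetical solutions, a new solution landing in $E$. Concretely: since $\Theta$ has a solution, for each class $E'$ of $\sigma$ that is ``reachable'' in position $i$ we can, by the crucial/irreducible structure together with cycle-consistency and linkedness of the relevant linked component, produce solutions whose $x_i$-coordinates, read modulo $\sigma$, form a subset $S$ of $D_i/\sigma$. If $S = \{E\}$ we are done; otherwise I would show $S$ is a subuniverse of the PC algebra $(D_i;w)/\sigma$ closed under enough term operations. Because the algebra is polynomially complete, the only way a proper subset can be ``closed'' in the sense forced by combining solutions via $w$ and the other polymorphisms is if it is in fact all of $D_i/\sigma$ — and then $E$ is among the reachable classes, giving the desired solution. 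The mechanism for turning closure-under-polynomials into a contradiction with crucialness is the standard one: a PC algebra has, among its term-plus-constant operations, every operation, so in particular it has operations that separate any two elements while fixing a third arbitrarily; feeding tuples of solutions through such operations (made into genuine polymorphisms of the instance by absorbing the constants, which is legitimate precisely because no domain has a binary absorbing subuniverse or center, so the ``Con'' operator of the auxiliary section behaves well) yields a solution that violates the assumed non-solvability in $E$.

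The main obstacle, as I see it, is the bookkeeping that connects the \emph{local} PC structure on $D_i$ to the \emph{global} solution set of $\Theta$: one must show that the set of classes of $\sigma$ attainable by solutions is itself something the PC structure acts on, which requires propagating the reduction $x_i \in E'$ through the whole instance and invoking the key fact (from Section~\ref{AbsCenterPCLinear}) that restricting one variable to a PC-subuniverse forces a corresponding restriction of the others. I would expect to need that propagation lemma in the form: if $\Theta$ is cycle-consistent and irreducible and we restrict $x_i$ to a union of $\sigma$-classes, the induced restrictions on all other domains are again unions of classes of the corresponding congruences, so that the factored instance $\Theta/\sigma$-style object is well-defined and again cycle-consistent. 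Once that infrastructure is in place, the contradiction with crucialness (weakening one constraint should not have created a solution in $E$ if $E$ were unreachable, yet PC-ness says it must be reachable) closes the proof.
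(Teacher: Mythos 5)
There is a genuine gap, and it sits exactly at the step you describe as ``feeding tuples of solutions through such operations (made into genuine polymorphisms of the instance by absorbing the constants, \ldots)''. The set $S$ of $\sigma$-classes attainable by solutions at $x_i$ is only closed under the \emph{term} operations of $(D_i;w)/\sigma$ (i.e.\ under $w$), because only polymorphisms of the constraint language act on the solution set. Polynomial completeness concerns the clone generated by $w$ \emph{together with all constants}, and constants on $D_i/\sigma$ are not polymorphisms of $\Theta$; the absence of binary absorbing subuniverses and centers does not make them so, and no lemma in the paper (nor any standard fact) lets you ``absorb'' the constants into polymorphisms of the instance. A PC algebra can perfectly well have proper subuniverses under its term clone alone, so your closure argument does not force $S=D_i/\sigma$, and the theorem in fact demands more than $E\in S$ for some class: it must hold for an \emph{arbitrary} class $E$. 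This unproved transfer from local polynomial completeness to the global solution set is precisely the hard content of the theorem, not a bookkeeping matter.

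The paper's own proof is two lines long only because all of that content has been outsourced: Theorems~\ref{NextReductionOne} and \ref{NextReductionTwo} produce a minimal 1-consistent PC reduction $D^{(1)}$ with $D_i^{(1)}\subseteq E$ (this part is close in spirit to your propagation discussion, via Corollaries~\ref{PCImplies} and \ref{PCLessThanThree}), and then Theorem~\ref{CannotLooseSolution} asserts that a solution survives the reduction. That theorem is the heart of the matter and is proved by a simultaneous induction with Theorems~\ref{ParPropertyMain}--\ref{FindPerfectConstraint}, using crucial instances with the parallelogram property, linked connected components, bridges and key relations (Corollary~\ref{LinkedLink} introduces an auxiliary $\mathbb Z_p$-variable whose domain cannot be restricted by a PC reduction, which is what ultimately yields the contradiction), together with Theorem~\ref{newcolonies}. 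Your sketch of ``crucial instance $+$ PC-ness forces reachability'' collapses this machinery into a single unjustified step, so as written the argument would fail at exactly the point where the real work begins.
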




\begin{thm}\label{LinearStep}
Suppose the following conditions hold:
\begin{enumerate}
\item $\Theta$ is a linked cycle-consistent irreducible CSP instance with domain set
$(D_{1},\ldots,D_{n})$;
\item there does not exist a nontrivial binary absorbing subuniverse or a nontrivial center on $D_{j}$ for every $j$;
\item if we replace every constraint of $\Theta$ by all weaker constraints then the obtained instance
has a solution with $x_{i} = b$ for every $i$ and $b\in D_{i}$ (the obtained instance has a subdirect solution set);
\item $L_{i} = D_{i}/\sigma_{i}$ for every $i$, where $\sigma_{i}$ is the minimal linear congruence on $D_{i}$;
\item $\phi:\mathbb Z_{q_{1}}\times \dots \times \mathbb Z_{q_{k}}
\to L_{1}\times\dots\times L_{n}$ is a homomorphism,
where $q_{1},\dots,q_{k}$ are prime numbers;
\item if we replace any constraint of 
$\Theta$ by all weaker constraints then for every $(a_{1},\ldots,a_{k})\in \mathbb Z_{q_{1}}\times \dots \times \mathbb Z_{q_{k}}$ 
there exists a solution of the obtained instance in 
$\phi(a_{1},\ldots,a_{k})$.
\end{enumerate}
Then 
$\{(a_{1},\dots,a_{k})\mid \Theta \text{ has a solution in }\phi(a_1,\dots,a_{k})\}$ is
either empty, or is full, or is an affine subspace of $\mathbb Z_{q_{1}}\times \dots \times \mathbb Z_{q_{k}}$ of codimension 1 (the solution set of a single linear equation).
\end{thm}




\section{The Remaining Definitions}\label{DefinitionSection}

\subsection{Variety of algebras}
We consider the variety of all algebras $\mathbf A = (A;w)$
such that $w$ is a special WNU operation of arity $m$.
As it was mentioned in Section~\ref{Definition}
every domain $D$ will be viewed as a finite algebra $(D;w)$ from this variety.
Note that in the remainder of this paper any claim or assumption ``$\rho$ is a relation''
 should be understood as 
``$\rho$ is a subalgebra of
$\mathbf A_{1}\times\dots\times \mathbf A_{n}$'' for the corresponding finite algebras
$\mathbf A_{1},\ldots,\mathbf A_{n}$ from this variety.

\subsection{Additional notations}

For a relation 
$\rho\subseteq A_{1}\times\dots\times A_{n}$ and 
a congruence $\sigma$ on $A_{i}$, 
we say that the $i$-th variable of the relation $\rho$ is 
\emph{stable under $\sigma$}
if $(a_{1},\ldots,a_{n})\in\rho$ and $(a_{i},b_{i})\in\sigma$
imply
$(a_{1},\ldots,a_{i-1},b_{i},a_{i+1},\ldots,a_{n})\in\rho$.
%
We say that a relation is \emph{stable under} $\sigma$ if every variable of this relation is stable under $\sigma$.


We say that a congruence $\sigma$ is \emph{irreducible} if
it is proper and it cannot be represented as an intersection of other binary relations $\delta_{1},\ldots,\delta_{s}$ stable under $\sigma$.
For an irreducible congruence $\sigma$ on a set $A$
by $\cover{\sigma}$ we denote the minimal binary relation $\delta\supsetneq \sigma$ stable under $\sigma$.

For a relation $\rho$ by $\ConOne(\rho,i)$
we denote the binary relation $\sigma(y,y')$ defined by
$$\exists x_{1}\dots\exists x_{i-1}\exists x_{i+1}\dots\exists x_{n}\;\rho(x_{1},\ldots,x_{i-1},y,x_{i+1},\ldots,x_{n})\wedge
\rho(x_{1},\ldots,x_{i-1},y',x_{i+1},\ldots,x_{n}).$$
For a constraint $C = \rho(x_{1},\ldots,x_{n})$
by $\ConOne(C,x_{i})$ we denote $\ConOne(\rho,i)$.
For a set of constraints $\Omega$ by
$\Congruences(\Omega,x)$ we denote
the set $\{\ConOne(C,x)\mid C\in \Omega\}$.

A congruence $\sigma$ on $\mathbf{A}$ is called 
\emph{a PC congruence} if 
$\mathbf{A}/\sigma$ is a PC algebra without a nontrivial binary absorbing subuniverse or center.
For an algebra $\mathbf A$ by $\PCCon(\mathbf A)$ we denote the intersection of
all PC congruences.
A subuniverse $A'\subseteq A$ is called a \emph{PC subuniverse}
if $A' = E_{1}\cap\dots\cap E_{s}$,
where each $E_{i}$ is an equivalence class of a PC congruence.
Note that a PC subuniverse can be empty or full. 

A congruence $\sigma$ on $\mathbf{A}$ is called 
\emph{linear} if 
$\mathbf{A}/\sigma$ is a linear algebra.
For an algebra $\mathbf A$ by $\LinCon(\mathbf A)$ we denote the minimal linear congruence.
A subuniverse of $\mathbf A$ is called a \emph{linear subuniverse}
if it is stable under $\LinCon(\mathbf A)$.
Note that we could not define a PC subuniverse in the same way because not every subuniverse stable under $\PCCon(\mathbf A)$ is a PC subuniverse of $\mathbf A$
(see Subsection \ref{PCSubsection}).

A subuniverse $B\subseteq A$ is called \emph{a one-of-four subuniverse}
if it is a binary absorbing subuniverse, 
a center, a PC subuniverse, or a linear subuniverse. 
We say that $B$ is a one-of-four subuniverse of \emph{absorbing type}, 
\emph{central type}, \emph{PC type}, or \emph{linear type}, respectively.
A subuniverse of type $\mathcal T$
is called \emph{minimal} if it is 
a minimal nontrivial 
subuniverse of this type.
Note that 
a minimal PC/linear subuniverse is 
a block of $\PCCon(\mathbf A)$/$\ConLin(\mathbf A)$.

\subsection{pp-formula, subconstraint, coverings}

Every variable $x$ appearing in the paper has its domain, which we denote by $D_{x}$.
In the paper we usually identify a CSP instance and a set of constraints.
For an instance $\Omega$ by $\Var(\Omega)$ we denote the set of all variables occurring in 
constraints of $\Omega$ (the set of all variables $\mathbf{X}$ is not important, all the properties
of the instance depend only on the variables that actually occur in the instance).
For an instance $\Omega$ and two sets of variables
$x_{1},\ldots,x_{n}$ and $y_{1},\ldots,y_{n}$
by $\Omega_{x_{1},\ldots,x_{n}}^{y_{1},\ldots,y_{n}}$ we denote the instance
obtained from $\Omega$ by replacement of every variable $x_{i}$ by $y_{i}$.

Sometimes we write an instance $\{C_{1},\ldots,C_{n}\}$ 
as a conjunctive formula $C_{1}\wedge\dots\wedge C_{n}$.
We say that an instance is \emph{a tree-formula} if there is no a path
$z_{1}-C_{1}-z_{2}-\dots -z_{l-1}-C_{l-1}-z_{l}$
such that $l\ge 3$, $z_{1} = z_{l}$, and all the constraints $C_{1},\ldots,C_{l-1}$ are different.

An expression 
$\exists y_{1}\dots\exists y_{s}\; (C_{1}\wedge \dots\wedge C_{n})$
is called \emph{a positive primitive formula (pp-formula)}.
To simplify, 
we use a notation $\Omega(x_{1},\ldots,x_{n})$
to write the pp-formula
$\exists y_1\dots \exists y_s \Omega$, 
where $\Omega$ is an instance (or a conjunction of constraints)
and $y_1,\ldots,y_s$ are all variables occurring in $\Omega$ except for
$x_{1},\dots,x_{n}$.
Then, we say that a pp-formula $\Omega(x_{1},\ldots,x_{n})$ defines a relation $\rho$
if $\rho(x_{1},\ldots,x_{n}) = \exists y_{1}\dots\exists y_{s}\; \Omega$.
Sometimes, if it is convenient, we write 
$\Omega(x_{1},\dots,x_{n})$ meaning the relation defined by 
the pp-formula.
A pp-formula $\Omega(x_{1},\ldots,x_{n})$ is called a \emph{subconstraint of $\Theta$}
if $\Omega\subseteq \Theta$, and $\Omega$ and $\Theta\setminus\Omega$ do not have common variables except for $x_{1},\ldots,x_{n}$.
Note that all relations that can be defined by a pp-formula 
are preserved by the WNU (see \cite{geiger1968closed,bond1,bond2}).

For a formula $\Omega$ by $\ExpShort(\Omega)$ we denote the set of all formulas $\Omega'$
such that there exists a mapping $S:\Var(\Omega')\to\Var(\Omega)$
satisfying the following conditions:
\begin{enumerate}
\item the domain of any variable $x$ from $\Omega'$ is equal to
the domain of $S(x)$ in $\Omega$;
\item for every constraint $((x_{1},\ldots,x_{n});\rho)$ of $\Omega'$, 
$((S(x_{1}),\ldots,S(x_{n}));\rho)$
is a constraint of $\Omega$;
\item
if a variable $x$ appears in both $\Omega$ and $\Omega'$ then $S(x) = x$.
\end{enumerate}

Similarly, by $\Expanded(\Omega)$ (\emph{Expanded Coverings}) we denote the set of all formulas $\Omega'$
such that there exists a mapping $S:\Var(\Omega')\to\Var(\Omega)$
satisfying the following conditions:
\begin{enumerate}
\item the domain of any variable $x$ from $\Omega'$ is equal to
the domain of $S(x)$ in $\Omega$;
\item for every constraint $((x_{1},\ldots,x_{n});\rho)$ of $\Omega'$
either the variables $S(x_{1}),\ldots,S(x_{n})$ are different and the constraint $((S(x_{1}),\ldots,S(x_{n}));\rho)$ is weaker or equivalent to some constraint of $\Omega$,
or $S(x_{1}) = \dots = S(x_{n})$ and $\{(a,a,\ldots,a)\mid a\in D_{x_{1}}\}\subseteq\rho$;
\item
if a variable $x$ appears in both $\Omega$ and $\Omega'$ then $S(x) = x$.
\end{enumerate}

For a variable $x$ we say that $S(x)$ is \emph{the parent of x}.

The following easy facts about coverings 
can be derived from the definition.
\begin{enumerate}
    \item every time we replace some constraints by weaker constraints we get an expanded covering of the original instance;
    \item any solution of the original instance can be naturally expanded to a solution of a covering (expanded covering);
    \item suppose $\Omega$ is a covering (expanded covering) of a 1-consistent instance and $\Omega$ is a tree-formula,
    then the solution set of
    $\Omega$ is subdirect;
    \item the union (union of all constraints) of two coverings (expanded coverings)  is 
    also 
    a covering (expanded covering);
    \item a covering (expanded covering) of a covering (expanded covering) is a covering (expanded covering).
\end{enumerate}

Another important property is formulated in the following lemma.

\begin{lem}\label{ExpandedConsistencyLemma}
Suppose $\Theta$ is a cycle-consistent irreducible CSP instance
and $\Theta'\in\Expanded(\Theta)$.
Then $\Theta'$ is cycle-consistent and irreducible.
\end{lem}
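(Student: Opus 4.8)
The plan is to prove $\Theta' \in \Expanded(\Theta)$ inherits cycle-consistency and irreducibility by carefully tracking how paths, solution sets, and projections in $\Theta'$ pull back to the corresponding data in $\Theta$ via the parent map $S$. The key observation is that $\Expanded(\Theta)$ is built by (a) making extra copies of variables together with copies of weaker-or-equal constraints, and (b) adding binary reflexive constraints between two copies of the same parent variable. Both operations are ``harmless'' in the sense that any path or pp-definition in $\Theta'$ maps under $S$ to a valid path or pp-definition in $\Theta$ (reading a reflexive constraint on $S(x_1)=S(x_2)$ as the trivial path that stays put), and, conversely, a path in $\Theta$ can be lifted to $\Theta'$ whenever we need existence. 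So the proof splits into two parts: cycle-consistency and irreducibility.

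For cycle-consistency: let $z$ be a variable of $\Theta'$ and $a \in D_z$, and let $p$ be a path in $\Theta'$ starting and ending at $z$. Applying $S$ coordinatewise turns $p$ into a walk in $\Theta$ from $S(z)$ to $S(z)$, where steps through reflexive constraints collapse to a stationary step (legal since those relations are reflexive). Here I would use that $\Theta$ is cycle-consistent, hence the image walk connects $a$ to $a$ in $\Theta$; since the constraints of $\Theta'$ that were not reflexive are projections/weakenings of constraints of $\Theta$ on the parent variables, the witnessing values along the $\Theta$-walk can be used (copying a parent's value to all its children, which is consistent because a weaker-or-equal constraint imposes no more than its parent) to witness that $p$ connects $a$ to $a$ in $\Theta'$. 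One has to check the base-step domain equality $D_z = D_{S(z)}$, which follows because copies of a variable are introduced with the same domain. I also need every $\rho$ in $\Theta'$ to be subdirect, which again descends from subdirectness in $\Theta$ (weakenings without dummy variables preserve subdirectness) and from reflexive relations being subdirect on a one-element projection pair.

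For irreducibility: recall $\Theta$ irreducible means every instance built from projections of constraints of $\Theta$ is fragmented, linked, or has subdirect solution set. Let $\Theta''$ be such an instance built from projections of constraints of $\Theta'$. Each constraint of $\Theta''$ is then a projection of a constraint of $\Theta'$, which in turn (via $S$, except for reflexive constraints) is weaker-than-or-equal to a projection of a constraint of $\Theta$. I would form the ``pushed-down'' instance $\bar\Theta''$ on parent variables by replacing each constraint with the corresponding projection of a constraint of $\Theta$, dropping the reflexive ones (or replacing them by equalities $x = S(x)$, which is a legitimate reflexive projection). By irreducibility of $\Theta$, $\bar\Theta''$ is fragmented, linked, or subdirect; in the linked case I transport paths up to $\Theta''$ along $S$ (reflexive constraints provide the moves between sibling copies), and in the subdirect case I use that the solution set of $\Theta''$ projects onto that of $\bar\Theta''$, while in the fragmented case the fragmentation of $\bar\Theta''$ lifts to a fragmentation of $\Theta''$ by distributing the sibling copies of each parent into the same block. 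The main obstacle I expect is precisely this last transport step: one must check that passing from $\Theta$ to $\Theta'$ (extra variable copies plus reflexive glue) does not accidentally merge two fragmentation blocks — i.e. that the reflexive constraints only connect copies of the \emph{same} parent and hence cannot bridge two different blocks — and symmetrically that "linked" is neither created nor destroyed spuriously; handling the interplay of these reflexive identifications with arbitrary projections is the delicate bookkeeping of the argument, but it is bookkeeping rather than a new idea.
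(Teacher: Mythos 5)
Your cycle-consistency argument and the overall architecture (push an instance of $\Theta'$ down along the parent map $S$, apply the hypothesis on $\Theta$, transfer the conclusion back up) are essentially the paper's. The genuine gap is in the linked case of the transfer, which is exactly the step you dismiss as ``bookkeeping''. You claim that the moves between two sibling copies of a parent variable are provided by the reflexive constraints. That is false for a general member of $\Expanded(\Theta)$: the definition does not require sibling variables to be joined by any reflexive constraint at all (for instance, in the transformation $T_{3}$ a fresh copy of a whole component is added and the copies $x_{i}$, $x_{i}'$ are related only indirectly, through shared variables of other constraints), and in any case your sub-instance $\Theta''$ consists of arbitrary projections, so even existing reflexive constraints need not survive into it. Consequently, when you lift a linked path of $\bar\Theta''$ step by step, consecutive lifted constraints generally involve \emph{different} children of the common parent variable, and your lifted object is not a path in $\Theta''$.

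The paper's proof supplies precisely the missing mechanism, and it is an idea, not bookkeeping: one glues consecutive lifted constraints with an arbitrary path in $\Theta''$ between the two siblings --- such a path exists because $\Theta''$ may be assumed \emph{not fragmented} (otherwise you are already done) --- and then one uses cycle-consistency of $\Theta$, pulled back through $S$ exactly as in your first half, to conclude that this glue path connects the witness value $c$ to itself, so the concatenation really connects $a$ and $b$ in $\Theta''$. Your sketch never invokes cycle-consistency in the irreducibility half and never uses non-fragmentation of $\Theta''$; without them the implication ``$\bar\Theta''$ linked $\Rightarrow$ $\Theta''$ linked'' is simply not true (two disjoint copies of the same constraint push down to a possibly linked instance while $\Theta''$ is fragmented and unlinked), so the case analysis must be organized so that non-fragmentation of $\Theta''$ is available when you do the linked lift --- which is why the paper argues contrapositively, assuming all three properties fail for the sub-instance of $\Theta'$ and deducing they all fail for the pushed-down instance. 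The fragmented and subdirect transfers in your sketch are fine (solutions of $\bar\Theta''$ lift by giving every child its parent's value, though your phrase ``projects onto'' has the direction muddled), but as written the linked case does not go through.
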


\begin{proof}
Let us prove that  $\Theta'$ is cycle-consistent. Consider a path in $\Theta'$ starting and 
ending with $z$.
Since $\Theta'$ is an expanded covering, 
for every constraint of $\Theta'$ either there exists a corresponding constraint in $\Theta$, or 
this constraint is reflexive (contains all tuples $(a,a,\ldots,a)$). 
Thus, to transform the path in $\Theta'$
to a path in $\Theta$ it is sufficient to replace every variable $x$ in the path by 
$S(x)$ (from the definition of expanded coverings), 
remove all reflexive constraints, and replace the remaining constraints by the corresponding constraints from $\Theta$.
Since $\Theta$ is cycle-consistent, 
the obtained path connects $a$ with $a$ for any $a\in D_{z}$. Since constraints in the path in $\Theta'$
are weaker or equivalent to constraints 
in the path in $\Theta$ and relations we removed are reflexive,
the path in $\Theta'$ also connects $a$ with $a$ 
for every $a\in D_{z}$.

Let us show that $\Theta'$ is irreducible.
Assume the converse, then there exists an instance $\Omega'$ consisting of projections of constraints from $\Theta'$
that is not linked, not fragmented, and its solution set is not subdirect.
By $\Omega$ we denote the set of corresponding projections of constraints from $\Theta$ corresponding to the constraints of $\Omega'$ (we ignore reflexive constraints from $\Omega'$).
To be more accurate, suppose a constraint $C''\in\Omega'$ 
is equal to  $\proj_{\mathbf X}(C')$ 
for a constraint $C'\in\Theta'$ and a set of variable 
$\mathbf X$, and 
$C'$ is weaker or equivalent to a constraint $C\in\Theta$.
Then we add the constraint $\proj_{S(\mathbf X)}(C)$ to $\Omega$.

Let us show that $\Omega$ is not linked.
Assume the contrary.
For any path in $\Omega$ connecting elements $a$ and $b$ of $D_{x}$
we can build a path connecting $a$ and $b$ in $\Omega'$ in the following way.
We replace every constraint of $\Omega$ by the corresponding constraint of $\Omega'$,
and glue them with any path in $\Omega'$ starting and ending with the corresponding variables having the same parent.
Since $\Omega'$ is not fragmented, we can always do this.
Since $\Omega$ is cycle-consistent, the obtained path connects $a$ and $b$ in $\Omega'$.
Thus, $\Omega$ is not linked.
Any solution of $\Omega$ can be naturally extended to 
a solution of $\Omega'$, 
hence the solution set of 
$\Omega$ cannot be subdirect. 
Since $\Omega'$ is not fragmented, 
$\Omega$ is also not fragmented.
Thus, 
$\Omega$ is not linked, not fragmented, and its solution set is not subdirect,
which contradicts the fact that $\Theta$ is irreducible.
\end{proof}

For an instance $\Theta$ and its variable $x$
by $\LinkedCon(\Theta,x)$ we denote the binary relation on the set $D_{x}$ defined as follows:
$(a,b)\in \LinkedCon(\Theta,x)$ if there exists a path in $\Theta$ that connects $a$ and $b$.

\begin{lem}\label{LinkedConIsCon}
Suppose $\Theta$ is a cycle-consistent
CSP instance,
$x\in \Var(\Theta)$. 
Then there exists a path in $\Theta$ 
connecting all pairs $(a,b)\in \LinkedCon(\Theta,x)$
and $\LinkedCon(\Theta,x)$ is a congruence.
\end{lem}
\begin{proof}
Since the instance is cycle-consistent,
gluing all the paths starting and ending at $x$ we can build a path 
connecting all pairs $(a,b)\in \LinkedCon(\Theta,x)$.
The set of all pairs $(a,b)$ 
connected by this path can be defined by a pp-formula, 
therefore it is an invariant relation, which is also reflexive (by cycle-consistency) 
and transitive (we can glue paths).
\end{proof}

\subsection{Critical, key relations, and parallelogram property}
\label{DefinitionRectangularitySubsection}

We say that a relation $\rho$ \emph{has the parallelogram property}
if any permutation of its variables gives a relation $\rho'$ satisfying
$$\forall \alpha_{1},\beta_{1},\alpha_2,\beta_2\colon (\alpha_{1}\beta_2,\beta_1\alpha_2,\beta_1\beta_2\in\rho'
\Rightarrow \alpha_1\alpha_2\in\rho').$$
Note that the parallelogram property plays an important role in universal algebra (see \cite{agnes} for more details).

We say that \emph{the $i$-th variable of a relation $\rho$ is rectangular},
if for every $(a_{i},b_{i})\in\ConOne(\rho,i)$ and $(a_{1},\ldots,a_{n})\in\rho$
we have $(a_{1},\ldots,a_{i-1},b_{i},a_{i+1},\ldots,a_{n})\in\rho$.
We say that a relation is \emph{rectangular} if all of its variables are rectangular.
The following facts can be easily seen:
if the $i$-th variable of a subdirect relation $\rho$ is rectangular then $\ConOne(\rho,i)$ is a congruence;
if a relation has the parallelogram property then it is rectangular.

A relation $\rho\subseteq A_{1}\times\dots\times A_{n}$ is called \emph{essential} if it cannot be represented as a conjunction of relations with smaller arities.
It is easy to see that
any relation $\rho$ can be represented as a conjunction of essential relations that are projections of $\rho$ on some sets of variables (See Lemma 4.2 in \cite{MinimalClones}).

A relation $\rho\subseteq A_{1}\times\dots\times A_{n}$ is called \emph{critical}
if it cannot be represented as an intersection of other subalgebras
of $\mathbf A_{1}\times\dots\times \mathbf A_{n}$
and it has no dummy variables 
This notion was introduced in \cite{agnes} but appeared in \cite{mvlsc,mybook} by the name maximal.
For a critical relation $\rho$ the minimal relation $\rho'$ (a subalgebra of $\mathbf A_{1}\times\dots\times \mathbf A_{n}$) such that
$\rho'\supsetneq\rho$ is called \emph{the cover of $\rho$}.

Suppose $\rho\subseteq A_{1}\times\dots\times A_{h}$.
A tuple $\Psi =(\psi_1,\psi_2,\ldots,\psi_h)$, where
$\psi_i:A_{i}\to A_{i}$, is called a \emph{unary vector-function}.
We say that $\Psi$ \emph{preserves} $\rho$ if
$\Psi\left(\begin{smallmatrix}
a_1\\
a_2\\
\vdots\\
a_h
\end{smallmatrix}\right):=
\left(\begin{smallmatrix}
\psi_1(a_1)\\
\psi_2(a_2)\\
\vdots\\
\psi_h(a_h)
\end{smallmatrix}\right)\in \rho$
for every $\left(\begin{smallmatrix}
a_1\\
a_2\\
\vdots\\
a_h
\end{smallmatrix}\right)\in \rho$.
We say that $\rho$ is \emph{a key relation} if
there exists a tuple $\beta\in (A_{1}\times\dots\times A_{h})\setminus \rho$ such that for every
$\alpha\in (A_{1}\times\dots\times A_{h})\setminus \rho$ there exists
a vector-function $\Psi$ which
preserves $\rho$ and gives $\Psi(\alpha) = \beta$.
A tuple $\beta$ is called a \emph{key tuple} for $\rho$.
The notion key relation was introduced in 
\cite{KeyRelations}, where such relations were characterized 
for all algebras having a WNU term operation.

A constraint is called \emph{critical/essential/key} if the constraint relation
is critical/essential/key.
The notions 
critical, crucial, essential, and key relation are related to each other,
namely, we can observe:
\begin{enumerate}
    \item if $C$ is a constraint in a CSP instance and $C$ is crucial in some $(D_{1},\dots,D_{n})$ then
the constraint relation of $C$ is critical;
\item every critical relation of arity greater than 1 is essential;
\item every critical relation of arity greater than 1 is a key relation (see Lemma 2.4 in \cite{KeyRelations}).
\end{enumerate}

The notions essential, critical, and key relations
(see \cite{KeyRelations} for their comparison)
proved their efficiency in clone theory and 
universal algebra (see \cite{mvlsc,mybook,MinimalClones,VardiProblem,dm_post,agnes}).
Instead of considering all relations we consider 
only relations with one of these properties,
and this is still the general case because 
any relation can be represented as a conjunction of 
essential/key/critical relations. 
For instance, we can always assume that all constraint relations 
are critical.

\subsection{Reductions}

Suppose the domain set of an instance $\Theta$ is $D = (D_{1},\ldots,D_{n})$.
A domain set $D' = (D_{1}',\ldots,D_{n}')$ is called \emph{a reduction of $\Theta$} if
$D_{i}'$ is a subuniverse of $D_{i}$ for every $i$.
Note that to avoid unnecessary bold font starting at this subsection we do not 
use it for domain sets. 
Thus, every time we write $D$ without a subscript we mean 
a domain set or a reduction. 
Note that any reduction of $\Theta$ can be naturally extended 
to a covering (expanded covering) of $\Theta$, 
thus we assume that any reduction is automatically defined on 
any covering (expanded covering).

A reduction $D' = (D_{1}',\ldots,D_{n}')$ is called \emph{1-consistent}
if the instance obtained after reduction of every domain is 1-consistent.

We say that $D'$ is  \emph{an absorbing reduction}, 
if there exists a term operation
$t$ such that $D_{i}'$ is a binary absorbing subuniverse of $D_{i}$ with the term operation $t$ for every $i$. 
We say that $D'$ is  \emph{a central reduction}, 
if 
$D_{i}'$ is a center of $D_{i}$ for every $i$. 
We say that $D'$ is  \emph{a PC/linear reduction},
if 
$D_{i}'$ is a PC/linear subuniverse of $D_{i}$ and $D_{i}$ does not have a nontrivial binary absorbing subuniverse or a nontrivial center for every $i$.
Additionally, we say that $D'$ is \emph{a minimal central/PC/linear reduction} if 
$D'$ is a minimal center/PC/linear subuniverse of $D_{i}$ for every $i$. 
We say that $D'$ is \emph{a minimal absorbing reduction} for a term operation $t$ 
if $D'$ is a minimal absorbing subuniverse of $D_{i}$  with $t$ for every $i$.  


A reduction is called \emph{nonlinear} if
it is an absorbing, central, or PC reduction.
A reduction $D'$ is called \emph{one-of-four reduction} if it is an absorbing, central, PC, or linear reduction such that $D'\neq D$.



We usually denote reductions by $D^{(j)}$ for some $j$ (or by $D^{(\top)}$).
In this case by $C^{(j)}$  we denote the constraint obtained after the reduction of the constraint $C$.
Similarly, by $\Theta^{(j)}$ we denote the instance obtained after the reduction of every constraint of $\Theta$.
For a relation $\rho$ by $\rho^{(j)}$ we denote the relation $\rho$ restricted to the corresponding domains of $D^{(j)}$.
Sometimes we write $(a_{1},\ldots,a_{n})\in D^{(j)}$ meaning that
every $a_{i}$ belongs to the corresponding $D_{x}^{(j)}$.

A \emph{strategy} for a CSP instance $\Theta$ with a domain set $D$ is
a sequence of reductions
$D^{(0)},\ldots,D^{(s)}$,
where $D^{(j)} = (D_{1}^{(j)},\ldots,D_{n}^{(j)})$,
such that
$D^{(0)} = D$ and $D^{(j)}$ is a one-of-four 1-consistent reduction of $\Theta^{(j-1)}$
for every $j\ge 1$.
A strategy is called \emph{minimal} if every reduction in the sequence is minimal.


\subsection{Bridges}

Suppose $\sigma_{1}$ and $\sigma_{2}$ are congruences on $D_{1}$ and $D_{2}$, respectively.
A relation $\rho\subseteq D_{1}^{2}\times D_{2}^{2}$ is called \emph{a bridge} from $\sigma_{1}$ to $\sigma_{2}$ if
the first two variables of $\rho$ are stable under $\sigma_{1}$,
the last two variables of $\rho$ are stable under $\sigma_{2}$,
$\proj_{1,2}(\rho) \supsetneq \sigma_{1}$,
$\proj_{3,4}(\rho) \supsetneq \sigma_{2}$,
and
$(a_{1},a_2,a_{3},a_{4})\in \rho$ implies
$$(a_1,a_2)\in \sigma_{1}\Leftrightarrow (a_3,a_4)\in \sigma_{2}.$$

An example of a bridge 
is the 
relation 
$\rho=\{(a_{1},a_{2},a_{3},a_{4})\mid
a_{1},a_{2},a_{3},a_{4}\in \mathbb Z_{4}:
a_{1}-a_{2} = 2 a_{3} - 2 a_{4}\}$.
We can check that 
$\rho$ is a bridge from 
the equality relation (0-congruence) 
and $(mod\;2)$ equivalence relation.
For example, we have
$\proj_{1,2} \rho$ 
is $(mod\;2)$-equivalence relation,
$\proj_{3,4} \rho$ 
is full relation.

The notion of a bridge is strongly related to other notions in Universal Algebra and Tame Congruence Theory
such as similarity and centralizers
(see \cite{RossSlides} for the detailed comparison).

For a bridge $\rho$ by $\widetilde{\rho}$ we denote 
the binary relation defined by 
$\widetilde{\rho}(x,y) = \rho(x,x,y,y)$.

The following lemma shows how
we can compose bridges.

\begin{lem}\label{BridgeComposition}
Suppose $\sigma_{1}$, $\sigma_{2}$, $\sigma_{3}$ are irreducible congruences, 
$\rho_{1}$ is a bridge  from $\sigma_{1}$ to $\sigma_{2}$,
$\rho_{2}$ is a bridge from $\sigma_{2}$ to $\sigma_{3}$.
Then the formula
$$\rho(x_1,x_2,z_{1},z_{2}) = \exists y_{1}\exists y_{2}\; \rho_{1}(x_{1},x_{2},y_{1},y_{2})\wedge \rho_{2}(y_{1},y_{2},z_{1},z_{2})$$
defines a bridge from $\sigma_{1}$
to $\sigma_{3}$.
Moreover, 
$\widetilde{\rho} = 
\widetilde{\rho_{1}}\circ\widetilde{\rho_{2}}$.
\end{lem}
\begin{proof}
Stability of the first two variables 
under $\sigma_{1}$ and 
of the last two variables under
$\sigma_{3}$ follows from the definition.

Let us prove that 
$\proj_{1,2}(\rho)\supsetneq \sigma_{1}$
(the inclusion $\proj_{3,4}(\rho)\supsetneq \sigma_{3}$ can be proved in the same way).
By the definition, for every 
$a$ there exists 
$b$ such that 
$(a,a,b,b)\in \rho_{1}$, 
and for every $b$ there exists $c$
such that 
$(b,b,c,c)\in\rho_{2}$.
Then $(a,a,c,c)\in\rho$, 
and since the first two variables of $\rho_{1}$ 
are stable under $\sigma_{1}$ 
we obtain $\proj_{1,2}(\rho)\supseteq \sigma_{1}$.
Since $\sigma_2$ is irreducible, 
$\proj_{3,4}(\rho_{1})\supseteq \sigma_{2}^{*}$
and 
$\proj_{1,2}(\rho_{2})\supseteq \sigma_{2}^{*}$.
Choose 
$(b_{1},b_{2})\in \sigma_{2}^{*}$, 
then there exist 
$a_1,a_2, c_{1},c_{2}$ such that 
$(a_{1},a_{2},b_{1},b_{2})\in\rho_{1}$ 
and 
$(b_{1},b_{2},c_{1},c_{2})\in\rho_{2}$.
Then $(a_1,a_2,c_{1},c_2)\in\rho$, which means 
that $\proj_{1,2}(\rho)\supsetneq \sigma_{1}$.

Suppose 
$(a_1,a_2,c_{1},c_{2})\in\rho$.
If $(a_{1},a_{2})\in\sigma_{1}$ then,
since $\rho_{1}$ is a bridge,
the corresponding values of 
$y_{1}$ and $y_2$ are equivalent modulo 
$\sigma_{2}$. 
Since $\rho_{2}$ is a bridge we obtain that 
$c_{1}$ and $c_{2}$ are equivalent modulo 
$\sigma_{3}$.

The equation $\widetilde{\rho} = 
\widetilde{\rho_{1}}\circ\widetilde{\rho_{2}}$
follows directly from the definition of $\rho$.
\end{proof}

A bridge $\rho\subseteq D^{4}$ is called \emph{reflexive} if
$(a,a,a,a)\in \rho$ for every $a\in D$.

We say that two congruences $\sigma_{1}$ and $\sigma_{2}$ on a set $D$ are \emph{adjacent}
if there exists a reflexive bridge from $\sigma_{1}$ to $\sigma_{2}$.

%

\begin{remark}
Since we can always put
$\rho(x_{1},x_{2},x_{3},x_{4}) = \sigma(x_{1},x_{3})\wedge \sigma (x_{2},x_{4})$,
any proper congruence $\sigma$ is adjacent with itself.
\end{remark}

A reflexive bridge $\rho$ from an irreducible congruence $\sigma_{1}$ to an irreducible congruence $\sigma_{2}$ is called \emph{optimal} if
there does not exist a reflexive bridge $\rho'$ from $\sigma_{1}$ to $\sigma_{2}$
such that $\widetilde{\rho'}
\supsetneq\widetilde{\rho}$.
Suppose $\rho$ is a reflexive bridge from 
$\sigma_{1}$ to $\sigma_{2}$.
then we can build 
a new bridge 
$$\rho'(x_1,x_2,y_1,y_2)
=\exists x_1'\exists x_2'\exists y_1'\exists y_2' \left[\rho(x_1,x_2,y_1',y_2')\wedge
\rho(x_1',x_2',y_1',y_2')\wedge 
\rho(x_1',x_2',y_1,y_2)\right]
$$
from $\sigma_1$ to $\sigma_{2}$
such that $\widetilde{\rho'} = 
\widetilde{\rho}\circ \widetilde{\rho}^{-1}\circ\widetilde{\rho}$.
Note that because of the reflexivity, 
$\widetilde \rho$ contains the equality relation.
Thus, if $\rho$ is optimal, then $\widetilde{\rho}$ is a congruence.
For an irreducible congruence $\sigma$ by $\Opt(\sigma)$ we denote the congruence $\widetilde{\rho}$ for an optimal bridge $\rho$ from $\sigma$ to $\sigma$.
Since we can compose two reflexive bridges, $\Opt(\sigma)$ is unique and therefore well-defined.
For a set of irreducible congruences $\mathfrak C$ put $\Opt(\mathfrak C) = \{\Opt(\sigma)\mid\sigma\in \mathfrak C\}$.

\begin{lem}\label{OptimalForAdjacent}
Suppose 
$\sigma_{1}$ and $\sigma_{2}$ 
are irreducible adjacent congruences.
Then 
$\Opt(\sigma_{1}) = \Opt(\sigma_{2})$.
\end{lem}
\begin{proof}
Let $\rho_{1}$ be an optimal bridge from 
$\sigma_{1}$ to $\sigma_{1}$,
$\rho_{2}$ be an optimal bridge from 
$\sigma_{2}$ to $\sigma_{2}$,
and $\rho$ be a reflexive bridge from $\sigma_{1}$ to 
$\sigma_{2}$.

Assume that 
$\Opt(\sigma_{2})\not\subseteq\Opt(\sigma_{1})$, 
that is $\widetilde\rho_{2}\not\subseteq\widetilde\rho_{1}$.
Using Lemma~\ref{BridgeComposition}, 
we compose 
bridges 
$\rho_{1}$, $\rho$,$\rho_{2}$, and $\rho$
(in this order)
to obtain a reflexive bridge $\rho_{1}'$ from $\sigma_{1}$ to $\sigma_{1}$.
Since $\widetilde\rho_{1}'\supseteq \widetilde\rho_{1}\cup\widetilde\rho_{2}$, 
we get a contradiction with the fact that 
$\rho_{1}$ is optimal.
\end{proof}



We say that two rectangular constraints $C_{1}$ and $C_{2}$ are \emph{adjacent} in a common variable $x$ if
$\ConOne(C_{1},x)$ and $\ConOne(C_{2},x)$ are adjacent.
A formula is called \emph{connected} if
every constraint in the formula is critical and rectangular, and
the graph, whose vertexes are constraints 
and edges are adjacent constraints,
is connected.
Note that this connectedness is not related 
to the paths from one variable to another 
connecting two elements. 
Recall that 
if for every $a,b$  
there exists a path that connects $a$ and $b$, 
then the instance is called linked
(see Section \ref{CSPInstancesDef}).

It can be shown (see Corollary~\ref{PathInConnectedComponent}) that
every two constraints with a common variable in a connected instance are adjacent.

\section{Absorption, Center, PC Congruence, and Linear Congruence}\label{AbsCenterPCLinear}
\subsection{Binary Absorption}

\begin{lem}\label{AbsImplies}\cite{DecidingAbsorption}
Suppose $\rho$ is defined by a pp-formula $\Omega(x_{1},\ldots,x_{n})$
and
$\Omega'$ is obtained from $\Omega$ by replacement of some constraint relations $\sigma_{1},\ldots,\sigma_{s}$
by constraint relations $\sigma_{1}',\ldots,\sigma_{s}'$ such that $\sigma_{i}'$ absorbs $\sigma_{i}$ with a term operation $t$
for every $i$.
Then the relation defined by $\Omega'(x_{1},\ldots,x_{n})$ absorbs $\rho$ with the term operation $t$.
\end{lem}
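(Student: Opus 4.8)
The plan is to let $\rho'$ denote the relation defined by $\Omega'(x_1,\ldots,x_n)$ and to verify the absorption condition for the pair $\rho'\subseteq\rho$ directly from the definition. Write $\Omega$ as a conjunction of constraints on the variables $x_1,\ldots,x_n,y_1,\ldots,y_p$, so that $\rho(x_1,\ldots,x_n)=\exists y_1\dots\exists y_p\,\Omega$, and let $\Omega'$ be the same conjunction in which each of the chosen constraints carrying a relation $\sigma_i$ has that relation replaced by $\sigma_i'$. Since $\sigma_i'$ is a subuniverse of $\sigma_i$, shrinking constraint relations can only shrink the defined relation, so $\rho'\subseteq\rho$; both relations are pp-definable, hence subalgebras of the (unchanged) product of domains. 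Let $k$ be the arity of $t$. It remains to show that for every position $j\in\{1,\ldots,k\}$ we have $t(\rho',\ldots,\rho',\rho,\rho',\ldots,\rho')\subseteq\rho'$, where the single occurrence of $\rho$ is in coordinate $j$.

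Fix $j$ and take tuples $\alpha_1,\ldots,\alpha_k$ over the domains of $x_1,\ldots,x_n$ with $\alpha_l\in\rho'$ for $l\neq j$ and $\alpha_j\in\rho$. For each $l\neq j$ choose an assignment $\beta_l$ to $y_1,\ldots,y_p$ such that $(\alpha_l,\beta_l)$ satisfies every constraint of $\Omega'$, and choose $\beta_j$ such that $(\alpha_j,\beta_j)$ satisfies every constraint of $\Omega$. Put $c=t(\alpha_1,\ldots,\alpha_k)$ and $d=t(\beta_1,\ldots,\beta_k)$, computed coordinatewise. The goal is to show that $(c,d)$ satisfies $\Omega'$; projecting onto $x_1,\ldots,x_n$ then yields $c\in\rho'$, which is exactly what absorption requires.

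To check the goal, consider an arbitrary constraint of $\Omega'$, applied to some subtuple $\bar z$ of the variables. If this is one of the replaced constraints, its relation is $\sigma_i'$ and the corresponding constraint of $\Omega$ carries $\sigma_i$ on the same $\bar z$. The restriction of $(\alpha_l,\beta_l)$ to $\bar z$ lies in $\sigma_i'$ for every $l\neq j$, and the restriction of $(\alpha_j,\beta_j)$ to $\bar z$ lies in $\sigma_i$; since $\sigma_i'$ absorbs $\sigma_i$ with $t$, applying $t$ with the lone $\sigma_i$-argument in position $j$ gives that the restriction of $(c,d)$ to $\bar z$ lies in $\sigma_i'$. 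If the constraint is not one of the replaced ones, it occurs identically in $\Omega$ and $\Omega'$ with some relation $\tau$ on $\bar z$; then all $k$ restrictions — including the one coming from $(\alpha_j,\beta_j)$, which satisfies $\Omega$ — lie in $\tau$, and since $t$ is a term operation it preserves the subuniverse $\tau$, so the restriction of $(c,d)$ to $\bar z$ lies in $\tau$. Thus every constraint of $\Omega'$ holds for $(c,d)$, completing the argument.

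I do not expect a genuine obstacle here; the core of the argument is the familiar fact that a fixed polymorphism applied coordinatewise to witnessing assignments of a pp-formula again witnesses that formula, refined only to record which one of the $k$ arguments is permitted to leave the absorbing subuniverse. The points needing care are: keeping the distinguished coordinate $j$ fixed across every use of the absorption identity (which is legitimate precisely because we apply $t$ with $\rho$ in the single slot $j$ throughout); handling the case where the same relation appears in several constraints but is replaced in only some of them (the case analysis is per constraint, so this is harmless); and noting that the quantified variables $y_1,\ldots,y_p$ live on domains unchanged by the substitution, so the witnesses $\beta_l$ can indeed be combined under $t$ into a single witness $d$ for $\Omega'$.
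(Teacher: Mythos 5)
The paper states this lemma without proof, citing \cite{DecidingAbsorption}, so there is no in-text argument to compare against. Your proof is correct and is the standard argument for results of this type: apply $t$ coordinatewise to witnessing assignments, using the absorption identity on each replaced constraint (with the lone non-absorbing argument held in the fixed slot $j$) and the fact that $t$, being a term operation, preserves every unchanged constraint relation. The two points you flag as needing care — keeping $j$ fixed across all constraints, and that the quantified variables live on unchanged domains so the $\beta_l$ can be combined under $t$ — are exactly the right things to check, and you handle them correctly.
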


\begin{conslem}\label{AbsorptionQuotient}
Suppose $\theta$ is a congruence of $A$.
\begin{enumerate}
    \item If $B$ is an absorbing subuniverse of $A$, 
    then $\{b/\theta\mid b\in B\}$ is an absorbing subuniverse of $A/\theta$
    with the same term.
        \item If $A$ has no nontrivial (binary) absorbing subuniverse, 
    then neither does $A/\theta$.
\end{enumerate}
\end{conslem}

\begin{conslem}\label{AbsImpliesCons}
Suppose $\rho \subseteq A_{1}\times\dots\times A_{n}$ is a relation such that
$\proj_1 (\rho) = A_{1}$ 
and 
$C = \proj_{1}((C_{1}\times\dots \times C_{n})\cap\rho)$,
where $C_{i}$ is an absorbing subuniverse in $A_{i}$ with a term $t$ for every $i$.
Then $C$ is an absorbing subuniverse in $A_{1}$ with the term $t$.
\end{conslem}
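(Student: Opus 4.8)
The plan is to realize the relation $C$ as a relation defined by a pp-formula whose building blocks are exactly the relation $\rho$ together with the absorbing subuniverses $C_1,\ldots,C_n$, and then invoke Lemma~\ref{AbsImplies}. Concretely, I would write
\[
C(x_1)=\exists x_2\cdots\exists x_n\ \rho(x_1,x_2,\ldots,x_n)\wedge C_2(x_2)\wedge\cdots\wedge C_n(x_n),
\]
which is precisely the stated definition $C=\proj_1\bigl((C_1\times\cdots\times C_n)\cap\rho\bigr)$ once we also intersect the first coordinate with $C_1$ — but since $C\subseteq C_1$ automatically (the tuple witnessing membership has its first entry in $C_1$), we may either include the conjunct $C_1(x_1)$ or not; I will include it to make the comparison with $\Omega'$ below uniform.

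The key step is to exhibit the ``unreduced'' pp-formula $\Omega(x_1)$ whose reduction is $\Omega'(x_1)$ defining $C$. Take
\[
\Omega(x_1)=\exists x_2\cdots\exists x_n\ \rho(x_1,x_2,\ldots,x_n)\wedge A_1(x_1)\wedge A_2(x_2)\wedge\cdots\wedge A_n(x_n),
\]
where $A_i$ is treated as the full unary relation on $A_i$ (so these conjuncts are vacuous). Then $\Omega$ defines the relation $\proj_1(\rho)=A_1$, using the hypothesis $\proj_1(\rho)=A_1$. Now obtain $\Omega'$ from $\Omega$ by replacing each $A_i$ by $C_i$; by hypothesis $C_i$ absorbs $A_i$ with the common term operation $t$ (and $\rho$ is left untouched — formally, $\rho$ absorbs itself with $t$, since $t$ is idempotent as a term operation in a variety of WNU algebras, so this is consistent with the statement of Lemma~\ref{AbsImplies}). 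The relation defined by $\Omega'(x_1)$ is exactly $C$. Lemma~\ref{AbsImplies} then yields that $C$ absorbs $A_1$ with the term operation $t$, which is the conclusion.

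The only real point requiring care — and the step I expect to be the mild obstacle — is the bookkeeping that Lemma~\ref{AbsImplies} is applied with $\rho$ itself among the ``unchanged'' constraint relations, i.e. confirming that $\rho$ may legitimately be viewed as absorbed by $\rho$ with the term operation $t$. This holds because $t$, being a term operation of the algebras in our variety, is idempotent, so $t(\rho,\ldots,\rho)\subseteq\rho$ trivially; alternatively one can state Lemma~\ref{AbsImplies} in the form where only some of the constraint relations are replaced and the rest are kept verbatim, which is exactly its hypothesis. Everything else is a direct unwinding of definitions, so no genuine computation is needed.
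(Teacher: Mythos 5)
Your proposal is correct and matches the intended argument: the paper states this as an immediate corollary of Lemma~\ref{AbsImplies}, obtained exactly as you do by writing $C$ via the pp-formula $\exists x_{2}\dots\exists x_{n}\,\rho(x_{1},\ldots,x_{n})\wedge C_{2}(x_{2})\wedge\dots\wedge C_{n}(x_{n})$ and replacing the full unary relations $A_{i}$ by the absorbing subuniverses $C_{i}$. Your side worry about $\rho$ being among the unchanged relations is a non-issue (Lemma~\ref{AbsImplies} only replaces \emph{some} relations, and a subalgebra trivially absorbs itself since it is closed under $t$), so nothing further is needed.
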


\begin{proof}
It is not hard to see that the sets $C$ and 
$A_{1}$ can be defined by
the following pp-formulas
$$(x_1\in C) = \exists 
x_{2}\dots\exists x_{n}\; 
\left[(x_{1}\in C_{1})\wedge 
\dots\wedge 
(x_{n}\in C_{n})\wedge 
\rho(x_{1},\ldots,x_{n})\right],$$
$$(x_1\in A_1)= \exists 
x_{2}\dots\exists x_{n}\; 
\left[(x_{1}\in A_{1})\wedge 
\dots\wedge 
(x_{n}\in A_{n})\wedge 
\rho(x_{1},\ldots,x_{n})\right].$$
It remains to apply Lemma~\ref{AbsImplies}.
\end{proof}


\begin{lem}\label{AbsorbingEquality}
Suppose $\kappa_{A}\subseteq A\times A$ is the equality relation,
$\sigma\supseteq \kappa_{A}$,
and 
$\omega$ is a nontrivial binary absorbing subuniverse in $\sigma$.
Then $\omega\cap\kappa_{A} \neq \varnothing$.
\end{lem}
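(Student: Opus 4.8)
The plan is to recognize $\omega\cap\kappa_A$ — more precisely the set $\{a\in A\colon (a,a)\in\omega\}$, which is nonempty iff $\omega\cap\kappa_A$ is — as a relation obtained from $\omega$ by a primitive positive definition, and then to transport the absorption property along that definition using Lemma~\ref{AbsImplies}.

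Concretely, let $t$ be a binary term witnessing that $\omega$ is a binary absorbing subuniverse of $\sigma$, so that $\omega$ absorbs $\sigma$ with $t$. Consider the one-variable pp-formula $\Omega(x):=\sigma(x,x)$ (the single constraint with scope $(x,x)$ and relation $\sigma$). Since $\sigma\supseteq\kappa_A$, the relation it defines is $\{a\in A\colon (a,a)\in\sigma\}=A$, the full unary relation on $D_x=A$. Let $\Omega'(x):=\omega(x,x)$ be obtained from $\Omega$ by replacing the constraint relation $\sigma$ with $\omega$; the relation defined by $\Omega'$ is $\{a\in A\colon (a,a)\in\omega\}$. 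By Lemma~\ref{AbsImplies}, since $\omega$ absorbs $\sigma$ with $t$, the relation defined by $\Omega'$ absorbs the relation defined by $\Omega$ with $t$; that is, $\{a\in A\colon (a,a)\in\omega\}$ is an absorbing subuniverse of $\mathbf A$ with $t$. An absorbing subuniverse is nonempty, so there is $a\in A$ with $(a,a)\in\omega$, i.e.\ $\omega\cap\kappa_A\neq\varnothing$.

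The only delicate point is exactly this last step: extracting nonemptiness from the conclusion ``$\{a\colon(a,a)\in\omega\}$ absorbs $A$'', which relies on the convention that Lemma~\ref{AbsImplies} yields a genuine (nonempty) absorbing subuniverse rather than the empty set when all the replaced relations are nonempty. If one wants to avoid leaning on that convention, I would instead argue by induction on $|A|$: if $\proj_1(\omega)\subsetneq A$, use that $\proj_1(\omega)$ and $\proj_2(\omega)$ are binary absorbing subuniverses of $\mathbf A$ with $t$ (by Corollary~\ref{AbsImpliesCons}) and hence meet, so $\omega\cap(\proj_1(\omega))^2$ is a nonempty binary absorbing subuniverse of the reflexive relation $\sigma\cap(\proj_1(\omega))^2$, and apply the induction hypothesis; the leftover case, $\omega$ subdirect in $\mathbf A\times\mathbf A$, needs a further reduction (when $\sigma$ is not linked one may pass to a class of its linking congruence) and is the part I expect to be fiddly. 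In the write-up I would simply use the short pp-formula argument together with the nonemptiness of absorbing subuniverses.
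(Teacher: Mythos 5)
Your main argument has a genuine gap, and it is exactly at the point you flag as ``delicate''. Lemma~\ref{AbsImplies} transports only the containment property ($t(B,\rho)\subseteq B$ and $t(\rho,B)\subseteq B$) along pp-definitions, and that property is satisfied vacuously by the empty set; neither Lemma~\ref{AbsImplies} nor the paper's definition of absorption asserts that the relation defined by $\Omega'$ is nonempty. But the set defined by your pp-formula $\omega(x,x)$ is precisely $\{a\colon(a,a)\in\omega\}$, and its nonemptiness \emph{is} the statement of Lemma~\ref{AbsorbingEquality}. So the step ``an absorbing subuniverse is nonempty'' assumes the conclusion; if such a convention were available as a citable fact, the lemma would be a one-line corollary and the paper would not prove it by induction. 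Nonemptiness of restrictions to absorbing (or central, or PC) subuniverses is never definitional in this development: it always comes with extra hypotheses and its own proof (compare Lemma~\ref{AbsLessThanThree} and Corollary~\ref{PathStability}), so appealing to it here is circular rather than a use of a known result.

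Your fallback induction on $|A|$ is the right direction --- it is essentially the paper's proof --- but the parts you leave open are where all the content sits, and one step you do state is unjustified: from the fact that $\proj_{1}(\omega)$ and $\proj_{2}(\omega)$ (which absorb $A$ with the same $t$) intersect, it does not follow that $\omega\cap(\proj_{1}(\omega))^{2}\neq\varnothing$; a pair of $\omega$ need not have its second coordinate in $\proj_{1}(\omega)$. The paper argues instead as follows. If some proper $B\subsetneq A$ binary absorbs $A$ with a term $f$, then for $(b_{1},b_{2})\in\omega$ and $b\in B$ the pair $(f(b_{1},b),f(b_{2},b))$ lies in $\omega\cap(B\times B)$, so one restricts $\sigma$ and $\omega$ to $B$ (where $\sigma$ is still reflexive) and applies induction. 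If $A$ has no proper binary absorbing subuniverse, then $\proj_{1}(\omega)=\proj_{2}(\omega)=A$ by Lemma~\ref{AbsImplies}, and for $b\in A$ one considers $A_{b}=\{a\mid(a,b)\in\sigma\}$ and $C_{b}=\{a\mid(a,b)\in\omega\}$: if $A_{b}=A$, then $C_{b}$ is a binary absorbing subuniverse of $A$, hence $C_{b}=A$ and $(b,b)\in\omega$; if $A_{b}\subsetneq A$, then $b\in A_{b}$, any $(a,b)\in\omega$ lies in $(A_{b}\times A_{b})\cap\omega$, and one restricts to $A_{b}$ and applies induction. This is precisely the ``subdirect'' case you deferred as fiddly, so as written your proposal does not establish the lemma.
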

\begin{proof}
We prove the lemma by induction on the size of $A$.
Suppose $\omega$ absorbs $\sigma$ with a binary absorbing term operation $f$.

Assume that there exists a nontrivial binary absorbing subuniverse $B\subsetneq A$ with the absorbing operation $f$.
For any $(b_{1},b_{2})\in\omega$ and $b\in B$ we have
$(f(b_{1},b),f(b_{2},b))\in \omega\cap (B\times B)$.
Then by Lemma~\ref{AbsImplies},  
$\omega\cap (B\times B)$ is a nontrivial 
absorbing subuniverse in 
$\sigma\cap (B\times B)$, and we can restrict
$\sigma$ and $\omega$ to $B$ and apply the inductive assumption.

Thus, we assume that there does not exist
a nontrivial binary absorbing subuniverse $B\subsetneq A$ with the absorbing operation $f$.
By Lemma~\ref{AbsImplies},
$\proj_{1}(\omega)$ and $\proj_{2}(\omega)$ binary absorb $A$,
then $\proj_{1}(\omega)=\proj_{2}(\omega)=A$.
Now, the statement of the lemma 
could be derived from \cite[Theorem 6]{barto2012near}
but we will finish the argument because it is simple.

For every $b\in A$ we consider 
$A_{b}= \{a\mid (a,b)\in \sigma\}$
and
$C_{b}= \{a\mid (a,b)\in \omega\}$.
Since $\proj_{2}(\omega)=A$, $C_{b}\neq\varnothing$ for every $b$.
By Lemma~\ref{AbsImplies} $C_{b}$ is a binary absorbing subuniverse in $A_{b}$ with $f$.
Therefore 
$A_{b}\neq A$ or
$A_{b}=C_{b} = A$.
In the latter case we have $(b,b)\in\omega$, 
which completes this case.

Assume that $A_{b}\neq A$ for some $b$.
Since $\sigma\supseteq \kappa_{A}$, 
we have $b\in A_{b}$ and 
$(A_{b}\times A_{b})\cap \omega
\supseteq (C_{b}\times\{b\})\cap \omega
\neq \varnothing$.
Then we restrict
$\sigma$ and $\omega$ to $A_{b}$ and apply the inductive assumption.
\end{proof}

\begin{lem}\label{GenBinAbToBinAb}
Suppose $\rho$ is a nontrivial absorbing subuniverse of $A_{1}\times \dots\times A_{n}$.
Then for some $i$ there exists a nontrivial absorbing subuniverse $B_{i}$ in $A_{i}$ with the same term.
\end{lem}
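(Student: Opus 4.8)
The plan is to induct on $n$ and use the projection behaviour of binary absorption from Lemma~\ref{AbsImplies} together with Lemma~\ref{AbsorbingEquality}. If $\rho$ is a binary absorbing subuniverse of $\mathbf{A}_1\times\dots\times\mathbf{A}_n$ with a binary term operation $t$, then by Lemma~\ref{AbsImplies} (applied to the pp-formula that just picks out one coordinate) each projection $\proj_i(\rho)$ is a binary absorbing subuniverse of $\mathbf{A}_i$ with the same $t$. If $\proj_i(\rho)\subsetneq A_i$ for some $i$, we are done immediately. So we may assume $\rho$ is subdirect in $\mathbf{A}_1\times\dots\times\mathbf{A}_n$.

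Next I would separate out the two-coordinate base case $n=2$, which is the crux. Here $\rho\subseteq A_1\times A_2$ is a subdirect binary absorbing subuniverse with operation $t$, and we want a proper binary absorbing subuniverse on $A_1$ or on $A_2$. Consider the relation $\rho\circ\rho^{-1}\subseteq A_1\times A_1$, i.e.\ $\delta(a,a') := \exists b\ (a,b)\in\rho \wedge (a',b)\in\rho$; this is a reflexive symmetric relation on $A_1$ containing the equality relation $\kappa_{A_1}$, and by Lemma~\ref{AbsImplies} it is a binary absorbing subuniverse of some power relation — more precisely, $\rho\circ\rho^{-1}$ absorbs $A_1\times A_1$ with $t$. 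If $\delta\ne A_1\times A_1$, I apply the argument below to the proper absorbing subuniverse $\delta$ of the algebra $\mathbf{A}_1\times\mathbf{A}_1$; if $\delta = A_1\times A_1$, then symmetrically consider $\rho^{-1}\circ\rho$ on $A_2$; and if both are full, then $\rho$ is a subdirect relation whose two "linking" congruences are trivial in the strongest sense, and one can check $\rho$ is the graph of an isomorphism or $\rho = A_1\times A_2$ — in the graph case $\rho$ itself, transported, gives that a proper absorbing subuniverse of $A_1$ corresponds to one of $A_2$, and since $\rho$ is proper in $A_1\times A_2$ only if $|A_1|,|A_2|>1$... Actually the cleaner route: apply Lemma~\ref{AbsorbingEquality} directly. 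Take $\sigma = A_1\times A_1 \supseteq \kappa_{A_1}$ and $\omega = \delta$; Lemma~\ref{AbsorbingEquality} gives $\delta\cap\kappa_{A_1}\ne\varnothing$, which is automatic and not yet enough, so instead I use the nontrivial content: when $\delta\subsetneq A_1\times A_1$, restrict and recurse on $\mathbf A_1\times\mathbf A_1$ exactly as in the proof of Lemma~\ref{AbsorbingEquality}, eventually extracting a proper binary absorbing subuniverse of $\mathbf A_1$ itself.

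For the inductive step $n>2$, group the coordinates: view $\rho$ as a binary absorbing subuniverse of $(\mathbf A_1\times\dots\times\mathbf A_{n-1})\times\mathbf A_n$, and by Lemma~\ref{AbsImplies} its projection $\rho' := \proj_{1,\dots,n-1}(\rho)$ is a binary absorbing subuniverse of $\mathbf A_1\times\dots\times\mathbf A_{n-1}$. If $\rho'$ is proper, apply the induction hypothesis to $\rho'$ to get a proper binary absorbing subuniverse of some $A_i$, $i\le n-1$, and we are done; if $\rho' = A_1\times\dots\times A_{n-1}$, then $\rho$ is a subdirect binary absorbing subuniverse of $(A_1\times\dots\times A_{n-1})\times A_n$ with both projections full, and applying the $n=2$ base case to $\rho$ regarded with the single "block" coordinate $B := A_1\times\dots\times A_{n-1}$ yields a proper binary absorbing subuniverse of $\mathbf B$ or of $\mathbf A_n$; in the former case another application of the induction hypothesis (now to a proper absorbing subuniverse of $B = A_1\times\dots\times A_{n-1}$) finishes the proof. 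I expect the main obstacle to be the $n=2$ base case, specifically ruling out or handling the situation where $\rho$ is the graph of an isomorphism $A_1\cong A_2$ with no proper absorbing subuniverse on either side — but since $\rho$ is assumed proper in $A_1\times A_2$ this forces $|A_1|=|A_2|>1$, and one then transfers a known proper absorbing subuniverse across the isomorphism, or observes such $\rho$ cannot be a proper absorbing subuniverse of the product at all (a term $t$ with $t(\rho,A_1\times A_2)\subseteq\rho$ and $t(A_1\times A_2,\rho)\subseteq\rho$ would, via the isomorphism, force $A_1$ to absorb itself properly). Closing that case carefully is where the real work lies.
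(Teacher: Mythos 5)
There is a genuine gap, and it sits exactly where you predict the ``real work'' lies: the subdirect (base) case is never actually closed. Two concrete problems. First, your treatment of the branch $\delta:=\rho\circ\rho^{-1}\subsetneq A_1\times A_1$ is circular: ``apply the argument below to $\delta$'' and ``recurse on $\mathbf A_1\times\mathbf A_1$ exactly as in the proof of Lemma~\ref{AbsorbingEquality}'' amount to invoking the very statement of Lemma~\ref{GenBinAbToBinAb} (the $n=2$ instance) for a new relation on $A_1\times A_1$, with no decreasing parameter; and Lemma~\ref{AbsorbingEquality} itself only yields $\delta\cap\kappa_{A_1}\neq\varnothing$, which, as you admit, is vacuous here because $\delta$ is reflexive, so no proper absorbing subuniverse of $A_1$ is ever extracted. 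Second, the remaining branch rests on a false claim: if both $\rho\circ\rho^{-1}$ and $\rho^{-1}\circ\rho$ are full, $\rho$ need not be the graph of an isomorphism or the full relation (for a graph of a bijection these linking relations are the equality, not the full relation; and, e.g., $\rho=\{(0,0),(0,1),(1,0)\}$ absorbs $\{0,1\}^2$ under the semilattice term $w(x,\ldots,x,y)=x\wedge y$, is subdirect, proper, and has both linking relations full). The closing parenthetical (``force $A_1$ to absorb itself properly'') is not an argument. So the crux case remains open.

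The missing idea is a one-line slicing argument, which is in fact the paper's whole proof and makes the linking relations unnecessary. Since every operation is idempotent, each singleton $\{a\}$ is a subuniverse, so by Lemma~\ref{AbsImplies} (or directly, using $t(a,a)=a$) the slice $\rho'=\{(a_2,\ldots,a_n)\mid (a,a_2,\ldots,a_n)\in\rho\}$ binary absorbs $A_2\times\dots\times A_n$ with the same term $t$. If $\proj_1(\rho)\subsetneq A_1$ you are done as you say; otherwise choose $a\in A_1$ such that $\rho$ does not contain all tuples beginning with $a$: then $\rho'$ is nonempty and proper, and induction on the arity finishes the proof (for $n=2$ this immediately hands you a proper binary absorbing subuniverse of $A_2$). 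The same observation would also repair your $\delta\subsetneq A_1\times A_1$ branch, since a proper row $\{a'\mid (a,a')\in\delta\}$ is a nonempty proper absorbing subuniverse of $A_1$; but as written your proposal never uses slicing, and without it neither branch of your base case goes through.
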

\begin{proof}
We prove this lemma by induction on the arity of $\rho$.
If the projection of $\rho$ onto the first coordinate is not $A_{1}$ then
by Lemma~\ref{AbsImplies} this projection is an absorbing subuniverse with the same term.
Otherwise, we choose any element $a\in A_{1}$
such that $\rho$ does not contain all tuples starting with $a$,
and consider $\rho' = \{(a_2,\ldots,a_{n})\mid (a,a_2,\ldots,a_n)\in \rho\}$, which, by Lemma~\ref{AbsImplies}, is
a nontrivial absorbing subuniverse in $A_{2}\times\dots \times A_{n}$ with the same term.
It remains to apply the inductive assumption.
\end{proof}

A relation $\rho\subseteq A^{n}$ is called \emph{$C$-essential} if
$\rho\cap(C^{i-1}\times A\times C^{n-i})\neq \varnothing$ for every $i$
but $\rho\cap C^{n}=\varnothing$.
A relation 
$\rho\subseteq A_{1}\times\dots\times A_{n}$ is called 
\emph{$(C_{1},\dots,C_{n})$-essential}
if
$\rho\cap
(C_{1}\times\dots\times C_{i-1}\times A_{i} \times C_{i+1}
\times\dots\times C_{n})\neq\varnothing$
for every $i$ 
but 
$\rho\cap
(C_{1}\times\dots\times C_{n})=\varnothing$.

\begin{lem}\label{NoEssential}\cite{DecidingAbsorption}
Suppose $C$ is a subuniverse of $A$.
Then $C$ absorbs $A$ with an operation of arity $n$ if and only if
there does not exist a $C$-essential relation $\rho\subseteq A^{n}$.
\end{lem}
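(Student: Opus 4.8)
This is a standard absorption-theory fact and I expect the proof to be short, combining a counting/averaging argument in one direction with a direct construction in the other. Let me work out both directions.

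The plan is to prove the two implications separately. The forward implication (``$C$ absorbs $A$ with an operation of arity $n$ $\Rightarrow$ there is no $C$-essential relation of arity $n$'') is a short direct argument, and the converse is the substantial part.

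For the forward direction I would argue by contradiction. Fix an $n$-ary term $t\in\Clo(\mathbf A)$ witnessing the absorption, so that $t(c_1,\dots,c_{i-1},a,c_{i+1},\dots,c_n)\in C$ for every $i$, every $a\in A$ and all $c_j\in C$. Suppose $\rho\subseteq A^n$ were $C$-essential. Using the $n$ nonemptiness conditions in the definition, for each $i\in\{1,\dots,n\}$ I would pick $\alpha_i=(\alpha_{i,1},\dots,\alpha_{i,n})\in\rho$ with $\alpha_{i,j}\in C$ whenever $j\ne i$. Applying $t$ coordinatewise to $\alpha_1,\dots,\alpha_n$ and using that $\rho$ is a subuniverse of $\mathbf A^n$, the resulting tuple $\beta$ lies in $\rho$. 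For each coordinate $j$ the inputs $(\alpha_{1,j},\dots,\alpha_{n,j})$ all lie in $C$ except in position $j$, so the absorption property gives $\beta_j\in C$; hence $\beta\in\rho\cap C^n$, contradicting $C$-essentiality. It is precisely here that having exactly $n$ witnesses, one per ``axis strip'', is matched with the arity of the absorbing term.

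For the converse I would work contrapositively: assuming $C$ does not absorb $A$ with any $n$-ary term, I would produce a $C$-essential relation of arity $n$. The natural object is the algebra $\mathbf F$ of all $n$-ary term operations of $\mathbf A$, viewed as the subalgebra of $\mathbf A^{A^n}$ generated by the $n$ coordinate projections $p_1,\dots,p_n$. Let $U=\{u\in A^n: |\{i:u_i\notin C\}|\le 1\}$ be the set of ``test inputs'' and let $Q\le\mathbf A^U$ be the image of $\mathbf F$ under restriction of functions to $U$; thus $Q=\{(t(u))_{u\in U}: t\text{ an $n$-ary term}\}$, generated by $p_1|_U,\dots,p_n|_U$. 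An $n$-ary term $t$ is an absorbing term for $C$ in $A$ exactly when $t(U)\subseteq C$, i.e. $(t(u))_{u\in U}\in C^U$; so the hypothesis says precisely that $Q\cap C^U=\varnothing$. I would then choose $S\subseteq U$ minimal with the property $Q|_S\cap C^S=\varnothing$, and check that $Q|_S$ is a $C$-essential relation of arity $|S|$: it is a subuniverse of $\mathbf A^S$ missing $C^S$; no $s\in S$ (viewed as a tuple in $A^n$) can lie in $C^n$, since then its coordinate in $Q|_S$ would be confined to the subuniverse $C$ and could be deleted, contradicting minimality; and for each $s\in S$ minimality yields an element of $Q|_{S\setminus\{s\}}$ in $C^{S\setminus\{s\}}$, which lifts to an element of $Q|_S$ lying in $C$ off the coordinate $s$ (and necessarily outside $C$ at $s$).

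The remaining — and, I expect, hardest — point is that a minimal such $S$ has exactly $n$ elements, so that $Q|_S$ is $C$-essential of arity exactly $n$. One inequality is routine: for each position $\ell$ there must be some $s\in S$ with $s_\ell\notin C$ (otherwise $p_\ell|_S\in C^S\cap Q|_S$), and such tuples are pairwise distinct, giving $|S|\ge n$. For $|S|\le n$ one must exploit that $Q$ is generated by the $n$ specific elements $p_1|_U,\dots,p_n|_U$ rather than by an arbitrary set: I would try to show that in a minimal $S$ no two test inputs can share the position of their unique ``bad'' coordinate, deleting a redundant one otherwise. Making this precise (or replacing it by an induction on the arity) is the delicate piece of the argument, and is where I would expect the real work, and the input of \cite{DecidingAbsorption}, to lie; everything else in the proof is bookkeeping.
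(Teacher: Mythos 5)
The paper itself gives no proof of Lemma~\ref{NoEssential}: it is quoted from \cite{DecidingAbsorption}, so there is nothing internal to compare against, and your proposal has to stand on its own. Your forward direction does: applying the $n$-ary absorbing term coordinatewise to witnesses $\alpha_1,\ldots,\alpha_n$ (with $\alpha_i$ taken from the $i$-th strip $C^{i-1}\times A\times C^{n-i}$) puts at most one non-$C$ argument in each coordinate, so the resulting tuple lies in $\rho\cap C^n$, a contradiction. That half is complete and correct.

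The converse, however, is not a proof as written, and you say so yourself. Restricting the algebra of $n$-ary term operations to the test set $U$ and taking $S\subseteq U$ minimal with $Q|_S\cap C^S=\varnothing$ does give a $C$-essential relation of arity $|S|$, and your lower bound $|S|\ge n$ is fine; but the lemma requires arity exactly $n$, and the step you defer as ``the delicate piece'' ($|S|\le n$) is left entirely unproven -- that is the genuine gap. It is also the wrong lever: you do not need to control the size of a minimal $S$ (and it is not clear that it equals $n$). Instead, note that $C$-essentiality can be pushed down in arity: if $\rho\subseteq A^{m}$ is $C$-essential with $m>n\ge 1$, put $\rho':=\proj_{1,\ldots,m-1}\bigl(\rho\cap(A^{m-1}\times C)\bigr)$; this is a subalgebra since $C$ is a subuniverse, it misses $C^{m-1}$ (a tuple there would extend to a tuple of $\rho\cap C^{m}$), and for each $i\le m-1$ the witness of $\rho$ in the $i$-th strip already has its last coordinate in $C$, hence projects to a witness for $\rho'$; so $\rho'$ is $C$-essential of arity $m-1$, and iterating reaches arity exactly $n$. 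With this observation your construction closes the converse; without it, the proposal stops short of the statement it is supposed to prove.
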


\begin{lem}\label{AbsLessThanThree}
Suppose $D^{(1)}$ is an absorbing reduction
of a CSP instance $\Theta$ and
a relation $\rho\subseteq
D_{i_1}\times\dots\times D_{i_n}$
is subdirect,
where $D_{i_1}, \dots,D_{i_n}$ are domains 
of variables from $\Theta$.
Then $\rho^{(1)}$ is not empty.
\end{lem}
\begin{proof}
It is sufficient to apply the binary absorbing term operation $t$ to all the tuples 
of $\rho$ using term
$t(x_1,t(x_2,t(x_3,\dots,t(x_{s-1},x_{s}))))$,
where $s=|\rho|$.
The resulting tuple will be from 
$\rho^{(1)}$, which means that 
$\rho^{(1)}$ is not empty.
\end{proof}

\subsection{Center}
%

\begin{lem}\label{CenterImplies}
Suppose $\rho$ is defined by a pp-formula $\Omega(x_{1},\ldots,x_{n})$ and 
$\Omega'$ is obtained from $\Omega$ by replacement of some constraint relations $\sigma_{1},\ldots,\sigma_{s}$
by constraint relations $\sigma_{1}',\ldots,\sigma_{s}'$ such that $\sigma_{i}'$ is a center of $\sigma_{i}$ for every $i$.
Then the relation defined by $\Omega'(x_{1},\ldots,x_{n})$ is a center of~$\rho$.
\end{lem}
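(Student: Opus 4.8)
The plan is to mimic the proof of Lemma~\ref{AbsImplies} (absorption propagating through $pp$-definitions), replacing the single absorbing term by a family of auxiliary algebras. First, note $\rho'\subseteq\rho$ since each $\sigma_i'$, being a center, is a subuniverse of $\sigma_i$. For every replaced constraint relation $\sigma_i$ fix a witness of $\sigma_i'$ being a center: an algebra $\mathbf{B_i}=(B_i;w_{B_i})$ with no binary absorbing subuniverse and a subdirect subalgebra $R_i$ of $\sigma_i\times\mathbf{B_i}$ such that $\sigma_i'=\{a\in\sigma_i\mid \{a\}\times B_i\subseteq R_i\}$. Put $\mathbf{B}:=\prod_i\mathbf{B_i}$, and let $R$ be the relation defined by the $pp$-formula obtained from $\Omega$ by replacing, for each $i$, the atom $\sigma_i(\bar y_i)$ by $R_i(\bar y_i,u_i)$ for a fresh variable $u_i$ with domain $B_i$, and declaring $x_1,\ldots,x_n$ together with all the $u_i$ to be the free variables (the original bound variables of $\Omega$ stay bound). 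Then $R$ is a subalgebra of $\rho\times\mathbf B$, and the claim is that $(\mathbf{B},R)$ witnesses that $\rho'$ is a center of $\rho$.

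Three things must be checked. (a) $\mathbf{B}$ has no binary absorbing subuniverse: a binary absorbing subuniverse of $\prod_i\mathbf{B_i}$ would, by Lemma~\ref{GenBinAbToBinAb}, produce one in some $\mathbf{B_i}$, contradiction. (b) $R$ is subdirect in $\rho\times\mathbf{B}$: on the $\rho$-side, quantifying the $u_i$ out turns each atom $R_i(\bar y_i,u_i)$ back into $\sigma_i(\bar y_i)$ (because $R_i$ is subdirect onto $\sigma_i$), so the projection of $R$ onto $x_1,\ldots,x_n$ is exactly $\rho$; on the $\mathbf{B}$-side one must argue a bit more, and if the projection of $R$ onto $B$ is not all of $\prod_iB_i$ we replace $\mathbf B$ by that projection and must verify it still has no binary absorbing subuniverse. (c) $\{\bar a\in\rho\mid\{\bar a\}\times B\subseteq R\}=\rho'$: the inclusion $\supseteq$ is immediate, since a solution of $\Omega'$ extending $\bar a$ assigns to $\bar y_i$ a value in $\sigma_i'$, which pairs with every $b_i\in B_i$ inside $R_i$, hence extends to a solution of the expanded formula for every tuple of the $u_i$.

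The main obstacle is the reverse inclusion $\{\bar a\in\rho\mid\{\bar a\}\times B\subseteq R\}\subseteq\rho'$, together with the $\mathbf B$-side subdirectness above. Fixing such an $\bar a$, let $S$ be the subalgebra of $\prod_i\sigma_i$ consisting of the tuples $(\mu(\bar y_i))_i$ over all solutions $\mu$ of $\Omega$ extending $\bar a$, and let $S'$ be the subalgebra of $S\times\mathbf B$ consisting of those $((c_i)_i,(b_i)_i)$ with $(c_i,b_i)\in R_i$ for all $i$. The hypothesis says precisely that $S'$ is subdirect (full on the $\mathbf{B}$-side), while $\bar a\in\rho'$ means $S\cap\prod_i\sigma_i'\neq\varnothing$, i.e.\ the center of $S$ extracted via $S'$ is nonempty. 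A naive compactness argument is \emph{false} here (a subdirect subalgebra of $S\times\mathbf B$ with $\mathbf B$ free of binary absorbing subuniverses can have empty center, e.g.\ the equality relation on $\mathbb Z_2$), so the proof must genuinely use that $w$ preserves each $R_i$: the idea is to take solutions $\nu_{(b_i)}$ of $\Omega$ (extending $\bar a$) witnessing membership in $R$ for every tuple $(b_i)_i\in\prod_iB_i$, and to collapse their profiles $(\nu_{(b_i)}(\bar y_j))_j$ into a single profile lying in $\prod_j\sigma_j'$ by applying a suitable term built from $w$, exploiting that each $\mathbf{B_i}$ is affine-like / has no binary absorbing subuniverse (so that ``averaging'' over all values of $u_i$ is forced into the center of $R_i$). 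I expect this collapsing step to be the technical heart of the argument, drawing on the properties of centers established elsewhere in this section (in particular the behaviour of $w$ on products of algebras without binary absorbing subuniverses, and the center analogue of Lemma~\ref{NoEssential}).
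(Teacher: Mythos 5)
Your construction with a single fresh variable $u_{i}$ per replaced constraint does not witness the lemma, and the step you yourself flag as the ``technical heart'' is exactly where it breaks: with your $R$, the set $\{\bar a\mid\{\bar a\}\times B\subseteq R\}$ can be strictly larger than $\rho'$, because the solution $\mu$ of $\Omega$ witnessing $(\bar a,(b_{i})_{i})\in R$ may vary with $(b_{i})_{i}$, whereas $\bar a\in\rho'$ requires one solution whose value on $\bar y_{i}$ pairs with \emph{every} $b_{i}\in B_{i}$ simultaneously (already $\Omega(x_{1})=\exists y\,\sigma(x_{1},y)$ shows the discrepancy, since the witness $y$ may depend on $b$). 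So your $(\mathbf B,R)$ only shows that \emph{some} center of $\rho$ contains $\rho'$, not that $\rho'$ itself is a center, and the proposal never supplies the promised ``collapsing'' argument; it is left as a hope that a term built from $w$ will average the varying witnesses into $\prod_{i}\sigma_{i}'$, which is precisely the part that would need a real proof.

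The paper avoids this difficulty entirely by a different encoding: each atom $\sigma_{i}(\bar y_{i})$ is replaced by the conjunction $R_{i}(\bar y_{i},z_{i,1})\wedge\dots\wedge R_{i}(\bar y_{i},z_{i,n_{i}})$ with $n_{i}=|B_{i}|$ fresh variables, and one takes $\mathbf B=B_{1}^{n_{1}}\times\dots\times B_{s}^{n_{s}}$ (no binary absorbing subuniverse by Lemma~\ref{GenBinAbToBinAb}, as in your step (a)). Then ``$(\bar a,b)\in R$ for all $b\in B$'', applied to the particular $b$ whose coordinates $z_{i,1},\ldots,z_{i,n_{i}}$ enumerate all of $B_{i}$ for every $i$ at once, forces a single solution $\mu$ of $\Omega$ extending $\bar a$ with $\mu(\bar y_{i})$ related via $R_{i}$ to every element of $B_{i}$, i.e.\ $\mu(\bar y_{i})\in\sigma_{i}'$ for all $i$, hence $\bar a\in\rho'$; the converse inclusion is the trivial one you already have. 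With this multiplicity trick both inclusions are immediate and no averaging over $w$ is needed, so the fix is not a harder analytic argument but a different choice of the witnessing pair $(\mathbf B,R)$. (Your worry (b) about subdirectness onto the $\mathbf B$-side is not the essential point and is not what blocks your argument.)
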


\begin{proof}
Suppose $\Omega'(x_{1},\ldots,x_{n})$ defines a relation $\rho'$.
Suppose $\mathbf B_{i}$ and $R_{i}$ are the corresponding algebra and binary relation such that
$\sigma_{i}' = \{c\mid \forall b\in B_{i}\colon (c,b)\in R_{i}\}$.
Let $|B_{i}| = n_{i}$ for every $i$.
Let $\Upsilon$ be obtained from $\Omega$ by replacement of every constraint
$\sigma_{i}(y_{1},\ldots,y_{t})$ by
$$R_{i}((y_{1},\ldots,y_{t}),z_{i,1})\wedge \dots\wedge R_{i}((y_{1},\ldots,y_{t}),z_{i,n_{i}}).$$
Suppose
$\Upsilon((x_1,\ldots,x_{n}),(z_{1,1},\dots,z_{s,n_{s}}))$ defines a relation $R$.
It is not hard to see that
$\rho' = \{c\mid \forall b\in (B_{1}^{n_{1}}\times\dots\times B_{s}^{n_{s}})\colon (c,b)\in R\}$.
By Lemma~\ref{GenBinAbToBinAb}, there is no nontrivial binary absorbing subuniverse on $B_{1}^{n_{1}}\times\dots\times B_{s}^{n_{s}}$.
This proves that $\rho'$ is a center of $\rho$.
\end{proof}

\begin{conslem}\label{CenterQuotient}
Suppose $\theta$ is a congruence of $A$
\begin{enumerate}
    \item If $B$ is a center of $A$, 
    then $\{b/\theta\mid b\in B\}$ is a center of $A/\theta$.
        \item If $A$ has no nontrivial center, 
    then neither does $A/\theta$.
\end{enumerate}
\end{conslem}

\begin{conslem}\label{CenterImpliesCons}
Suppose $\rho \subseteq A_{1}\times\dots\times A_{n}$ is a relation such that
$\proj_1 (\rho) = A_{1}$ and
$C = \proj_{1}((C_{1}\times\dots \times C_{n})\cap\rho)$,
where $C_{i}$ is a center in $A_{i}$ for every $i$.
Then $C$ is a center in $A_{1}$.
\end{conslem}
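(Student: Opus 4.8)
The plan is to obtain this from Lemma~\ref{CenterImplies} in exactly the same way that Corollary~\ref{AbsImpliesCons} is obtained from Lemma~\ref{AbsImplies}, i.e. by exhibiting a pp-formula that defines $C$ and differs from a pp-formula defining $A_{1}$ only by replacing some constraint relations with their centers. First I would express $A_{1}$ by a pp-formula in which the algebras $A_{1},\ldots,A_{n}$ occur as (redundant) unary constraint relations: let $\Omega$ be the formula consisting of the constraint $\rho(x_{1},\ldots,x_{n})$ together with the unary constraints $A_{1}(x_{1}),\ldots,A_{n}(x_{n})$, where each $A_{i}$ is regarded as the full unary relation on $A_{i}$. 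Treating $x_{1}$ as free and $x_{2},\ldots,x_{n}$ as existentially quantified, the pp-formula $\Omega(x_{1})$ defines $\exists x_{2}\cdots\exists x_{n}\,\rho(x_{1},\ldots,x_{n})=\proj_{1}(\rho)=A_{1}$, using the hypothesis $\proj_{1}(\rho)=A_{1}$; the extra unary constraints do not change the defined relation.

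Next I would form $\Omega'$ from $\Omega$ by replacing each unary constraint relation $A_{i}$ by $C_{i}$. Since $C_{i}$ is a center of the algebra $\mathbf{A}_{i}$, it is also a center of the full unary relation $A_{i}$, so Lemma~\ref{CenterImplies} applies with the role of ``$\rho$'' played by $A_{1}$ and the replaced relations being $A_{1},\ldots,A_{n}$. It yields that the relation defined by $\Omega'(x_{1})$ is a center of $A_{1}$. But $\Omega'(x_{1})$ defines $\exists x_{2}\cdots\exists x_{n}\,\rho(x_{1},\ldots,x_{n})\wedge C_{1}(x_{1})\wedge\cdots\wedge C_{n}(x_{n})=\proj_{1}\!\big((C_{1}\times\cdots\times C_{n})\cap\rho\big)=C$. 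Hence $C$ is a center of $A_{1}$, which is the claim.

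There is no serious obstacle here; the two points needing a line of justification are (i) that a center of the algebra $\mathbf{A}_{i}$ is the same thing as a center of $\mathbf{A}_{i}$ viewed as the full unary relation, which is immediate from the definition of center, and (ii) that inserting the redundant full unary constraints $A_{i}(x_{i})$ into the defining pp-formula before invoking Lemma~\ref{CenterImplies} is legitimate and does not alter the defined relation. With these observations the corollary is the exact center-analogue of Corollary~\ref{AbsImpliesCons}.
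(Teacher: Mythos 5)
Your proposal is correct and is exactly the intended derivation: the corollary is an immediate application of Lemma~\ref{CenterImplies} to the pp-formula $\rho(x_{1},\ldots,x_{n})\wedge A_{1}(x_{1})\wedge\dots\wedge A_{n}(x_{n})$ with the unary constraints replaced by the centers $C_{i}$, using $\proj_{1}(\rho)=A_{1}$ so that the original formula defines $A_{1}$ (mirroring how Corollary~\ref{AbsImpliesCons} follows from Lemma~\ref{AbsImplies}). No gaps.
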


\begin{conslem}\label{CenterProduct}
Suppose $C_{i}$ is a center of $D_{i}$ for every $i$.
Then
$C_{1}\times\dots\times C_{n}$ is a center of
$D_{1}\times\dots\times D_{n}$.
\end{conslem}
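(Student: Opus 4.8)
The plan is to reduce Corollary~\ref{CenterProduct} to the already-proved Lemma~\ref{CenterImplies} by exhibiting $D_{1}\times\dots\times D_{n}$ as a relation defined by a suitable pp-formula whose constraints get replaced by their centers. Concretely, introduce $n$ fresh variables $x_{1},\ldots,x_{n}$ with $x_{i}$ ranging over $D_{i}$ and let $\Omega$ be the formula consisting of the single ``full'' constraint $(D_{i};(x_{i}))$ for each $i$ — that is, the unary relation $D_{i}$ imposed on $x_{i}$. Then $\Omega(x_{1},\ldots,x_{n})$ defines exactly $\rho = D_{1}\times\dots\times D_{n}$. Now form $\Omega'$ by replacing each constraint relation $D_{i}$ by $C_{i}$; since $C_{i}$ is a center of $D_{i}=\sigma_{i}$ by hypothesis, Lemma~\ref{CenterImplies} applies verbatim and tells us that the relation defined by $\Omega'(x_{1},\ldots,x_{n})$, which is plainly $C_{1}\times\dots\times C_{n}$, is a center of $\rho = D_{1}\times\dots\times D_{n}$.

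First I would check the one slightly delicate point: a unary relation $D_{i}$ viewed as a constraint on a single variable must be an admissible constraint relation in the sense the paper uses, i.e.\ a subalgebra of $\mathbf{D_{i}}$, which it trivially is (it is the whole algebra), and the replacement relation $C_{i}$ must be a center of it, which is exactly the assumption. There is a harmless edge case worth a sentence: if $C_{i}$ happens to be empty or equals $D_{i}$ for some $i$ the statement still reads correctly, and Lemma~\ref{CenterImplies} makes no nonemptiness demand, so nothing special is needed. One might worry whether Lemma~\ref{CenterImplies} requires the $\sigma_{i}$ to be ``genuine'' binary constraints rather than unary ones, but the proof of that lemma only uses that $\sigma_{i}'$ is a center of $\sigma_{i}$ together with Lemma~\ref{GenBinAbToBinAb} on a product of the witnessing algebras $\mathbf{B_{i}}$, and unary arity plays no role there.

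Alternatively, and perhaps more in the spirit of a ``Corollary'', one could argue directly from the definition of center: for each $i$ pick the witnessing algebra $\mathbf{B_{i}}=(B_{i};w_{B_{i}})$ with no binary absorbing subuniverse and the subdirect $R_{i}\le_{sd}\mathbf{D_{i}}\times\mathbf{B_{i}}$ with $C_{i}=\{a\in D_{i}\mid\forall b\in B_{i}\colon(a,b)\in R_{i}\}$; then take $\mathbf{B}=\mathbf{B_{1}}\times\dots\times\mathbf{B_{n}}$, which has no binary absorbing subuniverse by Lemma~\ref{GenBinAbToBinAb}, and $R=\{((a_{1},\ldots,a_{n}),(b_{1},\ldots,b_{n}))\mid(a_{i},b_{i})\in R_{i}\ \forall i\}$, which is subdirect in $(D_{1}\times\dots\times D_{n})\times B$; one checks at once that $\{a\mid\forall b\in B\colon(a,b)\in R\}=C_{1}\times\dots\times C_{n}$, so the latter is a center of $D_{1}\times\dots\times D_{n}$. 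I would present the pp-formula route as the main proof since it is a one-line appeal to Lemma~\ref{CenterImplies}. The only ``obstacle'' is making sure the trivial unary full relations are allowed as constraint relations in the framework, which is immediate; after that the corollary is formal.
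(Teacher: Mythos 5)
Your main argument is correct and is essentially the paper's own (implicit) derivation: Corollary~\ref{CenterProduct} is stated as a consequence of Lemma~\ref{CenterImplies}, obtained exactly by taking the pp-formula whose constraints are the full unary relations $D_{i}(x_{i})$ and replacing each by $C_{i}$. Your alternative direct construction with $\mathbf{B}=\mathbf{B_{1}}\times\dots\times\mathbf{B_{n}}$, the product relation $R$, and Lemma~\ref{GenBinAbToBinAb} is also sound and simply reproduces, in this special case, the argument the paper uses to prove Lemma~\ref{CenterImplies} itself.
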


\begin{conslem}\label{CenterIntersection}
Suppose $C_{1}$ and $C_{2}$ are centers of $D$.
Then $C_{1}\cap C_{2}$ is a center of $D$.
\end{conslem}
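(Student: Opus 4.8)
The plan is to realize $C_1\cap C_2$ as a projection of an intersection that fits the hypotheses of Corollary~\ref{CenterImpliesCons}. Concretely, I would let $\Delta=\{(a,a)\mid a\in D\}$ be the equality relation on $D\times D$. Since applying any operation componentwise to diagonal tuples again yields a diagonal tuple, $\Delta$ is a subalgebra of $\mathbf D\times\mathbf D$, and clearly $\proj_1(\Delta)=D$.

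Next I would apply Corollary~\ref{CenterImpliesCons} with $n=2$, $A_1=A_2=D$, $\rho=\Delta$, and the centers $C_1,C_2$ of $D$. Its hypotheses are met: $\proj_1(\rho)=D=A_1$, and each $C_i$ is a center of the corresponding $A_i$. Therefore $\proj_1\bigl((C_1\times C_2)\cap\Delta\bigr)$ is a center of $D$.

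Finally I would observe that $(a,a)\in(C_1\times C_2)\cap\Delta$ precisely when $a\in C_1$ and $a\in C_2$, so $\proj_1\bigl((C_1\times C_2)\cap\Delta\bigr)=C_1\cap C_2$. Combining this with the previous paragraph gives that $C_1\cap C_2$ is a center of $D$, as required.

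There is essentially no hard step here; the only point requiring a moment's care is checking that the equality relation is a legitimate subalgebra of $\mathbf D\times\mathbf D$ (so that Corollary~\ref{CenterImpliesCons} genuinely applies), which is immediate. The entire content is the standard device of encoding an intersection of subuniverses as a projection of its pullback along the diagonal, after which the ``preservation of centers under pp-definitions'' results of this subsection do all the work.
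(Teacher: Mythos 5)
Your proof is correct: the diagonal $\Delta$ is indeed a subalgebra of $\mathbf D\times\mathbf D$ with $\proj_1(\Delta)=D$, and Corollary~\ref{CenterImpliesCons} then yields that $\proj_1\bigl((C_1\times C_2)\cap\Delta\bigr)=C_1\cap C_2$ is a center of $D$. This is essentially the paper's route as well --- the corollary is just an instance of the pp-definability preservation of centers (Lemma~\ref{CenterImplies}), applied here to the conjunction $C_1(x)\wedge C_2(x)$, and your pullback-along-the-diagonal phrasing is only a cosmetic variant of that.
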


\begin{lem}\label{GenCenterToCenter}
Suppose $\rho$ is a nontrivial center of $A_{1}\times \dots\times A_{n}$.
Then for some $i$ there exists a nontrivial center $C_{i}$ of $A_{i}$.
\end{lem}
\begin{proof}
We prove by induction on the arity of $\rho$.
If the projection of $\rho$ onto the first coordinate is not $A_{1}$ then
by Lemma~\ref{CenterImplies} this projection is a center.

Otherwise, we choose any element $a\in A_{1}$
such that $\rho$ does not contain all tuples starting with $a$.
Then we consider $\rho' = \{(a_2,\ldots,a_{n})\mid (a,a_2,\ldots,a_n)\in \rho\}$, which, by Lemma~\ref{CenterImplies}, is
a nontrivial center of $A_{2}\times\dots \times A_{n}$ .
It remains to apply the inductive assumption.
\end{proof}

In the proof of the following two lemmas we assume that
a center $C$ is defined by
$C = \{a\in A\mid \forall b\in B\colon (a,b)\in R\}$
for a subalgebra $R$ of $\mathbf A\times \mathbf B$.
For an element $a\in A$ we put $a^{+}=\{b\mid (a,b)\in R\}$.
Also, we introduce a quasi-order on elements of $A$.
We say that $y_1\le y_2$ if $y_1^{+}\subseteq y_{2}^{+}$,
and $y_1\sim y_2$ if $y_1^{+}= y_{2}^{+}$.
Note that if $b_{1},b_{2},\ldots,b_{m}\ge c$, 
then 
$w(b_{1}^{+},\dots,b_{m}^{+})\supseteq 
w(c^{+},\dots,c^{+})\supseteq c^{+}$,
and therefore 
$w(b_1,\ldots,b_m)\ge c$.

\begin{lem}\label{wnuofcentralelements}
Suppose $(c_{1},\ldots,c_{m})\in A^{m}$, 
$c_{i}\in C$ for every $i\neq j$, 
and $c_{j}\notin C$.
Then $w(c_1,\ldots,c_{m})>c_{j}$. 
\end{lem}
\begin{proof}
Assume the contrary, then $w(c_1,\ldots,c_{m})\sim c_{j}$
and 
$w(\underbrace{B,\ldots,B}_{i-1},c_{j}^{+} ,\underbrace{B,\ldots,B}_{m-i})
\subseteq c_{j}^{+}$.
This is enough to imply 
that $c_{j}^{+}$ is a binary absorbing subuniverse 
with the term
$x\circ y = w(x,x,\ldots,x,y)$.
In fact, if $b_{1}\in B$ and $b_2 \in c_{j}^{+}$,
then we can write $b_1 \circ b_2 = w(b_{1},\ldots,b_{1},b_{2},b_{1},\ldots,b_{1})$
with $b_2$ in the $j$-th spot;
if $b_{1}\in c_{j}^{+}$ and $b_2 \in B$,
then we can write $b_1 \circ b_2 = w(b_{1},\ldots,b_{1},b_{2},b_{1},\ldots,b_{1})$
with one of the $b_1$'s in the $j$-th spot.
In both cases we obtain $b_{1}\circ b_{2}\in c_{j}^{+}$. Contradiction.

\end{proof}

\begin{lem}\label{AbsorptionFromSequence}
Suppose $w$ is a special WNU of arity $m$,
$C$ is a nontrivial center in $A$,
$\delta\subseteq A^{s}$ is $C$-essential.
Then $s<m^{|A|}$.
\end{lem}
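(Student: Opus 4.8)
The goal is to bound the arity $s$ of a $C$-essential relation $\delta\subseteq A^s$, where $C$ is a proper center of $A$. By Lemma~\ref{NoEssential}, the existence of a $C$-essential relation of arity $s$ is exactly the statement that $C$ does \emph{not} absorb $A$ with an operation of arity $s$. So an equivalent reformulation is: if $C$ is a proper center then $C$ absorbs $A$ with an operation whose arity is less than $(m\cdot|A|^{m^{|A|}})^{|A|}$. I would work with the witness $R\subseteq A\times B$ from the definition of the center, and the sets $a^+=\{b\in B\mid (a,b)\in R\}$, noting $C=\{a\mid a^+=B\}$, together with the key fact Lemma~\ref{AbsorptionFromWNU}: $w(c,\dots,c,a,c,\dots,c)\in C$ for all $a\in A$, $c\in C$.

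\textbf{Main steps.} First I would observe that it suffices to produce, for each $a\in A\setminus C$, a term operation $t_a\in\Clo(\mathbf A)$ of controlled arity with $t_a(C,\dots,C,A,C,\dots,C)\subseteq C$ for every position of the non-$C$ argument and, crucially, $t_a$ "kills" $a$ in the sense that feeding $a$ into the distinguished coordinate already lands in $C$ — then composing these $t_a$ over all $a\in A\setminus C$ (at most $|A|$ of them) in the distinguished coordinate gives a single absorbing term of arity roughly $(\text{arity of }t_a)^{|A|}$, which explains the outer exponent $|A|$. To build $t_a$: start from $x\circ y:=w(x,\dots,x,y)$; by Lemma~\ref{AbsorptionFromWNU}, $c\circ a\in C$ for $c\in C$, but $c\circ a$ may depend on $c$. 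The standard trick is to iterate/average: consider terms of the form $w$ applied to many translates, and use the pigeonhole principle on the finitely many unary "translation" maps $x\mapsto w(x,\dots,x,u,x,\dots,x)$ acting on $A$ (there are at most $|A|^{|A|}$ such maps, and compositions of $m$-ary $w$-terms give at most $m^{|A|}$-fold branching) to find a term that is \emph{idempotent as a unary operation on $A$} in each argument but still maps into $C$ on the distinguished input $a$. This is where the bound $m\cdot|A|^{m^{|A|}}$ for the inner arity comes from: we need enough iterations of a depth-$\le m^{|A|}$ composition tree of $w$ to stabilize the orbit of size $\le|A|^{|A|}$... more precisely to cycle through all the relevant unary polynomial maps, giving $|A|^{m^{|A|}}$ choices and a factor $m$ for one more application of $w$.

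\textbf{Alternative / cleaner route.} Rather than re-deriving absorption from scratch, I would try to exhibit directly that there is no $C$-essential relation of large arity by a compactness/pigeonhole argument on $C$-essential relations themselves: given a $C$-essential $\delta$ of huge arity $s$, look at the tuples $\bar a^{(i)}\in\delta$ witnessing coordinate $i$ (i.e. $\bar a^{(i)}\in C^{i-1}\times A\times C^{s-i}$). Pad each with the auxiliary $B$-coordinates via $R$ to get tuples in a relation over $A\times\dots\times A\times(\text{copies of }B)$; since $C=\{a\mid\forall b\in B:(a,b)\in R\}$, a coordinate is "in $C$" iff a whole block of $B$-coordinates behaves trivially. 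Apply $w$ to $m$ of the witnesses to merge them; by induction / pigeonhole on which coordinates have been "fixed into $C$", after a bounded number of merges (bounded by the orbit-counting quantity above) all coordinates become $C$-valued simultaneously, contradicting $\delta\cap C^s=\varnothing$. The arity bound then falls out of counting how many merges are needed.

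\textbf{Expected main obstacle.} The delicate point is making the "pigeonhole on unary polynomial maps" argument precise enough to extract exactly the claimed bound $(m\cdot|A|^{m^{|A|}})^{|A|}$, and in particular handling the fact that after an application of $w$ the distinguished coordinate's value changes in a way that depends on \emph{all} the $C$-coordinates, so one must argue that the set of values achievable in the distinguished coordinate (over all choices of $C$-entries) shrinks monotonically and stabilizes inside $C$. Controlling this descending chain — and showing it stabilizes within $|A|^{m^{|A|}}$ steps using that a composition tree of $w$'s of that depth realizes every relevant unary idempotent reduct — is the technical heart, and I expect Lemma~\ref{AbsorptionFromWNU} to be invoked repeatedly to guarantee each step stays consistent with landing in $C$. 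The outer $|A|$-th power is then the comparatively routine bookkeeping of eliminating the $|A\setminus C|\le|A|$ "bad" values $a$ one at a time by composition in the distinguished coordinate.
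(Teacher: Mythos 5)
Your reformulation via Lemma~\ref{NoEssential} is legitimate, and your ``alternative route'' (take the $s$ witness tuples $\alpha_i\in C^{i-1}\times A\times C^{s-i}$ of $\delta$, merge batches of $m$ of them with $w$, and argue by pigeonhole that after boundedly many rounds some surviving tuple lies in $C^{s}$) is indeed the shape of the paper's argument, including the source of the two factors in the bound ($|A|$ rounds; per round a loss of $m$ from merging and of $|A|^{m^{|A|}}$ from pruning to tuples agreeing on the at most $m^{|A|}$ ``foreign'' coordinates of a chosen tuple). However, there is a genuine gap: you never supply the mechanism that forces foreign entries to disappear under merging, and the mechanism you gesture at (a pigeonhole over unary polynomial maps / ``composition trees of $w$ of depth $m^{|A|}$ realizing every unary idempotent reduct'') is not the right one and is unsubstantiated. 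Lemma~\ref{AbsorptionFromWNU} alone cannot do the job, because it only controls $w(c,\dots,c,a,c,\dots,c)$ with \emph{equal} central entries, whereas after one merge the central entries in a given coordinate are all different; if Lemma~\ref{AbsorptionFromWNU} sufficed one would get absorption of arity $m$ outright and the lemma would be trivial.

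The missing idea in the paper is a quasi-order on $A$ defined through the center witness $R\subseteq A\times B$: set $y_1\le y_2$ iff $y_1^{+}\subseteq y_2^{+}$, where $a^{+}=\{b\mid (a,b)\in R\}$, so that $C$ is exactly the top class and $w$ is monotone in the sense that $b_1,\dots,b_m\ge c$ implies $w(b_1,\dots,b_m)\ge c$. The key dichotomy is then: for a fixed minimal foreign value $c$, there is \emph{at most one} argument position of $w$ at which inserting an element $\sim c$ can fail to strictly increase the value; if two distinct positions both failed, the two corresponding polynomials $w(x,\dots,x,y,x,\dots,x)$ would make $c^{+}$ a binary absorbing subuniverse of $B$, contradicting the definition of a center. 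This is a second, independent use of the hypothesis that $B$ has no binary absorption --- your sketch uses that hypothesis only through Lemma~\ref{AbsorptionFromWNU}, which is not enough. With the dichotomy in hand, one keeps the distinguished tuple $\beta_1$ in the unique ``dangerous'' position of every application of $w$, uses Lemma~\ref{AbsorptionFromWNU} to centralize the coordinates where only $\beta_1$ is foreign, and uses strict increase everywhere else to eliminate all elements $\sim c$ in one round; the independence/pruning bookkeeping (tuples pairwise sharing no foreign coordinates, and agreeing with $\beta_1$'s foreign block) is what makes this legal and yields exactly $(m\cdot|A|^{m^{|A|}})^{|A|}$. Without this order-theoretic step your ``descending chain stabilizes'' claim has no justification, so as written the proposal does not constitute a proof.
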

\begin{proof}

Choose $\alpha_1,\dots,\alpha_{s}\in\delta$
such that 
$\alpha_{i}\in C^{i-1}\times A \times C^{s-i}$
for every $i$.
We start with the matrix $M_1$ whose columns are tuples
$\alpha_{1},\ldots,\alpha_{s}$.
Then we build 
a matrix 
$M_{2}$ whose columns are tuples 
$w(\alpha_{1},\ldots,\alpha_{m})$, 
$w(\alpha_{m+1},\ldots,\alpha_{2m})$,
$w(\alpha_{2m+1},\ldots,\alpha_{3m}),\ldots.$
Then we apply the WNU $w$ to the corresponding columns 
of the previous matrix to define a new matrix $M_{3}$.
We continue this way until we get a matrix with less than $m$ columns.
Note that the next matrix has $m$ times less columns than the previous one.
It is not hard to see that every row of every matrix 
has at most one element that is not from the center.
Moreover, by Lemma~\ref{wnuofcentralelements}, 
the noncentral element in the $i$-th row of the $(j+1)$-th matrix 
is greater than
the noncentral 
element in the $i$-th row of the $j$-th matrix.
This means that the $|A|$-th matrix, if it exists,
has only central elements, which contradicts our assumptions.
Hence, it does not exist and $s< m^{|A|}$. 
\end{proof}

Combining this result with
Lemma~\ref{NoEssential}, we obtain the following corollary.

\begin{conslem}\label{centerImpliesAbsorption}
Suppose $C$ is a center of $A$. Then
$C$ is an absorbing subuniverse of~$A$.
\end{conslem}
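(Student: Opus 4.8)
The plan is to read off the corollary by combining the two cited lemmas almost verbatim, with only a trivial case distinction. First I would record the one small preliminary fact: a center $C$ is a subuniverse of $A$ (it is cut out from a subalgebra $R\le \mathbf A\times\mathbf B$ as $\{a\mid \forall b\in B\colon (a,b)\in R\}$, hence preserved by all term operations), so that Lemma~\ref{NoEssential} is applicable to the pair $C\subseteq A$.

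For the trivial case $C=A$ there is nothing to do: any term operation $t$ satisfies $t(A,\dots,A)\subseteq A$, so $A$ absorbs itself. For the main case $C\subsetneq A$, i.e. $C$ a proper center, let $m$ be the arity of the special WNU $w$ and put $N=(m\cdot|A|^{m^{|A|}})^{|A|}$. By Lemma~\ref{AbsorptionFromSequence}, every $C$-essential relation $\delta\subseteq A^{s}$ has $s<N$; in particular there is no $C$-essential relation of arity exactly $N$. Then Lemma~\ref{NoEssential}, applied with $n=N$, gives a term operation of arity $N$ witnessing that $C$ absorbs $A$, so $C$ is an absorbing subuniverse of $A$.

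There is essentially no obstacle here, since the mathematical content is entirely contained in Lemmas~\ref{AbsorptionFromSequence} and~\ref{NoEssential}. The only point requiring a moment's care is the arity bookkeeping: Lemma~\ref{NoEssential} needs the absence of a $C$-essential relation of one \emph{fixed} arity, and Lemma~\ref{AbsorptionFromSequence} supplies precisely this for arity $N$, because $N\not<N$. Writing the argument out should therefore take only a few lines.
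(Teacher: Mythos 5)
Your proof is correct and follows essentially the same route as the paper, which derives this corollary in one line by combining Lemma~\ref{AbsorptionFromSequence} with Lemma~\ref{NoEssential}; your extra care about $C$ being a subuniverse, the trivial case $C=A$, and the fixed-arity bookkeeping only makes explicit what the paper leaves implicit.
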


The following lemma is a stronger version of an original lemma suggested by Marcin Kozik.

\begin{lem}\label{IncreaseArity}
Suppose $C_{1}\subseteq A_{1}$ and $C_{2}\subseteq A_{2}$ are centers,
$B$ is a subuniverse of $D$,
and a relation $\rho\subseteq A_{1}\times D^{l}\times A_{2}$
is 
$(C_{1},B,\dots,B,C_{2})$-essential.
Then there exists 
a  relation $\rho'\subseteq A_{1}\times D^{2l}\times A_{1}$
that is 
$(C_{1},B,\dots,B,C_{1})$-essential.
\end{lem}

\begin{proof}
Assume that $\rho$ is a minimal relation 
(with respect to inclusion) 
that is $(C_{1},B,\dots,B,C_{2})$-essential.
Put $E = \proj_{l+2}(\rho\cap (C_{1}\times B^{l}\times A_{2}))$.
Since $\rho$ is minimal, for any $b\in E$ the algebra generated by $\{b\}\cup C_{2}$ contains $\proj_{l+2}(\rho)$ 
(otherwise we would restrict 
the $(l+2)$-th variable of $\rho$ to this algebra).
Fix $b\in E$.

Let $\sigma$ be the subalgebra of $A_{2}\times A_{2}$ generated by
$\{b\}\times C_{2}\cup C_{2} \times C_{2}\cup C_{2}\times \{b\}$.
Since our algebras are idempotent,
for any $c\in \proj_{l+2}(\rho)$
we have 
$\{c\}\times C_{2}\subseteq \sigma$.
Put
$$\rho'(x,y_1,\ldots,y_l,y_{1}',\ldots,y_{l}',x') = \exists z\exists z' \;\rho(x,y_1,\ldots,y_{l},z)\wedge
\rho(x',y_{1}',\ldots,y_{l}',z')\wedge\sigma(z,z').$$
Let us show that $\rho'$ 
is $(C_{1},B,\dots,B,C_{1})$-essential.
Since 
$\rho$ is 
$(C_{1},B,\dots,B,C_{2})$-essential, 
for any $i\in\{1,\ldots,l+1\}$ there exists 
a tuple 
$(a_{1},\ldots,a_{l+2})$ 
such that 
only its $i$-th element 
is not from the corresponding set 
of $(C_{1},B,\dots,B,C_{2})$.
Since $b\in E$,
there exists $c_{1},\ldots,c_{l+1}$ 
such that 
$(c_{1},\ldots,c_{l+1},b)\in 
\rho\cap (C_{1}\times B^{l}\times A_{2})$.
Then 
$(a_{1},\dots,a_{l+1},
c_{2},\ldots,c_{l+1},c_{1})\in \rho'$ (it is sufficient 
to put $z= a_{l+2}$ and $z' = b$).
Thus, for any 
$i\in\{1,\ldots,l+1\}$
we build a tuple from $\rho'$ such that 
only its $i$-th element 
is not from the corresponding set 
of $(C_{1},B,\dots,B,C_{1})$.
In the same way we can build such a tuple
for each
$i\in\{l+2,\ldots,2l+2\}$.

To prove that $\rho'$ 
is $(C_{1},B,\dots,B,C_{1})$-essential
it remains to show that 
$(C_{1}\times B^{2l}\times C_{1})\cap \rho' =\varnothing$.
Assume the converse, let 
a tuple from the intersection be obtained by sending $z$ to $d$ and $z'$ to $d'$.
Clearly, $d,d'\in E$ and $\{e\in A_{2}\mid (e,d')\in \sigma\}\supseteq \{d\}\cup C_{2}$,
therefore $\{e\in A_{2}\mid (e,d')\in\sigma\} \supseteq \proj_{l+2}(\rho)$.
Hence,  $\{e\in A_{2}\mid (b,e)\in\sigma\}\supseteq \{d'\}\cup C_{2}$
and $\{e\in A_{2}\mid (b,e)\in\sigma\} \supseteq \proj_{l+2}(\rho)$.

Thus, $(b,b)\in\sigma$ and there exists an $n$-ary term $t$ such that
$$t(b,b,\ldots,b,c_{1},\ldots,c_{i}) = b,\;\;\;
t(c_{1}',\ldots,c_{j}',b,b,\ldots,b) = b,$$
where $i+j\ge n$ and $c_{1},\ldots,c_{i},c_{1}',\ldots,c_{j}'\in C_{2}$.
Suppose $R\subseteq A_{2}\times G$ is a binary relation
from the definition of the center $C_{2}$,
$b^{+} = \{a\mid (b,a)\in R\}$.
Since $t$ preserves $R$, we have
$$t(b^{+},b^{+},\ldots,b^{+},\underbrace{G,\ldots,G}_{i}) \subseteq b^{+},\;\;\;
t(\underbrace{G,\ldots,G}_{j},b^{+},b^{+},\ldots,b^{+}) \subseteq b^{+},$$
and therefore $b^{+}$ absorbs $G$ with the binary term $t(\underbrace{x,\ldots,x}_{j},y,\ldots,y)$.
This contradiction completes the proof.
\end{proof}

\begin{conslem}\label{AlmostEssTuple}
Suppose $C_{1}\subseteq A_{1}$ and $C_{2}\subseteq A_{2}$ are centers and 
$B\subseteq D$ is an absorbing subuniverse.
Then there does not exist 
$(C_{1},B,C_{2})$-essential relation 
$\rho\subseteq A_{1}\times D\times A_{2}$.
\end{conslem}

\begin{proof}
Assume that such a relation $\rho$ exists.
Iteratively applying Lemma~\ref{IncreaseArity}
to $\rho$
we can obtain 
a $(C_{1},B,\dots,B,C_{1})$-essential 
relation $\rho_{l}\subseteq A_{1}\times D^{l}\times A_{1}$ 
for $l= 2,4,8,\dots$.
If we restrict the first  and the last variables 
of $\rho_{l}$ to $C_{1}$ and consider the projection onto the
remaining variables we get a $B$-essential relation
of arity $l$.
Since we can make $l$ as large as we need, 
we get a contradiction with 
Lemma~\ref{NoEssential} and the fact that $B$ is an absorbing subuniverse.
\end{proof}

\begin{conslem}\label{ternaryAbsorption}
Suppose $C$ is a center of $A$. Then
$C$ is a ternary absorbing subuniverse of~$A$.
\end{conslem}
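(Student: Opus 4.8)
The plan is to reduce to the nonexistence of a $C$-essential relation of arity $3$ and then invoke Corollary~\ref{AlmostEssTuple}. By Lemma~\ref{NoEssential}, it suffices to show that there is no $C$-essential relation $\rho\subseteq A^{3}$. So suppose, for contradiction, that such a $\rho$ exists: $\rho\cap C^{3}=\varnothing$, while $\rho\cap(A\times C\times C)\neq\varnothing$, $\rho\cap(C\times A\times C)\neq\varnothing$, and $\rho\cap(C\times C\times A)\neq\varnothing$.

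Next I would set up Corollary~\ref{AlmostEssTuple} with $A_{1}=A_{2}=D=A$ and $C_{1}=C_{2}=B=C$. This is legitimate: $C$ is a center of $A$ by hypothesis, and $C$ is an absorbing subuniverse of $A$ by Corollary~\ref{centerImpliesAbsorption}, so $B=C$ is a valid choice for the absorbing subuniverse sitting inside $D=A$. The three nonemptiness hypotheses of Corollary~\ref{AlmostEssTuple}, namely $(C_{1}\times D\times C_{2})\cap\rho\neq\varnothing$, $(C_{1}\times B\times A_{2})\cap\rho\neq\varnothing$, and $(A_{1}\times B\times C_{2})\cap\rho\neq\varnothing$, translate respectively to $(C\times A\times C)\cap\rho\neq\varnothing$, $(C\times C\times A)\cap\rho\neq\varnothing$, and $(A\times C\times C)\cap\rho\neq\varnothing$, which are exactly the three $C$-essentiality conditions on $\rho$ (up to the order of coordinates).

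Corollary~\ref{AlmostEssTuple} then yields $(C_{1}\times B\times C_{2})\cap\rho\neq\varnothing$, i.e. $C^{3}\cap\rho\neq\varnothing$, contradicting $\rho\cap C^{3}=\varnothing$. Hence no $C$-essential relation of arity $3$ exists, and by Lemma~\ref{NoEssential} the set $C$ absorbs $A$ with a ternary term operation, as claimed.

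The proof is essentially a bookkeeping assembly of the already-established facts; there is no genuine obstacle here, since all the real work was carried out in Lemma~\ref{IncreaseArity} and Corollary~\ref{AlmostEssTuple}. The only point requiring a moment's care is matching the hypotheses of Corollary~\ref{AlmostEssTuple} to the definition of a $C$-essential relation and checking that one is allowed to take $B=C$, which is precisely what Corollary~\ref{centerImpliesAbsorption} guarantees.
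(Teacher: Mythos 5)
Your proof is correct and is exactly the intended derivation: the paper leaves Corollary~\ref{ternaryAbsorption} as an immediate consequence of Corollary~\ref{AlmostEssTuple} (applied with $A_{1}=A_{2}=D=A$, $C_{1}=C_{2}=B=C$, using Corollary~\ref{centerImpliesAbsorption} to justify $B=C$) together with Lemma~\ref{NoEssential}, which is precisely your argument.
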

\begin{proof}
Assume that $C$ is not a ternary absorbing subuniverse 
then by Lemma~\ref{NoEssential}, there exists a $C$-essential relation of arity 3.
By Corollary~\ref{centerImpliesAbsorption}, 
$C$ is an absorbing subuniverse of $A$,
then by 
Corollary~\ref{AlmostEssTuple} such a relation cannot exist.
\end{proof}

\begin{conslem}\label{CenterLessThanThree}
Suppose
$C_{i}$ is a center of 
$A_{i}$ for $i\in\{1,2,\dots,k\}$
and 
$k\ge 3$.
Then there does not exist a 
$(C_{1},\dots,C_{k})$-essential 
relation 
$\rho\subseteq A_{1}\times \dots \times A_{k}$.
\end{conslem}
\begin{proof}
If such a relation $\rho$ exists then 
restricting all but the first three variables of $\rho$ to the corresponding centers and projecting the result onto the first three variables we obtain 
$(C_{1},C_{2},C_{3})$-essential relation, 
which cannot exists by 
Corollary~\ref{AlmostEssTuple}.
\end{proof}

\subsection{PC Subuniverse}\label{PCSubsection}

\begin{lem}\label{ReflexivePCRelations}
Suppose
$A$ is a PC algebra and 
$\rho\subseteq A^{n}$ is a relation containing all
the constant tuples $(a,\dots,a)$. 
Then $\rho$ can be represented as a conjunction of
binary relations of the form $x_i = x_j$.\end{lem}
\begin{proof}
All constant operations preserve $\rho$, 
and together with the constant operations 
the algebra $A$ generates all operations on the set $A$.
Then $\rho$ is preserved by all operations on $A$,
and therefore, $\rho$ is diagonal (see Theorem 2.9.3 from \cite{lau})
and it can be represented
as a conjunction of
binary relations of the form $x_i = x_j$.
\end{proof}

\begin{lem}\label{FindCenter}
Suppose $\rho\subseteq A\times B$ is a subdirect relation
and
$A$ is a PC algebra. Then either
for every $b\in B$ there exists a unique $a\in A$ such that $(a,b)\in \rho$,
or there exists $b\in B$ such that $(a,b)\in \rho$ for every $a\in A$.
\end{lem}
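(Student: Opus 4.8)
The plan is to study, for each $k$, the relation $\tau_k\subseteq A^k$ defined by $\tau_k=\proj_{1,\dots,k}(\rho^{(k)})$, where $\rho^{(k)}\subseteq A^k\times B$ is given by $\rho^{(k)}(a_1,\dots,a_k,c)\equiv\rho(a_1,c)\wedge\dots\wedge\rho(a_k,c)$; note that $\tau_2=\ConOne(\rho,1)$. Each $\tau_k$ is a subuniverse of $\mathbf A^k$ (it is pp-definable from $\rho$), is invariant under every permutation of its coordinates, and — because $\rho$ is subdirect — is reflexive, i.e.\ contains $(a,\dots,a)$ for every $a\in A$. The heart of the argument will be the claim: \emph{if $\mathbf A$ is a PC algebra, then every reflexive subuniverse $\tau$ of $\mathbf A^k$ is a diagonal relation}, i.e.\ $\tau=\{(a_1,\dots,a_k)\mid a_i=a_j\ \text{for all}\ (i,j)\in\theta\}$ for some equivalence relation $\theta$ on $\{1,\dots,k\}$.

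To prove the claim I would first observe that a reflexive subuniverse $\tau$ of $\mathbf A^k$ is preserved by \emph{every} operation on $A$, not merely by the term operations of $\mathbf A$. Indeed, since $\mathbf A$ is PC, an arbitrary $g\colon A^n\to A$ can be written as $g(x_1,\dots,x_n)=p(x_1,\dots,x_n,d_1,\dots,d_l)$ for a term operation $p$ of $\mathbf A$ and constants $d_1,\dots,d_l\in A$; evaluating $g$ coordinatewise on tuples of $\tau$ then amounts to evaluating $p$ coordinatewise on those tuples together with the constant tuples $(d_j,\dots,d_j)$, each of which lies in $\tau$ by reflexivity, so the value stays in $\tau$ because $p$ preserves $\tau$. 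Now I invoke the classical description of the relations invariant under the clone of all operations on a set: every nonempty such relation is a diagonal relation. Applying this to $\tau$ and using that $\tau$ is permutation-invariant, the equivalence relation $\theta$ is itself $S_k$-invariant, hence $\theta$ is either the equality relation (so $\tau=A^k$) or the full relation (so $\tau=\{(a,\dots,a)\mid a\in A\}$).

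Granting the claim, the lemma follows quickly. If $|A|=1$ the statement is trivial, so assume $|A|\ge2$. By the claim each $\tau_k$ equals $A^k$ or $\{(a,\dots,a)\mid a\in A\}$; moreover, if $\tau_2=\ConOne(\rho,1)$ were the full relation $A\times A$, then projecting any $\tau_k$ onto its first two coordinates (setting the remaining arguments equal to the first) would return $A\times A$, so no $\tau_k$ could be the diagonal and hence $\tau_k=A^k$ for all $k$. Thus there are exactly two cases. If $\ConOne(\rho,1)=\Delta_A$, then $(a,c),(a',c)\in\rho$ forces $a=a'$, which is the first alternative. If $\ConOne(\rho,1)=A\times A$, take $k=|A|$ and a tuple $(a_1,\dots,a_{|A|})$ listing every element of $A$; since it lies in $\tau_{|A|}=A^{|A|}$, there is a single $c\in B$ with $(a_i,c)\in\rho$ for all $i$, i.e.\ $(a,c)\in\rho$ for every $a\in A$, the second alternative. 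The only real work is the structural claim about reflexive subuniverses of powers of a PC algebra; the rest is routine manipulation of pp-definable relations.
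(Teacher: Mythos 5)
Your proof is correct and follows essentially the same route as the paper: the relations $\tau_k$ are exactly the paper's $\sigma_l(x_1,\ldots,x_l)=\exists y\,\rho(x_1,y)\wedge\dots\wedge\rho(x_l,y)$, and the key point in both arguments is that a reflexive relation pp-definable over a PC algebra is invariant under all operations, hence a diagonal relation, which forces each $\tau_k$ to be either the equality-type case or full up to $k=|A|$. You merely spell out the ``easy to see'' step of the paper (polynomial completeness plus reflexivity, then symmetry) in more detail, so no substantive difference.
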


\begin{proof}
Put
$\sigma_{l}(x_1,x_2,\ldots,x_{l}) = \exists y \;\rho(x_{1},y)
\wedge\dots\wedge \rho(x_{l},y).$
It is not hard to see that 
$\sigma_{l}$ contains all constant tuples.
Therefore, 
Lemma~\ref{ReflexivePCRelations} implies that $\sigma_{2}$ is either full, or the equality relation.

If $\sigma_{2}$ is the equality relation, then
for every $b\in B$ there exists a unique $a\in A$ such that $(a,b)\in \rho$.

Suppose $\sigma_2$ is full.
Then we consider the minimal $l$, 
if it exists, such that $\sigma_{l}$ is not full.
Since $\sigma_{l-1}$ is full, 
the relation $\sigma_{l}$ contains all tuples 
whose elements are not different.
Then 
Lemma~\ref{ReflexivePCRelations}
implies that $\sigma_{l}$ is a full 
relation, which means that
$\sigma_{l}$ is a full relation for every $l$.
Substituting 
$l = |A|$ and $\{x_1,\dots,x_{l}\} = A$ in the definition of 
$\sigma_{l}$ we obtain that 
there exists $b$ such that $(a,b)\in \rho$ for every $a\in A$.
\end{proof}

\begin{lem}\label{PCRelationsLem}
Suppose $\rho\subseteq A_1\times\dots \times A_{n}$ is a subdirect relation,
$A_{i}$ is a PC algebra for every $i\in\{2,\ldots,n\}$,
and there is no nontrivial 
binary absorbing subuniverse or nontrivial center on $A_{i}$ for every $i\in\{1,\ldots,n\}$.
Then $\rho$ can be represented as a conjunction of binary relations $\delta_{1},\ldots,\delta_{k}$ such that
$\ConOne(\delta_{l},j)$ is the equality relation whenever the domain of the $j$-th variable of $\delta_{l}$ is a PC algebra.
\end{lem}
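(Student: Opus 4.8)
The plan is to induct on the arity $n$ and on the sizes of the domains, reducing a subdirect relation to a conjunction of binary relations one variable at a time. The key observation is Lemma~\ref{FindCenter}: if $A_i$ is a PC algebra and we look at the projection of $\rho$ onto the $i$-th variable together with one other variable, then either this binary projection is the graph of a partial map (so $\ConOne$ on the PC side is equality), or there is an element of the other domain that is compatible with every value of $A_i$. In the first case the $i$-th variable is already "pinned down" by a single binary constraint in the desired form; in the second case we want to argue that such a total column forces a center or a binary absorbing subuniverse somewhere, contradicting the hypothesis, or else lets us factor $A_i$ out of the relation entirely.

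First I would treat the base case $n\le 2$ (trivial: $\rho$ itself is binary, and subdirectness plus Lemma~\ref{FindCenter} gives the statement). For the inductive step, pick a PC variable, say the $n$-th, and consider $\ConOne(\rho,n)$. Since $A_n$ is PC, every congruence-like invariant on $A_n$ is forced to be very restricted; in particular one shows $\ConOne(\rho,n)$ is either the equality relation or the full relation. If it is equality, the $n$-th variable is rectangular, so we may think of $\rho$ as (essentially) determined by its projection onto the first $n-1$ coordinates together with a collection of binary links tying $x_n$ to the other variables, each of which has equality as its $\ConOne$ on the $x_n$ side; we peel $x_n$ off, apply the induction hypothesis to the projection onto the first $n-1$ coordinates (still subdirect, still satisfying the PC/no-absorption/no-center hypotheses), and re-assemble. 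If $\ConOne(\rho,n)$ is full, then for any fixed values of the other coordinates the fibre over them in $A_n$ is a subuniverse of $A_n$ that is "large" in the relevant sense; I would use the center/binary-absorption definitions together with Corollary~\ref{ternaryAbsorption} and Lemma~\ref{GenBinAbToBinAb} to show this produces a center or a binary absorbing subuniverse on $A_n$ or on one of the other $A_i$'s, contradicting the hypothesis — unless the fibre is everything, in which case $x_n$ is dummy for that combination and we can drop it. Iterating over all PC variables and then invoking the essentiality decomposition (any relation is a conjunction of essential relations, and an essential relation all of whose PC variables have trivial $\ConOne$ is forced to be binary) finishes the proof.

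The main obstacle I expect is the full-$\ConOne$ case: ruling out a "total column" over a PC domain without already knowing the relation decomposes. The delicate point is that a PC algebra has no proper center and no binary absorbing subuniverse of its own, so the contradiction has to be extracted by building an auxiliary relation (in the style of Lemma~\ref{IncreaseArity} and Corollary~\ref{AlmostEssTuple}) that exhibits a $C$-essential configuration or a binary absorption, and then transporting it via Lemma~\ref{GenBinAbToBinAb} to one of the coordinate algebras. Getting the bookkeeping right — which coordinates stay fixed, which are projected away, and how subdirectness is preserved under these manipulations — is where the real work lies; the rest is the routine peeling-off induction together with the essential-relation decomposition already noted in the excerpt.
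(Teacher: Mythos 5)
There is a genuine gap, and it sits exactly at the heart of the lemma. Your case split on $\ConOne(\rho,n)$ being equality or full (which is legitimate: a reflexive binary subuniverse of a PC algebra is preserved by all polynomial operations, hence by all operations, hence is equality or full) is fine, but in the equality case you assert that $\rho$ is then ``determined by its projection onto the first $n-1$ coordinates together with a collection of binary links tying $x_n$ to the other variables.'' Nothing you have said justifies this: equality of $\ConOne(\rho,n)$ only says that $x_n$ is a function of the tuple $(x_1,\ldots,x_{n-1})$, and a priori this function depends jointly on all of them and does not factor through any single binary constraint. Ruling out such a joint functional dependence is precisely the nontrivial content of the lemma (it is exactly what fails for linear algebras, e.g.\ $x_n=x_1+\dots+x_{n-1}$ over $\mathbb Z_p$), and your closing claim that ``an essential relation all of whose PC variables have trivial $\ConOne$ is forced to be binary'' is the statement to be proven, not something noted in the text. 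The paper closes this gap by taking a minimal counterexample, using essentiality to force all proper projections to be full, deducing that in \emph{every} PC coordinate the value is uniquely determined by the rest, and then, when the arity exceeds $2$, pp-defining a $4$-ary relation $\zeta(z_1,z_2,z_3,z_4)$ (four copies of $\rho$ sharing coordinates, with $x_1,x_2$ doubled) that is reflexive and satisfies $(z_1=z_2)\Leftrightarrow(z_3=z_4)$ while all $3$-coordinate projections are full; reflexivity makes $\zeta$ invariant under all operations of the PC algebra $A_n$, which is impossible for such a relation. Your proposal contains no substitute for this construction.

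Your treatment of the full case is closer in spirit to the paper but also vaguer than needed: the paper does not hunt for a $C$-essential configuration via Lemma~\ref{IncreaseArity}; given a tuple $(b_1,\ldots,b_{n-1})$ compatible with all of $A_n$ (supplied by Lemma~\ref{FindCenter}), it takes the maximal $s$ with $\rho(b_1,\ldots,b_s,x_{s+1},\ldots,x_n)$ not full and reads off a proper center on $A_{s+1}$ directly from the definition of a center, contradicting the hypothesis. That part of your plan could probably be repaired along these lines, but as written the equality/functional case is missing its argument, so the proof does not go through.
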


This lemma 
says that 
the relation $\rho$ can be represented by 
constraints 
from the first coordinate to an $i$-th coordinate such that 
the $i$-th coordinate is uniquely determined by the first 
(also we can define the corresponding PC congruence on the first coordinate 
using this relation)
and by bijective binary constraints between pairs of coordinates
other than first.
Also, it says that in a subdirect product of PC algebras without a nontrivial binary absorbing subuniverse or center (even $A_{1}$ is a PC algebra) we can choose some essential coordinates which can have any value, each other coordinate is uniquely determined by exactly one of them (in a bijective way).

\begin{proof}
We prove by induction on the 
arity of $\rho$. 
If $\rho$ is binary,
Lemma~\ref{FindCenter} implies that 
there exists a nontrivial binary absorbing subuniverse on $A_{2}$,
or there exists  a nontrivial center on $A_{1}$ witnessed by $\rho$, 
or the second coordinate of $\rho$ is uniquely determined by the first, 
or $\rho$ is full. First two conditions contradict our assumptions, 
the last two conditions are what we need.

Assume that $\rho$ is not essential, then it can be represented as a conjunction of essential relations satisfying the same properties.
By the inductive assumption, each of them
can be represented as a conjunction of binary relations. It remains to join these binary relations to complete the proof for this case.

Assume that $\rho$ is essential.
The projection of $\rho$ onto any proper set of variables gives a relation of a smaller arity satisfying the same properties.
By the inductive assumption, the relation of a smaller arity can be represented as a conjunction of binary relations
$\delta_{1},\ldots,\delta_{k}$ such that
$\ConOne(\delta_{l},j)$ is the equality relation whenever the domain of the $j$-th variable of $\delta_{l}$ is a PC algebra.
In each relation 
$\delta_{i}$ 
one variable (let it be the $u$-th variable of $\rho$) is uniquely determined 
by another,
and therefore 
the relation 
$\rho$ can be represented as 
a conjunction of $\delta_{i}$ 
and the projection of 
$\rho$ onto all variables but $u$-th, 
which cannot happen
with an essential relation.
Therefore, 
each projection of $\rho$ onto any proper set of variables
is a full relation.

Let us consider the relation $\rho\subseteq (A_{1}\times\dots\times A_{n-1})\times A_{n}$ as a binary relation.
By Lemma~\ref{FindCenter} we have one of the following two situations.

Case 1: 
there exist $b_{1},\ldots,b_{n-1}$ such that $(b_{1},\ldots,b_{n-1},a)\in \rho$ for every $a\in A_{n}$.
We consider the maximal $s$ such that
$\rho(b_{1},\ldots,b_{s},x_{s+1},\ldots,x_{n})$ is not a full relation.
It is easy to see that $s\le n-2$ and $s$ exists.
Let $R(x_{s+1},\ldots,x_{n}) =\rho(b_{1},\ldots,b_{s},x_{s+1},\ldots,x_{n})$.
Since the projection of $\rho$ onto any proper subset of variables is full, $R$ is a subdirect relation.
By Lemma~\ref{GenBinAbToBinAb}, 
there is no 
nontrivial 
binary absorbing subuniverse 
on 
$A_{s+2}\times\dots\times A_{n}$,
then we get a nontrivial center $C$ on $A_{s+1}$
defined
by
$C = \{a_{s+1}\in A_{s+1}\mid \forall a_{s+2}\dots\forall a_{n} \colon
(a_{s+1},a_{s+2},\ldots,a_{n})\in R\}$
and 
witnessed by $R$.

Case 2: for every $a_{1},\ldots,a_{n-1}$ there exists a unique $b$ such that
$(a_{1},\ldots,a_{n-1},b)\in \rho$.
We can show in the same way that 
for any $(a_{1},a_{3},\ldots,a_{n})$
there exists a unique $b$ such that
$(a_{1},b,a_{3}\ldots,a_{n})\in\rho$.
Let us consider the relation 
$\zeta$ defined by
\begin{align*}\zeta(z_{1},z_{2},z_{3},z_{4}) =
\exists x_{1} \exists x_{2}\dots\exists x_{n-1} \exists x_{1}' \exists x_{2}'\;
&\rho(x_{1},x_{2},x_{3},\ldots,x_{n-1},z_{1})\wedge\\
\rho(x_{1},x_{2}',x_{3},\ldots,x_{n-1},z_{2})\wedge
&\rho(x_{1}',x_{2},x_{3},\ldots,x_{n-1},z_{3})\wedge
\rho(x_{1}',x_{2}',x_{3},\ldots,x_{n-1},z_{4}).
\end{align*}
Since 
any projection of $\rho$ onto any proper subset of variables is a full relation, 
any projection of 
$\zeta$ onto 3 variables is a full relation.
Since  $\rho$ is subdirect,
$\zeta$ contains all constant tuples.
Then
Lemma~\ref{ReflexivePCRelations}
implies that 
$\zeta$ is a full relation.
Suppose $a\neq b$ and 
$(a,a,a,b)\in\zeta$
witnessed by 
$x_{1},\ldots,x_{n-1},x_{1}',x_{2}'$.
Since $z_{1}=z_{2}=a$,
we have $x_{2} = x_{2}'$ and 
therefore $z_{3}=z_{4}$, that is $a=b$. 
Contradiction.
\end{proof}


\begin{conslem}\label{PCProperties}
Suppose $\sigma_{1},\ldots,\sigma_{k}$ 
are all PC congruences on $A$.
Put $A_{i} = A/\sigma_{i}$, 
and define 
$\psi:A\to A_{1}\times \dots\times A_{k}$
by $\psi(a) = (a/\sigma_{1},\dots,a/\sigma_{k})$. Then 
\begin{enumerate}
    \item $\psi$ is surjective, hence 
    $A/\PCCon(A)\cong A_{1}\times\dots\times A_{k}$;
    \item the PC subuniverses are the sets of the form 
    $\psi^{-1}(S)$, where $S\subseteq A_{1}\times \dots\times A_{k}$
    is a relation definable by unary constraints of the form 
    $x_{j} = a_{j}$;
    \item for each nonempty PC subuniverse $B$ of $A$ there is a congruence 
    $\theta$ of $A$ such that $B$ is an equivalence class of $\theta$ 
    and 
    $A/\theta$ is isomorphic to a product of PC algebras
    having no nontrivial binary absorbing subuniverse or center.
\end{enumerate}
\end{conslem}

\begin{proof}

Consider 
the image 
$\psi(A)$, which is a subdirect subuniverse 
of $A_{1}\times\dots\times A_{k}$.
By Lemma~\ref{PCRelationsLem},
this relation can be represented as a conjunction of binary relations whose one coordinate uniquely determines another (in a bijective way).
This means that congruences $\sigma_{i}$ corresponding to 
these coordinates should be equal, which contradicts the definition.
Then $\psi(A)$ is a full relation and $\psi$ is surjective.

Claim (2) follows directly from the definition of 
a PC subuniverse.

To prove (3) consider the intersection 
of all congruences whose equivalence classes we intersected to define
the PC subuniverse.
Then, in the same way as in (1) we can prove 
the isomorphism.
\end{proof}
\begin{conslem}\label{PCImplies}
Suppose $\rho \subseteq A_{1}\times\dots\times A_{n}$ is a subdirect relation,
there is no nontrivial binary absorbing subuniverse or 
nontrivial center on $A_{1}$, and 
$C = \proj_{1}((C_{1}\times\dots \times C_{n})\cap\rho)$,
where $C_{i}$ is a PC subuniverse in $A_{i}$ for every $i$.
Then $C$ is a PC subuniverse in $A_{1}$.
\end{conslem}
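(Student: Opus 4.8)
The plan is to deduce this from the structural Lemma~\ref{PCRelationsLem}; this is the PC counterpart of Corollaries~\ref{AbsImpliesCons} and~\ref{CenterImpliesCons}, but since there is no ``PC--implies'' statement for pp-formulas the argument has to pass through the description of subdirect relations over PC algebras. First I would put $\rho$ and the $C_i$ into a normal form. Write each PC subuniverse as $C_i=E_{i,1}\cap\dots\cap E_{i,s_i}$, where $E_{i,j}$ is a class of a congruence $\sigma_{i,j}$ of $A_i$ with $A_i/\sigma_{i,j}$ PC, and replace the $i$-th coordinate of $\rho$ by several copies of itself glued along the diagonal --- $s_i$ copies for each $i\ge 2$, and $s_1+1$ copies for $i=1$, one of which is left unrestricted. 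The new relation is still subdirect, its coordinate algebras are still among $A_1,\dots,A_n$ (so none of them has a center or a binary absorbing subuniverse), and the projection that computes $C$ is unchanged; hence it suffices to treat the case $C_1=A_1$ and $C_i=E_i$ a single class of a single congruence $\sigma_i$ with $A_i/\sigma_i$ PC for every $i\ge 2$.

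Next I would factor coordinates $2,\dots,n$ by $\sigma_2,\dots,\sigma_n$, obtaining a subdirect relation $\rho^{f}\subseteq A_1\times P_2\times\dots\times P_n$ with each $P_i:=A_i/\sigma_i$ a PC algebra, so that $C=\{a\in A_1:(a,\bar e_2,\dots,\bar e_n)\in\rho^{f}\}$, where $\bar e_i\in P_i$ denotes the class $E_i$. The step I expect to be the crux is checking that none of the $P_i$ has a center or a binary absorbing subuniverse: if it did, then the union of the corresponding $\sigma_i$-classes would be a center, respectively a binary absorbing subuniverse, of $A_i$ --- just as in cases 3(a) and 3(b) of the proof of Theorem~\ref{NextReduction} --- contradicting the hypothesis on $A_i$.

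Finally, $\rho^{f}$ now satisfies the hypotheses of Lemma~\ref{PCRelationsLem}, so $\rho^{f}=\delta_1\wedge\dots\wedge\delta_k$ with every $\delta_l$ binary and $\ConOne(\delta_l,j)$ equal to the equality relation whenever the $j$-th variable of $\delta_l$ has a PC domain; and since $\rho^{f}$ is subdirect each $\delta_l$ is subdirect on each of its coordinates. A $\delta_l$ both of whose variables come from $\{2,\dots,n\}$, or which involves only the distinguished first coordinate, imposes nothing at $(\bar e_2,\dots,\bar e_n)$ unless it is violated there, in which case $C=\varnothing$. A $\delta_l$ relating the first coordinate to some $j\ge 2$ is, by the equality condition at the PC coordinate $j$ together with subdirectness, the graph of a surjective homomorphism $f_l\colon A_1\to P_j$, so the condition $(a,\bar e_j)\in\delta_l$ cuts out $f_l^{-1}(\bar e_j)$, an equivalence class of the congruence $\ker f_l$, whose quotient $A_1/\ker f_l\cong P_j$ is PC. Intersecting all these conditions shows that $C$ is an intersection of equivalence classes of congruences with PC quotient, that is, a PC subuniverse of $A_1$ (with $C=\varnothing$ or $C=A_1$ understood as degenerate PC subuniverses).
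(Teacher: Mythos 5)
Your argument is correct and follows exactly the route the paper intends: Corollary~\ref{PCImplies} is stated as an immediate consequence of Lemma~\ref{PCRelationsLem}, and your derivation (reduce to single congruence classes, factor the restricted coordinates, lift the absence of centers and binary absorption to the quotients as in Theorem~\ref{NextReduction}, then read off that the binary relations with equality $\ConOne$ at PC coordinates are graphs of surjective homomorphisms, so $C$ is an intersection of classes of congruences with PC quotients) simply writes out the details the paper leaves implicit. No gaps beyond the degenerate $C=\varnothing$ or $C=A_{1}$ cases, which you handle in the same spirit as the paper.
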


\begin{proof}
By the previous corollary
for every $i$ we choose
PC algebras $A_{i,1},\dots,A_{i,k_{i}}$ 
and a mapping $\psi_{i}:A_{i}\to A_{i,1}\times\dots\times A_{i,k_{i}}$
such that 
$A_{i}/\PCCon(A_{i})\cong A_{i,1}\times \dots\times A_{i,k_{i}}$.
Define 
$\phi:A_{1}\times\dots\times A_{n}\to
A_{1}\times \prod_{i,j} A_{i,j}$ by 
$\phi(a_1,\dots,a_{n}) = (a_1,\psi_{1}(a_1),\ldots,\psi_{n}(a_n))$.
Let 
$\gamma = C_{1}\times\dots\times C_{n}$,
$\rho' = \phi(\rho)$, $\gamma'= \phi(\gamma)$.
We can check that 
$\proj_{1}(\rho\cap\gamma) = \proj_{1}(\rho'\cap\gamma')$, 
then it is sufficient to show that 
$\proj_{1}(\rho'\cap\gamma')$ is a PC subuniverse of 
$A_{1}$. 

Since $\rho'$ is subdirect, 
by Lemma~\ref{PCRelationsLem}
it can be represented 
by binary constraints from the first coordinate
to an $i$-th coordinate such that 
the $i$-th coordinate is uniquely determined by the first,
and by bijective binary constraints between pairs of coordinates
other than first.
The relation $\gamma'$ can be represented by 
constraints of the form 
$x_{i,j} = a_{i,j}$ and 
canonical constraints 
saying that the $j$-th element of $\psi_{1}(x_{1})$ 
is equal to $x_{1,j}$.
To calculate 
$\proj_{1}(\rho'\cap \gamma')$ 
we join constraints of these two representations.
Let us explain how any constraint 
from $x_{1}$ to $x_{i,j}$ in this representation looks like.
There exists a congruence $\sigma$ on $A_{1}$ 
such that $A_{1}/\sigma$ is a PC algebra isomorphic to 
$A_{i,j}$, then the constraint  
assigns to all elements of each equivalence class of $\sigma$ 
the corresponding element of $A_{i,j}$.
All other constraints of this representations are 
of the form 
$x_{i,j} = a_{i,j}$
or bijective constraints between two coordinates.
This implies that 
$\proj_{1}(\rho'\cap\gamma')$ is an intersection of equivalence classes
of PC congruences, that is, $\proj_{1}(\rho'\cap\gamma')$ is a PC subuniverse.
\end{proof}
\begin{conslem}\label{PCLessThanThree}
Suppose
$C_{i}$ is a PC subuniverse of 
$A_{i}$ for $i\in\{1,2,\dots,n\}$
and 
$n\ge 3$.
Then there does not exist a 
subdirect $(C_{1},\dots,C_{n})$-essential 
relation 
$\rho\subseteq A_{1}\times \dots \times A_{n}$.
\end{conslem}

\begin{proof}
Assume 
that such a relation  $\rho$ exists. 
By Corollary~\ref{PCProperties}
for every $i$ we choose
PC algebras $A_{i,1},\dots,A_{i,k_{i}}$ 
and a mapping $\psi_{i}:A\to A_{i,1}\times \dots\times A_{i,k_{i}}$
such that 
$A_{i}/\PCCon(A_{i})\cong A_{i,1}\times \dots\times A_{i,k_{i}}$.
Define 
$\phi:A_{1}\times\dots\times A_{n}\to
\prod_{i,j} A_{i,j}$ by 
$\phi(a_1,\dots,a_{n}) = 
(\psi_{1}(a_1),\ldots,\psi_{n}(a_n))$.
Let 
$\gamma_{i} = C_{1}\times\dots\times
C_{i-1}\times A_{i}\times C_{i+1}\times\dots\times C_{n}$
for every $i$
and 
$\gamma = 
C_{1}\times\dots\times C_{n}$.
Put 
$\rho' = \phi(\rho)$, $\gamma'= \phi(\gamma)$,
and 
$\gamma_{i}'= \phi(\gamma_{i})$ for every $i$.

Since $\rho'$ is subdirect, by Lemma~\ref{PCRelationsLem}
$\rho'$ can be represented 
by bijective binary constraints between pairs.
By Corollary~\ref{PCProperties}
$\gamma'$ can be represented by 
constraints of the form 
$x_{i,j} = a_{i,j}$.
If $\rho\cap\gamma=\varnothing$, 
then 
$\rho'\cap\gamma'=\varnothing$, 
which can only happen 
if two unary constraints defining 
$\gamma'$ assign contradictory values to 
variables with respect to the binary 
constraints defining $\rho'$.
Since $k\ge 3$, we can choose
$l$ 
such that 
$\gamma_{l}'$ includes the two contradictory
unary constraints.
Then $\rho'\cap \gamma_{l}'=\varnothing$
and $\rho\cap \gamma_{l}=\varnothing$, 
which gives a contradiction.
\end{proof}

\begin{lem}\label{BiggerThanPC}
Suppose 
$\sigma\supseteq \sigma_{1}\cap \dots\cap \sigma_{n}$,
where $\sigma_{1},\dots,\sigma_{n}$ are PC congruences on $D$ 
and $\sigma$ is a proper congruence on $D$.
Then there exists $I\subseteq \{1,2,\dots,n\}$
such that 
$\sigma = \bigcap_{i\in I}\sigma_{i}$.
\end{lem}
\begin{proof}
Consider a $2n$-ary relation 
$R\subseteq D/\sigma_{1}\times\dots\times D/\sigma_{n}
\times D/\sigma_{1}\times\dots\times D/\sigma_{n}$
consisting of all 
tuples 
$(a/\sigma_{1},\dots,a/\sigma_{n},b/\sigma_{1},\dots,b/\sigma_{n})$, where 
$(a,b)\in \sigma$.
By Lemma~\ref{PCRelationsLem}, 
$R$ can be represented as a conjunction of binary 
bijective
relations.
Since 
$(a/\sigma_{1},\dots,a/\sigma_{n},a/\sigma_{1},\dots,a/\sigma_{n})\in R$ for every $a\in D$, 
we conclude that all these binary relations are equalities.
This implies that 
$\sigma = \bigcap_{i\in I}\sigma_{i}$
for some $I\subseteq \{1,2,\dots,n\}$.
\end{proof}


\begin{lem}\label{NoAbsCenterInPCAlgebra}
For every $D$ the algebra $D/\ConPC(D)$
has no nontrivial binary absorbing subuniverse 
or center.
\end{lem}
\begin{proof}
By Corollary~\ref{PCProperties},
$D/\PCCon(D)\cong A_{1}\times\dots\times A_{k}$, 
where $A_{i}$ is a PC algebra without a nontrivial binary absorbing subuniverse or center.
Then Lemmas~\ref{GenBinAbToBinAb} and \ref{GenCenterToCenter}
imply that there cannot be a nontrivial binary absorbing subuniverse or center on $D$.
\end{proof}

\begin{lem}\label{PCCongruenceOnProduct}
Suppose 
$\sigma$ is a PC congruence on $A_{1}\times A_{2}$, 
there is no nontrivial binary absorbing subuniverse or center on $A_{1}$ and $A_{2}$.
Then there exist $i\in\{1,2\}$ and a PC congruence $\sigma_{i}$ on $A_{i}$ 
such that 
$\sigma = 
\{(\alpha,\beta)\mid (\proj_{i}(\alpha), \proj_{i}(\beta))\in\sigma_{i}\}$.
\end{lem}
\begin{proof}
First, consider 
$S\subseteq A_{1}\times (A_{1}\times A_{2})/\sigma$ 
consisting of all pairs 
$(a_1,(a_1,a_2)/\sigma)$ such that $a_1\in A_{1}$, $a_{2}\in A_{2}$.
Since there is no nontrivial binary absorbing subuniverse or center on $A_{1}$,  Lemma~\ref{PCRelationsLem} implies that either 
$S$ is a full relation, or 
$\ConOne(S,2)$ is the equality relation.
In the latter case
the congruence $\sigma$ depends only on the first coordinate, 
that is, 
there exists a PC congruence 
$\sigma_{1}$ on $A_{1}$ 
such that 
$\sigma = 
\{(\alpha,\beta)\mid (\proj_{1}(\alpha), \proj_{1}(\beta))\in\sigma_{1}\}$,
which completes this case.

Thus, we assume that $S$ is a full relation and 
for every $a_1\in A_1$ and every equivalence class
$E$ of $\sigma$ there exists $a_{2}\in A_{2}$ such 
that $(a_{1},a_{2})\in E$.
In the same way we assume that 
for every $a_2\in A_2$ and every equivalence class
$E$ of $\sigma$ there exists $a_{1}\in A_{1}$ such 
that $(a_{1},a_{2})\in E$.

Choose an element $c_1\in A_{1}$.
By $\sigma_{2}$ we denote the congruence 
$\{(a_{2},a_{2}')\mid 
((c_{1},a_{2}),(c_{1},a_{2}'))\in\sigma\}$.
As it follows from the above assumptions, 
$A_{2}/\sigma_{2}\cong (A_{1}\times A_{2})/\sigma$.
Consider the ternary relation
$\rho\subseteq A_{1}\times (A_{1}\times A_{2})/\sigma\times 
A_{2}/\sigma_{2}$ consisting of 
all the tuples 
$(a_1,(a_{1},a_{2})/\sigma,a_{2}/\sigma_{2})$,
where $a_{1}\in A_{1}$ and $a_{2}\in A_{2}$.
As we already know, the projection of $\rho$ onto any two coordinates 
is a full relation. Then Lemma~\ref{PCRelationsLem}
implies that $\rho$ is a full relation, 
which contradicts the fact that 
$(c_{1},(c_{1},a_{2})/\sigma,b_{2}/\sigma_{2})\notin\rho$ 
for any  $(a_{2},b_{2})\notin \sigma_{2}$.
\end{proof}

\begin{conslem}\label{PCCongruenceOnProductGen}
Suppose 
$\sigma$ is a PC congruence on $A_{1}\times A_{2}\times\dots\times A_{n}$, 
there is no nontrivial binary absorbing subuniverse or center on $A_{i}$ for every $i$.
Then there exist $i\in\{1,2,\dots,n\}$ and a PC congruence $\sigma_{i}$ on $A_{i}$ 
such that 
$\sigma = 
\{(\alpha,\beta)\mid (\proj_{i}(\alpha), \proj_{i}(\beta))\in\sigma_{i}\}$.
\end{conslem}

\begin{proof}
We prove this corollary by induction on $n$.
For $n=2$ it follows from Lemma~\ref{PCCongruenceOnProduct}.
By Lemmas~\ref{GenBinAbToBinAb}, \ref{GenCenterToCenter},
there is no nontrivial binary absorbing subuniverse or center on 
$A_{2}\times\dots\times A_{n}$.
We apply 
Lemma~\ref{PCCongruenceOnProduct} to 
$A_{1}\times(A_{2}\times\dots\times A_{n})$ 
to get a PC congruence on 
$A_{1}$ or on $A_{2}\times\dots\times A_{n}$.
In the latter case we apply the inductive assumption to complete the proof.
\end{proof}

\begin{lem}\label{PCSubuniverseOnProduct}
Suppose 
$B$ is a PC subuniverse on 
$A_{1}\times\dots\times A_{n}$, 
and there is no nontrivial binary absorbing subuniverse or center on $A_{i}$
for every $i$.
Then there exists 
a PC subuniverse 
$B_{i}$ on $A_{i}$ for every $i$ such that 
$B = B_{1}\times\dots\times B_{n}$.
\end{lem}
\begin{proof}
Assume that 
$B=E_{1}\cap\dots\cap E_{t}$, 
where $E_{i}$ is an equivalence class of a PC congruence $\sigma_{i}$ on $A_{1}\times\dots\times A_{n}$ for every $i$.
By Corollary~\ref{PCCongruenceOnProductGen},
for every $i$ there exists 
$s_{i}$ and a congruence $\sigma_{i}'$ on $A_{s_{i}}$ such that 
$\sigma_{i} = \{(\alpha,\beta)\mid (\proj_{s_{i}}(\alpha), \proj_{s_{i}}(\beta))\in\sigma_{i}'\}$.
Then there exists an equivalence class $E_{i}'$ of $\sigma_{i}'$
such that 
$E_{i} = 
A_{1}\times\dots\times A_{s_{i}-1}\times 
E_{i}'\times A_{s_{i}+1}\times\dots\times A_{n}$.
Hence, the intersection $E_{1}\cap\dots\cap E_{t}$ 
is equal to $B_{1}\times\dots\times B_{n}$
for PC subuniverses $B_{1},\dots,B_{n}$. 
\end{proof}

\begin{lem}\label{CenterProvidesBinaryAbsorptionForPC}
Suppose
$\rho\subseteq A\times B$ is a subdirect relation,
$A$ is a PC algebra without nontrivial binary absorbing subuniverse or center, and 
$C = \{b\in B\mid \forall a\in A\colon (a,b)\in \rho\}$.
Then $C$ binary absorbs $B$.
\end{lem}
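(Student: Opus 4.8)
The plan is to reduce the statement to the nonexistence of a $C$-essential binary relation and then exploit the polynomial completeness of $A$ through Lemma~\ref{FindCenter}. First, dispose of the trivial cases. If $|A|=1$ then $C=B$, and $B$ binary absorbs itself; and by Lemma~\ref{FindCenter}, if $\rho$ is the graph of a surjective function $B\to A$ then $C=\varnothing$, which absorbs $B$ vacuously. So we may assume $|A|\ge 2$ and $\varnothing\subsetneq C\subsetneq B$; in particular, by Lemma~\ref{FindCenter}, there is $c\in B$ with $c^{\downarrow}=A$, where we write $b^{\downarrow}:=\{a\in A\colon (a,b)\in\rho\}$ for $b\in B$. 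By Lemma~\ref{NoEssential} it suffices to prove that there is no $C$-essential relation $\delta\subseteq B^{2}$; so suppose such a $\delta$ exists.

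Second, normalize $\delta$. Choosing $\alpha_{1}=(b_{1},c_{1})\in\delta\cap(B\times C)$ and $\alpha_{2}=(c_{2},b_{2})\in\delta\cap(C\times B)$ and replacing $\delta$ by the subalgebra it generates, we may assume $\delta=\Clo$-closure of $\{\alpha_{1},\alpha_{2}\}$ in $B^{2}$, with $c_{1},c_{2}\in C$; since $\delta\cap(C\times C)=\varnothing$ this forces $b_{1},b_{2}\notin C$. Now push $\delta$ through $\rho$ on both sides: the pp-formula
$$\pi(u,v):=\exists y\,\exists y'\;\rho(u,y)\wedge\delta(y,y')\wedge\rho(v,y')$$
defines a subalgebra $\pi\subseteq A^{2}$ with $\pi\supseteq(b_{1}^{\downarrow}\times A)\cup(A\times b_{2}^{\downarrow})$ (using $c_{1}^{\downarrow}=c_{2}^{\downarrow}=A$ and $b_{1}^{\downarrow},b_{2}^{\downarrow}\neq\varnothing$), hence $\pi$ is subdirect; since $b_{1}^{\downarrow}\neq\varnothing$ and $|A|\ge 2$, $\pi$ is not the graph of a function, so by Lemma~\ref{FindCenter} applied to $\pi$ and to its transpose over the polynomially complete algebra $A$ there are $a',a''\in A$ with $\{a'\}\times A\subseteq\pi$ and $A\times\{a''\}\subseteq\pi$; in particular $\pi$ is linked, and one checks that this forces $\pi=A^{2}$.

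Third — and this is the crux — iterate $\pi$ to force a witness of $\delta$ into $C\times C$. Let $k=|A|$, fix an enumeration $A=\{a_{1},\dots,a_{k}\}$, and replace each of the two occurrences of $\rho$ in the definition of $\pi$ by $k$ copies of $\rho$ sharing their second coordinate; this produces a subalgebra $\Pi\subseteq A^{k}\times A^{k}$ in which a block of $k$ coordinates ranges over all of $A$ exactly when the corresponding $B$-witness $y$ (resp. $y'$) lies in $C$. Using $\pi=A^{2}$ together with Lemma~\ref{AbsorptionFromWNU} — which gives $w(c,\dots,c,b)\in C$ for $c\in C$, $b\in B$, since $C$ is a center of $B$ by the definition of a center (the role of the ``$\mathbf B$'' of that definition is played by $\mathbf A$, which has no binary absorbing subuniverse by hypothesis) — one combines suitable elements of $\Pi$ by $w$ to reach the ``enumeration $\times$ enumeration'' position of $\Pi$, which says exactly that $\delta\cap(C\times C)\neq\varnothing$, a contradiction. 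If this combination is somehow blocked, the obstruction is a proper subalgebra of $A$ of precisely the shape in the definition of a center (namely $\proj_{1}$ of a one-point slice of $\Pi$, which together with the relevant factor of $B$ has no binary absorbing subuniverse, as $A$ has none), so $A$ would have a center, again contradicting the hypotheses. The main difficulty is exactly this bootstrapping: arranging the repeated use of $\rho$ and of Lemma~\ref{AbsorptionFromWNU} so that making one side of $\delta$ ``full'' with respect to $\rho$ does not destroy fullness on the other side; Lemmas~\ref{AbsorptionFromSequence} and~\ref{centerImpliesAbsorption}, which bound how far $C$-essential relations over $A$ can spread, are the natural tools to control the iteration.
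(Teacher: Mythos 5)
Your reduction via Lemma~\ref{NoEssential} to the nonexistence of a $C$-essential relation $\delta\subseteq B^{2}$ is legitimate, and the preliminary normalization and the observation that $\pi=A^{2}$ are fine (though the fullness of $\pi$ really comes from Lemma~\ref{PCRelationsLem} rather than Lemma~\ref{FindCenter} alone). But the actual content of the lemma is entirely in your third step, and that step is not a proof: ``one combines suitable elements of $\Pi$ by $w$ to reach the enumeration $\times$ enumeration position'' and ``if this combination is somehow blocked, the obstruction is a proper subalgebra of $A$ of precisely the shape in the definition of a center'' are assertions with no construction and no verification. Nothing in what you wrote explains why a single pair $(y,y')\in\delta$ with \emph{both} $y\in C$ and $y'\in C$ must exist; knowing $\pi=A^{2}$ only gives, for each pair $(u,v)$, witnesses $y,y'$ depending on $(u,v)$, and collapsing these into one witness is exactly the difficulty. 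Moreover, the tools you propose to ``control the iteration'' cannot do the job: $C$ is indeed a center of $B$ (with $\mathbf A$ playing the role of $\mathbf B$ in the definition), so Lemma~\ref{AbsorptionFromWNU}, Lemma~\ref{AbsorptionFromSequence} and Corollary~\ref{centerImpliesAbsorption} give absorption of \emph{some} arity, but the whole point of this lemma is to upgrade that to \emph{binary} absorption using the PC structure of $A$, and your sketch never makes that upgrade.

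The paper closes precisely this gap by an interpolation argument that has no counterpart in your proposal. With $A=\{a_{1},\ldots,a_{k}\}$ one forms the $2k^{2}\times(2k+2)$ matrix whose rows are $(a,\ldots,a,b,a_{1},\ldots,a_{k})$ and $(b,a_{1},\ldots,a_{k},a,\ldots,a)$ for all $a,b\in A$; by Lemma~\ref{PCRelationsLem} the subalgebra of $A^{2k^{2}}$ generated by the $2k+2$ columns is full (the enumeration entries in each row kill any bijective binary constraint between rows), so there is a term $f$ sending the columns to the tuple of the $b$'s. Setting $h(x,y)=f(x,\ldots,x,y,\ldots,y)$ with $k+1$ copies of $x$, one checks $h(d,c),h(c,d)\in C$ for $d\in B$, $c\in C$ directly from the fact that $f$ preserves $\rho$: if, say, $h(d,c)=e\notin C$, pick $a$ with $(a,e)\notin\rho$ and $a'$ with $(a',d)\in\rho$, and apply $f$ to the $\rho$-pairs $(a',d),\ldots,(a',d),(a,c),(a_{1},c),\ldots,(a_{k},c)$ to get $(a,e)\in\rho$, a contradiction. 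If you want to keep your $C$-essential framing, you would still need an argument of exactly this interpolation type to produce the binary term (or, dually, to rule out a binary $C$-essential relation); as it stands, the crux is missing.
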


\begin{proof}
Suppose $A = \{a_{1},\ldots,a_{k}\}$.
Let us consider the matrix $M$
whose rows are the tuples
$(\underbrace{a,a,\ldots,a}_{k+1},b,a_{1},\ldots,a_{k})$
and $(b,a_{1},\ldots,a_{k},\underbrace{a,a,\ldots,a}_{k+1})$
for all $a,b\in A$.
The $2k+2$ columns of this matrix we denote by
$\alpha_{1},\ldots,\alpha_{2k+2}$.
By $\beta$ we denote the tuple of length $2k^2$
such that the $i$-th element of $\beta$ equals $b$ from the corresponding row.
By Lemma~\ref{PCRelationsLem},
the relation
generated by $\alpha_{1},\ldots,\alpha_{2k+2}$
is a full relation.
Hence, there exists
a term operation $f$
such that
$f(\alpha_{1},\ldots,\alpha_{2k+2})=\beta$.
Let us show that $C$ absorbs $B$ with the term operation defined by
$h(x,y)=f(\underbrace{x,\ldots,x}_{k+1},y,\ldots,y)$.
Suppose $d\in B$, $c\in C$.
Assume that $h(d,c)=e\notin C$.
Choose elements $a,a'\in A$ such that
$(a,e)\notin\rho$ and
$(a',d)\in\rho$.
Consider the row
$(a',\ldots,a',a,a_{1},\ldots,a_{k})$ from the matrix.
We know that
$f$ returns $a$ on this tuple
and
$f(\underbrace{d,\ldots,d}_{k+1},c,\ldots,c) = e$, which contradicts the fact that
$f$ preserves $\rho$. Thus, $h(d,c)\in C$.

In the same way we can prove that $h(c,d)\in C$
for every $d\in B$, $c\in C$.
\end{proof}

\begin{lem}\label{IdentificationDoesNotReducePC}
Suppose
$\rho\subseteq A\times B\times B$ is a subdirect relation,
$A$ is a PC algebra without a nontrivial binary absorbing subuniverse or center,
and 
for every $b\in B$ there exists $a\in A$ such that
$(a,b,b)\in \rho$.
Then for every $a\in A$ there exists $b\in B$ such that
$(a,b,b)\in\rho$.
\end{lem}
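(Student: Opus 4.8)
The plan is to restate the conclusion in terms of the subalgebra $\tau:=\{(a,b)\mid (a,b,b)\in\rho\}$ of $\mathbf A\times\mathbf B$ (it is a subalgebra because $\rho$ is): the hypothesis already gives $\proj_2(\tau)=B$, and what must be shown is $\proj_1(\tau)=A$, i.e.\ that $\tau$ is subdirect. I would run an induction on $|B|$; the case $|B|=1$ is immediate, since then $\rho=A\times B\times B$.

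The first main step is to apply Lemma~\ref{FindCenter} to the two subdirect relations $\proj_{1,2}(\rho)\subseteq A\times B$ and $\proj_{1,3}(\rho)\subseteq A\times B$, both of which have a PC first coordinate. Suppose, say, $\proj_{1,2}(\rho)$ falls into the first alternative, so that for every $b_1\in B$ there is a unique $a$ with $(a,b_1)\in\proj_{1,2}(\rho)$. Then $(a,b_1,b_2)\in\rho$ forces $a=f(b_1)$ for a surjective homomorphism $f\colon\mathbf B\to\mathbf A$, and the hypothesis ``$\forall b\,\exists a\colon(a,b,b)\in\rho$'' says precisely that $(f(b),b,b)\in\rho$ for every $b$; surjectivity of $f$ then gives the conclusion. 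The symmetric argument disposes of $\proj_{1,3}(\rho)$. So it remains to treat the case in which both $\proj_{1,2}(\rho)$ and $\proj_{1,3}(\rho)$ satisfy the second alternative of Lemma~\ref{FindCenter}: there is $b_1^{*}$ with $(a,b_1^{*})\in\proj_{1,2}(\rho)$ for every $a$, and likewise some $b_2^{*}$.

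In that case I would iterate Lemma~\ref{FindCenter}, applying it to sections such as $\{(a,b_2)\mid(a,b_1^{*},b_2)\in\rho\}$ (which is subdirect onto $A$ because of the full fibre over $b_1^{*}$) and, interleaved with the inductive hypothesis, either reduce to a situation that finishes as in the functional case or produce elements $c_1,c_2\in B$ with $(a,c_1,c_2)\in\rho$ for all $a\in A$. If $c_1=c_2$ we are done (this is even stronger than wanted). If $c_1\neq c_2$ but $\langle c_1,c_2\rangle_{\mathbf B}$ is a proper subalgebra of $\mathbf B$, then $\rho\cap(A\times\langle c_1,c_2\rangle_{\mathbf B}^{2})$ is still subdirect --- onto $A$ because of the full row $(a,c_1,c_2)$, onto the $B$-coordinates because of the diagonal hypothesis --- and still satisfies the diagonal hypothesis, so the inductive hypothesis on $|B|$ applies.

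The remaining case, $c_1\neq c_2$ with $\langle c_1,c_2\rangle_{\mathbf B}=B$, is the heart of the matter. Here I would use the two-generation directly: for $b=u(c_1,c_2)$ with a binary term $u$, applying $u$ to a tuple $(a_1,c_1,c_1)\in\rho$ and a tuple $(a_2,c_2,c_2)\in\rho$ (which exist by the diagonal hypothesis) yields $(u(a_1,a_2),b,b)\in\rho$; hence $\proj_1(\tau)$ contains the subalgebra of $\mathbf A$ generated by $\{a\mid(a,c_1,c_1)\in\rho\}\cup\{a\mid(a,c_2,c_2)\in\rho\}$. I expect the main obstacle to be showing that this is forced to be all of $A$: the plan is to feed the failure of this into Lemma~\ref{CenterProvidesBinaryAbsorptionForPC} (for instance, $\{b'\mid\forall a\,(a,c_1,b')\in\rho\}$ binary absorbs its section of $\mathbf B$ and contains $c_2$) and into the PC structure of $\mathbf A$, in order to manufacture a proper binary absorbing subuniverse or a center of $\mathbf A$, contradicting the hypotheses. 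Checking that this final step really closes, and that the iteration of Lemma~\ref{FindCenter} above terminates, is where essentially all of the work lies.
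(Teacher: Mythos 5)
There is a genuine gap, and it is exactly at the point you flag yourself. Your treatment of the two functional alternatives of Lemma~\ref{FindCenter} for $\proj_{1,2}(\rho)$ and $\proj_{1,3}(\rho)$ is fine, but the case that carries all the weight --- when there are $c_{1},c_{2}\in B$ with $(a,c_{1},c_{2})\in\rho$ for every $a\in A$ --- is left as a plan (``I expect the main obstacle\dots'', ``checking that this final step really closes\dots is where essentially all of the work lies''), and the plan itself aims at the wrong target. You try to show that $\{a\mid(a,c_{1},c_{1})\in\rho\}\cup\{a\mid(a,c_{2},c_{2})\in\rho\}$ generates all of $A$, and hope to extract a center or a binary absorbing subuniverse of $A$ from a failure of this; nothing in your sketch indicates how such a subuniverse of $A$ would be produced, and this is not how the case closes. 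The argument the paper uses is different and short: view $\rho$ as a subdirect relation in $A\times\proj_{2,3}(\rho)$ and apply Lemma~\ref{CenterProvidesBinaryAbsorptionForPC} to the set $D=\{(b,c)\mid\forall a\in A\colon(a,b,c)\in\rho\}$ (nonempty because it contains $(c_{1},c_{2})$), so $D$ binary absorbs $\proj_{2,3}(\rho)$; since the diagonal hypothesis says $\proj_{2,3}(\rho)$ contains the equality relation on $B$, Lemma~\ref{AbsorbingEquality} gives some $(b,b)\in D$, i.e.\ $(a,b,b)\in\rho$ for \emph{every} $a$, which is stronger than the conclusion. You invoke Lemma~\ref{CenterProvidesBinaryAbsorptionForPC}, but on a different set and without the crucial companion Lemma~\ref{AbsorbingEquality}, which is the step that actually converts absorption into a point on the diagonal; without it your ``heart of the matter'' case (including the sub-case $\langle c_{1},c_{2}\rangle=B$) is not proved.

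A secondary, smaller problem is the reduction that is supposed to produce such a pair $(c_{1},c_{2})$ in the first place. From ``both $\proj_{1,2}(\rho)$ and $\proj_{1,3}(\rho)$ have a full column'' you propose to ``iterate Lemma~\ref{FindCenter} on sections such as $\{(a,b_{2})\mid(a,b_{1}^{*},b_{2})\in\rho\}$, interleaved with the inductive hypothesis''. When the section falls into the functional alternative you do not get the global functional situation you disposed of earlier, and it is not explained how one ``finishes as in the functional case'' or why the iteration terminates. The paper avoids this by a different case split: it first applies Lemma~\ref{FindCenter} to $\rho$ viewed as a binary relation between $A$ and $\proj_{2,3}(\rho)$ (full row versus unique $a$ for every pair $(b_{1},b_{2})$), and only in the second alternative analyses $\delta_{1}=\proj_{1,2}(\rho)$ and the section $\delta_{2}(x,y)=\rho(x,b,y)$, using the pairwise uniqueness to obtain a surjective homomorphism $B\to A$, a quotient $B/\sigma$ that is a PC algebra, and then either Lemma~\ref{PCRelationsLem} or the inductive hypothesis on $B/\sigma$ or on a class of $\sigma$. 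Your functional-case arguments could be salvaged inside that structure, but as written the proposal neither establishes the existence of the full row nor handles it once obtained.
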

\begin{proof}
We prove the lemma by induction on the size of $B$.

By Lemma~\ref{FindCenter}, only two situations are possible: either
there exist $c_{1},c_{2}\in B$ such that
$(a,c_{1},c_{2})\in \rho$ for every $a\in A$,
or
for each $(b_{1},b_{2})\in \proj_{2,3}(\rho)$ there exists a unique $a\in A$
such that $(a,b_{1},b_{2})\in \rho$.

Case 1. There exist $c_{1},c_{2}\in B$ such that
$(a,c_{1},c_{2})\in \rho$ for every $a\in A$.
Put $D = \{(b,c)\mid \forall a\in A\colon (a,b,c)\in\rho\}$.
By Lemma~\ref{CenterProvidesBinaryAbsorptionForPC}, $D$ is a binary absorbing subuniverse in the projection
of $\rho$ onto the last two variables.
By Lemma~\ref{AbsorbingEquality}, there exists $(b,b)\in D$. This completes this case.

Case 2. For each $(b_{1},b_{2})\in \proj_{2,3}(\rho)$ there exists a unique $a\in A$
such that $(a,b_{1},b_{2})\in \rho$.
Let $\delta_{1}$ be the projection of $\rho$ onto the first two variables.
By Lemma~\ref{FindCenter} we have one of two situations.

Case 2A. For every $b\in B$ there exists a unique $a$ such that
$(a,b)\in\delta_{1}$. 
Since $\rho$ is subdirect, 
for every $a$ there exists 
$(a,b,b')\in\rho$, which implies that $(a,b,b)\in\rho$ and completes this case.

Case 2B. There exists an element $b$ such that $(a,b)\in\delta_{1}$ for every $a\in A$.
Consider the relation
$\delta_{2}(x,y_{2}) = \rho(x,b,y_{2})$.
If $\proj_{2}(\delta_{2})\neq B$, then we restrict the last two variables of $\rho$ to $\proj_{2}(\delta_{2})$
and apply the inductive assumption.
Assume that $\proj_{2}(\delta_{2}) = B$.
By the definition of the second case 
we know that
for every $c\in B$ there exists a unique $a$ such that
$(a,c)\in\delta_{2}$.
Then $\sigma=\ConOne(\delta_2,2)$ 
is a proper congruence such that
$B/\sigma\cong A$.
If $\sigma$ is the equality relation, then $B\cong A$,
and, by Lemma~\ref{PCRelationsLem},
$\rho$ can be represented by binary bijective constraints.
If the first coordinate of $\rho$ is uniquely defined by the second or the third, then it is equivalent to the case 2A, which we already considered.
If the first coordinate of $\rho$ does not depend on the others, 
then the claim is trivial.


If $\sigma$ is not the equality relation, then
we consider the relation $\rho'$ obtained from $\rho$ by factorization of the last two variables by $\sigma$, 
that is, 
$\rho'\subseteq A\times B/\sigma\times B/\sigma$
contains all tuples 
$(a,b/\sigma,b'/\sigma)$
such that $(a,b,b')\in\rho$.
By the inductive assumption for any $a\in A$ there exists $E\in B/\sigma$ such that
$(a,E,E)\in\rho'$.
By Lemma~\ref{FindCenter}, we have one of the following situations.
Case 1. There exists $E\in B/\sigma$ such that
for every $a\in A$ we have $(a,E,E)\in\rho'$.
Then we restrict the last two variables of $\rho$ to $E$ and apply the inductive assumption.
Case 2. For every $E\in B/\sigma$ there exists a unique $a\in A$
such that $(a,E,E)\in\rho'$.
In this case for any $a\in A$ we
choose $E$ such that $(a,E,E) \in \rho'$.
By the uniqueness of $a$ we have
$(a,b,b)\in \rho$ for any $b\in E$, which completes the proof.
\end{proof}


\subsection{Linear Subuniverse}

We have the following well-known fact from linear algebra 
\cite{greub2012linear}.
\begin{lem}\label{LinearAlgebrasFact}
Suppose 
$\rho\subseteq  (\mathbb Z_{p_{1}})^{n_{1}}\times\dots\times
(\mathbb Z_{p_{k}})^{n_{1}}$, 
where 
$p_{1},\ldots,p_{k}$ are 
distinct prime numbers dividing 
$m-1$
and 
$\mathbb Z_{p_{i}} = (\mathbb Z_{p_{i}};x_{1}+\dots+x_{m})$
for every $i$.
Then 
$\rho = L_{1}\times\dots\times 
L_{k}$ where each 
$L_{i}$ is an affine subspace 
of ${(\mathbb Z_{p_{i}})}^{n_{i}}$.
\end{lem}

\begin{conslem}\label{LinearAlgebrasAreClosed}
The set of linear algebras is closed under 
taking subalgebras, quotients, 
and finite products.
\end{conslem}

\begin{lem}\label{NoAbsCenterPCInLinearAlgebra}
A linear algebra has no nontrivial absorbing subuniverse, nontrivial center, or nontrivial PC subuniverse.
\end{lem}

\begin{proof}
Let us prove that 
a linear algebra $A$ has no nontrivial 
absorbing subuniverse, 
which by Corollary~\ref{centerImpliesAbsorption} 
implies that 
$A$ has no nontrivial 
center. 
By Lemma~\ref{GenBinAbToBinAb},
it is sufficient to show 
that $\mathbb Z_{p}$ has no nontrivial 
absorbing subuniverse.
Every term operation in $\mathbb Z_{p}$ 
can be represented 
as 
$a_{1}x_{1}+\dots+a_{l}x_{l}$, 
and for each $a_{l}\neq 0$
fixing all variables but $x_{l}$ 
to some values gives a bijective mapping, 
which means that this term cannot witness
an absorption.

Since linear algebras (by Corollary~\ref{LinearAlgebrasAreClosed}) are closed under quotients, to prove that it does not have a nontrivial PC subuniverse 
it is sufficient to prove that a linear algebra $A$ cannot be 
a PC algebra.
Assume that $A$ is isomorphic to 
$\mathbb Z_{p_{1}}\times\dots \times\mathbb 
Z_{p_{k}}$ for prime numbers 
$p_{1},\dots,p_{k}$, 
and $\psi:A\to Z_{p_{1}}$ 
is the canonical mapping.
Let $\rho$ be the set of 
all tuples $(a,b,c,d)$ such that 
$\psi(a)+\psi(b) = \psi(c) +\psi(d)$.
We can check that $\rho$ is preserved by 
$w$ and all constants but not all operations, 
therefore $A$ cannot be polynomially complete.
\end{proof}

\begin{lem}\label{EqualNumberOfElements}
Suppose $\rho\subseteq A_{1}\times A_2$ is a subdirect relation,
$A_{2}$ is a linear algebra, and there is no nontrivial binary absorbing subuniverse on $A_1$.
Then for all $a,b\in A_1$ we have
$$|\{c\mid (a,c)\in \rho\}| = |\{c\mid (b,c)\in \rho\}|.$$
\end{lem}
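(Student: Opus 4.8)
The plan is to argue by contradiction and to produce a proper binary absorbing subuniverse of $A_{1}$. Write $\rho_{a}=\{c\in A_{2}\mid (a,c)\in\rho\}$ for the section of $\rho$ at $a\in A_{1}$. Since $w$ is idempotent, $\{a\}$ is a subuniverse of $\mathbf{A_{1}}$, hence $\rho_{a}$ is a subuniverse of $\mathbf{A_{2}}$, and it is nonempty because $\rho$ is subdirect. As $A_{2}$ is linear, every subuniverse of $A_{2}$ is the solution set of a system of linear equations, so $\rho_{a}$ is a coset of a subgroup $H_{a}$ of $A_{2}$; in particular $|\rho_{a}|=|H_{a}|$. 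Assuming the numbers $|\rho_{a}|$ are not all equal, I would set $M=\max_{a}|\rho_{a}|$ and $B=\{a\in A_{1}\mid |\rho_{a}|=M\}$, which is then a nonempty proper subset of $A_{1}$.

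The key object is the binary term $t(x,y)=w(x,y,\dots,y)$. Feeding tuples $(a,c_{1}),(b,c_{2}),\dots,(b,c_{m})\in\rho$ into $w$ and using that $w$ preserves $\rho$, I obtain $\rho_{t(a,b)}\supseteq\{c_{1}+c_{2}+\dots+c_{m}\mid c_{1}\in\rho_{a},\ c_{2},\dots,c_{m}\in\rho_{b}\}$, where $+$ denotes the group operation of the linear algebra $A_{2}$. Because the sumset of any number of copies of $H_{b}$ equals $H_{b}$, the right-hand side is a coset of $H_{a}+H_{b}$, so $|\rho_{t(a,b)}|\ge|H_{a}+H_{b}|\ge\max(|H_{a}|,|H_{b}|)=\max(|\rho_{a}|,|\rho_{b}|)$. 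Consequently, if $a\in B$ or $b\in B$, then $|\rho_{t(a,b)}|=M$, i.e.\ $t(a,b)\in B$; this yields $t(B,A_{1})\subseteq B$ and $t(A_{1},B)\subseteq B$. Running the same computation on $m$ tuples whose first coordinates all lie in $B$ (the relevant set being a coset of $H_{a_{1}}+\dots+H_{a_{m}}$, of size $\ge M$) shows that $B$ is a subuniverse of $\mathbf{A_{1}}$. Thus $B$ would be a proper binary absorbing subuniverse of $A_{1}$, contradicting the hypothesis, so all the $|\rho_{a}|$ must coincide.

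The step I expect to be the crux is verifying that the \emph{single} binary term $t(x,y)=w(x,y,\dots,y)$ simultaneously gives $t(B,A_{1})\subseteq B$ and $t(A_{1},B)\subseteq B$: a typical binary term built from $w$ acts on $A_{2}$ as a projection and hence seems to ``see'' only one of its two variables, which would give only one of the two inclusions. The resolution is that, although $t$ has only two variables, it still feeds $m-1$ \emph{independent} elements of $\rho_{b}$ into $w$, and the $(m-1)$-fold sumset $\rho_{b}+\dots+\rho_{b}$ is again a full coset of $H_{b}$; together with $|H_{a}+H_{b}|\ge\max(|H_{a}|,|H_{b}|)$ this forces both inclusions at once. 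The remaining ingredients --- that the sections of $\rho$ are cosets (by linearity of $A_{2}$), that a sumset of cosets is a coset of the sum of the corresponding subgroups, and that the set of maximal sections is closed under $w$ --- are routine; the case $|A_{2}|=1$ is trivial and can be set aside.
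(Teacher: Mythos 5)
Your proof is correct and takes essentially the same route as the paper: assume the section sizes differ, let $B$ be the set of elements of $A_{1}$ with sections of maximal size, and show that $B$ is a proper binary absorbing subuniverse via the term $w(x,y,\ldots,y)$, contradicting the hypothesis. The only difference is in verifying the key size inequality: the paper just observes that for fixed $c_{2},\ldots,c_{m}\in A_{2}$ the map $c\mapsto w(c,c_{2},\ldots,c_{m})$ is a bijection (a translation) of the linear algebra $A_{2}$, so the coset description of sections and the sumset computation are not needed.
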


\begin{proof}
Assume the contrary, then we choose
all elements $a$ with the maximal $|\{c\mid (a,c)\in \rho\}|.$
Denote the set of such elements by $C$.

Since $w(a_{1},\ldots,a_{i-1},x,a_{i+1},\ldots,a_{m})$ is a bijection on $A_{2}$
for every $a_{1},\ldots,a_{m}\in A_{2}$,
we have
$w(A_{1},\ldots,A_{1},C,A_{1},\ldots,A_{1})\subseteq C$.
Hence $w(x,\ldots,x,y)$ is a binary absorbing operation and $C$ is a binary absorbing subuniverse.
\end{proof}

\begin{lem}\label{LinearSpecialWNU}
Suppose $A$ is a linear algebra.
Then $w(a,b,\dots,b) = a$ for every $a,b\in A$.
\end{lem}
\begin{proof}
Suppose 
$A\cong \mathbb Z_{p_{1}}\times\dots\times \mathbb Z_{p_{k}}$.
Since the WNU $w$ is special and idempotent, 
each $p_{i}$ divides $m-1$.     
Therefore, $w(a,b,\ldots,b) = a$
for every $a,b\in A$.
\end{proof}

\begin{lem}\label{RelWithLinearPart}
Suppose $\rho\subseteq A_{1}\times A_2$ is a subdirect relation,
$A_{2}$ is a linear algebra,
and there is no nontrivial binary absorbing subuniverse on $A_1$.
Then $\rho$ has the parallelogram property.
\end{lem}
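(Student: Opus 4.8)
The plan is to show that $\rho$ satisfies the defining implication of the parallelogram property after any permutation of its variables, reducing everything to the case of the two-variable relation $\rho$ itself (since $\rho$ is binary, a permutation of its variables only swaps the two coordinates, and the property for $\rho\subseteq A_1\times A_2$ will imply it for the swapped relation by symmetry of the argument). So it suffices to prove: for all $\alpha_1,\beta_1\in A_1$ and $\alpha_2,\beta_2\in A_2$, if $(\alpha_1,\beta_2),(\beta_1,\alpha_2),(\beta_1,\beta_2)\in\rho$ then $(\alpha_1,\alpha_2)\in\rho$.

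First I would use Lemma~\ref{EqualNumberOfElements}: since $\rho$ is subdirect, $A_2$ is linear, and there is no binary absorbing subuniverse on $A_1$, the sets $\rho_a:=\{c\in A_2\mid (a,c)\in\rho\}$ all have the same cardinality as $a$ ranges over $A_1$. The key structural input about linear algebras is that $w$ acts as a coordinatewise affine (in fact, a sum) operation, so for fixed arguments in all but one slot, $w(a_1,\dots,a_{i-1},x,a_{i+1},\dots,a_m)$ is a bijection — indeed a coset translation — on $A_2$. I would then examine the congruence $\eta=\ConOne(\rho,1)$ on $A_1$: because $A_2$ is linear (abelian), one expects each fiber $\rho_a$ to be a coset of a fixed subgroup $H\le A_2$ (here using that $A_2\cong\mathbb{Z}_{p_1}\times\dots\times\mathbb{Z}_{p_s}$ with $w$ the sum, so subalgebras of $A_1\times A_2$ that are subdirect have "linear" fibers). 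Concretely, $\rho_{\beta_1}$ contains both $\alpha_2$ and $\beta_2$, so $\alpha_2-\beta_2\in H$; and $\rho_{\alpha_1}$ contains $\beta_2$, hence $\rho_{\alpha_1}=\beta_2+H$, which contains $\beta_2+(\alpha_2-\beta_2)=\alpha_2$. That gives $(\alpha_1,\alpha_2)\in\rho$.

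The main obstacle is justifying that the fibers $\rho_a$ are genuine cosets of one fixed subgroup $H$ of $(A_2;+)$, rather than just equinumerous sets. I would establish this by applying $w$ to tuples from $\rho$: given $(a,c),(a,c')\in\rho$ and any $(b,d)\in\rho$, the tuple $w\big((a,c),(b,d),\dots,(b,d)\big)=(w(a,b,\dots,b),\,c+(m-1)d)$ lies in $\rho$; playing such combinations off against each other (and using idempotence of $w$ on the linear part together with the special-WNU identity) one extracts that the "difference set" $\rho_a-\rho_a$ is a subgroup independent of $a$, and that $\rho_a-\rho_b$ is a coset of it. An alternative, cleaner route — which I would try first — is to invoke that $\rho$ is a subalgebra of $\mathbf{A_1}\times\mathbf{A_2}$ and factor $A_1$ by $\eta=\ConOne(\rho,1)$; on the quotient the relation becomes the graph of an injection into the coset space $A_2/H$, and rectangularity of the first variable (which follows once $\eta$ is a congruence, as noted right after the definition of rectangular) upgrades to the full parallelogram implication because in an abelian group $\alpha_1\beta_2,\beta_1\alpha_2,\beta_1\beta_2\in\rho$ forces $\alpha_1,\alpha_2$ to sit in matching cosets. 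Either way, the crux is the coset structure of the fibers, and Lemma~\ref{EqualNumberOfElements} is exactly the tool that rules out the degenerate possibility (via binary absorption on $A_1$) that would otherwise break it.
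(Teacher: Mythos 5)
Your main route is correct, but it is genuinely different from the paper's argument. The paper works on the $A_2$ side: it defines $\sigma_k(y_1,\ldots,y_k)=\exists x\,\rho(x,y_1)\wedge\dots\wedge\rho(x,y_k)$, observes that $w(x,y,\ldots,y,z)$ acts as a Mal'tsev operation on the linear algebra $A_2$, so the reflexive symmetric relation $\sigma_2$ is a congruence, proves by induction that $\sigma_k=\bigwedge_{i\ge 2}\sigma_2(y_1,y_i)$ (pairwise linked elements of $A_2$ have a common neighbour in $A_1$), and then invokes Lemma~\ref{EqualNumberOfElements} to conclude that every fiber $\rho_a$ is an entire $\sigma_2$-class, whence the parallelogram property. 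You instead exploit the abelian-group structure of $A_2$ directly: idempotence of $w$ makes each fiber closed under $y_1+\dots+y_m$, hence a coset of a subgroup $H_a$, and the whole issue is that $H_a$ must not depend on $a$. That step, which you only gesture at (``playing such combinations off against each other''), does go through and should be written out: your displayed computation gives, for $c,c'\in\rho_a$ and $d\in\rho_b$, that $c-c'\in H_e$ where $e=w(a,b,\ldots,b)$, i.e.\ $H_a\subseteq H_e$; varying the $b$-entry instead ($d,d'\in\rho_b$, fixed $c\in\rho_a$) gives $H_b\subseteq H_e$; and since Lemma~\ref{EqualNumberOfElements} makes all fibers, hence all $H_x$, equinumerous, both inclusions are equalities and $H_a=H_b$. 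With one common subgroup $H$, both variables of $\rho$ are rectangular and the parallelogram property follows for both orders of the variables (note the hypotheses are not symmetric in $A_1$ and $A_2$, so ``by symmetry of the argument'' is not literally available, but the uniform-coset description settles the transposed case too). Both proofs use Lemma~\ref{EqualNumberOfElements} at exactly the same spot---it is where the absence of binary absorption on $A_1$ enters---and your $H$ is precisely the subgroup whose cosets are the paper's $\sigma_2$-classes; your version is more computational and avoids the Mal'tsev/congruence detour, while the paper's avoids choosing representatives and the coset bookkeeping.

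One caution: the ``alternative, cleaner route'' you propose to try first is not really an independent argument. For a binary relation the parallelogram property is exactly rectangularity of both variables, and $\ConOne(\rho,1)$ is only guaranteed to be a congruence after rectangularity is known, so that route essentially restates the goal; stick with the coset argument above.
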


\begin{proof}
First, we define a relation $\sigma_{k}$ for every $k\ge 2$ by
$$\sigma_{k}(y_1,\ldots,y_k) = \exists x\; \rho(x,y_1)\wedge\dots\wedge\rho(x,y_k).$$

By Lemma~\ref{LinearSpecialWNU},
$w(a,b,\ldots,b,b) = a$ 
and 
$w(b,b,\ldots,b,c) = c$
for any $a,b,c\in A_{2}$, 
therefore
$(a,b),(b,c),(b,b)\in\sigma_{2}$ 
implies 
$(a,c)\in\sigma_{2}$,
Since $\sigma_2$ is reflexive and symmetric, 
it is a congruence.

Let us show by induction on $k$ that $\sigma_{k}(y_{1},\ldots,y_{k}) =
\bigwedge_{i=2}^{k}\sigma_{2}(y_{1},y_{i})$.
For $k=2$ it is obvious. 
Consider a
tuple 
$(a_{1},\ldots,a_{k})$ such that 
$(a_{i},a_{j})\in\sigma_{2}$ for any $i,j$.
By the inductive assumption for $k-1$
we have 
$(a_{1},a_{1},a_{3},\ldots,a_{k}), 
(a_{1},a_{2},a_{1},a_{4},\ldots,a_{k}),
(a_{1},a_{1},a_{1},a_{4},\ldots,a_{k})
\in\sigma_{k}$.
If we apply the term operation 
$g(x,y,z) = w(x,y,z,\ldots,z)$ 
to these three tuples (in the same order) 
we obtain $(a_{1},\ldots,a_{k})$, 
which means that 
$(a_{1},\ldots,a_{k})\in \sigma_{k}$.
Thus 
$\sigma_{k}(y_{1},\ldots,y_{k}) =
\bigwedge_{i=2}^{k}\sigma_{2}(y_{1},y_{i})$ for every $k$.

Substituting 
$\{y_{1},\dots,y_{k}\} = E$ 
in the definition of $\sigma_{k}$ for an equivalence 
class $E$ of $\sigma_{2}$ 
we derive that there exists $c\in A_{1}$ such
that $(c,d)\in\rho$ for any $d\in E$.
Then it follows from Lemma~\ref{EqualNumberOfElements}, that $\rho$ has 
the parallelogram property.
\end{proof}


\begin{conslem}\label{LinearImplies}
Suppose $\rho \subseteq A_{1}\times\dots\times A_{n}$ is a relation such that
$\proj_1 (\rho) = A_{1}$,
there is no nontrivial binary absorbing subuniverse on $A_{1}$,
and
$C = \proj_{1}((C_{1}\times\dots \times C_{n})\cap\rho)$,
where $C_{i}$ is a linear subuniverse of $A_{i}$ for every $i$.
Then $C$ is a linear subuniverse of $A_{1}$.
\end{conslem}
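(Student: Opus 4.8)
The plan is to reduce the statement to the binary case and then exploit the parallelogram property supplied by Lemma~\ref{RelWithLinearPart}. First I would group the last $n-1$ coordinates: put $\mathbf B=\mathbf A_{2}\times\dots\times\mathbf A_{n}$ and $C_{B}=C_{2}\times\dots\times C_{n}$. Then $C_{B}$ is a linear subuniverse of $\mathbf B$, since it is a union of classes of $\LinCon(\mathbf A_{2})\times\dots\times\LinCon(\mathbf A_{n})$, the quotient of $\mathbf B$ by this product congruence is $\prod_{i\ge 2}\mathbf A_{i}/\LinCon(\mathbf A_{i})$ (a product of linear algebras, hence linear), so by minimality $\LinCon(\mathbf B)$ is contained in that product congruence, and a union of $\sigma$-classes is a union of $\tau$-classes whenever $\tau\subseteq\sigma$. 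Viewing $\rho$ as a binary relation in $A_{1}\times B$ with $\proj_{1}\rho=A_{1}$, no binary absorption on $A_{1}$, and $C=\proj_{1}((C_{1}\times C_{B})\cap\rho)$, it suffices to handle $n=2$ (the case $n=1$ being trivial, as then $\rho=A_{1}$ and $C=C_{1}$).

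So assume $\rho\subseteq A_{1}\times A_{2}$, $\proj_{1}\rho=A_{1}$, no binary absorption on $A_{1}$, and $C_{i}$ a union of $\sigma_{i}:=\LinCon(\mathbf A_{i})$-classes. Note first that $C$ is a subuniverse of $A_{1}$, being a projection of an intersection of subalgebras, and that $C\subseteq C_{1}$. Factor the second coordinate of $\rho$ by $\sigma_{2}$ to obtain $\bar\rho\subseteq A_{1}\times\bar B_{2}$, where $\bar B_{2}=\proj_{2}(\rho)/\sigma_{2}$ is a subalgebra of the linear algebra $A_{2}/\sigma_{2}$ and hence linear. Since $\bar\rho$ is subdirect, Lemma~\ref{RelWithLinearPart} gives that $\bar\rho$ has the parallelogram property, hence is rectangular, hence $\theta:=\ConOne(\bar\rho,1)$ and $\eta:=\ConOne(\bar\rho,2)$ are congruences. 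A routine block analysis of a subdirect rectangular binary relation shows that $\bar\rho$ factored by $\theta\times\eta$ is the graph of an isomorphism $A_{1}/\theta\cong\bar B_{2}/\eta$; as the right-hand side is a homomorphic image of a linear algebra it is linear, so $A_{1}/\theta$ is linear, and by minimality of $\sigma_{1}$ we get $\sigma_{1}\subseteq\theta$.

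It remains to check that $C$ is a union of $\sigma_{1}$-classes. Take $a\in C$ and $(a,b)\in\sigma_{1}$; I want $b\in C$. Choose $c_{0}\in C_{2}$ with $(a,c_{0})\in\rho$ and set $\bar c_{0}=c_{0}/\sigma_{2}$, so $\bar c_{0}\in\bar B_{2}$, $(a,\bar c_{0})\in\bar\rho$, and the whole class $\bar c_{0}$ lies in $C_{2}$ because $C_{2}$ is a union of $\sigma_{2}$-classes. Since $(a,b)\in\sigma_{1}\subseteq\theta=\ConOne(\bar\rho,1)$ and the first variable of $\bar\rho$ is rectangular, $(b,\bar c_{0})\in\bar\rho$, i.e.\ there is $c\in A_{2}$ with $c\equiv c_{0}\pmod{\sigma_{2}}$ and $(b,c)\in\rho$. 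Then $c\in C_{2}$, while $b\in C_{1}$ (as $a\in C\subseteq C_{1}$ and $C_{1}$ is a union of $\sigma_{1}$-classes), so $(b,c)\in(C_{1}\times C_{2})\cap\rho$ and $b\in C$, which finishes the argument.

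The main obstacle is the middle step: identifying $A_{1}/\ConOne(\bar\rho,1)$ as a linear algebra. This needs the structural description of subdirect rectangular binary relations (modulo the two link congruences they are graphs of isomorphisms), together with the fact that a homomorphic image of a linear algebra is linear, and then the minimality of $\LinCon(\mathbf A_{1})$. Everything else — the reduction to $n=2$, the closure properties of linear subuniverses under products, and the final transfer argument — is bookkeeping with the definitions.
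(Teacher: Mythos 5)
Your argument is correct and follows the route the paper intends: Corollary~\ref{LinearImplies} is left as an immediate consequence of Lemma~\ref{RelWithLinearPart}, and your proof (curry the last $n-1$ coordinates, factor by the minimal linear congruences, invoke the parallelogram property/rectangularity, then transfer membership along $\LinCon(\mathbf A_{1})\subseteq\ConOne(\bar\rho,1)$) is exactly the natural derivation from that lemma. The supporting facts you use — products, subalgebras and quotients of linear algebras are linear, so $\LinCon$ is the least congruence with linear quotient — are standard and consistent with the paper's conventions.
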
 

\begin{proof}
Let
$\psi:A_{1}\times \dots\times A_{n}
\to A_{1}\times A_{2}/\ConLin(A_{2})\times 
\dots\times 
A_{n}/\ConLin(A_{n})$
be a natural homomorphism.
Put
$\gamma = C_{1}\times\dots\times C_{n}$, 
$\rho' = \psi(\rho)$, 
$\gamma' = \psi(\gamma)$.
We can check that 
$\proj_1 (\rho\cap \gamma) = \proj_1 (\rho'\cap \gamma')$.
The relation $\rho'$ can be viewed as 
a subdirect subalgebra of $A_{1}\times B$, where 
$B = \proj_{2,\ldots,n}(\rho')$ 
is a linear algebra by Corollary~\ref{LinearAlgebrasAreClosed}.
Then $D = \proj_{2,\ldots,n}(\gamma')\cap B$ 
can be viewed as a subalgebra of $B$.
We need to show that 
$\proj_{1}(\rho'\cap (C_{1}\times D))$ is a linear subuniverse of $A_{1}$.
By Lemma~\ref{RelWithLinearPart}, 
the binary relation $\rho'$ has the parallelogram property, 
then $\rho'$ induces 
an isomoprhism 
$A_{1}/\sigma_{1}\cong B/\sigma_{2}$, 
where 
$\sigma_{1} = \ConOne(\rho',1)$, 
$\sigma_{2} = \ConOne(\rho',2)$.
Note that 
$\sigma_{1}$ is a linear congruence since
$B$ is a linear algebra.
Hence, $D_1 = \{a\in A_{1}\mid 
\exists d\in D\colon (a,d)\in \rho'\}$ is stable under $\sigma_{1}$ (and under $\ConLin(A_{1})$).
Then $\proj_1 (\rho\cap \gamma) = \proj_1 (\rho'\cap \gamma') = C_{1}\cap D_1$ is stable under $\ConLin(A_{1})$, which completes the proof.
\end{proof}

\subsection{Common properties}

In this subsection we list some properties that are common 
for all types of one-of-four subuniverses. 

\begin{lem}\label{PCBrel}
Suppose 
$R\subseteq D_{1}\times\dots\times D_{n}$ is a subdirect relation, 
$B_{i}$ is a one-of-four subuniverse of $D_{i}$ 
of type $\mathcal T$
for every $i\in\{1,\dots,n\}$;
if $\mathcal T$ is the absorbing type 
then the absorbing subuniverses are witnessed by the same term 
operation.
Then 
$R\cap (B_{1}\times\dots\times B_{n})$ is a one-of-four subuniverse of $R$
of type $\mathcal T$.
\end{lem}

\begin{proof}
If $\mathcal T$ is the absorbing type, 
then the statement follows from 
Lemma~\ref{AbsImplies}, 
if $\mathcal T$ is the central type, 
then the statement follows from 
Lemma~\ref{CenterImplies}.

Suppose $\mathcal T$ is the linear type, 
then put $\sigma_{i}=\ConLin(D_{i})$ 
for each $i\in\{1,\dots,n\}$.
First, extend every $\sigma_{i}$ naturally on 
$D = D_{1}\times\dots\times D_{n}$ and 
denote the obtained congruence $\sigma_{i}'$  
so that 
$D/\sigma_{i}'\cong D_{i}/\sigma_{i}$.
Since 
linear algebras 
are closed under taking subalgebras and quotients
(Corollary~\ref{LinearAlgebrasAreClosed}), 
$\sigma = \sigma_{1}'\cap\dots\cap \sigma_{1}'$ is a linear congruence
and $B_{1}\times\dots\times B_{n}$ is stable under this congruence.
Therefore, $\sigma\cap (R\times R)$ is a linear congruence and 
$R\cap (B_{1}\times\dots\times B_{n})$ is stable under it.
This completes this case.

It remains to consider the case when $\mathcal T$ is the PC type.
Let $\delta_{1},\dots,\delta_{t}$ be the set of all PC congruences on 
$D_{1},\ldots,D_{n}$ we need to define
$B_{1},\dots,B_{n}$. 
For every $i\in\{1,2,\dots,t\}$ by $\delta_{i}'$ we denote 
$\delta_{i}$ naturally extended on 
$D = D_{1}\times\dots\times D_{n}$, 
by $E_{i}$ we denote the equivalence class 
of $\delta_{i}'$ containing $B_{1}\times\dots\times B_{n}$.
Since $R$ is subdirect, 
$R/\delta_{i}'\cong D/\delta_{i}'$
and $R/\delta_{i}'$ is a PC algebra without a nontrivial binary absorbing subuniverse or center.
Since 
$R\cap (B_{1}\times\dots\times B_{n}) = 
R\cap (E_{1}\cap \dots\cap E_{t})$, 
the set $R\cap (B_{1}\times\dots\times B_{n})$ is a PC subuniverse of $R$.
\end{proof}

\begin{lem}\label{FactorByStableCongruence}
Suppose 
$\sigma$ is a congruence on $D$, 
$B$ is a one-of-four subuniverse of $D$ stable under $\sigma$.
Then 
$\{b/\sigma\mid b\in B\}$ is a one-of-four subuniverse of $D/\sigma$ of the same type as $B$.
\end{lem}
\begin{proof}
For a binary subuniverse and a center it follows from 
Corollaries~\ref{AbsorptionQuotient} and \ref{CenterQuotient},
respectively.
Suppose $B$ is a linear subuniverse.
Let $\delta$ be the minimal congruence containing both 
$\sigma$ and $\ConLin(D)$.
By Corollary~\ref{LinearAlgebrasAreClosed}, 
$D/\delta$ is a linear algebra.
Since $B$ is stable under $\delta$, 
$\{b/\sigma\mid b\in B\}$ is a linear subuniverse of $D/\sigma$.

It remains to consider the case when 
$B$ is a PC subuniverse of $D$, that is, 
$B=E_{1}\cap \dots \cap E_{s}$, where 
$E_{i}$ is an equivalence class of 
a PC congruence $\sigma_{i}$ for every $i$.
Let $\delta$ be the minimal congruence containing 
$\sigma$ and $\sigma_{1}\cap\dots\cap\sigma_{s}$.
By Lemma~\ref{BiggerThanPC}, 
$\delta$ is an intersection of 
PC congruences 
$\delta_{1},\ldots,\delta_{t}$.
Since $B$ is stable under $\delta$
and $B$ is an equivalence class
of $\sigma_{1}\cap\dots\cap\sigma_{s}$,
$B$ is an equivalence class of $\delta$.
Hence, 
$\{b/\sigma\mid b\in B\}$ is an intersection of the equivalence 
classes of congruences on $D/\sigma$ corresponding to 
$\delta_{1},\dots,\delta_{t}$.
\end{proof}

The
following corollaries (proved earlier)
state that if we restrict all coordinates 
of a relation to one-of-four subuniverses 
of type $\mathcal T$
then we restrict its projection onto the first coordinate 
to a subuniverse of type $\mathcal T$.
The only difference is that for PC subuniverse we require
the relation to be subdirect and without nontrivial binary absorbing subuniverse  
or center on every coordinate, 
and for linear subuniverse the first coordinate should be without a
nontrivial binary absorbing subuniverse.

\begin{AbsImpliesConsCorollary}
Suppose $\rho \subseteq A_{1}\times\dots\times A_{n}$ is a relation such that
$\proj_1 (\rho) = A_{1}$ 
and 
$C = \proj_{1}((C_{1}\times\dots \times C_{n})\cap\rho)$,
where $C_{i}$ is an absorbing subuniverse in $A_{i}$ with a term $t$ for every $i$.
Then $C$ is an absorbing subuniverse in $A_{1}$ with the term $t$.
\end{AbsImpliesConsCorollary}

\begin{CenterImpliesConsCorollary}
Suppose $\rho \subseteq A_{1}\times\dots\times A_{n}$ is a relation such that
$\proj_1 (\rho) = A_{1}$ and
$C = \proj_{1}((C_{1}\times\dots \times C_{n})\cap\rho)$,
where $C_{i}$ is a center in $A_{i}$ for every $i$.
Then $C$ is a center in $A_{1}$.
\end{CenterImpliesConsCorollary}

\begin{PCImpliesCorollary}
Suppose $\rho \subseteq A_{1}\times\dots\times A_{n}$ is a subdirect relation,
there is no nontrivial binary absorbing subuniverse or 
nontrivial center on $A_{1}$, and 
$C = \proj_{1}((C_{1}\times\dots \times C_{n})\cap\rho)$,
where $C_{i}$ is a PC subuniverse in $A_{i}$ for every $i$.
Then $C$ is a PC subuniverse in $A_{1}$.
\end{PCImpliesCorollary}

\begin{LinearImpliesCorollary}
Suppose $\rho \subseteq A_{1}\times\dots\times A_{n}$ is a relation such that
$\proj_1 (\rho) = A_{1}$,
there is no nontrivial binary absorbing subuniverse on $A_{1}$,
and
$C = \proj_{1}((C_{1}\times\dots \times C_{n})\cap\rho)$,
where $C_{i}$ is a linear subuniverse of $A_{i}$ for every $i$.
Then $C$ is a linear subuniverse of $A_{1}$.
\end{LinearImpliesCorollary}

Another common property is
that we cannot have $(C_{1},\dots,C_{k})$-essential relation of arity greater than 2 
if $C_{1},\ldots,C_{k}$ are subuniverses of a fixed type (any but linear).
Note that for PC subuniverses we additionally require 
the relation to be subdirect.
From these claims it can be derived that for the nonlinear case (see Corollary~\ref{BoundedWidthCase}) 
it is sufficient to check cycle consistency (all calculations are on binary relations)
to guarantee a solution.

\begin{lem}\label{BinAbsLessThanTwoCorollary}
Suppose
$C_{i}$ is a nontrivial binary absorbing subuniverse of
$A_{i}$ with a term $t$ for $i\in\{1,2,\dots,k\}$, 
$k\ge 2$.
Then there does not exist a 
$(C_{1},\dots,C_{k})$-essential 
relation 
$\rho\subseteq A_{1}\times \dots \times A_{k}$.
\end{lem}
\begin{proof}
Assume that such a relation exists.
To get a contradiction it is sufficient to apply term $t$ to 
a tuple from 
$A_{1}\times C_{2}\times\dots\times C_{k}$ and 
a tuple from 
$C_{1}\times \dots\times C_{k-1}\times A_{k}$.
\end{proof}

The following two corollaries were proved earlier.

\begin{CenterLessThanThreeCorollary}
Suppose
$C_{i}$ is a center of 
$A_{i}$ for $i\in\{1,2,\dots,k\}$, 
$k\ge 3$.
Then there does not exist a 
$(C_{1},\dots,C_{k})$-essential 
relation 
$\rho\subseteq A_{1}\times \dots \times A_{k}$.
\end{CenterLessThanThreeCorollary}

\begin{PCLessThanThreeCorollary}
Suppose
$\rho\subseteq A_{1}\times \dots \times A_{n}$ is a subdirect relation,
$n\ge 3$,
$C_{i}$ is a PC subuniverse in $A_{i}$.
There does not exist
a $(C_{1},\dots,C_{n})$-essential relation.
\end{PCLessThanThreeCorollary}

\subsection{Interaction}

Here we explain how one-of-four subuniverses of 
different types interact with each other.

\begin{lem}\label{PCBsubNonPC}
Suppose 
$B_1$ is a binary absorbing, central, or linear subuniverse of $D$,
$B_{2}$ is a subuniverse of $D$.
Then $B_{1}\cap B_{2}$ is a binary absorbing, central, or linear subuniverse of $B_{2}$,respectively.
\end{lem}

\begin{proof}
If $B_{1}$ is a binary absorbing subuniverse or a center, 
then the claim follows from 
Lemmas~\ref{AbsImplies} and \ref{CenterImplies}, respectively.
If $B_{1}$ is a linear subuniverse,
then 
by Corollary~\ref{LinearAlgebrasAreClosed}
$B_{2}/\ConLin(D)$ is a linear algebra, 
hence $B_{1}\cap B_{2}$ is a linear subuniverse of $B_{2}$.
\end{proof}

\begin{lem}\label{IntersectionOfTwoSubuniverses}
Suppose $B_{1}$ and $B_{2}$ 
are nonempty one-of-four subuniverses of $D$, 
$B_{1}\cap B_{2} = \varnothing$.
Then $B_{1}$ and $B_{2}$ are subuniverses of the same type.
\end{lem}

\begin{proof}
Assume the converse. Consider all possible cases.

Case 1. $B_{1}$ is a linear subuniverse, $B_{2}$ is a binary absorbing subuniverse.
By Corollary~\ref{AbsorptionQuotient}
$\{b/\ConLin(D)\mid b\in B_{2}\}$ 
is a binary absorbing subuniverse on $D/\ConLin(D)$.
By Lemma~\ref{NoAbsCenterPCInLinearAlgebra} 
this subuniverse should be trivial, which contradicts the 
fact that $B_{1}\cap B_{2}=\varnothing$ and $B_{1}$ is stable 
under $\ConLin(D)$.

Case 2. $B_{1}$ is a linear subuniverse, $B_{2}$ is a center.
By Corollary~\ref{CenterQuotient}
$\{b/\ConLin(D)\mid b\in B_{2}\}$ 
is a center of $D/\ConLin(D)$.
By Lemma~\ref{NoAbsCenterPCInLinearAlgebra} 
this subuniverse should be trivial, which contradicts the 
fact that $B_{1}\cap B_{2}=\varnothing$ and $B_{1}$ is stable 
under $\ConLin(D)$.

Case 3. $B_{1}$ is a linear subuniverse, $B_{2}$ is a PC subuniverse.
Let 
$S\subseteq (D/\ConLin(D))\times D$
consist of all the tuples $(c/\ConLin(D),c)$, 
where $c\in D$.
By Lemma~\ref{NoAbsCenterPCInLinearAlgebra}, there is no 
nontrivial 
binary absorbing subuniverse or center on $D/\ConLin(D)$.
Hence, by Corollary~\ref{PCImplies}, 
the restriction of  
the second variable to 
$B_{2}$ 
implies the restriction of 
the first variable to a PC subuniverse.
Since $B_{1}\cap B_{2}=\varnothing$, 
this restriction is nontrivial.
Thus, there exists a nontrivial PC subuniverse on $D/\ConLin(D)$,
which contradicts Lemma~\ref{NoAbsCenterPCInLinearAlgebra}.

Case 4. $B_{1}$ is a PC subuniverse, $B_{2}$ is a binary absorbing subuniverse.
By Corollary~\ref{AbsorptionQuotient}
the set $\{b/\ConPC(D)\mid b\in B_{2}\}$ 
is a binary absorbing subuniverse of $D/\ConPC(D)$.
By Lemma~\ref{NoAbsCenterInPCAlgebra} 
this subuniverse should be trivial, which contradicts the 
fact that $B_{1}\cap B_{2}=\varnothing$ and $B_{1}$ is a 
PC subuniverse.

Case 5. $B_{1}$ is a PC subuniverse, $B_{2}$ is a center.
By Corollary~\ref{CenterQuotient}
$\{b/\ConPC(D)\mid b\in B_{2}\}$ 
is a center of $D/\ConPC(D)$.
By Lemma~\ref{NoAbsCenterInPCAlgebra} 
this subuniverse should be trivial, which contradicts the 
fact that $B_{1}\cap B_{2}=\varnothing$ and $B_{1}$ is a 
PC subuniverse.

Case 6. $B_{1}$ is a binary absorbing subuniverse, 
$B_{2}$ is a center.
Suppose $R\subseteq D\times G$ is the binary relation from the
definition of the center $B_{2}$, and 
denote $b^{+} = \{a\mid (b,a)\in R\}$ for every $b\in D$.
We prove this case by induction on the size of $D$.
Assume that $b_1^{+}\neq b_2^{+}$ for some 
$b_{1},b_{2}\in B_{1}$.
Choose an element $c\in b_1^{+}\setminus b_{2}^{+}$ 
(or in $b_2^{+}\setminus b_{1}^{+}$).
Put $D' = \{a\mid (a,c)\in R\}$.
Note that 
$D'\subsetneq D$, $D'\cap B_{1}\neq \varnothing$,
$D'\cap B_{2} = B_{2}$. Thus, we obtain 
subuniverses $B_{1}\cap D'$ and $B_{2}$ 
of a smaller set $D'$
that are a binary absorbing subuniverse 
 and a center
(by Lemma~\ref{PCBsubNonPC}), respectively.
It remains to apply the inductive assumption 
to $B_{1}\cap D'$ and $B_{2}$.
Let us consider the case when $b_{1}^{+} = b_{2}^{+}$
for any $b_{1},b_{2}\in B_{1}$.
Since $B_{1}\cap B_{2}=\varnothing$, 
$b^{+}\neq G$ for every $b\in B_{1}$.
Let $f$ be the binary absorbing operation.
Choose $b\in B_{1}$ and $e\in B_{2}$.
Then 
$f(b,e)= b_{1}\in B_{1}$ and 
$f(e, b)= b_{2}\in B_{1}$, which means that 
$f(b^{+},G)\subseteq b_{1}^{+} = b^{+}$,
$f(G,b^{+})\subseteq b_{2}^{+} = b^{+}$.
This contradicts the definition of a center, 
saying that there is no nontrivial binary absorbing subuniverse on $G$.
\end{proof}

\begin{thm}\label{PCBsub}
Suppose 
$B_1$ and $B_{2}$ are one-of-four subuniverses of $D$ of types $\mathcal T_{1}$
and $\mathcal T_{2}$, respectively.
Then $B_{1}\cap B_{2}$ is a one-of-four subuniverse of $B_{2}$ of type 
$\mathcal T_{1}$.
\end{thm}

\begin{proof}
If $B_{1}$ is not a PC subuniverse, 
then the claim follows from Lemma~$\ref{PCBsubNonPC}$.
Assume that $B_{1}$ is a PC subuniverse of $D$.

Let $\sigma_{1},\ldots,\sigma_{t}$ be the set of all 
PC congruences on $D$.
Assume that $B_{2}$ is not a 
PC subuniverse.
Every equivalence class $E$
of $\sigma_{i}$ is a PC subuniverse.
Then Lemma~\ref{IntersectionOfTwoSubuniverses}
implies that $E$ has a nonempty intersection with $B_{2}$.
Therefore $B_{2}/\sigma_{i}\cong D/\sigma_{i}$ 
and $\sigma_{i}\cap (B_{2}\times B_{2})$ is 
a PC congruence on $B_{2}$ for every $i$.
Hence, $B_{1}\cap B_{2}$ is a PC subuniverse of $B_{2}$, which completes this case.

If $B_{2}$ is also a PC subuniverse,
then by Corollary~\ref{PCProperties}, 
$B_{1}\cap B_{2}$ is a PC subuniverse of $B_{2}$.
\end{proof}

\begin{lem}\label{SequencesOfSubuniverses}
Suppose 
$D = A_{0} = B_{0}$, 
$s\ge 1$, $t\ge 0$,
$A_{i}$ is a one-of-four subuniverse of $A_{i-1}$ 
for every $i\in\{1,\dots,s\}$,
and 
$B_{i}$ is a one-of-four subuniverse of $B_{i-1}$ 
for every $i\in\{1,\dots,t\}$.
Then 
$A_{s}\cap B_{t}$ is 
a one-of-four subuniverse of 
$A_{s-1}\cap B_{t}$ of the same type as $A_{s}$.
\end{lem}
\begin{proof}
We prove this lemma by induction on $s+t$.
Let $A_{s}$ be a one-of-four subuniverse of 
$A_{s-1}$ of type $\mathcal T$.
For $t=0$ the claim follows from the statement.
Assume that $t\ge 1$.
By the inductive assumption, 
$A_{s-1}\cap B_{t}$ 
and 
$A_{s}\cap B_{t-1}$ are one-of-four subuniverses
of $A_{s-1}\cap B_{t-1}$, 
and the second of them is of type $\mathcal T$.
Then by Theorem~\ref{PCBsub}, 
their intersection 
$A_{s}\cap B_{t}$ is a one-of four subuniverse 
of 
$A_{s-1}\cap B_{t}$ of type $\mathcal T$.
\end{proof}

\begin{lem}\label{ReductionAndProjectionGivesOneOfFour}
Suppose $R\subseteq A_0\times B_{0}$ 
is a subdirect relation, 
$B_{i}$ is a one-of-four subuniverse of 
$B_{i-1}$ for every $i\in\{1,\dots,t\}$,
$A_{1}$ is a one-of-four subuniverse of $A_{0}$.
Then 
$\proj_{2}(R\cap (A_{1}\times B_{t}))$
is a one-of-four subuniverse of 
$\proj_{2}(R\cap (A_{1}\times B_{t-1}))$
of the same type as $B_{t}$.
\end{lem}

\begin{proof}
By Lemma~\ref{PCBrel},
$R\cap (A_{0}\times B_{i})$ is a one-of-four subuniverse of 
$R\cap (A_{0}\times B_{i-1})$
of the same type as $B_{i}$,
and 
$R\cap (A_{1}\times B_{0})$
is a one-of-four subuniverse of 
$R$.
By Lemma~\ref{SequencesOfSubuniverses}, 
$R\cap (A_{1}\times B_{t})$
is a one-of-four subuniverse 
of 
$R\cap (A_{1}\times B_{t-1})$
of the same type as $B_{t}$.
Let $\sigma$ be the congruence on $R\cap (A_{1}\times B_{0})$
such that two elements are equivalent whenever 
there projections onto the second coordinate are equal.
Then $R\cap (A_{1}\times B_{t})$ is stable under $\sigma$ for every $i$.
By Lemma~\ref{FactorByStableCongruence},
$\proj_{2}(R\cap (A_{1}\times B_{t}))$
is a one-of-four subuniverse of 
$\proj_{2}(R\cap (A_{1}\times B_{t-1}))$ of the same type as $B_{t}$.
\end{proof}

\begin{thm}\label{PCBint}
Suppose 
$B_{1},\dots,B_{n}$ are one-of-four subuniverses of $D$, 
and 
$B_{1}\cap\dots\cap B_{n} = \varnothing$.
Then there exists $I\subseteq\{1,\dots,n\}$ 
with $\bigcap_{i\in I}B_{i} = \varnothing$ satisfying
one of the following conditions:
\begin{enumerate}
    \item $|I|\le 2$ and all subuniverses $B_{i}$, where $i\in I$,
    are of the same type;
    \item $B_{i}$ is a linear subuniverse for every $i\in I$;
    \item $B_{i}$ is a binary absorbing subuniverse for every $i\in I$.
\end{enumerate}
\end{thm}

\begin{proof}
Let us prove by induction on $n$. For $n=1$ it is trivial.
For $n=2$ it follows from 
Lemma~\ref{IntersectionOfTwoSubuniverses}.
If $\bigcap_{i\in I}B_{i} = \varnothing$ for some 
$I\subsetneq\{1,2,\dots,n\}$, then 
applying the inductive assumption to 
$\bigcap_{i\in I}B_{i}$ we obtain the required property.
Thus, we assume that 
if we remove one one-of-four subuniverse
from the intersection
$B_{1}\cap\dots\cap B_{n}$ we get a nonempty set.

Let us show that all subuniverses should be of the same type.
Put 
$C_{i} = B_{i}\cap B_{n}$
for every $i\in\{1,2,\dots,n-1\}$.
By Lemma~\ref{PCBsub}, 
$C_{i}$ is a one-of-four subuniverse of $B_{n}$ of the same type as $B_{i}$.
Applying the inductive assumption to 
$C_{1}\cap \dots\cap C_{n-1}=\varnothing$, 
we derive that 
$C_{1},\dots,C_{n-1}$ are of the same type, hence 
$B_{1},\dots,B_{n-1}$ are of the same type.
Similarly we can show that 
$B_{2},\dots,B_{n}$ are of the same type, and 
therefore, since $n\ge 3$,  all of them are of the same types.

Assume that all subuniverses 
$B_{1},\dots,B_{n}$ are centers or PC subuniverses.
Let $R$ be the $n$-ary relation 
consisting of all tuples $(a,a,\dots,a)$.
Then
$R$ is a $(B_{1},\dots,B_{n})$-essential relation, 
which contradicts Corollary~\ref{CenterLessThanThree}
for centers and Corollary~\ref{PCLessThanThree}
for PC subuniverses.
\end{proof}

\section{Proof of the Auxiliary Statements}\label{AuxStatements}
\subsection{One-of-four reductions}

\begin{lem}\label{nonPCReductionImpliesSubuniverse}
Suppose $D^{(1)}$ is a one-of-four reduction for 
an instance $\Theta$ of type $\mathcal T$,
which is not the PC type.
Then $\Theta^{(1)}(z)$ is a one-of-four subuniverse of $\Theta(z)$ of type $\mathcal T$ for every varaible $z$. 
\end{lem}

\begin{proof}
Let 
$\Var(\Theta) = \{x_{1},\dots,x_{t}\}$ and 
$\Theta(x_{1},\dots,x_{t})$ define the relation $R$.
By Lemma~\ref{PCBsubNonPC}, 
$D_{x_{i}}^{(1)}\cap \proj_{i}(R)$ is a one-of-four 
subuniverse of $\proj_{i}(R)$ of type $\mathcal T$ for every $i$. 
Considering $R$ as a subdirect relation on smaller domains and applying 
Corollaries~\ref{AbsImpliesCons}, \ref{CenterImpliesCons},
and \ref{LinearImplies} we conclude that 
$\Theta^{(1)}(z)$ is a one-of-four subuniverse of $\Theta(z)$ of type
$\mathcal T$.
\end{proof}

\begin{lem}\label{PCReductionImpliesSubuniverse}
Suppose $D^{(1)}$ is a 
PC reduction for 
a 1-consistent instance $\Theta$,
for every variable $y$ appearing 
at least twice in $\Theta$
the pp-formula
$\Theta(y)$ defines $D_{y}$,
and
$\Theta(z)$ defines $D_{z}$ for a variable $z$.
Then
$\Theta^{(1)}(z)$ is a PC subuniverse of $D_{z}$. 
\end{lem}
\begin{proof}
First, we rename the variables in $\Theta$ so that 
every variable occurs just once and denote the obtained instance 
by $\Theta_{0}$.
Then we identify variables back to obtain the original instance 
step by step.
Thus, we get 
a sequence 
$\Theta_{0},\Theta_{1},\Theta_{2},\dots,
\Theta_{s}$ such that 
$\Theta_{i+1}$ is obtained from 
$\Theta_{i}$ by identifying of two variables
and $\Theta_{s} = \Theta$.
Let us show by induction on $i$
that 
for every variable $z$ 
the set $\Theta_{i}(z)\cap D_{z}^{(1)}$
is a PC subuniverse of 
$\Theta_{i}(z)$.
For $i=0$ it follows from the fact 
that 
$\Theta$ is 1-consistent, and therefore, $\Theta_{0}(z)$ defines the full $D_{z}$.

Assume that 
$\Theta_{i+1}$ is obtained from $\Theta_{i}$
by identifying of $y$ and $y'$,
and the variable in $\Theta$ corresponding to $y$ and $y'$ is $y$.
We know that for every variable $z$ appearing 
at least twice in $\Theta$,  $\Theta(z)$ defines $D_{z}$.
Hence 
$\Theta_{i+1}(y)$ also defines $D_{y}$.
Thus, we just need to show that 
for any variable 
$z$ different from $y$ and $y'$ 
the set
$\Theta_{i+1}(z)\cap D_{z}^{(1)}$ is 
a PC subuniverse of 
$\Theta_{i+1}(z)$.
By the inductive assumption 
$\Theta_{i}(z)\cap D_{z}^{(1)}$ is 
a PC subuniverse of $\Theta_{i}(z)$.
Then 
$\Theta_{i}(z)\cap D_{z}^{(1)} = 
E_{1}\cap\dots\cap E_{t}$, where 
$E_{j}$ is an equivalence class of a PC 
congruence $\sigma_{j}$ on $\Theta_{i}(z)$
for every $j$.
Let $S\subseteq \Theta_{i}(z)/\sigma_{j}\times D_{y}\times D_{y}$
be the relation consisting of all tuples
$(a/\sigma_{j},b,b')$
such that  $\Theta_{i}$ has a 
solution with 
$z=a$, $y=b$, $y'=b'$.
Since the variable $y$ appears at least twice in $\Theta$, 
$\Theta(y)$ defines a full relation. 
Hence, 
the relation $S$ is subdirect and 
for every $b\in D_{y}$ there exists $E$ 
such that $(E,b,b)\in S$.
Lemma~\ref{IdentificationDoesNotReducePC}
implies that 
for every 
equivalence class $E$ of $\sigma_{j}$ there exists 
$b$ such that 
$\Theta_{i}$ has a solution with $z\in E$ and 
$y=y'=b$,
which means that 
there exists a solution of 
$\Theta_{i+1}$ with $z\in E$.
Therefore, 
$\Theta_{i}(z)/\sigma_{j}\cong \Theta_{i+1}(z)/\sigma_{j}$, 
which implies that 
$\Theta_{i+1}(z)\cap D_{z}^{(1)}$ is a PC subuniverse of 
$\Theta_{i+1}(z)$. This completes the inductive step.

Since $\Theta=\Theta_{s}$, we proved that 
$\Theta(z)\cap D_{z}^{(1)}$ is a PC subuniverse of
$\Theta(z)$ for every variable $z$ of $\Theta$.

Suppose 
$\Var(\Theta) = \{x_{1},\dots,x_{t}\}$,
$\Theta(x_{1},\dots,x_{t})$ 
defines a relation $R$.
Then $R$ can be viewed as a subdirect relation 
if we reduce the domain of every variable $x_{i}$ to 
$\Theta(x_{i})$. 
By Corollary~\ref{PCImplies}, 
for any variable $z$ with $\Theta(z) = D_{z}$
we obtain that 
$\Theta^{(1)}(z)$ is a PC subuniverse 
of $D_{z}$.
\end{proof}

\begin{lem}\label{nonPCReductionForFormulas}
Suppose $D^{(1)}$ is a minimal absorbing, central, or linear reduction for an instance $\Theta$, and 
$\Theta(x_{1},\ldots,x_{n})$ defines a full relation.
Then 
$\Theta^{(1)}(x_{1},\ldots,x_{n})$ defines a full relation or an empty relation.
\end{lem}

\begin{proof}
If $\Theta^{(1)}(x_{1},\ldots,x_{n})$ defines an empty relation, 
then there is nothing to prove. 
Assume that $\Theta^{(1)}(x_{1},\ldots,x_{n})$ is not empty.

We prove by induction on $n$.
For $n=1$ by Lemma~\ref{nonPCReductionImpliesSubuniverse}
$\Theta^{(1)}(x_{1})$ is a subuniverse of 
$\Theta(x_{1})$ of the corresponding type.
By the minimality of the reduction $D^{(1)}$ 
the pp-formula $\Theta^{(1)}(x_{1})$ 
defines $D_{x_{1}}^{(1)}$.

Let us prove the induction step.
For each $i\in\{1,\dots,n-1\}$ choose 
$a_{i}\in D_{x_{i}}^{(1)}$.
By the inductive assumption,
$\Theta^{(1)}(x_{1},\dots,x_{n-1})$ defines a full 
relation, hence 
there exists a solution of 
$\Theta^{(1)}$ 
having 
$x_{i} = a_{i}$ for every 
$i\in\{1,\dots,n-1\}$.

Add the constraint 
$x_{i}= a_{i}$
to $\Theta$ for every $i\in\{1,\dots,n-1\}$
and denote the obtained instance by $\Omega$.
By the condition of this lemma 
$\Omega(x_{n})$ defines $D_{x_{n}}$.
By Lemma~\ref{nonPCReductionImpliesSubuniverse},
$\Omega^{(1)}(x_{n})$ defines a one-of-four subuniverse of $D_{x_{n}}$ of the
corresponding type, which by the minimality of the reduction 
$D^{(1)}$ implies that 
$\Omega^{(1)}(x_{n})$ defines $D_{x_{n}}^{(1)}$.
Since we chose $a_{1},\dots,a_{n-1}$ arbitrary, 
this means that $\Theta^{(1)}(x_{1},\ldots,x_{n})$ 
defines a full relation.
\end{proof}



\begin{lem}\label{PCReductionForFormulas}
Suppose 
$D^{(1)}$ is a minimal PC reduction for a 1-consistent instance $\Theta$,
for every variable $y$ appearing 
at least twice in $\Theta$
the pp-formula
$\Theta(y)$ defines $D_{y}$, and 
$\Theta(x_{1},\ldots,x_{n})$ defines a full relation.
Then 
$\Theta^{(1)}(x_{1},\ldots,x_{n})$ defines a full relation or an empty relation.
\end{lem}

\begin{proof}
If $\Theta^{(1)}(x_{1},\ldots,x_{n})$ defines an empty relation, 
then there is nothing to prove. 
Assume that $\Theta^{(1)}(x_{1},\ldots,x_{n})$ is not empty.

First, we join variables
$x_{1},\dots,x_{n}$ into one variable $X$
with domain $D_{x_{1}}\times\dots\times D_{x_{n}}$. 
We replace $x_{1},\dots,x_{n}$ by $X$ and change all constraints containing one of the variables 
$x_{1},\ldots,x_{n}$ correspondingly.
The obtained instance we denote by $\Omega$.
Since $\Theta(x_{1},\ldots,x_{n})$ defines a full relation, 
the instance $\Omega$ is 1-consistent.

Second, we define a reduction $D^{(1)}$ on the domain of 
the new variable $X$ by 
$D^{(1)}_{X} = D^{(1)}_{x_{1}}\times\dots\times D^{(1)}_{x_{n}}$.
Let us show that this is a PC 
reduction.
By Lemma~\ref{PCBrel}, 
$D^{(1)}_{X}$ is a PC subuniverse of $D_{X}$. 
By Lemmas~\ref{GenBinAbToBinAb} and \ref{GenCenterToCenter}, 
there is no nontrivial binary absorbing subuniverse or center on $D_{X}$.
Thus, $D^{(1)}$ is a PC reduction for $\Omega$. 
By Lemma~\ref{PCReductionImpliesSubuniverse}, 
$\Omega^{(1)}(X)$ is a PC subuniverse of 
$D_{X}$. 
By Lemma~\ref{PCSubuniverseOnProduct}, 
$\Omega^{(1)}(X) = 
B_{1}\times\dots\times B_{n}$, 
where $B_{i}$ is a PC subuniverse of $D_{x_{i}}$ 
for every $i$.
By the minimality of $D^{(1)}$ on $\Theta$ we obtain that 
$B_{i} = D^{(1)}_{x_{i}}$. Hence, 
$\Theta^{(1)}(x_{1},\ldots,x_{n})$ defines a full relation.
\end{proof}

\begin{lem}\label{ProperReductionPreservesSubdirectness}
Suppose $D^{(1)}$ is a one-of-four minimal reduction of an instance $\Theta$,
$\rho(x_{1},\ldots,x_{n})$ is a subdirect constraint  of $\Theta$,
and
$\rho^{(1)}$ is not empty.
Then
$\rho^{(1)}$ is subdirect.
\end{lem}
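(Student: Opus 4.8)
The plan is to fix an arbitrary coordinate $i$ and prove that $\proj_{i}(\rho^{(1)}) = D_{i}^{(1)}$; since this establishes subdirectness one coordinate at a time, it suffices. By definition $\rho^{(1)} = \rho\cap(D_{1}^{(1)}\times\dots\times D_{n}^{(1)})$, so, after permuting coordinates so that $i$ becomes the first one, the set
$$C := \proj_{i}(\rho^{(1)}) = \proj_{i}\big((D_{1}^{(1)}\times\dots\times D_{n}^{(1)})\cap\rho\big)$$
is exactly the set denoted $C$ in Corollaries~\ref{AbsImpliesCons}, \ref{CenterImpliesCons}, \ref{PCImplies}, \ref{LinearImplies}. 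Since $\rho^{(1)}$ is nonempty, $C$ is a nonempty subuniverse of $D_{i}$ contained in $D_{i}^{(1)}$. The crux is that $C$ inherits the type of the reduction, so that minimality of $D_{i}^{(1)}$ forces $C=D_{i}^{(1)}$.

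To carry this out I would split into the four cases of a proper minimal reduction, using throughout that $\rho$ is subdirect, hence $\proj_{i}(\rho)=D_{i}$, which is the standing hypothesis of all four corollaries. If $D^{(1)}$ is a minimal absorbing reduction for a (binary) term $t$, then each $D_{j}^{(1)}$ binary absorbs $D_{j}$ with $t$, so Corollary~\ref{AbsImpliesCons} shows that $C$ binary absorbs $D_{i}$ with $t$; as $C$ is a nonempty such subuniverse inside the minimal one $D_{i}^{(1)}$, we get $C=D_{i}^{(1)}$. The central case is identical via Corollary~\ref{CenterImpliesCons} together with minimality of the center $D_{i}^{(1)}$. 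For the PC case, the definition of a PC reduction guarantees that no $D_{j}$ has a center or binary absorbing subuniverse, which is precisely the extra hypothesis needed for Corollary~\ref{PCImplies}; it yields that $C$ is a PC subuniverse of $D_{i}$, and minimality of $D_{i}^{(1)}$ again gives $C=D_{i}^{(1)}$. The linear case runs the same way through Corollary~\ref{LinearImplies}, whose only side condition — absence of binary absorption on $D_{i}$ — is part of the definition of a linear reduction.

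I do not expect a genuine obstacle here: all the real work is contained in the ``implies'' corollaries of Section~\ref{AbsCenterPCLinear}, and the proof is their uniform application followed by the minimality argument. The one point to state with care is that in each case a nonempty absorbing/central/PC/linear subuniverse of $D_{i}$ that is contained in the minimal such subuniverse $D_{i}^{(1)}$ must equal $D_{i}^{(1)}$, which is immediate from the notion of a minimal reduction (note also that nonemptiness of $C$, and hence that the minimality comparison is nontrivial, uses exactly the hypothesis that $\rho^{(1)}$ is nonempty; for absorbing reductions this would in any case follow from Lemma~\ref{AbsLessThanThree}).
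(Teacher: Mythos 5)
Your proposal is correct and follows essentially the same route as the paper: apply Corollaries~\ref{AbsImpliesCons}, \ref{CenterImpliesCons}, \ref{PCImplies}, \ref{LinearImplies} to see that each projection $\proj_i(\rho^{(1)})$ is a nonempty subuniverse of $D_i$ of the same type contained in $D_{i}^{(1)}$, and then invoke minimality of the reduction to conclude it equals $D_i^{(1)}$. The paper's proof is just a terser statement of exactly this argument, so no comparison is needed.
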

\begin{proof}
We need to show 
that 
$\proj_{i}(\rho\cap (D_{x_{1}}^{(1)}\times \dots\times D_{x_{n}}^{(1)})) = 
D_{x_{i}}^{(1)}$.
By Corollaries~\ref{AbsImpliesCons}, \ref{CenterImpliesCons}, \ref{PCImplies}, \ref{LinearImplies},
$B_{i} = \proj_{i}(\rho\cap (D_{x_{1}}^{(1)}\times \dots\times D_{x_{n}}^{(1)}))$ 
is a one-of-four subuniverse of $D_{x_{i}}$ of the same type.
Since $\rho^{(1)}$ is not empty, $B_{i}$ is not empty. 
Since $D_{x_{i}}^{(1)}$ is a minimal subuniverse of this type, we have $B_{i} =D_{x_{i}}^{(1)}$.
\end{proof}

\begin{lem}\label{ProperReductionPreservesCycleConAndIrreducability}
Suppose $D^{(1)}$ is a one-of-four minimal reduction for
a cycle-consistent irreducible CSP instance $\Theta$,
and $\Theta^{(1)}$ has a solution.
Then $\Theta^{(1)}$ is cycle-consistent and irreducible.
\end{lem}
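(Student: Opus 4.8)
The plan is to verify cycle-consistency and irreducibility separately for $\Theta^{(1)}$, leaning on Lemma~\ref{ProperReductionPreservesSubdirectness} to ensure that after the reduction all constraints of $\Theta^{(1)}$ are still subdirect (using that $\Theta^{(1)}$ has a solution, so every $\rho^{(1)}$ is nonempty, hence subdirect). Cycle-consistency is the easy half: any path in $\Theta^{(1)}$ is in particular a path in $\Theta$ with every edge relation a subrelation of the original edge relation, so for $a\in D_x^{(1)}\subseteq D_x$, the fact that the path in $\Theta$ connects $a$ to $a$ together with the cycle-consistency reasoning (each intermediate witness can be chosen; the subtle point is that for a \emph{cycle-consistent} instance the witness tuples going around a cycle return to $a$). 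Actually the cleanest route is: cycle-consistency of $\Theta^{(1)}$ means that for each variable $x$ and each $a\in D_x^{(1)}$ every closed path through $x$ connects $a$ to $a$. Because the constraints of $\Theta^{(1)}$ are restrictions of those of $\Theta$ and $\Theta^{(1)}$ is 1-consistent (all constraints subdirect), each such closed path, viewed in $\Theta$, already connects $a$ to $a$; and since all the relevant domains along the way are the reduced ones, we need the connecting sequence to stay inside the reduced domains. This is exactly where the fact that the reduction is a \emph{proper minimal} one of the specified types (absorbing/central/PC/linear) matters, via Corollaries~\ref{AbsImpliesCons}, \ref{CenterImpliesCons}, \ref{PCImplies}, \ref{LinearImplies}: restricting some coordinates of a subdirect relation to the corresponding subuniverses restricts the projection onto the remaining coordinates to the corresponding subuniverse, so the path can be repaired to stay inside $D^{(1)}$.

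For irreducibility, I would argue as follows. Suppose $\Theta^{(1)}$ is not irreducible. Then there is an instance $\Omega^{(1)}$, each of whose constraints is a projection of a constraint of $\Theta^{(1)}$, that is not fragmented, not linked, and whose solution set is not subdirect. Lift $\Omega^{(1)}$ to $\Omega$: take the corresponding projections of the original (unreduced) constraints of $\Theta$, keeping the same scopes and arities. Since $\Theta$ is irreducible, $\Omega$ must be fragmented, linked, or have subdirect solution set. The set of variables and the combinatorial structure of which variables share which constraints is identical for $\Omega$ and $\Omega^{(1)}$, so $\Omega$ is fragmented iff $\Omega^{(1)}$ is; hence $\Omega$ is linked or has a subdirect solution set. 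If $\Omega$ is linked, I claim $\Omega^{(1)}$ is linked too: for $a,b\in D_x^{(1)}$ there is a path in $\Omega$ connecting $a$ to $b$, and using that the reduction is proper minimal and all constraints are subdirect, this path (together with cycle-consistency of $\Theta$, already established for the reduced instance) can be chosen to lie entirely in $D^{(1)}$ — the witnesses along the path stay in the reduced domains by the same projection corollaries. Thus $\Omega^{(1)}$ is linked, a contradiction. If instead the solution set of $\Omega$ is subdirect, then since $\Omega$ is a conjunction of subdirect constraints of $\Theta$, applying Corollaries~\ref{AbsImpliesCons}, \ref{CenterImpliesCons}, \ref{PCImplies}, \ref{LinearImplies} to the pp-definition of $\Omega$'s solution set and using minimality of the reduction shows the solution set of $\Omega^{(1)}$ is subdirect — again a contradiction.

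The main obstacle, and where I would spend the most care, is the "path repair" argument: given a connecting path in $\Theta$ (or in a derived instance $\Omega$), one needs to produce a connecting path with the same underlying constraint-graph walk but with all witness elements lying in the reduced domains $D^{(1)}$. The tool is that restriction of subdirect relations to absorbing/central/PC/linear subuniverses on part of the coordinates induces the corresponding restriction on the other coordinates (Corollaries~\ref{AbsImpliesCons}, \ref{CenterImpliesCons}, \ref{PCImplies}, \ref{LinearImplies}), but one must be careful that these corollaries have hypotheses (e.g., no binary absorption or center on the relevant domains for the PC and linear cases), and check that these hypotheses are inherited along the reduction; the definition of a PC/linear reduction builds in exactly the requirement that no domain $D_i$ has a center or binary absorbing subuniverse, so this is consistent. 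A secondary subtlety is making sure that "the solution set of $\Omega^{(1)}$" really equals "the solution set of $\Omega$ restricted to $D^{(1)}$" — this is where Lemma~\ref{ProperReductionPreservesSubdirectness} and the pp-formula form of the corollaries combine — and that projecting first and then restricting commutes appropriately with restricting first and then projecting, which for these reductions it does by the cited corollaries.
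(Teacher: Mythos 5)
Your overall strategy matches the paper's: prove cycle-consistency by restricting the pp-formula of a closed path and invoking Corollaries~\ref{AbsImpliesCons}, \ref{CenterImpliesCons}, \ref{PCImplies}, \ref{LinearImplies} together with minimality of the reduction, and prove irreducibility by lifting a witness sub-instance to the unreduced setting and transferring linkedness or subdirectness back down. However, there is a genuine gap: you treat all four reduction types uniformly in the ``path repair'' step, and this fails for PC reductions. Corollary~\ref{PCImplies} has a hypothesis that the whole relation is subdirect (not merely that its projection onto the distinguished coordinate is full, which is all that the absorbing/central/linear corollaries need). For cycle-consistency the relevant relation is the one defined by the closed-path formula with the two endpoint occurrences of $x$ identified, and after this identification there is no a priori reason the intermediate variables still take all values; so you cannot simply cite \ref{PCImplies}. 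The paper handles exactly this point by factorizing the intermediate variables by their PC congruences and invoking Lemma~\ref{IdentificationDoesNotReducePC} to show the identification does not restrict the resulting PC variables, which restores subdirectness and only then allows \ref{PCImplies}; similarly, the transfer of linkedness (``if the path connects any two elements of $D_x$ then it connects any two elements of $D_x^{(1)}$''), which you also need inside the irreducibility argument, requires Lemma~\ref{PCRelationsLem} in the PC case to see that fixing one endpoint restricts each PC variable to a single value or not at all. Your proposal never addresses this, and the hypothesis you do flag (absence of binary absorption and centers) is not the one that causes the trouble.

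A secondary, smaller issue is the lift of the witness instance in the irreducibility part. The constraints of your lifted $\Omega$ are projections of the unreduced constraints, and in general $\proj_S(\rho^{(1)})\subsetneq(\proj_S\rho)^{(1)}$, so ``the solution set of $\Omega^{(1)}$'' is not the restriction of the solution set of $\Omega$; you note this but resolve it only by assertion. The paper avoids the issue by lifting to an instance $\Upsilon\in\ExpShort(\Theta)$ (full constraints with fresh existential variables), so that $\Upsilon^{(1)}(x_1,\ldots,x_n)$ literally defines the solution set of the witness instance, and then applies Lemma~\ref{ExpandedConsistencyLemma} and the restriction corollaries to that pp-formula with \emph{all} variables (including the quantified ones) restricted to $D^{(1)}$. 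Your argument would be repaired the same way, but as written this step, and above all the PC case, are missing.
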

\begin{proof}

Consider a path $P$ in 
$\Theta$
starting and ending with one variable $x$.
By $\Omega$ we denote its
covering 
$z_{1}-Q_{1}-z_{2}-\dots-Q_{l-1}-z_{l}$ 
(which is also a covering of $\Theta$)
that is obtained from $P$ by renaming the variables so that every variable except for $z_{2},\ldots,z_{l-1}$ occurs just once,
$z_{2},\ldots,z_{l-1}$ occur twice.
Thus, $z_{1}$ and $z_{l}$ are different but 
$S(z_{1}) = S(z_{l}) = x$ in the definition 
of the covering.
By $\Omega'$ we denote the formula obtained from $\Omega$ by substituting $z_{1}$ for $z_{l}$.


First, we prove that $P$ connects $a$ with $a$ in $\Theta^{(1)}$ for every $a\in D_{x}^{(1)}$.
Since $\Theta$ is cycle-consistent, $\Omega'(z_{1})$ defines $D_{x}$.
Since $\Theta^{(1)}$ has a solution, $\Omega'^{(1)}(z_{1})$
defines a nonempty relation.
By Lemmas~\ref{nonPCReductionForFormulas} and
\ref{PCReductionForFormulas},
$\Omega'^{(1)}(z_{1})$ defines $D_{x}^{(1)}$,
which means that $P$ connects $a$ with $a$ in $\Theta^{(1)}$ for every $a\in D_{x}^{(1)}$. Hence, $\Theta^{(1)}$ is cycle-consistent.

Assume that $P$ connects any two elements of $D_{x}$, which means
that $\Omega(z_{1},z_{l})$ defines a full relation.
Since $\Theta^{(1)}$ has a solution, $\Omega^{(1)}(z_{1},z_{l})$
defines a nonempty relation.
By Lemmas~\ref{nonPCReductionForFormulas},
\ref{PCReductionForFormulas},
$\Omega^{(1)}(z_{1},z_{l})$ also defines a full relation, 
which means that $P$ connects any two elements of $D_{x}^{(1)}$ 
in $\Theta^{(1)}$.

Let us prove that $\Theta^{(1)}$ is irreducible.
Consider an instance $\Upsilon_{1}=\{C_{1}',\ldots,C_{s}'\}$ consisting of projections of constraints from $\Theta^{(1)}$
such that it is not fragmented and not linked.
Let $\Var(\Upsilon_{1}) = \{x_{1},\ldots,x_{n}\}$.
By the definition 
for each constraint $C_{i}'$ we can find a constraint $C_{i}\in\Theta$ such that 
$C_{i}'$ is a projection of $C_{i}^{(1)}$ onto some variables.
Let 
$\Upsilon_{2}$ consist of the 
projections of $C_{1},\dots,C_{s}$ onto the same variables as in 
$\Upsilon_{1}$,
and $\Upsilon\in\ExpShort(\Theta)$ is obtained from 
$\{C_{1},\ldots,C_{s}\}$ by renaming variables so that 
each variable except for $x_{1},\ldots,x_n$ appears just once.
Then the pp-formulas $\Upsilon_{1}(x_{1},\ldots,x_{n})$
and $\Upsilon^{(1)}(x_{1},\ldots,x_{n})$
define the same relation,
$\Upsilon_{2}(x_{1},\ldots,x_{n})$ 
and $\Upsilon(x_{1},\ldots,x_{n})$ 
define the same relation.
Since $\Upsilon_{1}$ is not fragmented, both $\Upsilon$ and $\Upsilon_{2}$ are not fragmented. 
Also, by Lemma~\ref{ExpandedConsistencyLemma}, 
both $\Upsilon$ and $\Upsilon_{2}$ are cycle-consistent 
and irreducible.

Assume that $\Upsilon_{2}$ is linked.
By Lemma~\ref{LinkedConIsCon} there exists a path that connects any two elements of $D_{x_{1}}$ in $\Upsilon_{2}$.
Then there exists a corresponding path within 
the variables $x_{1},\ldots,x_{n}$ of
$\Upsilon$ connecting any two elements of $D_{x_{1}}$. 
As we showed earlier this 
path, reduced to $D^{(1)}$,
also connects any two elements of $D_{x_{1}}^{(1)}$ in $\Upsilon^{(1)}$.
The same path can be used to connect any two elements of $D_{x_{1}}^{(1)}$ 
in $\Upsilon_{1}$, which contradicts our assumption that $\Upsilon_{1}$ is not linked.

Suppose $\Upsilon_{2}$ is not linked.
Since $\Upsilon$ is irreducible,
the solution set of $\Upsilon_{2}$ is subdirect.
Thus, for each variable $x_{i}$ 
(these are only variables appearing more than once in $\Upsilon$)
we have 
$\Upsilon(x_{i}) = \Upsilon_{2}(x_{i}) = D_{x_{i}}$.
Then by 
Lemmas~\ref{nonPCReductionForFormulas}, \ref{PCReductionForFormulas},
$\Upsilon^{(1)}(x_{i})$ defines $D_{x_{i}}^{(1)}$ or an empty set.
It cannot be empty because 
$\Theta^{(1)}$ has a solution, 
therefore 
we have 
$\Upsilon_{1}(x_{i})=\Upsilon^{(1)}(x_{i})=D_{x_{i}}^{(1)}$ for every $i$, 
and 
the solution set of $\Upsilon_{1}$ is subdirect,
which completes the proof.
\end{proof}

\begin{lem}\label{LinkedStayLinkedForAC}
Suppose 
$D^{(1)}$ is a minimal absorbing or central reduction 
for $\Theta$, 
the solution set of $\Theta$ is subdirect, 
$D_{x_{1}} = D_{x_{2}}$, $D^{(1)}_{x_{1}} = D^{(1)}_{x_{2}}$,
both $\Theta(x_1,x_2)$ and $\Theta^{(1)}(x_{1},x_{2})$ define reflexive symmetric relations,
and 
$\Theta(x_1,x_2)$ contains 
$(a,b)\in D_{x_1}^{(1)}\times D_{x_2}^{(1)}$.
Then 
$a$ and $b$ are linked in the relation defined by $\Theta^{(1)}(x_{1},x_{2})$.
\end{lem}
\begin{proof}
Let $\Var(\Theta) = 
\{x_{1},x_{2},y_{1},\dots,y_{t}\}$, 
$\Theta(x_{1},x_{2},y_{1},\dots,y_{t})$ define a relation $R$.
The relation $R$ 
can be viewed as  
a ternary relation  
$R\subseteq D_{x_{1}}\times D_{x_{2}}\times 
(D_{y_{1}}\times\dots\times D_{y_{t}})$.
By Lemmas~\ref{AbsImplies} and \ref{CenterImplies},
$G:= D^{(1)}_{y_{1}}\times\dots\times D^{(1)}_{y_{t}}$
is a one-of-four subuniverse of 
$D_{y_{1}}\times\dots\times D_{y_{t}}$
of the same type as the reduction $D^{(1)}$.
Let 
$$R'(Y,Y',Y'') =\exists x_{1}\exists x_{2}\;R(a,x_{1},Y)
\wedge 
R(x_{1},x_{2},Y')
\wedge 
R(x_{2},b,Y'')\wedge x_{1}\in D_{x_{1}}^{(1)}\wedge x_{2}\in D_{x_{1}}^{(1)}.$$

Since $\Theta(x_{1},x_{2})$ contains $(a,b)$ 
and $\Theta^{(1)}(x_{1},x_{2})$ 
defines a reflexive relation, 
there exist $B_{1},B_{1}'\in G$ such that 
$(B_1,B_{1}',B_{1}'')\in R'$ (put $x_{1}=x_{2}= a$).
Similarly,
there exist $B_{2}',B_{2}''\in G$ such that 
$(B_2,B_{2}',B_{2}'')\in R'$ (put $x_{1}=x_{2}= b$),
and
$B_{3},B_{3}''\in G$ such that 
$(B_3,B_{3}',B_{3}'')\in R'$ (put $x_{1}=a$, $x_{2}= b$).
By 
Lemma~\ref{BinAbsLessThanTwoCorollary} and Corollary~\ref{CenterLessThanThree}
$R'$ cannot be $G$-essential, which means that 
$R\cap (G\times G\times G)\neq \varnothing$.
Hence $a$ and $b$ are linked (by a path of length 3) in 
$\Theta^{(1)}(x_{1},x_{2})$.
\end{proof}

\begin{lem}\label{LinkedStayLinkedForPC}
Suppose 
$D^{(1)}$ is a minimal PC reduction 
for $\Theta$, 
the solution set of $\Theta$ is subdirect, 
$D_{x_{1}} = D_{x_{2}}$, $D^{(1)}_{x_{1}} = D^{(1)}_{x_{2}}$,
both $\Theta(x_1,x_2)$ and $\Theta^{(1)}(x_{1},x_{2})$ define reflexive symmetric relations,
and 
$\Theta(x_1,x_2)$ contains 
$(a,b)\in D_{x_1}^{(1)}\times D_{x_2}^{(1)}$.
Then 
$a$ and $b$ are linked in the relation defined by $\Theta^{(1)}(x_{1},x_{2})$.
\end{lem}

\begin{proof}
Suppose $y$ is a variable of $\Theta$ and 
$\sigma$ is a PC congruence on $D_{y}$.
Consider a 
relation $\rho\subseteq D_{x_{1}}\times D_{y}/\sigma$ 
consisting of all the tuples 
$(c,C)$ such that 
there exists a solution of $\Theta$ with 
$x_{1} = c$ and $y\in C$.
Since there is no nontrivial binary absorbing subuniverse or center on 
$D_{x_{1}}$, 
by Lemma~\ref{PCRelationsLem}, 
either $\rho$ is a full relation, or 
$\ConOne(\rho,2)$ is the equality relation.
In the first case it does not matter what value we substitute for 
the variable $x_{1}$ the variable $y$ can be at any equivalence class
of $\sigma$.
In the second case the equivalence class is uniquely determined by 
the variable $x_{1}$. 
Moreover, since $D_{x_{1}}^{(1)}$ is a minimal PC subuniverse, 
the equivalence class is the same 
for all elements of $D_{x_{1}}^{(1)}$.
Since $\Theta^{(1)}$ has a solution, 
this equivalence class is the class containing $D_{y}^{(1)}$.
Later, we will specify whether a PC congruence is 
of the \emph{first type} (from the first case) or of the \emph{second type} (from the second case). 

Let
$\Var(\Theta) = \{x_1,x_2,y_1,\dots,y_t\}$.
Let 
$R$ be the relation defined by 
$\Theta(x_{1},x_{2},y_{1},\ldots,y_{t})$.
By $\Upsilon$ 
denote the 
following formula
$$R(a,x_2,y_1,\dots,y_{t})
\wedge 
R(x_{1},x_2,y_1',\dots,y_{t}')
\wedge 
R(x_{1},x_{2}',z_1,\dots,z_{t})
\wedge 
R(b,x_2',z_{1}',\dots,z_{t}')
\wedge x_{1}\in D^{(1)}_{x_{1}}.$$

Consider a congruence $\sigma$ of the 
first type on the domain of any variable $y$ of $\Upsilon$.

Assume that 
$y\in \{x_{2},y_{1},\dots,y_{t},y_1',\dots,y_{t}'\}$.
It follows from the definition of the first type
that 
for any equivalence class $E$ of $\sigma$ 
there exists a solution 
of $\Upsilon$ 
such that 
$x_{1} = a$,
$x_{2}' = b$,
$y_{i} = y_{i}'$ for every $i$, 
and $y\in E$.

Similarly, assume that 
$y\in \{x_{2}',z_{1},\dots,z_{t},z_1',\dots,z_{t}'\}$.
For any equivalence class $E$ of $\sigma$ 
there exists a solution 
of $\Upsilon$ 
such that 
$x_{1} = x_{2} = b$,
$z_{i} = z_{i}'$ for every $i$, 
and $y\in E$.

Thus, we showed that in both cases
$\Upsilon(y)/\sigma\cong D_{y}/\sigma$.
Let $E$ be the equivalence class of $\sigma$ containing 
$D_{y}^{(1)}$.
By $\delta$ we denote the extension of $\sigma$ onto 
the solution set of $\Upsilon$, 
and by $E_{\sigma}$ we denote the equivalence class
of $\delta$ 
corresponding to $E$.
Since $\Upsilon(y)/\sigma\cong D_{y}/\sigma$ and 
$D_{y}/\sigma$ is a PC algebra without a nontrivial binary absorbing subuniverse or center,
$E_{\sigma}$ is a PC subuniverse of the solution set of 
$\Upsilon$.

Consider the intersection of 
$E_{\sigma}$ for all PC congruences $\sigma$ of the first type.
If this intersection is not empty, then 
there exists a solution of $\Upsilon$ such that 
any element of this solution is in the equivalence 
class containing $D_{y}^{(1)}$ for any PC congruence of the first type.
Since $a,b\in D_{x_{1}}^{(1)}$ and $x_{1}\in D^{(1)}_{x_{1}}$ in the definition of
$\Upsilon$, 
the same is true for any PC congruence of the second type.
Since $D_{y}^{(1)}$ is the intersection 
of all equivalence classes containing $D^{(1)}_{y}$ of all PC congruences for any variable $y$, 
the solution is in $D^{(1)}$, 
which means that 
$a$ and $b$ are linked in $\Theta^{(1)}(x_{1},x_{2})$.

Assume that the intersection of 
$E_{\sigma}$ for all PC congruences $\sigma$ of the first type is empty.
By Theorem~\ref{PCBint} there should be 
two congruences $\sigma$ and $\sigma'$ 
such that 
$E_{\sigma}\cap E_{\sigma'} = \varnothing.$
Let $y$ and $y'$ be the variables 
of $\Upsilon$ 
corresponding to $\sigma$ and $\sigma'$.
Consider several cases. 

Case 1. $y,y'\in\{x_{2},x_{2}',y_{1}',\dots,y_{t}',z_{1},\dots,z_{t},z_{1}',\dots,z_{t}'\}$.
Since 
$\Theta^{(1)}(x_{1},x_{2})$ defines a reflexive relation,
$\Upsilon$ has a solution 
with 
$x_{2} = x_{1} = x_{2}' = b$ 
and 
all the variables 
$y_{1}',\dots,y_{t}',z_{1},\dots,z_{t},z_{1}',\dots,z_{t}'$ 
are from $D^{(1)}$.
This contradicts the fact that 
$E_{\sigma}\cap E_{\sigma'} = \varnothing.$

Case 2. $y,y'\in\{x_{2},x_{2}',y_{1},\dots,y_{t},y_{1}',\dots,y_{t}',z_{1},\dots,z_{t}\}$.
Similarly, 
$\Upsilon$ has a solution 
with 
$x_{1} = x_{2} = x_{2}' = a$ 
and 
all the variables 
$y_{1},\dots,y_{t},y_{1}',\dots,y_{t}',z_{1},\dots,z_{t}$ 
are from $D^{(1)}$.
This contradicts the fact that 
$E_{\sigma}\cap E_{\sigma'} = \varnothing.$

Case 3.
$y\in \{y_{1},\dots,y_{t}\}$, 
$y'\in\{z_{1}',\dots,z_{t}'\}$.
Similarly, 
$\Upsilon$ has a solution 
with 
$x_{1} = x_{2} =a$, $x_{2}' = b$ 
and 
all the variables 
$y_{1},\dots,y_{t},y_{1}',\dots,y_{t}',z_{1}',\dots,z_{t}'$ 
are from $D^{(1)}$.
Again, this contradicts the fact that 
$E_{\sigma}\cap E_{\sigma'} = \varnothing.$
\end{proof}

\begin{lem}\label{LinkedStayLinked}
Suppose 
$D^{(1)}$ is a minimal nonlinear reduction 
for $\Theta$, 
the solution set of $\Theta$ is subdirect, 
$\Theta^{(1)}$ is not empty,
$\Theta(x_1,x_2)$ defines a relation containing 
$(a,b)\in D_{x_1}^{(1)}\times D_{x_2}^{(1)}$.
Then 
$a$ and $b$ are linked in the relation defined by $\Theta^{(1)}(x_{1},x_{2})$.
\end{lem}

\begin{proof}
By Lemma~\ref{ProperReductionPreservesSubdirectness},
the solution set of $\Theta^{(1)}$ is subdirect.
Let $\Var(\Theta) = \{x_{1},x_{2},y_{1},\dots,y_{t}\}$, 
$R$ be the relation defined by 
$\Theta(x_{1},x_{2},y_{1},\dots,y_{t})$.
Let $\Omega$ be the following instance 
$$R(x_{1},x_{2},y_{1},\dots,y_{t})
\wedge 
R(x_{1}',x_{2},y_{1}',\dots,y_{t}').$$
Since 
the solution sets of $\Theta$ and $\Theta^{(1)}$ 
are subdirect, 
the solution sets of $\Omega$ and $\Omega^{(1)}$ are 
also subdirect.
Also, 
there should be a solution of $\Theta^{(1)}$
with $x_{2} = b$. Let $x_{1} = b'$ in this solution.
Then $\Omega(x_{1},x_{1}')$ contains 
$(a,b')$.
Since 
both $\Omega(x_{1},x_{1}')$
and $\Omega^{(1)}(x_{1},x_{1}')$ define symmetric 
reflexive relations, 
Lemmas~\ref{LinkedStayLinkedForAC}
and \ref{LinkedStayLinkedForPC}
imply that 
$a$ and $b'$ are linked in $\Omega^{(1)}(x_{1},x_{1}')$. 
Since $(b',b)$ is in $\Theta^{(1)}(x_{1},x_{2})$, 
we derive that 
$a$ and $b$ are linked in $\Theta^{(1)}(x_{1},x_{2})$, 
which completes the proof.
\end{proof}

\subsection{Properties of $\ConOne(\rho,x)$} 

\begin{lem}\label{RectangularCriticalArityTwo}
Suppose 
$\rho$ is a critical rectangular relation of arity $n\ge 2$, 
$\rho'$ is the cover of $\rho$.
Then 
$\ConOne(\rho',1)\supsetneq\ConOne(\rho,1)$,
for $n>2$ we also have 
$\ConOne(\proj_{1,2}(\rho),1)\supsetneq\ConOne(\rho,1)$,
\end{lem}

\begin{proof}
For every 
$i\in\{1,2,\dots,n\}$
we define 
$\rho_{i}(x_{1},\ldots,x_{n}) =
\exists x_{i}' 
\rho(x_{1},\ldots,x_{i-1},x_{i}',x_{i+1},\dots,x_{n})$.
Since $\rho$ is critical, 
it has no dummy variables, therefore
$\rho\subsetneq\rho_{i}$ for every $i$.
Also
$\rho\subsetneq \bigcap_{i} \rho_{i}$.
Choose a tuple 
$(a_{1},\ldots,a_{n})\in \rho'\setminus \rho$.
Since $\rho'$ is the cover of $\rho$
we have $\rho'\subseteq\bigcap_{i} \rho_{i}$.
Since this tuple is in $\rho_{i}$, 
for every $i$ there is $b_{i}$
such that 
$(a_{1},\ldots,a_{i-1},b_{i},a_{i+1},\dots,a_{n})\in\rho$.
Then $(a_{1},b_{1})\in\ConOne(\rho',1)$, which means 
by the rectangularity of $\rho$ that 
$\ConOne(\rho',1)\supsetneq\ConOne(\rho,1)$.
For $n>2$ we have 
$(b_{1},a_2,\dots,a_{n}),
(a_{1},\dots,a_{n-1},b_{n})\in \rho$, hence
$(a_{1},b_{1})\in\ConOne(\proj_{1,2}(\rho),1)$
and therefore
$\ConOne(\proj_{1,2}(\rho),1)\supsetneq\ConOne(\rho,1)$.
\end{proof}

\begin{lem}\label{CriticalMeansIrreducible}
Suppose $\rho$ is a critical subdirect relation and the $i$-th variable of $\rho$ is rectangular.
Then $\ConOne(\rho,i)$ is an irreducible congruence.
\end{lem}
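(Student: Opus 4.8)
The plan is to derive irreducibility of $\sigma:=\ConOne(\rho,i)$ directly from criticality of $\rho$. Since the $i$-th variable of $\rho$ is rectangular, $\sigma$ is a congruence on $A_{i}$ by the observation recorded just before this lemma. Note also that $\sigma$ is proper: otherwise $(a,b)\in\sigma$ for all $a,b\in A_{i}$, and rectangularity of the $i$-th variable would make that variable dummy, contradicting that $\rho$ is critical.

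Now assume, towards a contradiction, that $\sigma=\delta_{1}\cap\dots\cap\delta_{s}$ with $s\ge 1$, where each $\delta_{j}\subseteq A_{i}\times A_{i}$ is a subalgebra compatible with $\sigma$ and $\delta_{j}\neq\sigma$ (hence $\delta_{j}\supsetneq\sigma$, as $\sigma\subseteq\delta_{j}$). For each $j$ I introduce the relation $\rho_{j}\subseteq A_{1}\times\dots\times A_{n}$ defined by the pp-formula
$$\rho_{j}(x_{1},\dots,x_{n})=\exists y\;\rho(x_{1},\dots,x_{i-1},y,x_{i+1},\dots,x_{n})\wedge\delta_{j}(y,x_{i}).$$
Being pp-definable from the subalgebras $\rho$ and $\delta_{j}$, each $\rho_{j}$ is a subalgebra of $\mathbf A_{1}\times\dots\times\mathbf A_{n}$. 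Taking $y=x_{i}$ and using reflexivity of $\sigma\subseteq\delta_{j}$ shows $\rho\subseteq\rho_{j}$, and the inclusion is strict: choosing $(c,d)\in\delta_{j}\setminus\sigma$ and, by subdirectness, a tuple $\bar a\in\rho$ with $a_{i}=c$, the tuple obtained from $\bar a$ by replacing its $i$-th entry with $d$ lies in $\rho_{j}$; were it in $\rho$, the definition of $\ConOne(\rho,i)$ applied to it and $\bar a$ would give $(c,d)\in\sigma$, a contradiction.

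It then remains to check $\rho=\rho_{1}\cap\dots\cap\rho_{s}$, which contradicts criticality of $\rho$ since each $\rho_{j}$ is a subalgebra with $\rho\subsetneq\rho_{j}$. One inclusion is clear. For the other, let $\bar a=(a_{1},\dots,a_{n})\in\bigcap_{j}\rho_{j}$ and pick, for each $j$, a witness $y_{j}$ with $(a_{1},\dots,a_{i-1},y_{j},a_{i+1},\dots,a_{n})\in\rho$ and $(y_{j},a_{i})\in\delta_{j}$. These $\rho$-tuples agree outside coordinate $i$, so $(y_{j},y_{k})\in\ConOne(\rho,i)=\sigma$ for all $j,k$. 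Since the first variable of $\delta_{j}$ is compatible with $\sigma$, from $(y_{j},a_{i})\in\delta_{j}$ and $(y_{j},y_{k})\in\sigma$ we obtain $(y_{k},a_{i})\in\delta_{j}$ for all $j,k$; in particular $(y_{1},a_{i})\in\bigcap_{j}\delta_{j}=\sigma$. Finally, $(a_{1},\dots,a_{i-1},y_{1},a_{i+1},\dots,a_{n})\in\rho$ together with $(y_{1},a_{i})\in\sigma$ and rectangularity of the $i$-th variable gives $\bar a\in\rho$.

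The step that needs care is this last one: recovering membership in $\rho$ from membership in every $\rho_{j}$. The key observation that makes it work is that the witnesses $y_{j}$ are automatically $\sigma$-related to one another, so compatibility of the $\delta_{j}$ lets one swap witnesses and conclude that a single witness already lies in $\bigcap_{j}\delta_{j}=\sigma$, after which rectangularity finishes the argument. Everything else is a routine unpacking of the pp-definition of $\rho_{j}$ and of the definitions of critical relation and of $\ConOne$.
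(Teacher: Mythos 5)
Your proof is correct and follows essentially the same route as the paper: it introduces exactly the same pp-defined relations $\rho_{j}(x_{1},\ldots,x_{n})=\exists y\,\rho(\ldots,y,\ldots)\wedge\delta_{j}(y,x_{i})$ and derives a contradiction with criticality from $\rho=\rho_{1}\cap\dots\cap\rho_{s}$. The only difference is that you spell out the step the paper calls ``easy to see'' (the witnesses being $\sigma$-related, compatibility of the $\delta_{j}$, and rectangularity), together with the strictness $\rho\subsetneq\rho_{j}$, which is a faithful elaboration rather than a different argument.
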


\begin{proof}
To simplify notations assume that $i=1$.
Put $\sigma=\ConOne(\rho,1)$.
As we mentioned in 
Section~\ref{DefinitionRectangularitySubsection}, 
$\sigma$ should be a congruence.
Assume that it is not an irreducible congruence.
Consider binary relations $\delta_{1},\ldots,\delta_{s}
\supsetneq\sigma$ stable under $\sigma$
such that $\delta_{1}\cap\dots\cap\delta_{s} = \sigma$.
Put
$$\rho_{j}(x_{1},\ldots,x_{n})=\exists x_{1}' \;\rho(x_{1}',x_{2},\ldots,x_{n})\wedge \delta_{j}(x_{1},x_{1}').$$
Consider a tuple $(x_1,\dots,x_{n})$ in the intersection of $\rho_{1},\ldots,\rho_{s}$.
Since $\delta_{j}$ is stable under $\sigma=\ConOne(\rho,1)$, 
we may assume that $x_{1}'$ takes the same value 
in the definition of every $\rho_{j}$.
Then $(x_{1},x_{1}')$ should be in $\delta_{j}$
for every $j$, 
which implies that $(x_{1},x_{1}')\in\sigma$
and 
$(x_{1},\dots,x_{n})\in\rho$.
Hence, the intersection of 
$\rho_{1},\dots,\rho_{s}$ gives $\rho$.
Since $\rho\subsetneq \rho_{j}$ for every $j$, 
this contradicts the fact that 
$\rho$ is critical.
\end{proof}

For a relation $\rho$ of arity $n$ by $\VPol(\rho)$ we denote the set of all unary
vector-functions preserving the relation $\rho$.


\begin{lem}\label{KeyDerivation}
Suppose 
a pp-formula $\Omega(x_{1},\ldots,x_{n})$
defines a relation $\rho$, 
$\alpha\in D_{x_{1}}\times\dots\times D_{x_{n}}$,
and $\rho' = \{f(\alpha)\mid f\in \VPol(\rho)\}$.
Then
there exists $\Omega'\in\ExpShort(\Omega)$
such that $\Omega'(x_{1},\ldots,x_{n})$ defines~$\rho'$.
\end{lem}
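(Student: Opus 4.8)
The plan is to build $\Omega'$ by taking many disjoint copies of $\Omega$ and gluing them together using reflexive binary relations, in such a way that the resulting pp-formula defines exactly the relation $\rho' = \{f(\alpha)\mid f\in\VPol(\rho)\}$. First I would record the basic structure: a unary vector-function $\Psi = (\psi_1,\ldots,\psi_n)$ preserves $\rho$ iff each $\psi_i$ is a unary operation on $D_{x_i}$ and $\Psi$ maps $\rho$ into $\rho$; moreover $\rho'$ is the smallest set containing $\alpha$ that is closed (coordinatewise) under the monoid of such vector-functions, restricted to those coordinatewise images that stay in the ambient product $D_{x_1}\times\cdots\times D_{x_n}$. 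The key observation is the standard fact (from the Galois theory of clones/vector-functions, as used in the definition of a key relation earlier in the paper) that a tuple $\gamma$ lies in $\rho'$ iff the following ``local'' condition holds: for every relation $\tau$ of the appropriate shape preserved by all vector-functions that preserve $\rho$, if $\alpha\in\tau$ then $\gamma\in\tau$. Since $\VPol(\rho)$ is a closed set of vector-functions, the relations it preserves are exactly those pp-definable from $\rho$ using a restricted set of constructions — precisely the constructions allowed in $\ExpShort(\Omega)$: take copies of $\Omega$ and identify output variables, but do NOT fuse distinct coordinates of a single copy.

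Concretely, I would proceed as follows. For each coordinate $i$, let $\beta_i$ range over all elements of $D_{x_i}$ reachable from $\alpha$ by applying preserving vector-functions (a finite set). For each candidate combination I want to either witness it inside $\rho'$ or block it. The cleanest route: enumerate the (finitely many) unary vector-functions $f_1,\ldots,f_M$ preserving $\rho$, and for the formula $\Omega$ introduce $M$ disjoint copies $\Omega^{(1)},\ldots,\Omega^{(M)}$ on fresh variable sets, each with its own ``output'' tuple $(x_1^{(j)},\ldots,x_n^{(j)})$. Now I would connect these copies: for each coordinate $i$ and each pair $j,j'$ with $f_j(\alpha)$ and $f_{j'}(\alpha)$ related by a preserving unary map in that coordinate, add a binary reflexive constraint linking $x_i^{(j)}$ and $x_i^{(j')}$ recording that unary map. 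Finally I would project onto a single representative output tuple $(x_1,\ldots,x_n)$. The binary relations introduced are reflexive, and each copy of a constraint of $\Omega$ appears verbatim, so this formula lies in $\ExpShort(\Omega)$ as required by the statement. Verifying that the projection onto $(x_1,\ldots,x_n)$ is exactly $\rho'$ amounts to: (a) soundness — any tuple produced is $f(\alpha)$ for a composite of the chosen vector-functions, hence in $\rho'$; (b) completeness — any $f(\alpha)$ with $f\in\VPol(\rho)$ can be reached, because $f$ factors through the generators recorded in the gluing.

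The main obstacle I anticipate is step (b), completeness — more precisely, arranging the gluing so that \emph{every} vector-function preserving $\rho$ is captured, not just a generating set, and doing so with only the moves allowed in $\ExpShort$ (no fusion of distinct coordinates within one copy, only reflexive binary relations and weaker/equal copies of constraints of $\Omega$). The resolution should be the closure argument: the set of tuples $\{f(\alpha)\mid f\in\VPol(\rho)\}$ is, by definition, closed under $\VPol(\rho)$, and conversely any relation invariant under $\VPol(\rho)$ and containing $\alpha$ contains $\rho'$; since the pp-formula I construct defines a relation that (i) is invariant under $\VPol(\rho)$ — because each building block is — and (ii) contains $\alpha$ — witnessed by the all-$\alpha$ assignment using reflexivity — it must contain $\rho'$, while soundness gives the reverse inclusion. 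One has to be slightly careful that ``invariant under $\VPol(\rho)$'' for the constructed relation follows formally from the fact that $\Omega$ defines $\rho$ and all added binary relations are themselves preserved by $\VPol(\rho)$ (a reflexive binary relation whose off-diagonal pairs come from preserving unary maps is automatically preserved); this is the point where I would spell out the details rather than wave hands. I expect no further difficulty beyond bookkeeping of variables and the finiteness of $\VPol(\rho)$.
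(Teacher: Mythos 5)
Your construction does not stay within $\ExpShort(\Omega)$, and as described it cannot define $\rho'$. First, $\ExpShort(\Omega)$ (unlike $\Expanded(\Omega)$) only admits constraints that are mapped by $S$ to actual constraints of $\Omega$; the extra ``binary reflexive constraints recording a unary map'' you use to glue the copies are not constraints of $\Omega$, so your formula is at best in $\Expanded(\Omega)$ --- and even that is doubtful, since the graph-plus-diagonal of a unary component of a vector-function need not be a subalgebra (preserved by $w$), and your claim that such a relation is ``automatically preserved'' by all of $\VPol(\rho)$ is false: for $(a,g(a))$ in the gluing relation and $\Psi=(\psi_1,\dots,\psi_n)\in\VPol(\rho)$ there is no reason why $(\psi_i(a),\psi_i(g(a)))$ should lie in it again, so the invariance step of your completeness argument breaks. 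Second, if your ``single representative output tuple'' is the output tuple of one of the verbatim copies $\Omega^{(j)}$, then the defined relation is contained in $\rho$; but $\rho'$ contains $\alpha\notin\rho$ (take the identity vector-function), and indeed $\rho\subsetneq\rho'$ (constant vector-functions onto tuples of $\rho$ preserve $\rho$), so no such projection can equal $\rho'$. If instead the output variables are fresh and attached only through the gluing relations, you are again outside $\ExpShort(\Omega)$ and the same invariance gap reappears; indexing the copies by an enumeration $f_1,\dots,f_M$ of $\VPol(\rho)$ does not help either, since membership in the defined relation must be witnessed by a single assignment, not by a choice of index.

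The paper's proof uses a different and much simpler gluing, by shared variables rather than by new relations: introduce one variable $x_i^{a}$ for every coordinate $i$ and every $a\in D_{x_i}$, and for every tuple $(b_1,\dots,b_n)\in\rho$ take one copy of $\Omega$ with output variables $x_1^{b_1},\dots,x_n^{b_n}$ (fresh internal variables), with no additional constraints at all. A solution of this formula is exactly a choice of unary maps $\psi_i$ (the value of $x_i^{a}$ being $\psi_i(a)$) such that $(\psi_1,\dots,\psi_n)\in\VPol(\rho)$, so the projection onto $(x_1^{a_1},\dots,x_n^{a_n})$, where $\alpha=(a_1,\dots,a_n)$, is precisely $\rho'$; and since every constraint is a copy of a constraint of $\Omega$, the formula lies in $\ExpShort(\Omega)$. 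To repair your argument you would have to replace the gluing-by-new-relations with this gluing-by-shared-variables, i.e., index copies by tuples of $\rho$ and variables by coordinate--element pairs.
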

\begin{proof}
Suppose $\alpha = (a_{1},\ldots,a_{n})$.
We introduce new variables $x_{i}^{a}$ for every $i\in\{1,2,\ldots,n\}$ and $a\in D_{x_{i}}$.
By $\Upsilon$ we denote the following formula
$\bigwedge\limits_{(b_{1},\ldots,b_{n})\in\rho} \rho(x_{1}^{b_{1}},\ldots,x_{n}^{b_{n}}).$
This formula can be understood in the following way.
If we encode a unary vector function
by variables so that 
$f(b_{1},\ldots,b_{n}) = 
(x_1^{b_{1}},\dots,x_n^{b_{n}})$
for every $b_{1},\ldots,b_{n}$, 
then the formula says that 
the vector function preserves $\rho$.
Then 
$\rho'$ can defined by a pp-formula $\Upsilon(x_{1}^{a_{1}},\ldots,x_{n}^{a_{n}})$.
To obtain the formula $\Omega'$ it is sufficient to
replace
each $\rho(x_{1}^{b_{1}},\ldots,x_{n}^{b_{n}})$ by 
a copy of $\Omega$ (replacing 
$x_{1},\dots,x_{n}$ with 
$x_{1}^{b_{1}},\dots,x_{n}^{b_{n}}$)
and then replace 
$x_{1}^{a_{1}},\dots,x_{n}^{a_{n}}$ with 
$x_{1},\dots,x_{n}$.
\end{proof}

\begin{conslem}\label{MaximalMeansKey}
Suppose a pp-formula $\Omega(x_{1},\ldots,x_{n})$ 
defines a relation without a tuple $\alpha\in D_{x_{1}}\times\dots\times D_{x_{n}}$,
$\Sigma$ is the set of all relations defined by
$\Upsilon(x_{1},\ldots,x_{n})$ where $\Upsilon\in\ExpShort(\Omega)$,
and $\rho$ is an inclusion-maximal relation in $\Sigma$ without the tuple $\alpha$.
Then $\alpha$ is a key tuple for $\rho$.
\end{conslem}

\begin{proof}
For every tuple $\beta\notin\rho$
we consider $\rho_{\beta} := \{f(\beta)\mid f\in \VPol(\rho)\}$.
Since $f$ can be 
a constant mapping to a tuple from $\rho$ and an identity, we have $\rho_{\beta}\supsetneq \rho$ for every $\beta$.
By Lemma~\ref{KeyDerivation}, $\rho_{\beta}$ 
should be in $\Sigma$.
Since $\rho$ is inclusion-maximal, $\alpha\in\rho_{\beta}$.
Therefore, any $\beta$ can be mapped to $\alpha$ 
by a unary vector-function preserving $\rho$, 
which means that $\alpha$ is a key tuple for $\rho$.
\end{proof}

The next lemma shows that 
we can apply 
the operation
$\ConOne$ and a nonlinear reduction $D^{(1)}$ to a 
pp-formula 
$\Upsilon(x_{1},\dots,x_{n})$ 
in any order, the result will be the same.
For the linear reduction a slight modification of the statement
is required (see Lemma~\ref{AddLinearVariables}).

\begin{lem}\label{SameConOneForNonlinear}
Suppose $D^{(1)}$ is a minimal nonlinear reduction for an instance $\Upsilon$,
the solution set of $\Upsilon$ is subdirect,
and
$\Upsilon^{(1)}(x_{1},\ldots,x_{n})$
defines a subdirect rectangular relation.
Then for every $i$
$$
(\ConOne(\Upsilon(x_{1},\ldots,x_{n}),i))^{(1)}=
\ConOne(\Upsilon^{(1)}(x_{1},\ldots,x_{n}),i).
$$
\end{lem}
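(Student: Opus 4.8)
The plan is to prove both inclusions between the binary relations
$(\ConOne(\Upsilon(x_{1},\ldots,x_{n}),x_{i}))^{(1)}$ and
$\ConOne(\Upsilon^{(1)}(x_{1},\ldots,x_{n}),x_{i})$ on the reduced domain
$D_{x_{i}}^{(1)}$. Write $\rho$ for the relation defined by
$\Upsilon(x_{1},\ldots,x_{n})$ and $\rho^{(1)}$ for the one defined by
$\Upsilon^{(1)}(x_{1},\ldots,x_{n})$; by hypothesis $\rho^{(1)}$ is subdirect and
rectangular, so $\ConOne(\rho^{(1)},i)$ is a congruence on $D_{x_{i}}^{(1)}$.
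One direction is essentially formal: any witness for
$(a,a')\in\ConOne(\rho,i)$ that already lives in $D^{(1)}$ witnesses
$(a,a')\in\ConOne(\rho^{(1)},i)$, so
$(\ConOne(\rho,i))^{(1)}\subseteq\ConOne(\rho^{(1)},i)$, and this inclusion does
not even need minimality of the reduction. The content is the reverse inclusion:
if $a,a'\in D_{x_{i}}^{(1)}$ and there exist values in $D^{(1)}$ for the other
variables completing both $a$ and $a'$ in $\rho^{(1)}$, we must produce a witness
showing $(a,a')\in\ConOne(\rho,i)$ that survives the restriction to $D^{(1)}$.

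**The key step.**
First I would express $\ConOne(\rho,i)$ as a relation defined by a pp-formula
built from $\Upsilon$ by taking two disjoint copies of $\Upsilon$, identifying all
variables except $x_{i}$ and the auxiliary copies $x_{i},x_{i}'$, and existentially
quantifying everything but $x_{i},x_{i}'$; call the defining pp-formula
$\Omega(x_{i},x_{i}')$. Now the reduction $D^{(1)}$ applied componentwise to all the
variables of $\Omega$ is again a reduction of the same type (absorbing, central, or
PC) with the same term operation, since products and conjunctions of the relevant
subuniverses behave correctly — this is exactly the content of
Corollaries~\ref{AbsImpliesCons}, \ref{CenterImpliesCons}, \ref{PCImplies}. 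Applying
those corollaries to the pp-formula $\Omega$ shows that restricting all its variables
to $D^{(1)}$ restricts $\ConOne(\rho,i)$ to a subuniverse of
$D_{x_{i}}^{(1)}\times D_{x_{i}}^{(1)}$ of the appropriate type (absorbing / central /
PC). Since the reduction is minimal, this restricted relation must equal the full
relevant subuniverse that it is forced to contain; in the nonlinear cases one argues
that the only absorbing/center/PC subuniverse of $D_{x_{i}}^{(1)}\times D_{x_{i}}^{(1)}$
containing the "graph part" forced by $\rho^{(1)}$ being subdirect and rectangular is
$\ConOne(\rho^{(1)},i)$ itself. Thus $(\ConOne(\rho,i))^{(1)}\supseteq
\ConOne(\rho^{(1)},i)$, completing the equality.

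**Where the difficulty lies.**
The main obstacle is making precise the minimality argument in the PC case, where
Corollary~\ref{PCImplies} requires subdirectness and the congruence obtained is a
PC subuniverse rather than just an absorbing one; here one needs
Lemma~\ref{PCRelationsLem} and Lemma~\ref{IdentificationDoesNotReducePC} to control
what happens when the two copies of $x_{i}$ are identified, exactly as in the PC
branch of the proof of
Lemma~\ref{ProperReductionPreservesCycleConAndIrreducability}. Specifically, one must
check that identifying $x_{i}$ and $x_{i}'$ in the pp-formula does not shrink the PC
variables, so that the restriction of $\ConOne(\rho,i)$ is still a subdirect
subuniverse of $D_{x_{i}}^{(1)}\times D_{x_{i}}^{(1)}$ and therefore, by minimality of
the PC reduction applied to the diagonal copy, equals $\ConOne(\rho^{(1)},i)$. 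The
absorbing and central cases are analogous but easier since no subdirectness hypothesis
is needed to propagate the reduction, and Corollary~\ref{CenterLessThanThree} /
Lemma~\ref{AbsLessThanThree} guarantee nonemptiness of the restricted relation. Once
the case analysis matching the type of $D^{(1)}$ is organized, the equality of the two
$\ConOne$'s follows by the double inclusion described above.
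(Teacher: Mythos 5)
There is a genuine gap: you have the two inclusions backwards. By the paper's convention, $(\ConOne(\rho,i))^{(1)}$ is $\ConOne(\rho,i)\cap\bigl(D_{x_{i}}^{(1)}\times D_{x_{i}}^{(1)}\bigr)$: only the pair $(a,b)$ is required to lie in the reduced domain, while the existential witnesses for the remaining variables may range over the \emph{full} domains. Hence the inclusion you call ``essentially formal'', $(\ConOne(\rho,i))^{(1)}\subseteq\ConOne(\rho^{(1)},i)$, is precisely the nontrivial content of the lemma: one must replace a witness over the full domains by a witness inside $D^{(1)}$, and your one-line justification (``any witness that already lives in $D^{(1)}$\dots'') assumes exactly what has to be proved. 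Conversely, the inclusion to which you devote the pp-formula and minimality machinery, $\ConOne(\rho^{(1)},i)\subseteq(\ConOne(\rho,i))^{(1)}$, is immediate, since $\rho^{(1)}\subseteq\rho$ and a witness in $D^{(1)}$ is in particular a witness in $D$. So the proposal in effect proves only the trivial half.

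Moreover, even read as an attempt at the hard half, the key step is not sound: minimality of $D^{(1)}$ says that each $D_{x}^{(1)}$ is a minimal absorbing/central/PC subuniverse of $D_{x}$; it does not yield the claim that ``the only absorbing/center/PC subuniverse of $D_{x_{i}}^{(1)}\times D_{x_{i}}^{(1)}$ containing the graph part \dots is $\ConOne(\rho^{(1)},i)$'', and no argument for that claim is offered. The paper's proof runs quite differently: assuming $(a,b)\in(\ConOne(\rho,i))^{(1)}\setminus\ConOne(\rho^{(1)},i)$, it doubles $\Upsilon$ to get a relation $\rho'$ with $a\alpha b\in\rho'$, uses subdirectness of $\rho^{(1)}$ to obtain diagonal witnesses $c\beta_{c}c\in\rho'^{(1)}$ for every $c\in D_{x_{i}}^{(1)}$, and in the absorbing (resp.\ central) case applies the binary (resp.\ ternary, via Corollary~\ref{ternaryAbsorption}) absorbing term to concatenations of $a\alpha b$ with these diagonal witnesses, producing a path from $a$ to $b$ with edges in $\ConOne(\rho^{(1)},i)$; rectangularity of $\rho^{(1)}$ makes $\ConOne(\rho^{(1)},i)$ a congruence, so transitivity gives the contradiction. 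In the PC case the paper factorizes the doubled relation, invokes Corollary~\ref{PCImplies} and Lemma~\ref{PCRelationsLem}, and constructs an auxiliary relation $\rho_{0}$ whose decomposition into binary relations yields the contradiction. None of these ingredients (diagonal witnesses from subdirectness, the absorbing-term gluing, transitivity via rectangularity, the PC analysis) appears in your proposal, so the substantive direction of the lemma is missing.
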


\begin{proof}
Put $\sigma_{0} = \ConOne(\Upsilon(x_{1},\ldots,x_{n}),i)$,
$\sigma_{1} = \ConOne(\Upsilon^{(1)}(x_{1},\ldots,x_{n}),i)$.
Let $\{x_{1},\ldots,x_{n},y_{1},\ldots,y_{s}\}$ be the set of all variables of $\Upsilon$.
Let 
$\Xi = \Upsilon\wedge \Upsilon_{x_{i},y_{1},\ldots,y_{s}}^{x_{i}',y_{1}',\ldots,y_{s}'}$.
We can check that 
$\sigma_{0}$
is defined by $\Xi(x_{i},x_{i}')$, 
and 
$\sigma_{1}$
is defined by $\Xi^{(1)}(x_{i},x_{i}')$.
Since 
$\Upsilon^{(1)}(x_{1},\ldots,x_{n})$
defines a rectangular relation, 
$\sigma_{1}$ is a congruence.
It follows from the definition that 
$\sigma_{0}^{(1)}\supseteq\sigma_{1}$.
Let us prove the backward inclusion.
Choose a pair $(a,b)\in \sigma_{0}^{(1)}$.
Since $\sigma_{0}$ is defined by 
$\Xi(x_{i},x_{i}')$, 
by Lemma~\ref{LinkedStayLinked},
$a$ and $b$ should be linked in 
$\Xi^{(1)}(x_{i},x_{i}')$.
Since $\sigma_{1}$ is a congruence, 
$a$ and $b$ can be linked only if $(a,b)\in\sigma_{1}$,
which means that $\sigma_{0}^{(1)} = \sigma_{1}$.
\end{proof}

\begin{lem}\label{AddLinearVariables}
Suppose $D^{(1)}$ is a minimal linear reduction for $\Upsilon$, 
$\Upsilon^{(1)}(x_{1},\ldots,x_{n})$ defines a subdirect rectangular relation,
$\Var(\Upsilon) = \{x_{1},\ldots,x_{n},v_{1},\ldots,v_{r}\}$,
and $\Omega = \Upsilon \wedge \bigwedge_{i=1}^{r} \sigma_{i}(v_{i},u_{i})$,
where $\sigma_{i}=\ConLin(D_{v_{i}})$. 
Then
$(\ConOne(\Omega(x_{1},\ldots,x_{n},u_{1},\ldots,u_{r}), j))^{(1)} =
\ConOne(\Upsilon^{(1)}(x_{1},\ldots,x_{n}),j))$ for every~$j$.
\end{lem}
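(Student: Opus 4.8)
The plan is to follow the strategy of Lemma~\ref{SameConOneForNonlinear}, with the affine structure of linear algebras (Lemma~\ref{RelWithLinearPart}) taking over the role played there by absorption and by Lemma~\ref{PCRelationsLem}. Fix $j$; WLOG $j=1$. Let $\rho$ be the relation defined by $\Upsilon$ on the variables $x_1,\ldots,x_n,v_1,\ldots,v_r$ and let $\pi=\proj_{x_1,\ldots,x_n}(\rho)$, so $\Upsilon(x_1,\ldots,x_n)$ defines $\pi$ and, by hypothesis, $\pi^{(1)}$ is subdirect and rectangular. First I would unwind the left-hand side: $(a,b)\in\ConOne(\Omega(x_1,\ldots,x_n,u_1,\ldots,u_r),x_1)$ holds exactly when there are an assignment $\bar c$ to $x_2,\ldots,x_n$ and assignments $\bar v,\bar v'$ to $v_1,\ldots,v_r$ with $(a,\bar c,\bar v),(b,\bar c,\bar v')\in\rho$ and $(v_i,v_i')\in\sigma_i$ for all $i$; the auxiliary variable $u_i$ contributes nothing beyond witnessing $(v_i,v_i')\in\sigma_i$, since $\sigma_i=\ConLin(D_{v_i})$ is an equivalence relation.

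The point that makes one of the two inclusions easy is that a \emph{minimal} linear reduction collapses the added constraints: since $\sigma_i$ is a congruence, each of its classes is a subuniverse compatible with $\sigma_i$, hence a linear subuniverse of $D_{v_i}$, and therefore a minimal linear subuniverse is exactly one $\sigma_i$-class. Thus $D_{v_i}^{(1)}$ is a single $\sigma_i$-class, so any two of its elements are $\sigma_i$-congruent. Consequently, if $(a,b)\in\ConOne(\Upsilon^{(1)}(x_1,\ldots,x_n),x_1)$, witnessed over $D^{(1)}$ by some $\bar c$ with lifts $\bar v,\bar v'\in D^{(1)}$ of the two completions to $\rho^{(1)}$, then automatically $(v_i,v_i')\in\sigma_i$ for every $i$, so taking $u_i:=v_i$ shows that $(a,b)$ lies in the left-hand side. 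This gives the inclusion $\ConOne(\Upsilon^{(1)}(x_1,\ldots,x_n),x_1)\subseteq(\ConOne(\Omega(x_1,\ldots,x_n,u_1,\ldots,u_r),x_1))^{(1)}$.

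For the reverse inclusion I would argue as in Lemma~\ref{SameConOneForNonlinear}. Given $(a,b)$ in the left-hand side, pick $\bar c,\bar v,\bar v'$ as above with $(v_i,v_i')\in\sigma_i$, form the gadget $\rho'(x_1,\bar x',\bar v,\bar v',x_1')=\rho(x_1,\bar x',\bar v)\wedge\rho(x_1',\bar x',\bar v')$, and factorize all the $v$-coordinates by the $\sigma_i$; since the matched pairs $v_i,v_i'$ collapse to the same value, this produces a tuple of the relation $\widehat\rho(x_1,\bar x',\widehat w)\wedge\widehat\rho(x_1',\bar x',\widehat w)$, where $\widehat\rho$ is $\rho$ with its $v$-coordinates factored and hence ranges, in those coordinates, over the linear algebras $D_{v_i}/\sigma_i$. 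Because $D^{(1)}$ is a linear reduction, no $D_{x_k}$ carries a binary absorbing subuniverse (and linear algebras do not either), so by Lemma~\ref{GenBinAbToBinAb} together with (iterated applications of) Lemma~\ref{RelWithLinearPart}, after the routine reductions to the subdirect case, $\widehat\rho$ has the parallelogram property and in particular is rectangular. The task is then to transport the common $x_1$-completion $(\bar c,\widehat w)$ of $a$ and $b$ in $\widehat\rho$ down onto $D^{(1)}$ — that is, into the sub-relation $\pi^{(1)}\times\{\widehat 0\}$ of $\widehat\rho$, where $\widehat 0$ is the tuple of the one-class images $D_{v_i}^{(1)}$ — using that $\pi^{(1)}$ is subdirect and that the parallelogram property allows one to combine the $x_1$-completion of $a$ and of $b$ with tuples of $\rho^{(1)}$. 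This transport step is the main obstacle: it is precisely the linear counterpart of the core of Lemma~\ref{SameConOneForNonlinear}, and it is here that one must genuinely use both the no-binary-absorption hypothesis on the $D_{x_k}$ and the parallelogram property (in place of the absorbing or ternary term available in the nonlinear case) in order to push the completion into the reduced domains while keeping the $x_1$-values $a,b$ fixed and the $x_2,\ldots,x_n$-values common.
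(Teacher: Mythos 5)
Your first inclusion is fine: since $D^{(1)}$ is a \emph{minimal} linear reduction, each $D_{v_i}^{(1)}$ is a single class of $\sigma_i=\ConLin(D_{v_i})$, so the two witnesses $v_i,v_i'$ coming from a common completion in $\Upsilon^{(1)}$ are automatically $\sigma_i$-related and one may take $u_i:=v_i$. (The same observation is what one needs, and what you do not state, for converting tuples of $\Omega$ whose $u$-part lies in $D^{(1)}$ back into tuples of $\Upsilon^{(1)}$: the existential witnesses $v_i$ are forced into $D_{v_i}^{(1)}$.) The problem is the reverse, nontrivial inclusion: there you only build the gadget, factorize the $v$-coordinates, and then explicitly declare the decisive step -- transporting the common completion of $a$ and $b$ into $D^{(1)}$ while keeping $a,b$ fixed and the $x_2,\ldots,x_n$-part common -- to be ``the main obstacle''. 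That step \emph{is} the lemma, so the proposal is incomplete exactly where the content lies; and the toolkit you suggest for it (parallelogram property of the factored relation via Lemma~\ref{RelWithLinearPart}, the no-binary-absorption hypothesis on the $D_{x_k}$) is not the mechanism that makes it work.

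The missing idea is a short WNU-averaging argument. Let $\rho'$ be the relation defined by $\Omega(x_1,\ldots,x_n,u_1,\ldots,u_r)$ and $\rho$ the one defined by $\Upsilon^{(1)}(x_1,\ldots,x_n)$, and suppose $a,b\in D_{x_1}^{(1)}$ have a common, possibly unreduced, completion $\beta$ with $a\beta,b\beta\in\rho'$. Subdirectness of $\rho$ gives completions $\alpha_a,\alpha_b$ of $a$ and $b$ lying in $D^{(1)}$. Applying the $m$-ary special WNU $w$ coordinatewise to $a\alpha_a,\,b\beta,\ldots,b\beta$ (and to the three analogous combinations) produces tuples of $\rho'$ whose first coordinates are $a$, $c$, $c$, $b$ with $c=w(a,b,\ldots,b)\in D_{x_1}^{(1)}$ and whose remaining parts are $w(\alpha_a,\beta,\ldots,\beta)$ and $w(\alpha_b,\beta,\ldots,\beta)$. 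The key point is that these remaining parts land back in $D^{(1)}$: each $D_y^{(1)}$ is one class of $\ConLin(D_y)$, modulo which $w$ acts as $x_1+\dots+x_m$ with $m\equiv 1$ by idempotency, so the $(m-1)$ copies of $\beta$ contribute nothing. Together with the class observation above this gives $(a,c),(c,b)\in\ConOne(\rho,1)$, and rectangularity of $\rho$ (which makes $\ConOne(\rho,1)$ a congruence, hence transitive) yields $(a,b)\in\ConOne(\rho,1)$. No appeal to parallelogram properties of the factored relation or to absorption is needed; without this averaging idea, or a worked-out substitute for it, your argument does not establish the hard inclusion.
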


\begin{proof}
Without loss of generality assume that $j=1$.
Since the reduction 
$D^{(1)}$ is minimal, we have the following inclusion
$$(\ConOne(\Omega(x_{1},\ldots,x_{n},u_{1},\ldots,u_{r}), 1))^{(1)}
  \supseteq
\ConOne(\Upsilon^{(1)}(x_{1},\ldots,x_{n}), 1).
$$
Let us  prove the backward inclusion.
Suppose 
$\Omega(x_{1},\ldots,x_{n},u_{1},\ldots,u_{r})$ and
$\Upsilon^{(1)}(x_{1},\ldots,x_{n})$
define the relations $\rho'$ and $\rho$ respectively.
Choose $a,b\in D_{x_{1}}^{(1)}$ such that $(a,b)\in \ConOne(\rho',1)$.
For some $\beta$ we have $a\beta,b\beta\in\rho'$.
Since
$\rho$ is subdirect,
there exist $\alpha_{a}$ and $\alpha_{b}$ in $D^{(1)}$ such that $a\alpha_{a},b\alpha_{b}\in\rho'$.
Since $w$ preserves $\rho'$, 
\begin{align*}w(a,a,\ldots,a) w(\alpha_{a},\beta,\ldots,\beta) \in\rho',\\
w(a,b,\ldots,b) w(\alpha_{a},\beta,\ldots,\beta) \in\rho',\\
w(b,\ldots,b,a) w(\alpha_{b},\beta,\ldots,\beta) \in\rho',\\
w(b,b,\ldots,b) w(\alpha_{b},\beta,\ldots,\beta) \in\rho'.
 \end{align*} 
By Lemma~\ref{LinearSpecialWNU}, $w(\alpha_{a},\beta,\ldots,\beta)$ and
$w(\alpha_{b},\beta,\ldots,\beta)$ belong to $D^{(1)}$.
Then,
for $c= w(a,b,\ldots,b)=w(b,\ldots,b,a)$ we have
$(a,c),(c,b)\in \ConOne(\rho,1)$.
Since $\rho$ is rectangular,
we have $(a,b)\in \ConOne(\rho,1)$.
\end{proof}

\subsection{Adding linear variable}

Below we formulate few statements from \cite{KeyRelations}
that will help us to prove the main property of a bridge.
This property will be the main ingredient of the proof 
of the fact that 
$A'$ from the informal description of the algorithm 
should be of codimension 1. 

A relation $\rho\subseteq A^{n}$ is called \emph{strongly rich}
if for every tuple
$(a_{1},\ldots,a_{n})$ and every $j\in \{1,\ldots,n\}$ there exists a unique $b\in A$
such that $(a_{1},\ldots,a_{j-1},b,a_{j+1},\ldots,a_n)\in\rho.$
We will need two statements from \cite{KeyRelations}.

Recall that for any bridge $\rho$ by $\widetilde{\rho}$ we denote 
the binary relation defined by 
$\widetilde{\rho}(x,y) = \rho(x,x,y,y)$.

\begin{thm}\label{StronglyRichRelationTHM}\cite{KeyRelations}
Suppose $\rho\subseteq A^{n}$ is a strongly rich relation
preserved by an idempotent WNU.
Then there exists an abelian group $(A;+)$
and bijective mappings
$\phi_1$, $\phi_2$, \ldots,$\phi_n: A\to A$ such that
\[\rho = \{(x_1,\ldots,x_n)\mid \phi_1(x_1)+\phi_2(x_2) + \ldots +\phi_n(x_n) = 0\}.\]
\end{thm}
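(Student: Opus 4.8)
The plan is to reformulate $\rho$ as the graph of an $(n-1)$-ary quasigroup, to show that admitting an idempotent WNU polymorphism forces this quasigroup to be \emph{medial} (entropic), and then to invoke the classical isotopy theorems to read off the abelian group and the bijections $\phi_i$. First I would record the easy structural facts: a strongly rich relation is nonempty and, applying the uniqueness condition at a coordinate $j$ to a tuple of $\rho$ with coordinate $i\neq j$ perturbed, one sees $\rho$ is subdirect; consequently, fixing any set of coordinates to values in $A$ yields a relation on the remaining coordinates that is again strongly rich (subdirectness guarantees nonemptiness) and still preserved by $w$, since $w$ is idempotent and hence fixes every constant column. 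The cases $n\le 2$ are trivial ($n=1$: a one-element set; $n=2$: the graph of a bijection $f\colon A\to A$, so take any abelian group structure on $A$, $\phi_2=\mathrm{id}$, $\phi_1(x)=-f(x)$). For $n\ge 2$, strong richness says exactly that, picking coordinate $n$ as output, $\rho=\{(x_1,\dots,x_{n-1},g(x_1,\dots,x_{n-1}))\}$ for an $(n-1)$-ary operation $g$ that is a quasigroup operation in each of its arguments, the remaining coordinates supplying the parastrophes of $g$.

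The heart of the matter is that $g$ preserves its own graph $\rho$ if and only if $g$ satisfies the $(n-1)$-ary medial law (apply $g$ to the rows of an $(n-1)\times(n-1)$ matrix and then to the resulting column, versus columns then rows), so the theorem reduces to showing that a quasigroup graph with an idempotent WNU polymorphism is medial. I would prove this by working the WNU identities $w(y,x,\dots,x)=w(x,y,x,\dots,x)=\dots=w(x,\dots,x,y)$ against the quasigroup structure: since $w$ commutes with $g$ and with all its parastrophes, one analyses how $w$ acts on the fibres of the projection congruences of $\rho$, shows that there it acts as an iterated translation, and extracts the entropic identity from the resulting Mal'tsev-type behaviour. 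A more conceptual alternative is commutator-theoretic: show that the subalgebra $\rho\le\mathbf A^{n}$, together with its system of projection congruences, satisfies the term condition (is abelian), whence in the idempotent Taylor setting $\mathbf A$, and therefore $\rho$, is affine over an abelian group. It is worth noting that rectangularity is not enough by itself: for $n=3$ every quasigroup graph already has the parallelogram property, so the WNU must be used in an essential way.

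Once $g$ is medial, the Toyoda--Bruck theorem (for $n=3$) and Belousov's $n$-ary generalisation produce bijections $\alpha_1,\dots,\alpha_{n-1},\gamma\colon A\to A$ and an abelian group $(A;+)$ with $\gamma(g(x_1,\dots,x_{n-1}))=\alpha_1(x_1)+\dots+\alpha_{n-1}(x_{n-1})$; putting $\phi_i=\alpha_i$ for $i<n$ and $\phi_n(z)=-\gamma(z)$ (a bijection, with the additive constant of the isotopy folded into $\gamma$) yields $\rho=\{(x_1,\dots,x_n)\mid\phi_1(x_1)+\dots+\phi_n(x_n)=0\}$, as required. The main obstacle is precisely the mediality step: converting the clone-theoretic hypothesis ``$\rho$ has an idempotent WNU polymorphism'' into an equational law on the quasigroup is where all the work lies, and it is exactly where having a genuine Taylor term (rather than, say, only a rectangularity witness) is indispensable. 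If one would rather induct on $n$ than cite Belousov, the sole extra point is to verify that the abelian group obtained for one fixed value of the last coordinate can be chosen uniformly in that coordinate, which again comes down to the ternary case and is straightforward once that is settled.
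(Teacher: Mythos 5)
The central step of your plan—``the theorem reduces to showing that a quasigroup graph with an idempotent WNU polymorphism is medial''—is a reduction to a false statement, so the strategy cannot be carried out. Take $A=\mathbb{Z}_2^2$, let $\alpha,\beta$ be two non-commuting automorphisms of $(A;+)$ (two non-commuting elements of $GL_2(\mathbb{F}_2)$), and let $\rho=\{(x,y,z)\mid z=\alpha(x)+\beta(y)\}$. This $\rho$ is strongly rich, it is a subgroup of $(A;+)^3$ and hence is preserved by the idempotent WNU (indeed minority) operation $w(x_1,x_2,x_3)=x_1+x_2+x_3$, and it satisfies the conclusion of the theorem with $\phi_1=\alpha$, $\phi_2=\beta$, $\phi_3=\mathrm{id}$. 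Yet the associated quasigroup $g(x,y)=\alpha(x)+\beta(y)$ is not medial: the entropic law forces $\alpha\beta=\beta\alpha$. The point is that the conclusion of the theorem describes arbitrary isotopes of abelian groups (the $\phi_i$ are only bijections), which is a strictly larger class than the medial quasigroups produced by Toyoda--Bruck--Belousov (there the isotopy components are commuting automorphisms plus a constant); equivalently, your observation that ``$g$ preserves its own graph iff $g$ is medial'' is correct but irrelevant, because the hypothesis supplies a WNU polymorphism, not $g$ itself, and that hypothesis does not imply mediality. A workable intermediate target would be a characterization of isotopy to an abelian group, e.g.\ the Thomsen/quadrangle-type closure condition, derived from the WNU; your text contains no argument of this kind.

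The fallback you sketch (show $\rho\le\mathbf A^n$ is abelian in the commutator sense, then use ``abelian $+$ idempotent Taylor $\Rightarrow$ affine'') at least aims at a statement that is true for $n\ge 3$, but as written it is only a pointer: you give no argument for the term condition, and some care is needed since for $n=2$ the algebra $(A;w)$ need not be abelian at all (as you implicitly acknowledge by treating $n=2$ separately), so the abelianness must come from the strong richness in arities $n\ge 3$, which is exactly the content that is missing. Note also that the paper does not prove this theorem; it is imported from the cited work on key relations, where the argument constructs the group structure from the relation directly rather than passing through an equational (medial) law, precisely because the medial law is too strong.
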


\begin{lem}\label{LinearWNU}\cite{KeyRelations}
Suppose $(G;+)$ is a finite abelian group,
the relation $\sigma\subseteq G^{4}$ is defined by
$\sigma = \{(a_1,a_2,a_3,a_4)\mid a_1+a_2=a_3+a_4\}$,
and $\sigma$ is preserved by an idempotent WNU $f$.
Then $f(x_{1},\ldots,x_{n}) = t\cdot x_{1}+t\cdot x_2 + \ldots + t\cdot x_{n}$
for some $t\in \{1,2,3,\ldots\}$.
\end{lem}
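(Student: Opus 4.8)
\emph{Plan.} The statement should reduce to pure abelian-group bookkeeping: first show $f$ is additive, then write it as a sum of endomorphisms of $G$, and finally use idempotency and the weak near-unanimity identities to force those endomorphisms to coincide and to be a single scalar.

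For additivity, I would apply the preservation of $\sigma$ to the $n$ rows $(x_i,y_i,x_i+y_i,0)$, $i=1,\dots,n$, which all lie in $\sigma$ because $x_i+y_i=(x_i+y_i)+0$. Applying $f$ coordinatewise, the tuple $(f(\bar x),f(\bar y),f(\overline{x+y}),f(\bar 0))$ lies in $\sigma$, so $f(\bar x)+f(\bar y)=f(\overline{x+y})+f(\bar 0)$; since $f$ is idempotent, $f(\bar 0)=0$, and hence $f\colon G^{n}\to G$ is a homomorphism of abelian groups. Setting $\phi_j(x):=f(0,\dots,0,x,0,\dots,0)$ with $x$ in position $j$, additivity gives $f(x_1,\dots,x_n)=\phi_1(x_1)+\dots+\phi_n(x_n)$ with each $\phi_j\in\operatorname{End}(G)$.

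Next I would feed in the remaining hypotheses. Idempotency gives $\phi_1+\dots+\phi_n=\operatorname{id}_G$. The weak near-unanimity identity says that $f(x,\dots,x,y,x,\dots,x)$, with $y$ in position $k$, is independent of $k$; using the previous relation this expression equals $\phi_k(y)+(\operatorname{id}_G-\phi_k)(x)$, so equating the values for positions $k$ and $1$ and specializing $x=0$ yields $\phi_k=\phi_1$ for every $k$. Thus all $\phi_j$ equal a common endomorphism $\psi$ with $n\psi=\operatorname{id}_G$. Finally, $n\psi=\operatorname{id}_G$ means multiplication by $n$ on $G$ is onto, hence (as $G$ is finite) an automorphism, so $n$ is coprime to the exponent $e$ of $G$; choosing $t>0$ with $nt\equiv 1\pmod e$ gives $\psi=\mu_t$ where $\mu_t(x)=t\cdot x$, and therefore $f(x_1,\dots,x_n)=t\cdot x_1+\dots+t\cdot x_n$.

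The routine parts are extracting additivity from that one well-chosen family of rows and the subsequent manipulations inside $\operatorname{End}(G)$; the step that genuinely needs an idea is the last one, since a general endomorphism of a finite abelian group is not multiplication by a scalar — it is only the strong constraint $n\psi=\operatorname{id}_G$, which forces $\psi$ to be invertible with inverse the power map $\mu_t$, that makes the conclusion go through.
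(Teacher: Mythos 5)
Your proof is correct. Note that the paper does not prove this lemma at all --- it is imported from the cited reference \cite{KeyRelations} --- so there is no internal argument to compare against; your write-up is a complete, self-contained replacement. The three steps all check out: applying $f$ columnwise to the rows $(x_i,y_i,x_i+y_i,0)\in\sigma$ together with $f(0,\dots,0)=0$ gives additivity of $f\colon G^n\to G$, hence $f(x_1,\dots,x_n)=\phi_1(x_1)+\dots+\phi_n(x_n)$ with $\phi_j\in\mathrm{End}(G)$; idempotency gives $\phi_1+\dots+\phi_n=\mathrm{id}_G$, and the WNU identities with $x=0$ give $\phi_1=\dots=\phi_n=\psi$, so $n\psi=\mathrm{id}_G$. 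You also handle the one genuinely delicate point correctly: from $\mu_n\circ\psi=\mathrm{id}_G$ on a finite group, $\mu_n$ is surjective hence an automorphism, so $\gcd(n,e)=1$ for the exponent $e$, and the unique inverse of $\mu_n$ is $\mu_t$ for any positive $t$ with $nt\equiv 1\pmod e$, forcing $\psi=\mu_t$ and the stated form of $f$.
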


\begin{thm}\label{LinkedBridgeThm}
Suppose $\sigma\subseteq A^{2}$ is a congruence,
 $\rho$ is a bridge from
$\sigma$ to $\sigma$ such that $\widetilde{\rho}$ is a full relation,
$\proj_{1,2}(\rho) = \omega$, $\omega$ is a minimal relation stable under $\sigma$ such that
$\omega\supsetneq \sigma$.
Then there exists a prime number $p$
and a relation $\zeta\subseteq A\times A\times \mathbb Z_{p}$
such that
$\proj_{1,2}\zeta = \omega$
and $(a_{1},a_{2},b)\in\zeta$ 
implies that 
$(a_{1},a_{2})\in \sigma\Leftrightarrow (b=0)$.
\end{thm}

\begin{proof}
Since the relations $\rho$ and $\omega$ are stable under $\sigma$,
we consider $A/\sigma$ instead of $A$ and assume that $\sigma$ is the equality relation.

Without loss of generality we assume that
$\rho(x_{1},x_{2},y_{1},y_{2}) = \rho(y_{1},y_{2},x_{1},x_{2})$
and
$(a,b,a,b)\in\rho$ for any $(a,b)\in\omega$.
Otherwise,
we consider the relation $\rho'$ instead of $\rho$, where
$$\rho'(x_{1},x_{2},y_{1},y_{2}) = \exists z_{1}\exists z_{2}\;
\rho(x_{1},x_{2},z_{1},z_{2}) \wedge \rho(y_{1},y_{2},z_{1},z_{2}).$$

We prove by induction on the size of $A$.
Assume that for some subuniverse $A'\subsetneq A$
we have $(A'\times A')\cap (\omega\setminus \sigma) \neq \varnothing$.
By $\sigma'$ we denote the equality relation on $A'$.
By $\omega'$ we denote a minimal relation such that
$\sigma'\subsetneq\omega'\subseteq (A'\times A')\cap \omega$.
Since $\proj_{1,2}(\rho\cap (\omega'\times\omega'))
=\omega'\supsetneq \sigma'$, 
the relation $\rho\cap (\omega'\times\omega')$
is a bridge from $\sigma'$ to $\sigma'$.
The inductive assumption for
$\rho\cap (\omega'\times\omega')$ implies that there exists a relation
$\zeta'\subseteq A'\times A'\times \mathbb Z_{p}$
such that
$(x_{1},x_{2},0)\in \zeta'\Leftrightarrow (x_{1},x_{2})\in\sigma'$
and $\proj_{1,2}(\zeta') = \omega'$.
Put $$\zeta(x_{1},x_{2},z) = \exists y_{1} \exists y_{2} \;\rho(x_{1},x_{2},y_{1},y_{2})\wedge \zeta'(y_1,y_2,z).$$
By the minimality of $\omega$, 
we have $\proj_{1,2}(\zeta)=\omega$.
The remaining property of $\zeta$ 
follows from the fact that 
$\rho$ is a bridge and the properties of $\zeta'$.

Thus, we may assume that for any subuniverse $A'\subsetneq A$
we have $(A'\times A')\cap (\omega\setminus\sigma)= \varnothing$.

Consider a pair $(a_{1},a_{2})\in \omega\setminus\sigma$.
Let $A' = \{a\mid (a_{1},a)\in\omega\}$.
Since $\omega\supsetneq \sigma$, 
we have $a_{1}\in A'$, and therefore
$(a_{1},a_{2})\in(A'\times A')\cap (\omega\setminus\sigma) \neq\varnothing$
and $A'=A$.
Thus, $\{a\mid (a_{1},a)\in\omega\} =\{a \mid (a,a_{2})\in\omega\} = A$.
Hence, any element connected in $\omega$ to some other element is connected to all elements. 
Therefore, $(a_{1},a),(a,a_{2})\in\omega$ for every $a\in A\setminus\{a_{1},a_{2}\}$,
which for $|A|>2$ implies that $\omega = A\times A$.

If $|A|=2$ and $\omega \neq A\times A$ then $\omega = \{(a,a),(a,b),(b,b)\}$ and $\rho$ is uniquely defined.
We know \cite{Post} that 
any clone on a 2-element domain containing an idempotent WNU operation 
contains majority operation, conjunction, disjunction, or 
minority operation. 
None of them preserve $\rho$, 
which contradicts our assumptions.

Thus, we proved that $\omega = A\times A$ and $A$ has no proper subuniverses of size at least 2.

Note the remaining part of the proof could also be 
derived from known facts of commutator theory.
In fact, it follows from the properties of $\rho$ 
that $\sigma$ (the equality) is an equivalence block of a congruence on $A^{2}$, 
which means that $A$ is Abelian. 
Using Abelianess for Taylor varieties (since we have a WNU), 
we could also define the required ternary relation 
$\zeta$ (see \cite{bergman2011universal} for more details).
Nevertheless, we do not want to introduce new algebraic notions,
and give a proof based on two claims from \cite{KeyRelations}.

Let us show that for any $a_{1},a_{2},a_{3}\in A$ there exists a unique $a_{4}$ 
such that $(a_{1},a_{2},a_{3},a_{4})\in \rho$. 
For every $a\in A$ put $\lambda_{a}(x_1,x_2) = \exists y_2 \rho(x_1,x_2,a,y_{2})$.
It is easy to see that $\sigma\subsetneq\lambda_{a}\subseteq\omega$.
Therefore $\lambda_{a}=\omega = A\times A$ for every $a$.
We consider the unary relation defined by $\delta(x) = \rho(a_{1},a_{2},a_{3},x)$.
By the above fact $\delta$ is not empty.
Since $\rho$ is a bridge, $\delta$ is not full.
If $\delta$ contains more than one element, then
we get a contradiction with the fact that there are no proper subuniverses
of size at least 2.

Then $\rho$ is a strongly rich relation.
By Theorem~\ref{StronglyRichRelationTHM}, there exist an Abelian group $(A;+)$ and bijective mappings
$\phi_1, \phi_2, \phi_3,\phi_4\colon A\to A$ such that
$$\rho = \{(a_1,a_2,b_1,b_2)\mid \phi_1(a_1)+\phi_2(a_2) + \phi_{3}(b_{1}) +\phi_4(b_2) = 0\}.$$
Without loss of generality
we can assume that 
$\phi_{1}(x) = x$.
We know that $(a,a,b,b)\in\rho$ for any $a,b\in A$,
then 
$\phi_{1}(x)+\phi_{2}(x)+\phi_{3}(0)+\phi_{4}(0)=0$, which means 
that $\phi_{2}(x) = -x - \phi_{3}(0)-\phi_{4}(0)$.
Since $(a,b,a,b)\in\rho$ for any $a,b\in A$,
we have 
$\phi_{1}(x)+\phi_{2}(0)+\phi_{3}(x)+\phi_{4}(0) = 0$, which means 
that 
$\phi_{3}(x) = -\phi_{1}(x) - \phi_{2}(0)-\phi_{4}(0)
=-x+\phi_{3}(0)$.
Similarly, 
since $\phi_{1}(0)+\phi_{2}(0)+\phi_{3}(x)+\phi_{4}(x)=0$,
we have $\phi_{4}(x) = x-\phi_{3}(0)-\phi_{2}(0)-\phi_{1}(0) =
x+\phi_{4}(0)$.
Substituting 
this into the definition of 
$\rho$ we obtain 
$$\rho = \{(a_1,a_2,b_1,b_2)\mid a_{1}-a_{2}-a_{3}+a_{4} = 0\}.$$


It follows from Lemma~\ref{LinearWNU}
that $w$ on $A$ is defined by $t(x_{1}+\ldots+x_{m})$, 
Since $w$ is special, $t\cdot (t-1)$ should be divided by the 
order of any element of $A$. By the idempotency, 
$t$ and the order of any element are coprime. 
Hence, $t-1$ should be divided by the order of any element and we may put $t=1$.
Therefore, the relation $\zeta\subseteq A\times A\times A$ defined by
$\zeta = \{(b_1,b_2,b_{3})\mid b_{1}-b_{2}+b_{3}=0\}$ is preserved by $w$.
If $(A;+)$ is not simple, then 
any equivalence class of a congruence is
a proper subuniverse of size at least 2,
which contradicts our assumption.
Therefore,
$(A;+)$ is a simple Abelian group.
\end{proof}

\begin{cons}\label{LinkedLink}
Suppose $\sigma\subseteq A^{2}$ is an irreducible congruence
and $\rho$ is a bridge from
$\sigma$ to $\sigma$ such that $\widetilde{\rho}$ is a full relation.
Then there exists a prime number $p$
and a relation $\zeta\subseteq A\times A\times \mathbb Z_{p}$
such that $\proj_{1,2}\zeta = \sigma^{*}$
and $(a_{1},a_{2},b)\in\zeta$ 
implies that 
$(a_{1},a_{2})\in \sigma\Leftrightarrow (b=0)$.
\end{cons}

\begin{lem}\label{ReflexiveBridgeProperty}
Suppose $\rho\subseteq A^{4}$ is an optimal bridge from $\sigma_{1}$ to $\sigma_{2}$, 
and $\sigma_{1}$ and $\sigma_{2}$ are different irreducible congruences.
Then $\widetilde \rho\supsetneq\sigma_{2}$.
\end{lem}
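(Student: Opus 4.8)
The plan is to show separately that $\widetilde{\rho}\supseteq\sigma_{2}$ holds for every reflexive bridge, and that this inclusion is strict when $\sigma_{1}\neq\sigma_{2}$.

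First I would prove the unconditional inclusion $\sigma_{1}\cup\sigma_{2}\subseteq\widetilde{\rho}$ for any reflexive bridge $\rho$ from $\sigma_{1}$ to $\sigma_{2}$. If $(c,d)\in\sigma_{2}$, then starting from $(c,c,c,c)\in\rho$ and using that the last two coordinates of $\rho$ are compatible with $\sigma_{2}$ we may replace the third and then the fourth coordinate to get $(c,c,d,d)\in\rho$, i.e. $(c,d)\in\widetilde{\rho}$; symmetrically, if $(a,b)\in\sigma_{1}$ then replacing the first two coordinates of $(a,a,a,a)\in\rho$ (which are compatible with $\sigma_{1}$) gives $(b,b,a,a)\in\rho$, hence $(a,b)\in\widetilde{\rho}$. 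In particular $\widetilde{\rho}\supseteq\sigma_{2}$.

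If $\sigma_{1}\not\subseteq\sigma_{2}$, pick $(a,b)\in\sigma_{1}\setminus\sigma_{2}$; then $(a,b)\in\widetilde{\rho}\setminus\sigma_{2}$, so $\widetilde{\rho}\supsetneq\sigma_{2}$ and we are done, without invoking optimality. So it remains to rule out, for an optimal bridge, the case $\sigma_{1}\subsetneq\sigma_{2}$ together with $\widetilde{\rho}=\sigma_{2}$. Composing $\rho$ with its reverse (a reflexive bridge from $\sigma_{2}$ to $\sigma_{1}$) produces a reflexive self-bridge of $\sigma_{1}$ whose associated relation is $\sigma_{2}$; hence $\Opt(\sigma_{1})\supseteq\sigma_{2}\supsetneq\sigma_{1}$. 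Conversely, composing an optimal self-bridge $\nu$ of $\sigma_{1}$ with $\rho$ (in that order) produces a reflexive bridge from $\sigma_{1}$ to $\sigma_{2}$ whose associated relation is $\Opt(\sigma_{1})$ — a routine chase with the coordinate-replacement manipulations above, using $\widetilde{\rho}=\sigma_{2}\subseteq\Opt(\sigma_{1})$ — and since $\rho$ is optimal this relation cannot be strictly weaker than $\widetilde{\rho}=\sigma_{2}$. Therefore $\Opt(\sigma_{1})=\sigma_{2}$.

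The main obstacle is to rule out an irreducible congruence $\sigma_{1}$ with $\sigma_{1}\subsetneq\Opt(\sigma_{1})=\sigma_{2}$, where $\sigma_{2}$ is a proper (non-full) congruence. I would argue that a reflexive self-bridge witnessing $\Opt(\sigma_{1})\supsetneq\sigma_{1}$ can be amplified: after intersecting with the cover $\cover{\sigma_{1}}$ and passing to the quotient by $\sigma_{1}$, one obtains a reflexive self-bridge of the equality congruence on $\cover{\sigma_{1}}/\sigma_{1}$ whose associated relation is full, and then Theorem~\ref{LinkedBridgeThm} forces a $\mathbb{Z}_{p}$-structure on $\cover{\sigma_{1}}/\sigma_{1}$; since the only subalgebra of $\mathbb{Z}_{p}^{2}$ strictly above the diagonal is the full relation, this makes $\Opt(\sigma_{1})$ the full relation, contradicting $\Opt(\sigma_{1})=\sigma_{2}$. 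I expect the delicate point to be exactly this amplification step — passing from a self-bridge whose associated relation is merely $\supsetneq\sigma_{1}$ to one whose associated relation is full on the cover — since this is where irreducibility of $\sigma_{1}$ is used; the remaining ingredients are routine composition of bridges and the compatibility bookkeeping illustrated above.
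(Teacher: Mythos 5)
Your first two paragraphs match the paper: reflexivity plus compatibility give $\sigma_{1}\cup\sigma_{2}\subseteq\widetilde{\rho}$, so the only problematic case is $\sigma_{1}\subsetneq\sigma_{2}$ with $\widetilde{\rho}=\sigma_{2}$, and your reduction of that case to the statement $\Opt(\sigma_{1})=\sigma_{2}$ (by composing $\rho$ with its reverse and with an optimal self-bridge of $\sigma_{1}$) is a reasonable, essentially correct manoeuvre, modulo the usual unchecked point that a composition of bridges really has projections strictly above $\sigma_{1}$ and $\sigma_{2}$ --- a point the paper itself glosses over when it defines composition.

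The genuine gap is the final ``amplification'' step, and it is not a technicality: it is the entire content of the hard case. You claim that a reflexive self-bridge of $\sigma_{1}$ with associated relation strictly above $\sigma_{1}$ can be upgraded, after intersecting with $\cover{\sigma_{1}}$ and factoring by $\sigma_{1}$, to a self-bridge ``whose associated relation is full'', so that Theorem~\ref{LinkedBridgeThm} applies. But the associated relation of that quotiented bridge is $\Opt(\sigma_{1})$ factored by $\sigma_{1}$, which is full only if $\Opt(\sigma_{1})$ is already the full relation --- exactly what you are trying to prove; as written the step is circular, and the fullness hypothesis of Theorem~\ref{LinkedBridgeThm} is precisely what is missing. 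Moreover, even granting a $\mathbb{Z}_{p}$-link, the relation $\zeta$ produced by Theorem~\ref{LinkedBridgeThm} has projection only onto the cover $\omega$ (it says nothing about pairs outside $\omega$), so the inference ``hence $\Opt(\sigma_{1})$ is full'' is a non sequitur; nothing in the paper supports the general claim that $\sigma_{1}$ irreducible and $\Opt(\sigma_{1})\supsetneq\sigma_{1}$ force $\Opt(\sigma_{1})$ to be the full relation, and the paper never uses such a fact. What the paper actually does in this case is much more delicate: it first normalizes $\rho$, then works inside a single equivalence class $E$ of $\sigma_{2}$ --- where $\widetilde{\rho}=\sigma_{2}$ really is full, so Theorem~\ref{LinkedBridgeThm} becomes applicable to a minimal $\omega\supsetneq\sigma_{1}$ inside $E$ --- and uses the resulting $\mathbb{Z}_{p}$-structure only to show that certain unary maps $g(x)=w_{s}(b_{1},\ldots,b_{1},x)$ send bridges to bridges; iterating these maps it builds a shrinking sequence of retracts $D_{0}\supseteq D_{1}\supseteq\dots$, and in the terminal case constructs an auxiliary bridge $\epsilon$ with $(b_{1},b_{1},a_{1},a_{1})\in\epsilon$ whose composition with $\rho$ has associated relation strictly larger than $\sigma_{2}$, contradicting optimality of $\rho$ directly. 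So your easy half is fine, but the case $\sigma_{1}\subsetneq\sigma_{2}$, $\widetilde{\rho}=\sigma_{2}$ is not closed by your argument, and closing it requires the kind of restriction-to-a-class and retraction machinery the paper deploys rather than a direct appeal to Theorem~\ref{LinkedBridgeThm} on $\cover{\sigma_{1}}/\sigma_{1}$.
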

\begin{proof}
Since the first two variables 
are stable under $\sigma_{1}$ and 
the last two variables 
are stable under $\sigma_2$, 
we have 
$\sigma_{1}\subseteq \widetilde\rho$
and
$\sigma_{2}\subseteq \widetilde\rho$.
Assume that the lemma does not hold, then $\widetilde{\rho}=\sigma_{2}$.

Since $\sigma_{1}$ and $\sigma_{2}$ are different, 
we obtain $\sigma_{1}\subsetneq\sigma_{2}$,

First, we want 
$(a,d)$ to be from $\sigma_{2}$ for every $(a,b,c,d)\in \rho$.
Put $\rho_{1}(x_{1},x_{2},y_{1},y_{2}) = \rho(x_{1},x_{2},y_{1},y_{2})\wedge \sigma_{2}(x_{1},y_{2})$.
If $\rho_{1}$ is a bridge then we replace $\rho$ by $\rho_{1}$.
Assume that $\rho_{1}$ is not a bridge,
then for every $(a,b,c,d)\in \rho$ with $(a,d)\in \sigma_{2}$ we have $(a,b)\in\sigma_{1}$.
Put
$\rho_{2}(x_{1},x_{2},y_{1},y_{2}) = \exists z\; \rho(x_{1},x_{2},z,y_{1})\wedge \sigma_{2}(x_{1},y_{2})$.
Let us show that 
$\rho_{2}$ is a bridge.
Suppose $(x_{1},x_{2})\in\sigma_{1}$
and $(x_1,x_2,y_1,y_2)\in\rho_{2}$.
Then $(x_{1},x_{2},z,y_{1})\in\rho$. 
Since $\rho$ is a bridge from $\sigma_1$ to $\sigma_2$, this implies that
$(z,y_{1})\in\sigma_2$ and 
$(x_1,y_1)\in\widetilde\rho$.
Since $\widetilde \rho=\sigma_{2}$,
we have $(x_{1},y_{1})\in\sigma_{2}$ 
and therefore 
$(y_1,y_{2})\in\sigma_{2}$.
If $(y_{1},y_2)\in\sigma_{2}$
and $(x_1,x_2,y_1,y_2)\in\rho_{2}$,
then 
$(x_{1},y_{1})\in\sigma_{2}$ and by the above assumption 
we have $(x_{1},x_{2})\in \sigma_{1}$.
It remains to show 
that $\proj_{1,2}(\rho_{2})\supsetneq \sigma_{1}$. 
Consider any tuple
$(a_{1},a_{2},a_{3},a_{4})\in\rho$ such that 
$(a_{1},a_{2})\notin\sigma_{1}$,
then $(a_{1},a_{2},a_{4},a_{1})\in\rho_{2}$.
Thus, $\rho_{2}$ is a bridge with the required property, so 
we replace $\rho$ by $\rho_{2}$.

Second, we want 
$\proj_{1,2}(\rho)$ to be equal to $\sigma_{1}^{*}$, 
and $\proj_{3,4}(\rho)$ to be equal to $\sigma_{2}^{*}$.
To achieve this we replace $\rho$ by the relation defined by $\rho(x_{1},x_{2},y_{1},y_{2})\wedge \sigma_{1}^{*}(x_{1},x_{2})\wedge
\sigma_{2}^{*}(y_{1},y_{2})$,
which has the same properties.

Recall that 
\emph{a polynomial} is an operation that can be 
defined by a term over the basic operations of an algebra
and constant operations. In our case,
a polynomial is an operation defined by a term over the WNU $w$ and 
constants.
Let $D$ be a minimal subset (not necessarily a subuniverse) of $A$ such that 
\begin{enumerate}
    \item there exists a unary polynomial $h$ such that 
    $h(h(x)) = h(x)$ and $h(A) = D$, and
    \item $(\sigma_{2}^{*}\setminus \sigma_{2})\cap D^{2}\neq \varnothing.$
\end{enumerate}

Since constants preserve a reflexive bridge $\rho$
and congruences 
$\sigma_{1}$ and $\sigma_{2}$, 
the unary polynomial $h$ also preserves $\rho$, $\sigma_{1}$ and
$\sigma_{2}$.
It is not hard to see that
$h(w(x_{1},\ldots,x_{m}))$ is an idempotent WNU on $D$, 
then by $w^{D}$ we denote a special WNU on $D$ that can be derived from
the idempotent WNU on $D$. 
For any relation $\delta$, by $\delta^{D}$ we denote its restriction to $D$
(that is $h(\delta)$).
It is not hard to see that 
$\rho^{D}$ is a bridge from $\sigma_{1}^{D}$ to $\sigma_{2}^{D}$.

The idea of the proof is to define a bridge $\epsilon^{D}$ 
from $\sigma_{2}^{D}$ to $\sigma_{2}^{D}$ 
such that $\widetilde{\epsilon^{D}}\not\subseteq \sigma_{2}$.
Then we define a bridge $\epsilon$ from $\sigma_{2}$ to $\sigma_2$ having 
the same property and use this bridge to make $\widetilde{\rho}$ bigger, 
which gives us a contradiction because 
$\rho$ is optimal.

Consider $(b_{1},b_{2})\in(\sigma_{2}^{*})^{D}\setminus\sigma_{2}^{D}$
and the unary operation $g_{b_1}(x) = w^D(b_{1},\ldots,b_{1},x)$.
Since 
$w^D$ is a special WNU, 
$g_{b_1}(g_{b_1}(x)) = g_{b_1}(x)$. 
Let us show that $(\sigma_{2}^{*}\setminus \sigma_{2})\cap (D')^{2}\neq \varnothing$,
where $D' = g_{b_1}(D)$.

Since $\proj_{3,4}(\rho)= \sigma_{2}^{*}$, 
there are $a_1,a_2$ such that 
$(a_1,a_2,b_1,b_2)\in\rho$.
Since $(a_{1},b_{2})\in\sigma_{2}$, 
and $(b_{1},b_2)\notin\sigma_{2}$, 
we have $(a_1,b_1)\notin\sigma_2$.
Consider the relation 
$\delta(x,y)$
defined by 
$\exists x_1\exists x_2 \exists y_2
\sigma_{2}(x,x_1)\wedge \rho(x_1,x_2,y,y_2)$.
It follows from the definition 
that $\delta$ is stable under $\sigma_{2}$.
Also $(a_1,b_1)\in\delta$, therefore
$\delta\supsetneq\sigma_{2}$.
From irreducibility of $\sigma_2$ we obtain that $(b_1,b_2)\in \delta$.
Then by the definition of $\delta$ 
there exist
$(c_1,c_2,b_2,c_3)\in\rho$ such that 
$(b_1,c_{1})\in\sigma_{2}$.
Put 
$d_{i} = h(c_{i})$ for $i=1,2,3$.
Then 
$(d_1,d_2,b_2,d_3)\in\rho$ 
(we have $h(b_2)=b_2$).
Since $h$ preserves $\sigma_{2}$ 
and $h(b_1) = b_1$, 
we have $(d_1,b_1)\in\sigma_2$.
Therefore, 
$(d_1,b_2)\notin\sigma_2$. 
Since $\widetilde{\rho} = \sigma_2$, 
we have 
$(d_1,d_2)\notin\sigma_1$ and 
$(b_2,d_3)\notin\sigma_2$.
Let $E$ be the equivalence class of 
$\sigma_{2}^{D}$ containing $d_1$ and $d_2$
(they are in one class because
$\proj_{1,2}(\rho)=\sigma_{1}^{*}\subseteq \sigma_2$).
By $w'$ we denote $w^{D}$ restricted to $E$,
put $\rho' = \rho\cap (E^{2}\times D^{2})$ and
$\sigma_{1}' = \sigma_{1}\cap E^{2}$.

Since $(d_1,d_2)\in (\sigma_{1}^{*}\setminus \sigma_{1})\cap E^{2}$, 
we can find a minimal relation $\omega\subseteq\sigma_{1}^{*}\cap E^{2}$ stable under $\sigma_{1}'$
such that $\omega\supsetneq\sigma_{1}'$.
It is not hard to check that
the formula
$$\rho''(x_{1},x_{2},x_{1}',x_{2}') = \exists y_{1}\exists y_{2}\;
\rho'(x_{1},x_{2},y_{1},y_{2})\wedge\rho'(x_{1}',x_{2}',y_{1},y_{2})\wedge \omega(x_{1},x_{2})\wedge \omega(x_{1}',x_{2}')$$
defines a reflexive bridge $\rho''$ from $\sigma_{1}'$ to $\sigma_{1}'$.
Since $\widetilde{\rho}=\sigma_{2}$ and $E$ is an equivalence class of 
$\sigma_{2}^{D}$, 
we have 
$\widetilde{\rho''}=E^{2}$.
Then by Theorem~\ref{LinkedBridgeThm},
there exists a prime number $p$
and a relation $\zeta\subseteq E\times E\times \mathbb Z_{p}$
such that
$\proj_{1,2}(\zeta)=\omega$ 
and $(a_1,a_2,b)\in\zeta$ implies 
$(a_1,a_2)\in\sigma_{1}'\Leftrightarrow (b=0)$.
We want to show that
for each $(e_{1}, e_{2}) \in \omega\setminus\sigma_{1}'$
we have $(w^{D}(d_1,\ldots,d_{1},e_{1}),w^{D}(d_{1},\ldots,d_{1},e_{2}))\notin\sigma_{1}$.
In fact, choose $b\in \mathbb Z_{p}$ such that 
$(e_1,e_2,b)\in\zeta$. Note that $b\neq 0$
and $(d_1,d_1,0)\in\zeta$.
Applying $w'$ to 
$n$ tuples $(d_1,d_1,0)$ and one tuple $(e_1,e_2,b)$
we get, by Lemma~\ref{LinearSpecialWNU}, 
a tuple 
$(w^{D}(d_1,\ldots,d_{1},e_{1}),w^{D}(d_{1},\ldots,d_{1},e_{2}),b)\in\zeta$.
Since $b\neq 0$, we have $$(w^{D}(d_1,\ldots,d_{1},e_{1}),w^{D}(d_{1},\ldots,d_{1},e_{2}))\notin\sigma_{1}.$$
We can find $(e_3,e_4)\in\sigma_{2}^{*}\setminus \sigma_{2}$ such that 
$(e_1,e_2,e_3,e_4)\in\rho$.
Since $h$ preserves $\rho$, $w^{D}$ preserves $\rho^{D}$, 
and $\rho^{D}$ is a bridge,
we can derive that  
$(w^{D}(d_1,\ldots,d_1,h(e_3)),w^{D}(d_1,\ldots,d_1,h(e_4)))\notin\sigma_{2}$.
Since $(d_1,b_1)\in\sigma_{2}$
we also have 
$(w^{D}(b_1,\ldots,b_1,h(e_3)),w^{D}(b_1,\ldots,b_1,h(e_4)))\notin\sigma_{2}$.
Thus, we proved
that $(\sigma_{2}^{*}\setminus \sigma_{2})\cap (D')^{2}\neq \varnothing$,
where $D' = g_{b_1}(D)$ and 
$g_{b_1}(x) = w^D(b_{1},\ldots,b_{1},x)$.
If $D'\neq D$, 
then we found a smaller set $D'$ and 
the corresponding polynomial 
$g_{b_{1}}(h(x))$, which contradicts the minimality of $D$.

Thus, for any 
$(b_{1},b_{2})\in(\sigma_{2}^{*})^{D}\setminus\sigma_{2}^{D}$
we have 
$w^{D}(b_1,\ldots,b_1,x) = x$.
Let us show that 
$\ConOne(\rho^{D},1) = \sigma_{1}^{D}$
and $\ConOne(\rho^{D},3) = \sigma_{2}^{D}$.
Choose
$(a_{1},a_{2},a_{3},a_{4})\in\rho^{D}$.
We consider two cases.

Case 1:
$(a_{1},a_{2})\in\sigma_{1}$
and 
$(a_{3},a_{4})\in\sigma_{2}$.
Then
for any tuple 
$(a_{1}',a_{2},a_{3},a_{4})\in\rho^{D}$
we have $(a_{1},a_{1}')\in\sigma_{1}$.
Similarly,
for any tuple
$(a_{1},a_{2},a_{3}',a_{4})\in\rho^{D}$
we have $(a_{3},a_{3}')\in\sigma_{2}$.

Case 2: $(a_{1},a_{2})\notin\sigma_{1}$
and $(a_{3},a_{4})\notin\sigma_{2}$.
Since $(a_{1},a_{4})\in\sigma_{2}$
and
$(a_{1},a_{2})\in \sigma_{1}^{*}\subseteq \sigma_{2}$, we have
$(a_{1},a_{3}),(a_{2},a_{3}), (a_{3},a_{4})\in (\sigma_{2}^{*})^{D}\setminus\sigma_{2}^{D}$.
Also notice that $\sigma_{2}^{*}$ is symmetric.
Assume that 
$(a_{1}',a_{2},a_{3},a_{4})\in\rho^{D}$.
Since $w^{D}$ preserves $\rho^{D}$,
we have
$$(w^{D}(a_{1}',a_{1},\ldots,a_{1}),
w^{D}(a_{2},\ldots,a_{2},a_{1}),
w^{D}(a_{3},\ldots,a_{3},a_{1}),
w^{D}(a_{4},\ldots,a_{4},a_{1}))\in\rho^{D}.$$
As we showed before
this tuple equals
$(a_{1}',a_{1},a_{1},a_{1})$, 
which 
means that $(a_{1},a_{1}')\in\sigma_{1}$.
Similarly, if 
$(a_{1},a_{2},a_{3}',a_{4})\in\rho^{D}$
we consider a tuple
$$(w^{D}(a_{1},\ldots,a_{1},a_{3}),
w^{D}(a_{2},\ldots,a_{2},a_{3}),
w^{D}(a_{3}',a_{3},\ldots,a_{3},a_{3}),
w^{D}(a_{4},\ldots,a_{4},a_{3}))\in\rho^{D},$$
that equals
$(a_{3},a_{3},a_{3}',a_{3})$, which means
that $(a_{3},a_{3}')\in\sigma_{2}$.
Thus we proved that $\ConOne(\rho^{D},1) = \sigma_{1}^{D}$
and $\ConOne(\rho^{D},3) = \sigma_{2}^{D}$.

Consider $(a_{1},a_{2},b_{1},b_{2})\in\rho^{D}$ with $(b_{1},b_{2})\notin\sigma_{2}$ and the formula
$$\Theta = \rho(z,x_{1},x_{2},x_{3})
\wedge
\rho(z',x_{1},x_{2}',x_{3}')
\wedge \rho(z,x_{4},x_{5},x_{6})
\wedge \rho(z',x_{4},x_{5}',x_{6}').$$
Let $\epsilon$ be the relation defined by  $\Theta(x_{2},x_{2}',x_{5},x_{5}')$.
Since $h(h(x)) = h(x)$ and $h$ preserves $\rho$, 
$\epsilon^{D}$ is defined by the same formula 
but with 
$\rho^{D}$ instead of $\rho$ everywhere.
Let us prove that $\epsilon^{D}$ is a bridge
from 
$\sigma_{2}^{D}$ to 
$\sigma_{2}^{D}$.
Assume that 
$(x_{2},x_{2}')\in\sigma_{2}^{D}$.
Since $(x_{1},z),(x_{1},z')\in
\sigma_{1}^{*}\subseteq \sigma_{2}$,
we have $(z,z')\in\sigma_{2}^{D}$. 
Recall that 
$(a,d)\in\sigma_{2}$ whenever $(a,b,c,d)\in\rho$,
hence 
$(x_{3},x_{3}'),(x_{6},x_{6}')\in\sigma_{2}^{D}$.
Since  
$\ConOne(\rho^{D},1) = \sigma_{1}^{D}$,
we have $(z,z')\in\sigma_{1}$.
Since $\ConOne(\rho^{D},3) = \sigma_{2}^{D}$,
we have $(x_{5},x_{5}')\in\sigma_{2}$.
In the same way we can show that 
if $(x_{5},x_{5}')\in\sigma_{2}^{D}$, 
then $(x_{2},x_{2}')\in\sigma_{2}^{D}$.
Since all the variables of $\epsilon$
are stable under $\sigma_{2}$
and $\epsilon^{D}$ is reflexive, 
we have 
$\proj_{1,2}(\epsilon^{D})\supseteq \sigma_{2}^{D}$.
By sending
$(z,x_{1},x_{2},x_{3})$ to $(a_{1},a_{2},b_{1},b_{2})$,
$(z',x_{1},x_{2}',x_{3}')$ to $(a_{2},a_{2},a_{2},a_{2})$,
$(z,x_{4},x_{5},x_{6})$ to $(a_{1},a_{2},b_{1},b_{2})$,
$(z',x_{4},x_{5}',x_{6}')$ to $(a_{2},a_{2},a_{2},a_{2})$,
we show that $(b_{1},a_{2},b_{1},a_{2})\in \epsilon$.
Since $(a_1,a_2),(a_1,b_2)\in\sigma_2$ and $(b_{1},b_{2})\notin\sigma_2$,
we have $(b_1,a_2)\notin \sigma_2$,
and therefore $\proj_{1,2}(\epsilon^{D})\supsetneq \sigma_{2}^{D}$.
In the same way we can show that 
$\proj_{3,4}(\epsilon^{D})\supsetneq \sigma_{2}^{D}$.
Thus $\epsilon^{D}$ is a bridge.

Let us show that $\epsilon$ is a bridge
from $\sigma_{2}$ to $\sigma_{2}$.
Assume the contrary.
Then without loss of generality we assume that there exists
$(d_{0},d_{0},d_{1},d_{2})\in \epsilon$ such that
$(d_{1},d_{2})\notin\sigma_{2}$.
Put $\delta_{0}(y,z) = \exists x\;\epsilon(x,x,y,z)$.
The relation $\delta_{0}$ is stable under $\sigma_2$ and 
strictly larger than $\sigma_2$, hence 
$\delta_{0}\supseteq \sigma_{2}^{*}$ and 
$(b_{1},b_{2})\in\delta_{0}$.
Then there exists $d$ such that
$(d,d,b_{1},b_{2})\in\epsilon$,
which means that $(h(d),h(d),b_{1},b_{2})\in\epsilon^{D}$.
This contradicts the fact that 
$\epsilon^{D}$ is a bridge. 
Hence, $\epsilon$ is also a bridge.
By sending $(z,x_{1},x_{2},x_{2}',x_{3},x_{3}')$
to $(a_{1},a_{2},b_{1},b_{1},b_{2},b_{2})$
and
$(z',x_{4},x_{5},x_{5}',x_{6},x_{6}')$
to $(a_{1},a_{1},a_{1},a_{1},a_{1},a_{1})$
we can show that
$(b_{1},a_{1})\in\widetilde{\epsilon}$.
If we compose 
bridges 
$\rho$ and $\epsilon$,
then we get a bridge $\epsilon'$ 
from $\sigma_{1}$ to $\sigma_{2}$
containing $(b_{1},b_1,a_{1},a_1)$.
Hence
$\widetilde{\epsilon'}\supsetneq \widetilde{\rho}$, 
which contradicts the fact that $\rho$ is optimal.
\end{proof}

\subsection{Existence of a bridge}

In this subsection we show four ways to build a 
bridge:
from congruences with an additional property, 
from a rectangular relation, 
by composing bridges appearing in the instance,
and 
from a pp-formula.

\begin{lem}\label{MinimalsAdjacent}
Suppose
$\sigma, \sigma_{1}$, and $\sigma_{2}$ are congruences on $A$, 
$\sigma\cap\sigma_{1} = \sigma\cap\sigma_{2}$,
and $\sigma\setminus\sigma_{1}\neq\varnothing$.
Then $\sigma_{1}$ and $\sigma_{2}$ are adjacent.
\end{lem}

\begin{proof}
Let us define a relation $\rho$ by
$$\rho(x_{1},x_{2},y_{1},y_{2}) =
\exists z_{1}\exists z_{2}\;
\sigma_{1}(x_{1},z_{1})\wedge \sigma_{2}(z_{1},y_{1})\wedge
\sigma_{1}(x_{2},z_{2})\wedge \sigma_{2}(z_{2},y_{2})\wedge
\sigma(z_{1},z_{2}).$$
It is clear that 
the first two variables of $\rho$ are stable under
$\sigma_{1}$ and the last two variables 
are stable under $\sigma_{2}$.

Let us show that
for any $(a_{1},a_{2},a_{3},a_{4})\in\rho$ that
$(a_{1},a_{2})\in\sigma_{1}\Leftrightarrow(a_{3},a_{4})\in\sigma_{2}$.
In fact, if $(x_{1},x_{2})\in\sigma_{1}$, then 
$(z_{1},z_{2})\in\sigma_{1}$.
Since $\sigma\cap\sigma_{1} = \sigma\cap\sigma_{2}$, 
we have $(z_{1},z_{2})\in\sigma_{2}$.
Therefore, $(y_{1},y_{2})\in\sigma_{2}$.

Also $(a,a,a,a)\in\rho$ for any $a\in A$.
Choose $(a,b)\in\sigma\setminus\sigma_{1}$.
Then $(a,b,a,b)\in\rho$ (put $z_{1} = a$, $z_{2} = b$), which proves that $\rho$ is a reflexive bridge.
\end{proof}

\begin{lem}\label{OneLink}
Suppose $\rho\subseteq A_{1}\times\dots\times A_{n}$ is a subdirect relation,
the first and the last variables of $\rho$ are rectangular,
and there exist
$(b_{1},a_{2},\ldots,a_n),(a_{1},\ldots,a_{n-1},b_{n})\in\rho$
such that $(a_{1},a_{2},\ldots,a_n)\notin\rho$.
Then there exists a bridge $\delta$ from $\ConOne(\rho,1)$ to $\ConOne(\rho,n)$
such that $\widetilde{\delta} = \proj_{1,n}(\rho)$.
\end{lem}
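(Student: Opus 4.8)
The plan is to build the bridge $\delta$ by taking the projection of $\rho$ onto its first and last coordinates, together with a "parallel copy" controlled by equality of the inner coordinates. Concretely, writing $\rho \subseteq A_1 \times D \times A_n$ with $D = A_2\times\dots\times A_{n-1}$ viewed as a single "middle" coordinate, I would set
\[
\delta(x_1,x_2,y_1,y_2) = \exists \bar u\, \exists \bar v\;
\rho(x_1,\bar u, y_1)\wedge \rho(x_2,\bar v, y_2) \wedge \bigl(\bar u \in \ConOne(\rho,1)\text{-class stuff}\bigr),
\]
but a cleaner choice is to force the two middle tuples $\bar u$ and $\bar v$ to coincide: $\delta(x_1,x_2,y_1,y_2) = \exists \bar u\; \rho(x_1,\bar u,y_1)\wedge \rho(x_2,\bar u,y_2)$. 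First I would check the "diagonal" conditions: $\delta(x,x,y,y)$ is exactly $\proj_{1,n}(\rho)$ (take $x_1=x_2$, $y_1=y_2$; one direction is immediate, the other uses that $\rho$ is subdirect so some $\bar u$ witnesses any pair in the projection — actually we need the same $\bar u$ for both rows, which holds trivially when the rows are identical). So $\widetilde\delta = \proj_{1,n}(\rho)$ as required.

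Next I would verify the bridge axioms. The first two variables of $\delta$ are compatible with $\ConOne(\rho,1)$: if $(x_1,\bar u, y_1),(x_2,\bar u,y_2)\in\rho$ and $(x_1,x_1')\in\ConOne(\rho,1)$, then by rectangularity of the first variable of $\rho$ (which holds by hypothesis, and gives us that $\ConOne(\rho,1)$ is a congruence and that we can swap $x_1$ for $x_1'$ inside the tuple $(x_1,\bar u,y_1)$) we get $(x_1',\bar u,y_1)\in\rho$, hence $(x_1',x_2,y_1,y_2)\in\delta$. Symmetrically the last two variables are compatible with $\ConOne(\rho,n)$, using rectangularity of the last variable. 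The projection $\proj_{1,2}(\delta)$ contains $\ConOne(\rho,1)$ by definition of $\ConOne$, and it strictly contains it: the tuples $(b_1,a_2,\dots,a_n)$ and $(a_1,\dots,a_{n-1},b_n)$ in $\rho$ do not have matching middles in general, so instead I would use the hypothesis $(a_1,\dots,a_n)\notin\rho$ to exhibit a pair outside $\ConOne(\rho,1)$ lying in $\proj_{1,2}(\delta)$ — namely, take the middle tuple $\bar u = (a_2,\dots,a_{n-1})$, witnessed on one side by $(b_1,\bar u, a_n)\in\rho$ and on the other by $(a_1,\dots,a_{n-1},b_n)\in\rho$ (whose middle is also $\bar u$), giving $(b_1,a_1)\in\proj_{1,2}(\delta)$; and $(b_1,a_1)\notin\ConOne(\rho,1)$ because otherwise rectangularity of the first variable applied to $(b_1,a_2,\dots,a_n)\in\rho$ would force $(a_1,a_2,\dots,a_n)\in\rho$, contradicting the hypothesis. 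Symmetrically $\proj_{3,4}(\delta)\supsetneq\ConOne(\rho,n)$. Finally, the key implication $(z_1,z_2,z_3,z_4)\in\delta \Rightarrow \bigl((z_1,z_2)\in\ConOne(\rho,1)\Leftrightarrow (z_3,z_4)\in\ConOne(\rho,n)\bigr)$: given the common middle $\bar u$ with $(z_1,\bar u,z_3),(z_2,\bar u,z_4)\in\rho$, if $(z_1,z_2)\in\ConOne(\rho,1)$ then by rectangularity of the first variable $(z_2,\bar u, z_3)\in\rho$, and then $(z_3,z_4)\in\ConOne(\rho,n)$ by definition; the converse is symmetric.

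The main obstacle I anticipate is the strictness claims $\proj_{1,2}(\delta)\supsetneq\ConOne(\rho,1)$ and $\proj_{3,4}(\delta)\supsetneq\ConOne(\rho,n)$ — one must pair up the right witnessing tuples so that they share a middle tuple, and the hypothesis is asymmetric (one tuple perturbs the first coordinate, the other the last), so some care is needed to see that a single middle $\bar u=(a_2,\dots,a_{n-1})$ serves both required inequalities. A secondary point to be careful about is that "rectangular" for $\rho$ (a relation of arity $n$, middle coordinate split into $n-2$ parts) is being used coordinatewise on the first and last positions only, which is exactly what the hypothesis grants, and that this suffices for all the swaps above without needing rectangularity of the inner coordinates.
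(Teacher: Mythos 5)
Your construction is exactly the one used in the paper: $\delta(x_{1},x_{2},y_{1},y_{2}) = \exists z_{2}\dots\exists z_{n-1}\,\rho(x_{1},z_{2},\ldots,z_{n-1},y_{1})\wedge\rho(x_{2},z_{2},\ldots,z_{n-1},y_{2})$, i.e.\ two copies of $\rho$ glued along a common middle tuple, and the paper simply states this formula while you supply the (correct) verification of the bridge axioms via rectangularity of the first and last variables. So the proposal is correct and takes essentially the same approach as the paper.
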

\begin{proof}
The required bridge can be defined by
$$\delta(x_{1},x_{2},y_{1},y_{2}) =
\exists z_{2}\dots\exists z_{n-1}\;
\rho(x_{1},z_{2},\ldots,z_{n-1},y_{1})\wedge
\rho(x_{2},z_{2},\ldots,z_{n-1},y_{2}).$$
In fact, since the first and the last variables of $\rho$ are rectangular, 
we have $(x_{1},x_{2})\in\ConOne(\rho,1)$
if and only if
$(y_{1},y_{2})\in\ConOne(\rho,n)$.
It remains to notice that 
$(b_{1},a_{1},a_{n},b_{n})\in\delta$
and $(b_{1},a_{1})\notin\ConOne(\rho,1)$.
\end{proof}

Recall that by Lemma~\ref{CriticalMeansIrreducible}
$\ConOne(\rho,i)$ is an irreducible congruence 
for every critical subdirect rectangular relation $\rho$ and its coordinate $i$.

\begin{lem}\label{OneLinkCritical}
Suppose $\rho\subseteq A_{1}\times\dots\times A_{n}$ is a 
critical subdirect rectangular relation. 
Then
\begin{enumerate}
    \item there exists a bridge $\delta$ from $\ConOne(\rho,1)$ to $\ConOne(\rho,n)$
such that $\widetilde{\delta} = \proj_{1,n}(\rho)$. Moreover, if $n=2$ then 
$\ConOne(\widetilde{\delta},1) = \ConOne(\rho,1)$ and
$\ConOne(\widetilde{\delta},2) = \ConOne(\rho,n)$;
if $n>2$ then 
$\ConOne(\widetilde{\delta},1) \supsetneq \ConOne(\rho,1)$ and
$\ConOne(\widetilde{\delta},2) \supsetneq \ConOne(\rho,n)$.
    \item if $\Opt(\ConOne(\rho,n))\neq \ConOne(\rho,n)$,
    then there  exists a bridge $\delta$ from $\ConOne(\rho,1)$ to $\ConOne(\rho,n)$
such that $\widetilde{\delta}$ contains the projection of 
the cover of $\rho$ onto the first and the last coordinates.
\end{enumerate}
\end{lem}

\begin{proof}
Using the argument from Lemma~\ref{RectangularCriticalArityTwo} we find tuples 
$(b_{1},a_{2},\dots,a_{n})$
$(a_{1},\dots,a_{n-1},b_{n})$
satisfying the conditions of Lemma~\ref{OneLink}.
Then, to prove the first part it is sufficient to use the formula from Lemma~\ref{OneLink} to define a bridge $\delta$.
If $n=2$ then $\widetilde{\delta} = \rho$
and we have the required property.
If $n>2$ then by Lemma~\ref{RectangularCriticalArityTwo}
we have 
$\ConOne(\widetilde{\delta},1) = \ConOne(\proj_{1,n}(\rho),1)\supsetneq\ConOne(\rho,1)$
and 
$\ConOne(\widetilde{\delta},2) = \ConOne(\proj_{1,n}(\rho),2)\supsetneq\ConOne(\rho,n)$.

Let us prove the second part of the claim.
Let $\xi$
be an optimal bridge from $\ConOne(\rho,n)$
to $\ConOne(\rho,n)$. 
Define a bridge 
$\delta(x_{1},x_{2},y_{1},y_{2})$ 
by 
$$
\exists z_{2}\dots\exists z_{n-1}\exists u_1\exists u_2\;
\rho(x_{1},z_{2},\ldots,z_{n-1},u_{1})\wedge
\rho(x_{2},z_{2},\ldots,z_{n-1},u_{2})
\wedge 
\xi(u_1,u_2,y_1,y_2)
.$$
Note that $\delta$ is just a composition of the bridge 
constructed in Lemma~\ref{OneLink} and $\xi$.
Then we have $\widetilde{\delta}(x,y)=
\exists z_{2}\dots\exists z_{n-1}\exists u\;
\rho(x,z_{2},\ldots,z_{n-1},u)
\wedge 
\widetilde{\xi}(u,y).$

Put 
$\rho'(x_{1},\ldots,x_{n}) = 
\exists x_{n}'
\rho(x_{1},\dots,x_{n-1},x_{n}')\wedge 
\widetilde{\xi}(x_{n}',x_{n})$.
Since $\widetilde{\xi}\supsetneq\ConOne(\rho,n)$,
the relation $\rho'$ contains the cover of $\rho$.
Since $\proj_{1,n}(\rho')=\widetilde{\delta}$, 
$\widetilde{\delta}$ contains the projection of 
the cover of $\rho$ onto the first and the last coordinates,
which completes the proof.
\end{proof}

\begin{thm}\label{PathInConnectedComponentThm}
Suppose $\Theta$ is a cycle-consistent connected instance. 
Then for every constraints $C, C'$ with variables $x,x'$ there exists
a bridge $\delta$ from $\ConOne(C,x)$ to $\ConOne(C',x')$ such that
$\widetilde{\delta}$ contains all pairs of elements linked in $\Theta$.
Moreover, if $\ConOne(C'',x'')\neq \LinkedCon(\Theta,x'')$ for some constraint $C''\in\Theta$ and a variable $x''$, then
$\delta$ can be chosen so that
$\widetilde{\delta}$ contains all pairs of elements linked in $\Theta'$,
where $\Theta'$ is obtained from $\Theta$ by replacing every constraint relation by its cover.
\end{thm}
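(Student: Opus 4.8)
The plan is to build $\delta$ by travelling along a walk in $\Theta$ from $C$ to $C'$, producing a ``local'' bridge across every constraint met, a reflexive ``gluing'' bridge at every shared variable, and composing all of these (bridges compose by the rule of Section~\ref{Definition}). Cycle-consistency forces every constraint relation to be subdirect (traverse a constraint forth and back), so Lemma~\ref{OneLink}, Lemma~\ref{CriticalMeansIrreducible} and Corollary~\ref{PathInConnectedComponent} all apply.

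\emph{Local bridge across one constraint.} Fix a constraint $C_0=\rho(u_1,\dots,u_k)$ and two of its positions $a,b$. Since $\Theta$ is connected, $\rho$ is critical and rectangular, so $\ConOne(\rho,a)$ and $\ConOne(\rho,b)$ are irreducible congruences. To apply Lemma~\ref{OneLink} I first exhibit the required ``defect pattern''. For each $i$ let $\rho_i$ be the subalgebra ``the coordinates other than $i$ extend to a tuple of $\rho$''; then $\rho\subseteq\rho_i$, and $\rho_i=\rho$ would make the $i$-th variable of $\rho$ dummy, which is impossible, so $\rho\subsetneq\rho_i$ and hence $\rho_i\supseteq\rho^{*}$. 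Picking $\beta\in\rho^{*}\setminus\rho$ we get $\beta\in\rho_a\cap\rho_b$: there is a tuple of $\rho$ agreeing with $\beta$ off position $a$ and one agreeing with $\beta$ off position $b$, while $\beta\notin\rho$. By Lemma~\ref{OneLink} this gives a bridge $\delta_{C_0}^{a,b}$ from $\ConOne(C_0,u_a)$ to $\ConOne(C_0,u_b)$ with $\delta_{C_0}^{a,b}(x,x,y,y)=\proj_{u_a,u_b}(\rho)$.

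\emph{Composing along a walk.} By connectedness choose a walk $C=C_0,C_1,\dots,C_l=C'$ in which $C_{j-1},C_j$ share a variable $z_j$, with $x$ in $C_0$ and $x'$ in $C_l$; constraints may be repeated and cycles inserted freely. Step~1 gives, inside each $C_j$, a bridge between $\ConOne(C_j,z_j)$ and $\ConOne(C_j,z_{j+1})$; Corollary~\ref{PathInConnectedComponent} gives a reflexive bridge $\tau_j$ from $\ConOne(C_{j-1},z_j)$ to $\ConOne(C_j,z_j)$. Composing these alternately yields a bridge $\delta$ from $\ConOne(C,x)$ to $\ConOne(C',x')$. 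A tuple $(a,a,b,b)$ lies in $\delta$ as soon as the walk, read as a path of $\Theta$, connects $a$ at $x$ with $b$ at $x'$: traverse each reflexive $\tau_j$ along its diagonal, and use that each $\delta_{C_j}^{\cdot,\cdot}(x,x,y,y)$ is the full projection $\proj_{z_j,z_{j+1}}(\rho_j)$, not merely a congruence. Choosing the walk rich enough — traversing each constraint in both directions and inserting at $x$, at $x'$ and at every intermediate variable cycles realizing the relevant $\LinkedCon$ — cycle-consistency makes the set of such pairs equal to the set of pairs linked in $\Theta$. Hence $\delta(x,x,y,y)$ contains all pairs linked in $\Theta$, which is the first assertion.

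\emph{The ``moreover'' part, and the main obstacle.} Suppose $\ConOne(C'',x'')\subsetneq\LinkedCon(\Theta,x'')$ for some $C''$, $x''$. Running the construction with a self-bridge at $(C'',x'')$ and replacing it by an optimal one (Lemma~\ref{ReflexiveBridgeProperty}) yields a reflexive bridge $\rho$ from $\sigma:=\ConOne(C'',x'')$ to $\sigma$ with $\widetilde\rho\supsetneq\sigma$; after factoring by $\LinkedCon(\Theta,x'')$ so as to meet the ``full relation'' hypothesis, Theorem~\ref{LinkedBridgeThm}/Corollary~\ref{LinkedLink} produces $\zeta\subseteq D_{x''}^{2}\times\mathbb Z_{p}$ with $(z,w,0)\in\zeta\Leftrightarrow(z,w)\in\sigma$ and $\proj_{1,2}\zeta=\sigma^{*}$. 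The new links created by passing to the covers $\Theta'$ are governed coordinatewise by such $\mathbb Z_p$-coset structures; propagating $\zeta$ along the walk (composing it with the $\delta_{C_j}^{\cdot,\cdot}$) and using that the value $0$ corresponds to staying inside the original relations shows that $\delta(x,x,y,y)$ then contains every pair linked in $\Theta'$. The delicate step is exactly this last one: coordinating the single piece of linear ($\mathbb Z_p$) structure obtained at $(C'',x'')$ with the bridge assembled over the whole connected formula, so that its $(x,x,y,y)$-part reaches the cover-linked pairs at $x$ and at $x'$ simultaneously. Establishing the defect pattern (Step~1) and verifying that the chosen walk realizes all $\Theta$-linked pairs are routine by comparison.
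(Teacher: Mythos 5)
Your first assertion is essentially the paper's argument, but the way you glue the local bridges is circular: at every shared variable $z_j$ you take the reflexive bridge $\tau_j$ from $\ConOne(C_{j-1},z_j)$ to $\ConOne(C_j,z_j)$ \emph{from Corollary~\ref{PathInConnectedComponent}}, which is a corollary of the very theorem you are proving. In the paper the gluing bridges are available because connectedness is a hypothesis about adjacency: the connecting path is one in which consecutive constraints are adjacent in the common variable, so an optimal reflexive bridge $\sigma_i$ between $\ConOne(C_i,z_i)$ and $\ConOne(C_{i+1},z_i)$ exists by definition. The statement you actually need for your ``rich walk'' (which traverses arbitrary pairs of constraints meeting at a variable, inserts cycles realizing $\LinkedCon$, etc.) is the stronger intermediate claim that \emph{every} two constraints with a common variable are adjacent; this is not part of the hypothesis and must be proved first. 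The paper gets it by running the composition along the adjacency path in the special case $x=x'$ and observing that cycle-consistency makes the composite bridge reflexive, hence witnesses adjacency; only after that bootstrap may one choose a path realizing all linked pairs. Without this step (or some substitute), your construction of $\delta$ along an arbitrary walk is not justified. Your Step~1 (the defect pattern for Lemma~\ref{OneLink} via criticality and the cover, giving $\widetilde{\delta}_{C_0}^{a,b}=\proj_{u_a,u_b}(\rho)$) is fine and matches what the paper leaves implicit.

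The ``moreover'' part is a genuine gap, and you acknowledge it yourself: the coordination of the $\mathbb Z_p$ structure with the global bridge is exactly the step you leave open, and it is also not the mechanism the paper uses. (In addition, Lemma~\ref{ReflexiveBridgeProperty} is stated for two \emph{different} irreducible congruences, so invoking it for a self-bridge at $(C'',x'')$ is not licensed; the fact $\widetilde\rho\supsetneq\ConOne(C'',x'')$ should instead come from the already-proved first assertion, since $\widetilde\rho\supseteq\LinkedCon(\Theta,x'')\supsetneq\ConOne(C'',x'')$.) The paper's route is purely combinatorial: every bridge produced by Lemma~\ref{OneLink}, and every reflexive bridge, satisfies a dichotomy --- either $\ConOne(\widetilde\rho,1)=\sigma_1$ and $\ConOne(\widetilde\rho,2)=\sigma_2$, or both are strictly larger; composing bridges preserves the second alternative once one factor has it; since $\Opt(\ConOne(C'',x''))\supsetneq\ConOne(C'',x'')$, one may route every path through $x''$ so that all bridges built satisfy the second alternative, whence $\Opt(\ConOne(C_0,x_0))\supsetneq\ConOne(C_0,x_0)$ for every constraint and variable; finally, composing $\delta_i$ with the optimal $\sigma_i$ yields a bridge $\rho_i$ with $\widetilde{\rho_i}$ containing the projection of the cover of $C_i$ onto $(z_{i-1},z_i)$, which is what lets $\delta(x,x,y,y)$ absorb all pairs linked in $\Theta'$. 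The $\mathbb Z_p$ relation of Corollary~\ref{LinkedLink} is used elsewhere in the paper (e.g.\ in Theorems~\ref{CannotLooseSolution} and \ref{LinearStep}), not here, and propagating it ``coordinatewise'' along the walk would still require exactly the cover-tracking property above, which you have not established.
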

\begin{proof}

Since $C$ and $C'$ are connected, there exists a path
$z_{0} C_{1} z_{1}C_{2} z_{2}\dots C_{t-1} z_{t-1} C_{t} z_{t}$,
where
$z_{0}= x$, $z_{t} = x'$, $C_{1} = C$, $C_{t} = C'$,
$z_{i-1}\neq z_{i}$,
and $C_{i}$ and $C_{i+1}$ are adjacent in $z_{i}$ for every $i$.

By Lemma~\ref{CriticalMeansIrreducible}, every relation defined by $\ConOne(C_{0},x_{0})$
for some $C_{0}$ and $x_{0}$ is an irreducible congruence.
Suppose $\zeta_{i}$ is an optimal bridge from $\ConOne(C_{i},z_{i})$ to $\ConOne(C_{i+1},z_{i})$,
$\delta_{i}$ is a bridge from $\ConOne(C_{i},z_{i-1})$ to $\ConOne(C_{i},z_{i})$
from the first item of Lemma~\ref{OneLinkCritical}
for every $i$.
Then we compose all bridges together and define a new bridge
$\delta(u_{0},u_{0}',v_{t},v_{t}')$
from $\ConOne(C,x)$ to $\ConOne(C',x')$
 by
\begin{multline}\label{BridgeBuilding}
\exists u_{1}\exists u_{1}'\exists v_{1}\exists v_{1}' \dots \exists u_{t-1}\exists u_{t-1}'\exists v_{t-1}\exists v_{t-1}'\;
\delta_{1}(u_{0},u_{0}',v_{1},v_{1}')
\wedge
\\
\bigwedge_{i=1}^{t-1}
\left(
\zeta_{i}(v_{i},v_{i}',u_{i},u_{i}')
\wedge
\delta_{i+1}(u_{i},u_{i}',v_{i+1},v_{i+1}')
\right).\end{multline}
Since 
$\widetilde{\delta}$ can be defined as a composition 
of 
$\widetilde{\zeta}$'s and $\widetilde{\delta}$'s,
and 
$\widetilde{\zeta}$'s are reflexive, 
it follows that 
$\widetilde{\delta}$ contains all pairs of elements linked by this path.
Since $\Theta$ is cycle-consistent, if $x=x'$ then $\delta$ is a reflexive bridge from $\ConOne(C,x)$ to $\ConOne(C',x)$.
Thus we proved that any two constraints with a common variable are adjacent.

Using Lemma~\ref{LinkedConIsCon}, 
we can show that there exists a path in $\Theta$ starting at $x$ and ending at $x'$ that connects any pair of elements linked in $\Theta$.
Since any two constraints with a common variable are adjacent,
we can assume that the above path $z_{0} C_{1} z_{1}C_{2} z_{2}\dots C_{t-1} z_{t-1} C_{t} z_{t}$
connects any pair of elements linked in $\Theta$.
Again,
it follows that $\widetilde{\delta}$ contains all pairs of elements linked in $\Theta$.

To prove the remaining part of the theorem,
assume that $\ConOne(C'',x'')\neq \LinkedCon(\Theta,x'')$ for some constraint $C''\in\Theta$ and a variable $x''$.
First, observe that any bridge $\rho$ from $\sigma_{1}$ to $\sigma_{2}$ defined by the first item of Lemma~\ref{OneLinkCritical}
satisfies one of the following properties:
\begin{enumerate}
\item $\ConOne(\widetilde{\rho},1) = \sigma_{1}$ and
$\ConOne(\widetilde{\rho},2) = \sigma_{2}$,
\item
$\ConOne(\widetilde{\rho},1) \supsetneq \sigma_{1}$ and
$\ConOne(\widetilde{\rho},2) \supsetneq \sigma_{2}$.
\end{enumerate}
If $\sigma_1\neq\sigma_2$, by Lemma~\ref{ReflexiveBridgeProperty} 
an optimal bridge from $\sigma_1$ to $\sigma_{2}$ satisfies property (2)
.
If $\sigma_1=\sigma_2$, an optimal bridge from $\sigma_1$ to $\sigma_2$ obviously satisfies one of the two properties.
Thus, every bridge in (\ref{BridgeBuilding}) satisfies one of the above properties.

Let us show that
if a bridge $\rho$ from $\sigma_1$ to $\sigma_2$ satisfies property (1), 
then for all $(a_1,b_1),(a_2,b_2)\in\widetilde\rho$
we
have $(a_1,a_2)\in\sigma_1\Leftrightarrow(b_1,b_2)\in\sigma_2$.
In fact, if $(a_1,a_2)\in\sigma_1$ then 
since the first two variables of $\rho$ are stable under $\sigma_1$, we 
have $(a_1,b_2)\in\widetilde \rho$, hence 
$(b_1,b_2)\in\ConOne(\widetilde \rho,2)=\sigma_2$.
Now we want to show that 
if we compose bridges $\rho_{1},\dots,\rho_{s}$ together as in (\ref{BridgeBuilding})
and at least one of the bridges satisfies property (2)
then the obtained bridge satisfies property (2).
Let $\rho_{j}$ be a bridge from $\sigma_{j-1}$ to $\sigma_j$ for every $j$, then the composition $\rho$ is a bridge from $\sigma_{0}$ to $\sigma_{s}$.
Consider the first bridge $\rho_{i}$ in the sequence having property (2). 
Then $(a_{i-1},a_{i}),(b_{i-1},b_{i})\in\widetilde\rho_{i}$ for some
$(a_{i-1},b_{i-1})\notin\sigma_{i-1}$
and $a_{i} = b_{i}$. 
Choose $a_{j}$ and $b_{j}$ so that 
$(a_{j-1},a_{j}),(b_{j-1},b_{j})\in\widetilde\rho_{j}$ for every $j$,
and $a_{j} = b_{j}$ for every $j\ge i$.
Then $(a_0,b_{0})\in \ConOne(\widetilde \rho_1,1)\setminus \sigma_{0}$
and $(a_{0},a_{s}), (b_{0},b_{s})\in \widetilde\rho$.
Since $a_{s} = b_{s}$, we get 
$(a_0,b_0)\in\ConOne(\widetilde \rho,1)$
and $\ConOne(\widetilde \rho,1)\supsetneq \sigma_{0}$.
To prove that $\ConOne(\widetilde \rho,2)\supsetneq \sigma_{s}$
we consider the last bridge in the sequence having property (2) and do exactly the same.

By the first part of the theorem
$\Opt(\ConOne(C'',x''))\supsetneq \ConOne(C'',x'')$, hence an optimal 
bridge from $\ConOne(C'',x'')$ to $\ConOne(C'',x'')$ satisfies property (2).
We may assume that any path goes through the variable $x''$ and through the optimal 
bridge from $\ConOne(C'',x'')$
to $\ConOne(C'',x'')$, which
guarantees that every bridge we obtain satisfies property (2).
Consider a constraint $C_{0}\in \Theta$ and a variable $x_{0}$ in it.
Considering a path from $x_{0}$ to $x_{0}$ going through $x''$ we can build a reflexive bridge having property (2),
which means that 
$\Opt(\ConOne(C_{0},x_{0}))\supsetneq \ConOne(C_{0},x_{0})$
for any constraint $C_{0}\in \Theta$ and any variable $x_{0}$ in it.

To complete the proof, 
we replace
every $\delta_{i}$ in
(\ref{BridgeBuilding}) 
by the corresponding bridge $\delta_{i}'$ obtained using the second item 
of Lemma~\ref{OneLinkCritical},
and replace the path by the corresponding 
path connecting any pair of linked 
elements in $\Theta'$.
Since $\widetilde{\delta_{i}'}$ contains the projection of the cover of $C_{i}$ onto the variables $z_{i-1}$ and $z_{i}$,
$\widetilde{\delta}$ contains all pairs of elements linked in $\Theta'$.
\end{proof}

\begin{cons}\label{PathInConnectedComponent}
Suppose $\Theta$ is a cycle-consistent connected instance. 
Then for every constraints $C, C'$ with a common variable $x$ there exists
a bridge $\delta$ from $\ConOne(C,x)$ to $\ConOne(C',x)$ such that
$\widetilde{\delta}$ contains the relation $\LinkedCon(\Theta,x)$.
\end{cons}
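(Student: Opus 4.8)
The plan is to obtain this corollary as an immediate specialization of Theorem~\ref{PathInConnectedComponentThm}. The hypotheses of the two statements coincide --- $\Theta$ is cycle-consistent, connected, and all its constraint relations are critical and rectangular --- so I would simply invoke the theorem with the second distinguished variable $x'$ taken to be the common variable $x$ itself. This is permissible because $x$ lies in the scope of both $C$ and $C'$. The theorem then yields a bridge $\delta$ from $\ConOne(C,x)$ to $\ConOne(C',x)$ for which $\delta(x,x,y,y)$ contains every pair of elements of $D_{x}$ that are linked in $\Theta$; and, as noted inside the proof of the theorem, since the two distinguished variables coincide and $\Theta$ is cycle-consistent, the bridge $\delta$ is in fact reflexive.

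It then remains only to match the conclusion of the theorem with the statement of the corollary. By definition, $\LinkedCon(\Theta,x)$ is the congruence on $D_{x}$ consisting of exactly those pairs $(a,b)$ for which there is a path in $\Theta$ connecting $a$ and $b$, i.e.\ exactly the pairs of elements of $D_{x}$ that are linked in $\Theta$. Hence the set of linked pairs inside $D_{x}\times D_{x}$ is literally $\LinkedCon(\Theta,x)$, and the containment $\LinkedCon(\Theta,x)\subseteq\delta(x,x,y,y)$ is precisely what Theorem~\ref{PathInConnectedComponentThm} provides. The one bit of bookkeeping to record is that $\delta(x,x,y,y)$ is a well-defined binary relation on $D_{x}$: both $\ConOne(C,x)$ and $\ConOne(C',x)$ are congruences on $D_{x}$, so the first pair and the last pair of coordinates of the $4$-ary relation $\delta$ range over $D_{x}$, which is what makes the stated inclusion meaningful.

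I do not expect any genuine obstacle here --- all the content is carried by Theorem~\ref{PathInConnectedComponentThm}, and the ``moreover'' clause of that theorem (concerning covers) is not needed for this corollary. The only care required is the routine verification just described, together with the observation that we are free to choose the underlying path so that $x'=x$, which is exactly the situation in which the theorem's proof already produced a reflexive bridge.
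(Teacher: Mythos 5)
Your proposal is correct and matches the paper's treatment: the paper states this corollary without a separate argument, precisely because it is the specialization of Theorem~\ref{PathInConnectedComponentThm} to the case $x'=x$, where the set of pairs of $D_{x}$ linked in $\Theta$ is by definition $\LinkedCon(\Theta,x)$. Your bookkeeping about reflexivity and the well-definedness of $\delta(x,x,y,y)$ is fine but not needed beyond the direct invocation of the theorem.
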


\begin{lem}\label{SubconstraintConnectivity}
Suppose $D^{(1)}$ is a minimal one-of-four reduction for an instance $\Upsilon$,
the solution set of $\Upsilon$ is subdirect,
$\Upsilon^{(1)}(x_{1},\ldots,x_{n})$
defines a subdirect key rectangular relation $\rho$.
For $i=1,2$ the variable 
$x_{i}$ of 
every constraint of $\Upsilon$ containing $x_{i}$
is stable under an irreducible congruence $\sigma_{i}$,
and there exist 
tuples
$(a_{1},a_2,\dots,a_n),(b_{1},b_2,b_3\dots,b_n)\in \rho$, 
$(a_{1}',a_2,\dots,a_n),(b_{1},b_2',b_3\dots,b_n)\notin \rho$ 
such that 
$(a_1,a_1')\in\sigma_{1}^{*}\setminus \sigma_{1}$,
$(b_2,b_2')\in\sigma_{2}^{*}\setminus \sigma_{2}$.
Then there exists a bridge 
$\delta$ from 
$\sigma_{1}$ to $\sigma_{2}$ such that 
$\widetilde \delta$ contains $\Upsilon(x_{1},x_{2})$.
\end{lem}

\begin{proof}
If $D^{(1)}$ is a nonlinear reduction
then by $\omega$ we denote 
the relation defined by $\Upsilon(x_{1},\ldots,x_{n})$.
If $D^{(1)}$ is a linear reduction,
then by $\omega$ we denote 
the relation defined by
$\Omega(x_{1}\dots,x_n,u_{1},\ldots,u_{r})$,
where 
$\Var(\Upsilon) = \{x_{1},\ldots,x_{n},v_{1},\ldots,v_{r}\}$,
$\Omega=\Upsilon \wedge \bigwedge_{i=1}^{r} \delta_{i}(v_{i},u_{i})$ and
$\delta_{i}=\ConLin(D_{v_{i}})$
(see Lemma~\ref{AddLinearVariables}).

We know from Lemmas \ref{SameConOneForNonlinear} and 
\ref{AddLinearVariables}
that
$\ConOne(\omega,j)^{(1)} = \ConOne(\rho,j)$
for every $j\in\{1,2\}$.
Since 
$\rho$ is rectangular, 
we have
$(a_{1},a_{1}')\notin\ConOne(\rho,1)$,
and therefore $(a_{1},a_{1}')\notin\ConOne(\omega,1)^{(1)}$.
Since $x_{1}$ in every constraint containing $x_{1}$ is stable under $\sigma_{1}$, 
the relation $\ConOne(\omega,1)$
is stable under $\sigma_{1}$.
Therefore, $\ConOne(\omega,1)$
should be equal $\sigma_{1}$, 
since otherwise 
$\ConOne(\omega,1)\supseteq \sigma_{1}^{*}$, which 
contradicts 
$(a_{1},a_{1}')\notin\ConOne(\omega,1)^{(1)}$.
In the same way we can show that 
$\ConOne(\omega,2)=\sigma_2$.

Since $\rho$ is a key relation, 
there should be a key tuple $\beta\in (D_{x_{1}}^{(1)}\times\dots\times D_{x_{n}}^{(1)})\setminus \rho$ such that for every
$\alpha\in (D_{x_{1}}^{(1)}\times\dots\times D_{x_{n}}^{(1)})\setminus \rho$ 
there exists
a vector-function $\Psi$ which
preserves $\rho$ and gives $\Psi(\alpha) = \beta$.
First,
we put $\alpha_a = (a_{1}',a_2,\dots,a_n)$ and
apply the corresponding unary vector-function $\Psi_{a}$
to $(a_{1},a_2,\dots,a_n)$  to get a tuple $\beta_{a}$.
Second, we put $\alpha_{b} = (b_{1},b_2',b_{3},\dots,b_n)$ and
apply the corresponding unary vector-function $\Psi_{b}$
to $(b_{1},b_2,b_{3},\dots,b_n)$ to get a tuple $\beta_{b}$.
As a result we get two tuples $\beta_{a}$ and $\beta_{b}$ from $\rho$ 
that differ from the key tuple $\beta$ 
just in the first and second coordinates, respectively.
Since $\ConOne(\omega,j)^{(1)} = \ConOne(\rho,j)$
for every $j\in\{1,2\}$,
we have $\beta\notin\omega$ if $D^{(1)}$ is nonlinear, 
and $\beta\notin\proj_{1,\dots,n}(\omega^{(1)})$ if 
$D^{(1)}$ is linear.

Then by applying  
Lemma~\ref{OneLink} to $\omega$ and 
$\beta_{a},\beta_{b},\beta$ (if $D^{(1)}$ is a linear reduction
we extend these tuples),
we get a bridge $\delta$ from $\sigma_1$ to $\sigma_2$ such that 
$\widetilde\delta$ contains $\proj_{1,2}(\omega)$, which is equal to the relation defined by $\Upsilon(x_{1},x_{2})$.
\end{proof}

\subsection{Expanded coverings of crucial instances}

In this subsection we prove two properties 
of expanded coverings of crucial instances.

\begin{lem}\label{KeepCrucialConstraint}
Suppose $\Theta$ is a crucial instance in $D^{(1)}$, 
$\Theta'\in\Expanded(\Theta)$ via the map $S\colon \Var(\Theta')
\to\Var(\Theta)$, and $\Theta'$ has no solution in $D^{(1)}$.
Then for every constraint $C = \rho(x_1,\dots,x_n)$ 
in $\Theta$ there exists a constraint $C'$ in
$\Theta'$ whose image in $\Theta$ is $C$ 
(i.e., $C' = \rho(y_1,\dots,y_n)$ and $S(y_i) = x_i$ 
for $i = 1,2,\dots,n$).
\end{lem}

\begin{proof}
Let $\Theta''$ be obtained from $\Theta$ by replacing every variable $y$ by $S(y)$.
Obviously, $\Theta''$ still does not have a solution in $D^{(1)}$.
By the definition of expanded coverings every relation in the obtained instance 
is either unary (and full), or weaker or equivalent to a constraint from $\Theta'$.
Since $\Theta$ is crucial in $D^{(1)}$ and $\Theta''$ has no solutions in $D^{(1)}$, 
there should be a constraint $C'$ in $\Theta'$ such that its image 
$C''$ in $\Theta''$ is weaker or equivalent to $C$ 
but not weaker than $C$.
Since $\Theta$ is crucial, all variables of $C$ are not dummy. 
Since $C''$ cannot have more variables than $C$ we obtain that 
$C'' = C$, which means that 
$C' = \rho(y_{1},\dots,y_{n})$ and $S(y_{i}) = x_{i}$ for every $i\in\{1,2,\dots,n\}$.
\end{proof}

\begin{lem}\label{StayNotConnected}
Suppose $\Theta$ is a crucial instance in $D^{(1)}$, $\Theta'\in\Expanded(\Theta)$ has no solutions in $D^{(1)}$, 
every constraint relation of $\Theta$ is a critical rectangular relation, and $\Theta'$ is connected.
Then $\Theta$ is connected.
\end{lem}

\begin{proof}
Let $\Theta''$ be obtained from $\Theta'$ by replacing every variable $y$ by $S(y)$ from the definition of the expanded covering.

Let us show that any two constraints $C_{1}$ and $C_{2}$ with a common variable $x$ of $\Theta$ are adjacent.
By Lemma~\ref{KeepCrucialConstraint}, 
there exist constraints $C_{1}'$ and $C_{2}'$ of $\Theta'$ whose images in $\Theta$ are  $C_{1}$ and $C_{2}$.
Since $\Theta'$ is connected, the instance $\Theta''$ is also connected.
By Corollary~\ref{PathInConnectedComponent}
constraints $C_{1}$ and $C_{2}$ of $\Theta''$  are adjacent in $x$.
Therefore, $C_{1}$ and $C_{2}$ are adjacent in $\Theta$.
Thus, we proved that any two constraints of $\Theta$ with a common variable are adjacent.
Since $\Theta$ is crucial in $D^{(1)}$, it is not fragmented, which implies that $\Theta$ is connected.
\end{proof}

\subsection{Strategies}

\begin{thm}\label{PreviousReductions}
Suppose $D^{(0)},D^{(1)},\dots,D^{(s)}$ is a strategy for $\Omega$,
the solution set of $\Omega^{(i)}$ is subdirect
for every $i\in\{0,1,\ldots,s\}$, $j<s$,
$D^{(s+1)}$ is a one-of-four reduction,
at least one of the two reductions $D^{(j+1)}$, $D^{(s+1)}$ is nonlinear,
and $(\Omega^{(j)}(x_{1},\ldots,x_{n}))^{(s+1)}$ defines a  nonempty relation.
Then $(\Omega^{(j+1)}(x_{1},\ldots,x_{n}))^{(s+1)}$ defines a nonempty relation.
\end{thm}
\begin{proof}
Let $\Var(\Omega) = \{x_{1},\dots,x_{n},y_{1},\dots,y_{t}\}$,
$\Omega^{(j)}(x_{1},\dots,x_{n},y_{1},\dots,y_{t})$ define a relation $R$.
Let the reduction $D^{(j+1)}$ be of type $\mathcal T_{1}$,
the reduction $D^{(s+1)}$ be of type $\mathcal T_{2}$.

Assume that $D^{(s+1)}$ is an absorbing reduction.
Since $\Omega^{(s)}(x_{1},\ldots,x_{n},y_{1},\ldots,y_{t})$ defines a subdirect relation,
Lemma~\ref{AbsLessThanThree} implies that
$\Omega^{(s+1)}(x_{1},\ldots,x_{n},y_{1},\ldots,y_{t})$ defines a nonempty relation.
From now on we assume that $\mathcal T_{2}$ is not the
absorbing type.

For $i\in\{j,\dots,s\}$ and $k\in\{1,\dots,n\}$ put
\begin{align*}
B_{i}= R\cap& (D_{x_{1}}^{(i)}\times\dots\times D_{x_{n}}^{(i)}\times D_{y_{1}}^{(j)}\times\dots\times D_{y_{t}}^{(j)}),\\
B_{i}'= R\cap& (D_{x_{1}}^{(i)}\times\dots\times D_{x_{n}}^{(i)}\times D_{y_{1}}^{(j+1)}\times\dots\times D_{y_{t}}^{(j+1)}),\\
B^{k} = R\cap& (D_{x_{1}}^{(s)}\times\dots\times D_{x_{k-1}}^{(s)} \times D_{x_{k}}^{(s+1)}\times D_{x_{k+1}}^{(s)}\times \dots\times D_{x_{n}}^{(s)}\times D_{y_{1}}^{(j)}\times\dots\times D_{y_{t}}^{(j)}).
\end{align*}

By Lemma~\ref{PCBrel}
$B_{j+1}$ and $B_{j}'$ are one-of-four subuniverses of $R=B_{j}$ of type $\mathcal T_{1}$.
Similarly, 
$B_{i+1}$ is a one-of-four subuniverse of 
$B_{i}$ for every $i$ 
and $B^{k}$ is a one-of-four subuniverse of $B_{s}$  of type $\mathcal T_{2}$ for every $k$ (here we may need to reduce the domain of the last $t$ variables to achieve the subdirectness of $B_{i}$).

Let us show by induction on $i$ 
that $B_{i}'$ is a one-of-four subuniverse of $B_{i}$
of type $\mathcal T_{1}$.
For $i=j$ we already know this.
Assume that 
$B_{i}'$ is a one-of-four subuniverse of $B_{i}$.
By Lemma~\ref{PCBsub},
$B_{i+1}\cap B_{i}' = B_{i+1}'$ is a one-of-four subuniverse 
of $B_{i+1}$ of type $\mathcal T_{1}$.
Therefore, 
$B_{s}'$ is a one-of-four subuniverse of 
$B_{s}$ of type $\mathcal T_{1}$.

We need to prove that 
$B^{1}\cap\dots\cap B^{n}\cap B_{s}'\neq\varnothing$.
Since $(\Omega^{(j)}(x_{1},\ldots,x_{n}))^{(s+1)}$ defines a  nonempty relation,
$B^{1}\cap\dots\cap B^{n}\neq \varnothing$,
Since the solution set of 
$\Omega^{(s)}$ is subdirect, 
$B^{k}\cap B_{s}'\neq\varnothing$
for every $k\in\{1,\dots,n\}$.
Note that $B^{k}$ is of type $\mathcal T_{2}$ and 
$B_{s}'$ is of type $\mathcal T_{1}$, 
and they cannot be both linear.
Since $B^{k}$ is not a binary absorbing subuniverse,
Lemma~\ref{PCBint}
implies that
$B^{1}\cap\dots\cap B^{n}\cap B_{s}'\neq\varnothing$.
\end{proof}

\begin{cons}\label{PathStability}
Suppose $\Theta$ is a cycle-consistent
CSP instance, $D^{(0)},D^{(1)},\dots,D^{(s)}$ is a strategy for $\Theta$,
$\Upsilon\in\Expanded(\Theta)$ is a tree-formula, 
$x$ is a parent of $x_{1}$ and $x_{2}$,
and 
either (i) $B$ is a center of $D_{x}^{(s)}$,
or (ii) $B$ is a PC subuniverse of $D_{x}^{(s)}$
and $D_{y}^{(s)}$ has no nontrivial binary absorbing subuniverse or center for every $y$.
Then 
the pp-formula
$\Upsilon^{(s)}(x_{1},x_{2})$
defines a binary relation with a nonempty intersection with $B\times B$.
\end{cons}
\begin{proof}
Since every reduction in a strategy
is 1-consistent
and $\Upsilon$ is a tree-formula,
the solution set of $\Upsilon^{(i)}$ is
subdirect for every $i$.
If $B = D_{x}^{(s)}$ then the claim follows 
from the definition of a strategy (every reduction is 1-consistent).
Otherwise, let us define 
a reduction $D^{(s+1)}$ by 
$D_{x}^{(s+1)} = D_{x_1}^{(s+1)} = D_{x_2}^{(s+1)} = B$, 
$D_{y}^{(s+1)}=D_{y}^{(s)}$ for the remaining variables.
Thus, we have a nonlinear reduction $D^{(s+1)}$.
Since the instance $\Theta$ is cycle-consistent,
and $x$ is a parent of $x_{1}$ and $x_{2}$, 
$\Upsilon(x_{1},x_{2})$ defines a reflexive relation.
Hence,
$(\Upsilon(x_{1},x_{2}))^{(s+1)}$ defines a nonempty relation.
By Theorem~\ref{PreviousReductions},
we obtain that 
$(\Upsilon^{(1)}(x_{1},x_{2}))^{(s+1)}$ defines a nonempty relation.
Repeatedly applying Theorem~\ref{PreviousReductions},
we show that 
$(\Upsilon^{(2)}(x_{1},x_{2}))^{(s+1)},
(\Upsilon^{(3)}(x_{1},x_{2}))^{(s+1)},\dots
(\Upsilon^{(s)}(x_{1},x_{2}))^{(s+1)}$ 
define nonempty relations, which means that 
$\Upsilon^{(s)}(x_{1},x_{2})$
has a nonempty intersection with $B\times B$.
\end{proof}

\begin{lem}\label{FitInLinearSubuniverses}
Suppose $R\subseteq A_{0}\times B_{0}$ is a subdirect relation, and
\begin{enumerate}
\item $A_{0}\supseteq A_{1}\supseteq\dots\supseteq A_{s+1}$ and $A_{i+1}$ is a one-of-four subuniverse of $A_{i}$
for $i\in\{0,1,2,\dots,s\}$;
\item $B_{0}\supseteq B_{1}\supseteq\dots\supseteq B_{t+1}$ and $B_{i+1}$ is a one-of-four subuniverse of $B_{i}$
for $i\in\{0,1,2,\dots,t\}$;
\item $A_{s+1}$ and $B_{t+1}$
are linear subuniverses of $A_{s}$ and $B_{t}$, respectively;
\item there exist
$a\in A_{s+1}$, 
$b\in B_{s+1}$, 
$a'\in A_{0}$, $b'\in B_{0}$ such that 
$(a,b'),(a',b),(a',b')\in R$;
\item $R\cap (A_{s}\times B_{t})\neq \varnothing$.
\end{enumerate}

Then $R\cap (A_{s+1}\times B_{t+1})\neq \varnothing$.
\end{lem}
\begin{proof}
Denote $a'' = w(a,a',\dots,a')$, 
$b'' = w(b,b',\dots,b') = w(b',\dots,b',b)$.

We prove by induction on $s+t$.
Assume that $s+t=0$, which implies $s=t=0$.
By Lemma~\ref{LinearSpecialWNU}, 
we have 
$a''\in A_{1}$ and 
$b''\in B_{1}$.
Since $w$ preserves $R$, $(a'',b'')\in R$, which completes this case.

Let us prove the induction step.
Assume that $s+t>0$. 
Without loss of generality we assume that
$s>0$.
Put 
$$R'(x_{1},x_2,y_{1},y_{2}) = 
R(x_{1},y_{2})\wedge
R(y_{1},x_{2})\wedge 
R(y_{1},y_{2}).$$
Put 
$P_{i} = 
R'\cap (A_{i}\times B_{0}\times A_{0}\times B_{0})$, 
$Q_{i} = 
R'\cap (A_{0}\times B_{i}\times A_{0}\times B_{0})$,
$T = 
R'\cap (A_{0}\times B_{0}\times A_{1}\times B_{0})$.
Since $R'$ is subdirect, 
by Lemma~\ref{PCBrel}
$P_{i+1}$ is a one-of-four subuniverse of $P_{i}$
,
$Q_{i+1}$ is a one-of-four subuniverse of $Q_{i}$
for every $i$,
and $T$ is a one-of-four subuniverse of 
$R' = P_{0}=Q_{0}$.
We want to prove that 
$P_{s+1}\cap Q_{t+1}\cap T\neq \varnothing$.
Since 
$(a,b,a',b')\in P_{s+1}\cap Q_{t+1}$, 
Lemma~\ref{SequencesOfSubuniverses},
implies that 
$P_{s+1}\cap Q_{t}$
and 
$P_{s}\cap Q_{t+1}$ 
are one-of-four subuniverses of 
$P_{s}\cap Q_{t}$.
Since 
$R\cap (A_{s}\times B_{t})\neq \varnothing$, 
we have 
$T\cap P_{s}\cap Q_{t}\neq\varnothing$.  
Lemma~\ref{SequencesOfSubuniverses} implies
that 
$P_{0}\supseteq P_{1}\supseteq\dots\supseteq
P_{s}\supseteq P_{s}\cap Q_{1}\supseteq 
\dots\supseteq P_{s}\cap Q_{t}$ (here all inclusions mean one-of-four subuniverses),
which by the same lemma implies that 
$T\cap P_{s}\cap Q_{t}$ is a 
one-of-four subuniverse of 
$P_{s}\cap Q_{t}$.

If $A_{1}$ is not a linear subuniverse of 
$A_{0}$, then by Theorem~\ref{PCBint}
the intersection of 
one-of-four subuniverses 
$P_{s+1}\cap Q_{t}$,
$P_{s}\cap Q_{t+1}$, 
and $T\cap P_{s}\cap Q_{t}$
of different types cannot be empty, 
that is,
$P_{s+1}\cap Q_{t+1}\cap T\neq \varnothing$, 
which completes this case.

Assume that $A_{1}$ is linear.
Since
$(a,b,a',b'),
(a',b',a',b'),
(a',b',a,b')\in R'$
and $w$ preserves $R'$, 
we obtain
$(a,b,a',b'),
(a'',b'',a',b'),
(a'',b'',a'',b')\in R'$.
Note that by Lemma~\ref{LinearSpecialWNU}
$a''\in A_{1}$ and 
$b''\in B_{1}$.
We look at $R'$ as a binary relation 
$R'\subseteq (A_{0}\times B_{0})\times 
(A_{0}\times B_{0})$ 
to apply the inductive assumption.
Put
$\mathcal A_{i} = 
\proj_{1,2}(R')\cap (A_{i+1}\times B_{1})$
for $0\le i\le s$
and 
$\mathcal A_{i} = 
\proj_{1,2}(R')\cap (A_{s+1}\times B_{i-s+1})$
for $s+1\le i\le s+t$.
Combining Lemma~\ref{PCBrel} 
and Lemma~\ref{SequencesOfSubuniverses}
we derive that 
$\mathcal A_{i+1}$ is a one-of-four subuniverse of $\mathcal A_{i}$ for $i=0,1,\dots,s+t-1$.
Put 
$\mathcal B_{i} = \proj_{3,4}(R'\cap (A_{1}\times B_{1}\times A_{i}\times B_{0}))$
for $i=0,1$.
Combining Lemmas~\ref{PCBrel},
\ref{SequencesOfSubuniverses},
and \ref{ReductionAndProjectionGivesOneOfFour},
we derive that $\mathcal B_{1}$ is a linear subuniverse of 
$\mathcal B_{0}$.
Then we apply the inductive assumption to 
$R'\cap (\mathcal A_{0}\times \mathcal B_{0})$,
$\mathcal A_{0}\supseteq \mathcal A_{1}\supseteq\dots\supseteq \mathcal A_{s+t}$
and $\mathcal B_{0}\supseteq \mathcal B_{1}$,
and show that
$R'\cap (A_{s+1}\times B_{t+1}\times A_{1}\times B_{0})=
P_{s+1}\cap Q_{t+1}\cap T\neq \varnothing$.

Put 
$B_{i}' = \proj_{2}(R\cap (A_{1}\times B_{i}))$
for $i=0,1,\dots,s+1$.
By Lemma~\ref{ReductionAndProjectionGivesOneOfFour}, 
$B_{i+1}'$ is a one-of-four subuniverse of $B_{i}'$
for every $i$.
Applying the inductive assumption 
to $R\cap (A_{1}\times B_{0})$,
$A_{1}\supseteq A_{2}\supseteq\dots \supseteq A_{s+1}$, 
and 
$B_{0}'\supseteq B_{1}'\supseteq\dots \supseteq B_{t+1}'$, 
we obtain that $A_{s+1}\cap B_{t+1} \neq\varnothing$.
 \end{proof}

\begin{lem}\label{ParProperty}
Suppose $D^{(0)}, D^{(1)},\ldots,D^{(s)}$ is a strategy for
a subdirect constraint $\rho(x_{1},\ldots,x_{n})$,
$D^{(s+1)}$ is a linear reduction, and
\begin{align*}
(b_1,\ldots,b_{t},a_{t+1},\ldots,a_{n})&\in\rho,\\
(a_1,\ldots,a_{t},b_{t+1},\ldots,b_{n})&\in\rho,\\
(b_1,\ldots,b_{t},b_{t+1},\ldots,b_{n})&\in \rho,\\
(a_1,\ldots,a_{t},a_{t+1},\ldots,a_{n})&\in D^{(s+1)}.
\end{align*}
Then there exists
$(d_1,d_{2},\ldots,d_{n})\in \rho^{(s+1)}$.
\end{lem}
\begin{proof}
For $i=0,1,2,\dots,s+1$ put 
$$A_{i} = 
\proj_{1,2,\dots,t} 
(\rho
\cap (D_{x_{1}}^{(i)}\times\dots\times D_{x_{t}}^{(i)}
\times D_{x_{t+1}}\times\dots\times D_{x_{n}}),$$
$$B_{i} = 
\proj_{1,2,\dots,t} 
(\rho
\cap (D_{x_{1}}\times\dots\times D_{x_{t}}
\times D_{x_{t+1}}^{(i)}\times\dots\times D_{x_{n}}^{(i)}).$$
Since $\rho^{(i)}$ is subdirect for every $i\in\{0,1,\dots,s\}$,
Lemma~\ref{PCBrel} implies that
$A_{i+1}$ is a one-of-four subuniverse of 
$A_{i}$ 
and 
$B_{i+1}$ is a one-of-four subuniverse of 
$B_{i}$ for every $i\in\{0,1,\dots,s\}$.
Since 
$(a_{1},\ldots,a_{t})\in A_{s+1}$
and 
$(a_{t+1},\ldots,a_{n})\in B_{s+1}$,
Lemma~\ref{FitInLinearSubuniverses}
implies that
$A_{s+1}\cap B_{s+1}\neq \varnothing$, which completes the proof.
\end{proof}

\subsection{Growing population divides into colonies.}
In this section we prove a theorem that clarifies the inductive strategy used in the proof of Theorem \ref{FindPerfectConstraint}.
To simplify explanation we decided to avoid our usual terminology.
Instead, we argue in terms of organisms, reproduction, and friendship.

We consider a set $X$ whose elements we call \emph{organisms}.
At the moment 1 we had a set of organisms $X_{1}$.
At every moment some organisms give a birth to new organisms,
as a result we get a sequence of organisms $X_{1}\subseteq X_{2}\subseteq X_{3}\subseteq \dots,$
where 
$\bigcup_{i} X_{i} = X$, $X_{i}\subseteq X$,
 and $|X_{i}|<\infty$ for every $i$.
We assume that 
each organism from $X\setminus X_{1}$ 
has exactly one parent.

Every organism has a characteristic that we call \emph{strength}. Thus we have
a mapping $\xi:X\to \{1,2,\ldots, S\}$
that assigns a characteristic to every organism.
Also we have a binary reflexive symmetric relation $F$ on the set $X$, which we call \emph{friendship}.
For an organism $x$ by $\BD(x)$ we denote the minimal $i$ such that $x\in X_{i}$.
A sequence of organisms $x_{1},\ldots,x_{n}$ such that $x_{i}$ is a friend of $x_{i+1}$ for every $i$ is called \emph{a path}.

\begin{thm}\label{newcolonies}
Suppose $X_{1},X_{2},X_{3},\ldots$, $\xi$, and $F$ satisfy the following conditions:
\begin{enumerate}
\item
\textbf{A child is always weaker than its parent.}
If $y$ is the parent of $x$, then $\xi(y)>\xi(x)$.
\item
\textbf{Older friends are parents' friends.}
If $\BD(y)<\BD(x)$ and $x$ is a friend of $y$,
then the parent of $x$ is a friend of $y$ 
(or the parent of $x$ is $y$).
\item
\textbf{Only friends' kids can be friends.}
If $\BD(x)  = \BD(y)$ and $x$ is a friend of $y$,
then the parents of $x$ and $y$ are friends.
\item
\textbf{No one can have infinitely many friends.} $|\{y\in X\mid (x,y)\in F\}|<\infty$ for every $x\in X$.
\item
\textbf{Reproduction never stops.}
$|\bigcup_{i}X_{i}|=\infty$.
\end{enumerate}
Then there exists $N$ such that
$X_{N}$ can be divided into
two nonempty disjoint sets $X_{N}'$ and $X_{N}''$ such that 
there is no friendship between
$X_{N}'$ and $X_{N}''$.
\end{thm}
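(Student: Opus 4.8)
The plan is to argue by contradiction: assume that for every $N$ the friendship graph on $X_{N}$ is connected, and derive a contradiction with conditions~4 and~5. The first ingredient is a \emph{lifting lemma}: if $x$ is a friend of $y$, then, replacing $x$ and $y$ by suitable ancestors, one obtains a friendship between the founder ancestor of $x$ and the founder ancestor of $y$. Here a \emph{founder} is an organism with no parent; by condition~1 and the bound $S$ on strength every parent-chain has length at most $S$, so founder ancestors exist and are unique. The lemma is proved by induction on $\BD(x)+\BD(y)$: if the two birthdates differ, condition~2 lets us replace the one born later by its parent, strictly decreasing the sum; if they are equal and both have parents, condition~3 lets us replace both by their parents (the remaining boundary cases are routine). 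Consequently every friendship edge occurring in any $X_{N}$ projects to a friendship path among founders, so the connectedness of all the $X_{N}$ forces every organism that ever appears to descend from founders lying in a single connected component of the founder friendship graph — there is only one ``clan''.

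Next, condition~5 together with the depth bound forces, via K\"onig's Lemma applied to the parent-forest (an infinite forest of depth at most $S$ cannot be finitely branching), the existence of an organism $g$ with infinitely many children; choose such a $g$ with $\xi(g)$ minimal, so that no child of $g$ itself has infinitely many children. For all but finitely many children $c$ of $g$ — those born late enough that $|X_{\BD(c)}|>1$ — connectedness of $X_{\BD(c)}$ gives a friend $y\ne c$ of $c$ with $\BD(y)\le\BD(c)$, and lifting this single edge once gives: $g$ is a friend of $y$ when $\BD(y)<\BD(c)$ (condition~2), and $g$ is a friend of the parent of $y$ when $\BD(y)=\BD(c)$ (condition~3). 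Since $g$ has only finitely many friends (condition~4), the alternative $\BD(y)<\BD(c)$, occurring for infinitely many $c$, would force infinitely many distinct children of $g$ into the friend set of a single member of the finite set $F(g)$, contradicting condition~4; the case $y=g$ would directly give $g$ infinitely many friends. So for infinitely many children $c$ we are pushed into the situation $\BD(y_{c})=\BD(c)$, $y_{c}\ne g$, with the parents of the $y_{c}$ ranging over the finite set $F(g)$; hence some fixed $g'\in F(g)$ is the parent of $y_{c}$ for infinitely many $c$, and then (again by condition~4) the $y_{c}$ are infinitely many distinct children of $g'$.

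The main obstacle is closing this last case, in which the friendship concentrates among organisms born simultaneously to a common parent and the argument ``bounces'' from $g$ to another organism $g'$ that also has infinitely many children. I would resolve it by a secondary induction, most naturally on the strength bound $S$: delete all founders, observe that the remaining data still satisfies conditions~1--5 with bound at most $S-1$ (condition~2 becomes vacuous precisely when the parent in question was a founder), apply the inductive hypothesis to obtain a stage at which the reduced friendship graph is disconnected, and then transport this disconnection back through the lifting lemma and the clan structure to a disconnection of some $X_{N}$. Making this reassembly precise — ensuring that re-adding the founders does not glue the two parts together, which is where the analysis of friendship paths through founders from the first paragraph is needed — is the technical heart of the argument. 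I note finally that the trivial sub-case where some newly born organism $x$ has no friend other than itself already yields the split $X_{N}'=\{x\}$, $X_{N}''=X_{N}\setminus\{x\}$, so one may assume throughout that this does not happen.
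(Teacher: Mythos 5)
Your first two paragraphs are essentially correct (the lifting lemma is fine, and the derivation that some $g$ has infinitely many children and that the friendships of its late-born children must concentrate in the same-birthdate case, bouncing to another infinite reproducer $g'\in F(g)$, is sound), but they stop exactly where the difficulty starts, and the proposed way of closing it --- delete all founders and induct on the strength bound $S$ --- does not work. After deleting the founders, every child of a founder becomes a parentless organism whose birthdate can be arbitrarily large, and for such organisms conditions 2 and 3 are, as you say, vacuous. In that generalized framework the statement is simply false: take organisms $z_1,z_2,z_3,\ldots$, one born at each moment, all parentless, with $z_i$ a friend of $z_{i-1}$ and of nothing else. Conditions 1--3 hold vacuously, condition 4 holds (each organism has at most two friends), condition 5 holds, yet $X_N=\{z_1,\ldots,z_N\}$ is a path, hence connected, for every $N$. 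Note also that this configuration cannot be ruled out by anything your reduced data remembers: in a legitimate original system condition 4 applied to the (finitely many, deleted) founders is what forbids it, and that information is lost in the reduction. So the inductive hypothesis cannot be invoked for the reduced system; the unproved ``reassembly'' step is not the only issue --- the reduction itself is unsound, because the whole force of the theorem comes from every late-born organism having a parent, which is precisely what deleting founders destroys.

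For comparison, the paper closes the argument by a different finiteness device rather than an induction on $S$. Having found an organism $x$ with infinitely many children, it considers $\Kids(x,s)$, the children of $x$ joined to $x$ by a friendship path all of whose organisms are stronger than $s$, takes the maximal $s_0$ for which some $\Kids(x,s_0)$ is infinite, shows that the set $Y$ of organisms reachable from $x$ through organisms stronger than $s_0+1$ is finite, fixes a moment $t$ after which no new friendships involving $Y$ appear, and then, for a child $y\in\Kids(x,s_0)$ with $\BD(y)>t$, lifts the newly born organisms on the witnessing path to their parents (your conditions 2 and 3) to exhibit an organism that must acquire a new friend at moment $\BD(y)>t$ --- a contradiction. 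Your ``bouncing to $g'$'' observation is a real phenomenon (it is why minimality of $\xi(g)$ alone cannot finish), but the resolution is this maximal-threshold/finite-$Y$ argument, and your proposal as written leaves the theorem unproved at exactly that point.
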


\begin{proof}
Assume the contrary. Then 
there exists a path between any two organisms.

For every moment 
$t$ 
and every organism $x$ 
by $x^{t}$ we denote 
the predecessor of $x$
from $X_{t}$
with the maximal $\BD$, that is 
a closest predecessor who already lived at the moment $t$.
For example 
$x^{t} = x$ for $t\ge \BD(x)$, 
and 
$x^{\BD(x)-1}$ is the parent of $x$.

Suppose we have a path of organisms 
$x_{1},\ldots,x_{n}$.
We claim that  
$x_{1}^{t},\ldots,x_{n}^{t}$ is also a path
for any $t$.
We will prove by induction starting with 
sufficiently large $t$ such that $x_{1},\ldots,x_{n}\in X_{t}$
and therefore
$(x_{1}^{t},\ldots,x_{n}^{t}) = (x_{1},\ldots,x_{n})$.
As the inductive step, 
we assume that this is a path for $t=t_{0}$
and show that this is a path for $t = t_{0}-1$.
The induction step follows from
hypotheses (2) and (3).
The path 
$x_{1}^{t},\ldots,x_{n}^{t}$ will be called 
\emph{a path at the moment $t$}.
Note that organisms of the path at the moment $t$ are 
not weaker than the corresponding organisms of the original path.

Choose a maximal strength $s$ such that
we have infinitely many organisms of strength $s$.
Then infinitely many of them have the same parent, hence, there exists a parent reproducing infinitely many times.

For every $x$ and a strength $s$
by $\Kids(x,s)$ we denote the set of all children $y$ of $x$ such that
there exists a path from $x$ to $y$ with all the organisms in the path stronger than $s$.
We consider the maximal $s_{0}$ such that
$\Kids(x,s_{0})$ is infinite for some organism $x$.
Since we can always put $s_{0}=0$ for 
a parent reproducing infinitely many times, 
$s_0$ exists.
Note that this implies that $x$ is stronger than $s_{0}+1$.


By $Y$ we denote the set of all organisms $y$ such that
there exists a path from $x$ to $y$
with all the organisms in the path stronger than $s_{0}+1$.
Note that $Y$ includes $x$.
Let us show that $Y$ is finite.
Assume the opposite.
Let $s$ be the maximal strength such that we have infinitely many organisms of this strength in $Y$.
Consider an organism $v$ from $Y$ 
with strength $s$ such that 
$\BD(v)>\BD(x)$ (we still have infinitely many of them).
Considering the path from $x$ to $v$ at the moment 
$\BD(v)-1$, we get a path from 
$x$ to the parent of $v$, which means 
that the parent of $v$ is also in $Y$.
Since parents are stronger than children
and we may have only finitely many organisms stronger than $s$ in $Y$,
we have only finitely many such parents in $Y$.
Therefore, there exists a parent $z\in Y$ with infinitely many children from $Y$.
Since we can glue a 
path from the parent to $x$ and 
a path from $x$ to its kid,
this implies that $\Kids(z,s_{0}+1)$ is infinite.
This contradicts the maximality of $s_{0}$ and proves that $Y$ is finite.

Let $t$ be the first moment such that 
$X_t$ contains all friends of friends of organisms from $Y$.
Consider an organism $y$ from
$\Kids(x,s_{0})$ with $\BD(y)>t$. 
Choose a path from $x$ to $y$ with all organisms stronger than $s_{0}$.
We consider the last organism $u$ in the path such that
$\BD(u)<\BD(y)$.
Taking the fragment of this path from $y$ to $u$ 
at the moment $\BD(y)-1$ 
we obtain a path from
$x$ to $u$ with all organisms but $u$ stronger than $s_{0}+1$.
This means that all the organisms but $u$ in this path are
from $Y$. 
Thus $u$ has a friend from $Y$, 
which means that all friends of $u$ were born before the moment 
$t$.
This contradicts the fact that
an organism next to $u$ in the original path from $x$ to $y$ 
was born after the moment $\BD(y)-1$.
\end{proof} 
\section{Proof of the Main Theorems}\label{MainProofs}
\subsection{Existence of a next reduction}

The next Lemma has its roots in Theorem 20 from \cite{FederVardi},
where the authors proved that bounded width 1
is equivalent to tree duality. 

\begin{lem}\label{ConstraintPropagation}
Suppose $D^{(0)},D^{(1)},\dots,D^{(s)}$ is a strategy for a 1-consistent CSP instance  $\Theta$,
and $D^{(\top)}$ is a reduction of $\Theta^{(s)}$.
\begin{enumerate}
\item If there exists a 1-consistent reduction contained in
$D^{(\top)}$ and $D^{(s+1)}$ is maximal among such reductions,
then for every variable $y$ of $\Theta$ there exists a tree-formula $\Upsilon_{y}\in \ExpShort(\Theta)$
such that
$\Upsilon_{y}^{(\top)}(y)$ defines $D_{y}^{(s+1)}$.
\item Otherwise,
there exists a tree-formula $\Upsilon\in \ExpShort(\Theta)$
such that $\Upsilon^{(\top)}$ has no solutions.
\end{enumerate}
\end{lem}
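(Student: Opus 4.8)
The plan is to express the maximal $1$-consistent reduction inside $D^{(\top)}$ as the output of the standard constraint-propagation procedure, and then to unfold that procedure, step by step, into tree-formulas from $\ExpShort(\Theta)$.

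Define a decreasing sequence of reductions $D^{[0]}\supseteq D^{[1]}\supseteq\cdots$ by $D^{[0]}=D^{(\top)}$ and, for each variable $y$,
$$D_y^{[t+1]}=\bigcap_{C\ni y}\proj_y\bigl(\rho_C\cap(D_y^{(\top)}\times\textstyle\prod_{x\neq y}D_x^{[t]})\bigr),$$
where the intersection runs over the constraints $C=\rho_C(\dots)$ of $\Theta$ whose scope contains $y$ (written with $y$ as the distinguished coordinate). Since the $y$-coordinate is always confined to $D_y^{(\top)}$, one checks by induction that the sequence is decreasing, so it stabilises after $R\le\sum_i|D_i^{(\top)}|$ steps; set $D^{[\infty]}=D^{[R]}$. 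A routine argument (using that an intersection of subuniverses is a subuniverse) shows that, if every $D_y^{[\infty]}$ is nonempty, then $D^{[\infty]}$ is the largest $1$-consistent reduction contained in $D^{(\top)}$ — hence $D^{(s+1)}=D^{[\infty]}$ by maximality — while if some $D_y^{[\infty]}$ is empty then no $1$-consistent reduction is contained in $D^{(\top)}$.

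The core of the argument is the following claim, proved by induction on $t$: for every variable $y$ there is a tree-formula $\Upsilon_y^{[t]}\in\ExpShort(\Theta)$, all of whose variables except its root $y$ are fresh, such that $(\Upsilon_y^{[t]})^{(\top)}(y)$ defines $D_y^{[t]}$. The base case is immediate: take for $\Upsilon_y^{[1]}$ the conjunction of one fresh copy of each constraint through $y$ (and the trivial one-variable formula if $y$ lies in no constraint). For the inductive step put
$$\Upsilon_y^{[t+1]}=\bigwedge_{C\ni y}\Bigl(\widehat C\ \wedge\bigwedge_{x\in\mathrm{scope}(C)\setminus\{y\}}\widehat\Upsilon_x^{[t]}\Bigr),$$
where $\widehat C$ is a fresh copy of $C$ keeping $y$ at $y$'s coordinate and putting a fresh variable $z_{C,x}$ at the coordinate of each other variable $x$, and $\widehat\Upsilon_x^{[t]}$ is a fresh copy of $\Upsilon_x^{[t]}$ with its root renamed to $z_{C,x}$ and every other variable replaced by a fresh one. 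Distinct branches of this conjunction share only $y$, and $\widehat C$ shares only $z_{C,x}$ with $\widehat\Upsilon_x^{[t]}$; since gluing tree-formulas along single variables yields a tree-formula and preserves membership in $\ExpShort(\Theta)$ (each $\widehat C$ maps to $C\in\Theta$, and the composed renamings carry the constraints of $\widehat\Upsilon_x^{[t]}$ to constraints of $\Theta$), the formula $\Upsilon_y^{[t+1]}$ is again a tree-formula in $\ExpShort(\Theta)$ with only $y$ non-fresh. Evaluating under $(\top)$: the fresh variable $z_{C,x}$ has domain $D_x^{(\top)}$, and by the inductive hypothesis the subformula $\widehat\Upsilon_x^{[t]}$ forces $z_{C,x}\in D_x^{[t]}$; hence $\widehat C$ contributes exactly the constraint $y\in\proj_y(\rho_C\cap(D_y^{(\top)}\times\prod_{x\neq y}D_x^{[t]}))$, and intersecting over all $C\ni y$ gives $D_y^{[t+1]}$.

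The two cases of the lemma now follow. In the first case, take $\Upsilon_y=\Upsilon_y^{[R]}$; then $\Upsilon_y^{(\top)}(y)$ defines $D_y^{[\infty]}=D_y^{(s+1)}$. In the second case no $1$-consistent reduction is contained in $D^{(\top)}$, so propagation empties a domain; let $t^{*}$ be the first step with $D_y^{[t^{*}]}=\varnothing$ for some $y$. All domains are nonempty before step $t^{*}$, so the construction above applies verbatim up to step $t^{*}$, and $\Upsilon:=\Upsilon_y^{[t^{*}]}$ is a tree-formula in $\ExpShort(\Theta)$ with $\Upsilon^{(\top)}(y)=\varnothing$, whence $\Upsilon^{(\top)}$ has no solution. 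The one point that needs care is the bookkeeping in the inductive step — keeping all internal variables fresh so that the unfolded formula stays simultaneously a tree-formula and a member of $\ExpShort(\Theta)$; the rest is the standard verification that constraint propagation computes the largest $1$-consistent reduction.
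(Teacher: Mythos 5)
Your proof is correct and follows essentially the same route as the paper: it unfolds the constraint-propagation computation of the maximal 1-consistent reduction inside $D^{(\top)}$ into tree-formulas of $\ExpShort(\Theta)$ built by gluing fresh copies of constraints of $\Theta$ at single shared variables, with the emptied domain yielding the unsatisfiable tree-formula in the second case. The only difference is cosmetic — you run the propagation in synchronous rounds with an explicit fixed-point operator, whereas the paper tightens one constraint at a time — and your bookkeeping of fresh variables is, if anything, more explicit than the paper's sketch.
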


\begin{proof}
The proof is based on the constraint propagation procedure.
We consider the instance $\Theta^{(s)}$.
We start with an empty set $\Upsilon_{y}$ (empty tree-formula) for every $y$, 
these tree-formulas define the reduction $D^{(\top)}$.

Then we introduce a recursive algorithm that gives a correct tree-formula $\Upsilon_{y}$ for every variable $y$.
If at some step the reduction defined by these tree-formulas is 1-consistent,
then we are done. Otherwise, we consider a constraint $C$ that breaks 1-consistency.
Then the current restrictions of the variables $z_{1},\ldots,z_{l}$ in the constraint $C= \rho(z_{1}\ldots,z_{l})$
imply a stronger restriction of some variable $z_{i}$ and the corresponding domain $D_{z_{i}}^{(s)}$.
We change the tree-formula $\Upsilon_{z_{i}}$ describing
the reduction of the variable $z_{i}$ in the following way
$\Upsilon_{z_{i}}:= C\wedge \Upsilon_{z_{1}}\wedge\dots\wedge\Upsilon_{z_{l}}$.

Note that we have to be careful with all the variables appearing in different $\Upsilon_{y}$ to avoid collisions.
Every time we join $\Upsilon_{u}$ and $\Upsilon_{v}$ we rename the variables so that
they do not have common variables.

Obviously, this procedure will eventually stop. 
If $\Upsilon_{y}^{(\top)}(y)$ defines an empty set for some 
$y$, then $\Upsilon_{y}$ can be taken as $\Upsilon$ to 
witness condition (2).
Otherwise, these tree-formulas define a 1-consistent 
reduction, which is a maximal 1-consistent reduction since it is defined by tree-formulas.
\end{proof}

\begin{thm}\label{NextReductionOne}
Suppose $D^{(0)},D^{(1)},\dots,D^{(s)}$ is a strategy for a cycle-consistent CSP instance  $\Theta$.
\begin{itemize}
\item If $D_{x}^{(s)}$ has a nontrivial binary absorbing subuniverse $B$ then there exists a 1-consistent absorbing reduction $D^{(s+1)}$ of $\Theta^{(s)}$
with $D_{x}^{(s+1)}\subseteq B$.
\item If $D_{x}^{(s)}$ has a nontrivial center $B$ then there exists a 1-consistent  central reduction $D^{(s+1)}$ of $\Theta^{(s)}$
with $D_{x}^{(s+1)}\subseteq B$.
\item If $D_{y}^{(s)}$ has no nontrivial binary absorbing subuniverse or center for every $y$
but there exists a nontrivial PC subuniverse $B$ in $D_{x}^{(s)}$ for some $x$, then there exists a 1-consistent PC reduction $D^{(s+1)}$
of $\Theta^{(s)}$ with $D_{x}^{(s+1)}\subseteq B$.
\end{itemize}
\end{thm}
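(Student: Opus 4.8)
The plan is to establish all three bullets simultaneously by one constraint‑propagation argument, organised via Lemma~\ref{ConstraintPropagation}. Fix the variable $x$ and the proper subuniverse $B$ of $D_x^{(s)}$ of the prescribed type, and let $D^{(\top)}$ be the reduction of $\Theta^{(s)}$ which reduces $D_x^{(s)}$ to $B$ (in the first bullet, first shrink $B$ to a minimal binary absorbing subuniverse of $D_x^{(s)}$ inside $B$ and fix its binary term $t$) and leaves every other domain unchanged. First note that $\Theta^{(s)}$ is $1$-consistent: for $s\ge 1$ this is built into the definition of a strategy, and for $s=0$ it follows from cycle‑consistency of $\Theta$, since the path $z-C-z$ forces $\proj_z(C)=D_z$ for every constraint $C$ and every variable $z$ in its scope. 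Hence every constraint of $\Theta^{(s)}$ is subdirect.

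Now apply Lemma~\ref{ConstraintPropagation} to $D^{(\top)}$. Suppose first that some $1$-consistent reduction is contained in $D^{(\top)}$, and let $D^{(s+1)}$ be the maximal one, so that by Lemma~\ref{ConstraintPropagation} each $D_y^{(s+1)}$ is the relation defined by $\Upsilon_y^{(\top)}(y)$ for a tree‑formula $\Upsilon_y\in\ExpShort(\Theta^{(s)})$. A tree‑formula built from subdirect constraints has a subdirect solution set $S_y$ (propagate from the leaves), so $\Upsilon_y(y)$ defines $D_y^{(s)}$, and $\Upsilon_y^{(\top)}(y)$ defines $\proj_y(S_y\cap E)$, where $E$ restricts the copies of $x$ to $B$ and leaves the other coordinates full. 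Applying Corollary~\ref{AbsImpliesCons} (first bullet), Corollary~\ref{CenterImpliesCons} (second bullet), or Corollary~\ref{PCImplies} (third bullet — its hypothesis holds because no $D_z^{(s)}$ has a binary absorbing subuniverse or a center) to the subdirect relation $S_y$ then shows that $D_y^{(s+1)}$ is, respectively, a binary absorbing subuniverse of $D_y^{(s)}$ with $t$, a center of $D_y^{(s)}$, or a PC subuniverse of $D_y^{(s)}$. Since $D_x^{(s+1)}\subseteq B$, this $D^{(s+1)}$ is exactly the reduction claimed.

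It remains to rule out the other alternative of Lemma~\ref{ConstraintPropagation}: a tree‑formula $\Upsilon\in\ExpShort(\Theta^{(s)})$ with $\Upsilon^{(\top)}$ unsolvable. Its solution set $S$ is subdirect, hence nonempty, and unsolvability of $\Upsilon^{(\top)}$ means $S\cap E=\varnothing$, where again $E$ restricts the copies of $x$ to $B$. In the first bullet this contradicts Lemma~\ref{AbsLessThanThree}, since $E$ is an absorbing reduction with term $t$; in the second bullet one uses that a center is a ternary absorbing subuniverse (Corollaries~\ref{centerImpliesAbsorption}, \ref{ternaryAbsorption}); in the third bullet one uses Corollary~\ref{PCImplies} again to conclude that $\proj_z(S\cap E)$ is a PC subuniverse of $D_z^{(s)}$, hence nonempty, for every coordinate. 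In all cases $S\cap E\neq\varnothing$, a contradiction. Equivalently, one can run the propagation loop directly, maintaining as an invariant that every current domain is a nonempty subuniverse of the corresponding $D_y^{(s)}$ of the prescribed type; the propagation step replaces a domain by the projection of a subdirect constraint intersected with such subuniverses, so Corollaries~\ref{AbsImpliesCons}, \ref{CenterImpliesCons}, \ref{PCImplies} preserve the type and Lemma~\ref{AbsLessThanThree}, together with Corollary~\ref{CenterIntersection} in the central case, preserves nonemptiness.

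I expect the only real difficulty to be this nonemptiness/termination step — guaranteeing that propagation never outputs an empty domain and never leaves the relevant class of subuniverses. In the PC case this rests entirely on the standing hypothesis that no $D_y^{(s)}$ has a binary absorbing subuniverse or a center (without it Corollary~\ref{PCImplies} fails and the propagated sets need not be PC subuniverses), and in the central case on the fact that centers are absorbing. Everything else — the tree‑formula bookkeeping supplied by Lemma~\ref{ConstraintPropagation} and the subdirectness of solution sets of acyclic formulas — is routine.
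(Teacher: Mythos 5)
Your overall skeleton matches the paper (apply Lemma~\ref{ConstraintPropagation} to the reduction of $D_x^{(s)}$ to $B$; in the consistent branch use Corollaries~\ref{AbsImpliesCons}, \ref{CenterImpliesCons}, \ref{PCImplies} on the tree-formulas to see that every new domain has the right type), and the binary-absorbing bullet is handled correctly via Lemma~\ref{AbsLessThanThree}. The genuine gap is in how you rule out the ``unsolvable tree-formula'' branch for the center and PC bullets. There you need that the subdirect solution set $S$ of $\Upsilon$ meets $E$, i.e.\ that a subdirect relation still has a tuple after \emph{several} coordinates (the copies of $x$) are simultaneously restricted to $B$. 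For centers you justify this by ``a center is a ternary absorbing subuniverse''; this does not suffice, and the unrestricted statement is simply false: on the two-element majority algebra (majority is a special WNU) the set $\{0\}$ is a center (witnessed by $R=\{(0,0),(0,1),(1,0)\}$, and the majority clone has no binary absorption), the disequality relation $\{(0,1),(1,0)\}$ is subdirect and meets $\{0\}\times A$ and $A\times\{0\}$, yet it misses $\{0\}\times\{0\}$. Ternary absorption only excludes ternary $C$-essential relations on one domain; it says nothing about arbitrary-arity simultaneous restrictions across copies. For the PC bullet your argument is circular: Corollary~\ref{PCImplies} asserts that the projection of the restricted relation is a PC subuniverse, but it does not assert that the restriction is nonempty --- the ``hence nonempty'' is exactly the claim at stake.

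What the paper does, and what is missing from your proposal, is the following. In the contradiction branch one chooses a \emph{minimal} set of restricted variables $x_1,\ldots,x_k$ (all with parent $x$) such that $\Upsilon^{(s)}(x_1,\ldots,x_k)$ has no tuple in $B^k$; minimality yields the essentiality-type hypotheses (dropping any one coordinate leaves a tuple), and $1$-consistency gives $k\ge 2$. Then for $k\ge 3$ one invokes Corollary~\ref{CenterLessThanThree} (centers) or Corollary~\ref{PCLessThanThree} (PC subuniverses) --- these are genuinely nontrivial results built on Lemma~\ref{IncreaseArity} and Lemma~\ref{PCRelationsLem}, not formal consequences of absorption --- while the case $k=2$ (which my disequality example shows cannot be settled by subdirectness alone) is handled by Corollary~\ref{PathStability}, i.e.\ by Theorem~\ref{PreviousReductions}, using cycle-consistency of $\Theta$ and the whole strategy $D^{(0)},\ldots,D^{(s)}$, since both restricted variables have the same parent $x$. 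Your proposal contains neither the minimality/essentiality setup nor any treatment of the binary case, and your closing ``propagation invariant'' (type plus nonemptiness preserved by Corollary~\ref{CenterIntersection} etc.) fails for the same reason: intersecting a center with a propagated projection can be empty. So the plan is repairable, but the center and PC bullets need to be rewritten along the paper's lines rather than by appealing to absorption of centers or to Corollary~\ref{PCImplies} for nonemptiness.
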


\begin{proof}
Without loss of generality we assume that
$B$ is a minimal one-of-four subuniverse of this type.
Let us define a reduction 
$D^{(\top)}$ by 
$D_{x}^{(\top)} = B$ and 
$D_{y}^{(\top)}=D_{x}^{(s)}$ for $y\neq x$, 
and apply Lemma~\ref{ConstraintPropagation}.
We consider two cases corresponding to 
two cases of Lemma~\ref{ConstraintPropagation}.

Case 1. There exists a 1-consistent reduction $D^{(s+1)}$ of $\Theta^{(s)}$
such that 
$D_{y}^{(s+1)}$ is defined by $\Upsilon_{y}(y)$ for a tree-formula $\Upsilon_{y}$ for every variable $y$.
Let $R$ be the solution set of $\Upsilon_{y}^{(s)}$.
Since $\Upsilon_{y}$ is a tree-formula 
and $\Theta^{(s)}$ is 1-consistent, 
the solution set $R$ is subdirect.
Applying Corollaries~\ref{AbsImpliesCons}, \ref{CenterImpliesCons}, \ref{PCImplies} 
to $R$ 
we derive that $D_{y}^{(s+1)}$ is a one-of-four 
subuniverse of the corresponding type.

Case 2. There exists a tree-formula 
$\Upsilon\in\ExpShort(\Theta)$
such that $\Upsilon^{(\top)}$ has no solutions.
We consider the minimal set of variables $\{x_{1},\ldots,x_{k}\}$ from $\Upsilon$ whose parent is $x$
such that $\Upsilon^{(s)}(x_{1},\ldots,x_{k})$ does not have any tuple in $B^{k}$.
Since $\Theta^{(s)}$ is 1-consistent and $\Upsilon$ is a tree-formula, $k\ge 2$.
If $B$ is a binary absorbing subuniverse, then we get a contradiction with Lemma~\ref{AbsLessThanThree}.
For other cases with $k=2$ we get a contradiction from Corollary~\ref{PathStability}.
If $k\ge 3$ and $B$ is a center then we get a contradiction with Corollary~\ref{CenterLessThanThree}.
If $k\ge 3$ and $B$ is a PC subuniverse then we get a contradiction with Corollary~\ref{PCLessThanThree}.
\end{proof}

As a corollary we can derive that cycle-consistency is a sufficient condition to guarantee the existence of a solution of an instance whose domains avoid linear algebras (so called bounded width case). Note that this corollary follows from the result by Marcin Kozik in \cite{kozik2016weak}.

\begin{conslem}\label{BoundedWidthCase}
Suppose $\Theta$ is a cycle-consistent CSP instance, 
for every 
domain $D_{x}$ there is no 
$B\subseteq D_{x}$ and a congruence $\sigma$ on $B$ 
such that $B/\sigma$ is a nontrivial linear algebra.  
Then $\Theta$ has a solution.
\end{conslem}
\begin{proof}
We recursively build a strategy $D^{(0)},D^{(1)},\dots,D^{(s)}$.
We start with $s=0$. 
If every domain $D_{x}^{(s)}$ is of size 1, then we already have a solution because 
$\Theta^{(s)}$ is 1-consistent.
Otherwise, by Theorem~\ref{NextReduction} on every domain 
$D_{x}^{(s)}$ of size greater than 1 there exists a nontrivial one-of-four subuniverse. 
Note that this subuniverse cannot be linear because 
this contradicts the assumption that 
there is no 
$B\subseteq D_{x}$ and a congruence $\sigma$ on $B$ such that $B/\sigma$ is linear. 
If we found a binary absorbing subuniverse or a center, 
then by Theorem~\ref{NextReductionOne}
we can always find 
a next 1-consistent absorbing or central reduction $D^{(s+1)}$.
Otherwise, by the same theorem we can find 
a 1-consistent PC reduction.
Since the strategy cannot be infinite, we eventually stop with the instance whose variable domains are of size 1.
\end{proof}

\begin{thm}\label{NextReductionTwo}
Suppose $D^{(0)},D^{(1)},\dots,D^{(s)}$ is a strategy for a cycle-consistent CSP instance  $\Theta$,
and $D^{(\top)}$ is a nonlinear 1-consistent reduction
of $\Theta^{(s)}$.
Then there exists a 1-consistent minimal reduction $D^{(s+1)}$
of $\Theta^{(s)}$ of the same type such that
$D_{x}^{(s+1)}\subseteq D_{x}^{(\top)}$ for every variable $x$.
\end{thm}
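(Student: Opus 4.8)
The plan is to argue by induction on $\sum_{x}|D_{x}^{(\top)}|$, repairing $1$-consistency after shrinking a single domain by means of constraint propagation (Lemma~\ref{ConstraintPropagation}), exactly along the lines of the proof of Theorem~\ref{NextReductionOne}.

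If every $D_{x}^{(\top)}$ is already minimal of its type (a minimal binary absorbing subuniverse of $D_{x}^{(s)}$ for the witnessing term, a minimal center, or a minimal PC subuniverse, according to the type of $D^{(\top)}$), then $D^{(s+1)}:=D^{(\top)}$ works. Otherwise fix a variable $x$ for which $D_{x}^{(\top)}$ is not minimal of its type, and choose a subuniverse $B\subsetneq D_{x}^{(\top)}$ of that type inside $D_{x}^{(s)}$ — such $B$ exists by non-minimality; we may take it minimal, and it is easy to check (by restricting the witnessing relation, congruence, or absorbing term) that $B$ is also a subuniverse of the corresponding type of $D_{x}^{(\top)}$. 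Let $\bar D$ be the reduction of $\Theta^{(s)}$ obtained from $D^{(\top)}$ by replacing $D_{x}^{(\top)}$ with $B$; then $\bar D\subseteq D^{(\top)}$ and $\bar D\neq D^{(\top)}$, and (since $D^{(\top)}$ is a proper $1$-consistent reduction, or else $D^{(\top)}=D^{(s)}$) we may regard $D^{(\top)}$ as the last term of a strategy.

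Now apply Lemma~\ref{ConstraintPropagation} to $\bar D$. In the favourable case there is a $1$-consistent reduction inside $\bar D$, and the maximal such reduction $D''$ has every coordinate $D''_{y}$ defined by a tree-formula $\Upsilon_{y}^{(\bar D)}(y)$ with $\Upsilon_{y}\in\ExpShort(\Theta^{(s)})$; since every domain restriction imposed in $\bar D$ is a subuniverse of the appropriate type of the corresponding domain of $\Theta^{(s)}$, Corollaries~\ref{AbsImpliesCons}, \ref{CenterImpliesCons}, \ref{PCImplies} show that each $D''_{y}$ is again a subuniverse of $D_{y}^{(s)}$ of the same type. Hence $D''$ is a $1$-consistent reduction of $\Theta^{(s)}$ of the same type as $D^{(\top)}$, with $D''\subseteq\bar D\subsetneq D^{(\top)}$ and strictly smaller total size, and the induction hypothesis applied to $D''$ produces the required minimal reduction $D^{(s+1)}\subseteq D''\subseteq D^{(\top)}$.

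It remains to exclude the unfavourable case of Lemma~\ref{ConstraintPropagation}, and this is the only real work. In that case constraint propagation from $\bar D$ yields a tree-formula $\Upsilon\in\ExpShort(\Theta^{(s)})$ with $\Upsilon^{(\bar D)}$ unsatisfiable, whereas $\Upsilon^{(D^{(\top)})}$ is satisfiable, being a tree-formula over subdirect constraints ($D^{(\top)}$ is $1$-consistent). Since $\bar D$ differs from $D^{(\top)}$ only in the restriction of the copies of $x$ from $D_{x}^{(\top)}$ to $B$, I can pick a minimal set $\{x_{1},\dots,x_{k}\}$ of variables of $\Upsilon$ with parent $x$ for which $\Upsilon^{(D^{(\top)})}(x_{1},\dots,x_{k})$ has no tuple in $B^{k}$. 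If $D^{(\top)}$ is an absorbing reduction this already contradicts Lemma~\ref{AbsLessThanThree}, since $\Upsilon^{(D^{(\top)})}(x_{1},\dots,x_{k})$ is subdirect. Otherwise $1$-consistency of $D^{(\top)}$ forces $k\ge 2$; the case $k=2$ contradicts Corollary~\ref{PathStability} (applied with $D^{(\top)}$ as the last term of the strategy, $x$ the common parent of $x_{1},x_{2}$, and $B$ the center or PC subuniverse), and for $k\ge 3$ minimality of the chosen set gives a tuple of $\Upsilon^{(D^{(\top)})}(x_{1},\dots,x_{k})$ in $B^{i-1}\times D_{x}^{(\top)}\times B^{k-i}$ for each $i$, so Corollary~\ref{CenterLessThanThree} (center case) or Corollary~\ref{PCLessThanThree} (PC case) forces a tuple in $B^{k}$ — a contradiction. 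Thus the unfavourable case never occurs and the induction closes. The main obstacle is precisely this last step, together with the bookkeeping over which ambient algebra each subuniverse is taken in; everything else is a direct adaptation of Theorem~\ref{NextReductionOne}.
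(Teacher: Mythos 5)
Your skeleton is the paper's: shrink one coordinate to a minimal subuniverse $B$ of the appropriate type of $D_{x}^{(s)}$, run Lemma~\ref{ConstraintPropagation}, use Corollaries~\ref{AbsImpliesCons}, \ref{CenterImpliesCons}, \ref{PCImplies} in the favourable case, and kill the unfavourable case with Lemma~\ref{AbsLessThanThree}, Corollary~\ref{PathStability}, Corollary~\ref{CenterLessThanThree}, Corollary~\ref{PCLessThanThree}. Your descent on $\sum_{x}|D_{x}^{(\top)}|$ replaces the paper's choice of a minimal-by-inclusion $1$-consistent reduction inside $D^{(\top)}$; that is an inessential variation. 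The absorbing case and the center case with $k\ge 3$ are fine as you state them: binary absorption restricts to subalgebras, and Corollary~\ref{CenterLessThanThree} requires neither subdirectness nor any hypothesis on the ambient algebras, so $\Upsilon^{(\top)}(x_{1},\ldots,x_{k})$ may simply be viewed as a relation over $D_{x}^{(s)}$ with $B$ a center of $D_{x}^{(s)}$. The center case with $k=2$ rests on the same reading of Corollary~\ref{PathStability} as the paper's own proof, so I do not count it against you.

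The genuine gap is the PC case. You apply Corollary~\ref{PCLessThanThree} (and the PC clause of Corollary~\ref{PathStability}) with ambient algebras $D_{y}^{(\top)}$, which needs (i) $B$ to be a PC subuniverse of $D_{x}^{(\top)}$ and (ii) no binary absorbing subuniverse and no center on the domains $D_{y}^{(\top)}$. Neither is available: the definition of a PC reduction only forbids binary absorption and centers on the original domains $D_{y}^{(s)}$, not on their PC subuniverses $D_{y}^{(\top)}$, and your parenthetical claim that a PC subuniverse of $D_{x}^{(s)}$ contained in $D_{x}^{(\top)}$ is a PC subuniverse of $D_{x}^{(\top)}$ is not ``easy to check'' --- the restriction to $D_{x}^{(\top)}$ of a congruence with PC quotient need not have a PC quotient; unlike binary absorption, PC subuniverses and centers do not trivially relativize to subalgebras. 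If instead you keep the ambient algebras $D_{y}^{(s)}$, then $\Upsilon^{(\top)}(x_{1},\ldots,x_{k})$ is no longer subdirect and Corollary~\ref{PCLessThanThree} does not apply; moreover the restriction of the remaining variables of $\Upsilon$ to $D^{(\top)}$ is then hidden inside the projection and invisible to the corollary. The paper's proof contains exactly the step you omit: it chooses in addition a minimal set of variables $y_{1},\ldots,y_{t}$ whose parent is not the reduced variable, and works with $(\Upsilon^{(s)}(x_{1},\ldots,x_{k},y_{1},\ldots,y_{t}))$ restricted so that the first $k$ coordinates go to $B$ and the last $t$ to the reduced domains; this relation is subdirect over the $D^{(s)}$-domains, where the no-absorption/no-center hypotheses hold, all restricting sets are PC subuniverses of $D^{(s)}$-domains, and the double minimality yields the crossing conditions, so Corollary~\ref{PCLessThanThree} applies when $k+t\ge 3$ and Corollary~\ref{PathStability}, with the strategy ending at $D^{(s)}$ where its hypotheses are verified, handles $k=2$, $t=0$. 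Without this step your induction does not close in the PC case.
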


\begin{proof}
Let the reduction $D^{(\top)}$ be of type $\mathcal T$.
Let us consider a minimal by inclusion 1-consistent reduction $D^{(s+1)}$ of $\Theta^{(s)}$ of type $\mathcal T$ such that
$D_{x}^{(s+1)}\subseteq D_{x}^{(\top)}$ for every variable $x$.

Assume that for some $z$ the domain $D_{z}^{(s+1)}$ is not a minimal one-of-four subuniverse of type $\mathcal T$.
Then choose a minimal one-of-four subuniverse $B$ of $D_{z}^{(s)}$ of this type contained in $D_{z}^{(s+1)}$.
We define a reduction $D^{(\bot)}$ of $\Theta^{(s)}$
by $D^{(\bot)}_{z} = B$, $D_{y}^{(\bot)} =  D_{y}^{(s+1)}$ if $y\neq z$, 
and apply Lemma~\ref{ConstraintPropagation}.
Since $D_{y}^{(s+1)}$ is a minimal by inclusion reduction,
there exists a tree-formula $\Upsilon\in\ExpShort(\Theta)$ such that
$\Upsilon^{(\bot)}$ has no solutions.
Again, we consider a minimal
set of variables $\{z_{1},\ldots,z_{k}\}$ from $\Upsilon$ whose parent is $z$
such that $\Upsilon^{(s+1)}(z_{1},\ldots,z_{k})$ does not have any tuple in $B^{k}$.
Since the reduction $D^{(s+1)}$ is 1-consistent, $B\subsetneq D_{z}^{(s+1)}$, and $\Upsilon$ is a tree-formula, we have $k\ge 2$.
If $D^{(\top)}$ is an absorbing or central reduction of $\Theta^{(s)}$, then it is also an absorbing
or central reduction of $\Theta^{(s+1)}$. Then we can get a contradiction just as we did
in the proof of Theorem~\ref{NextReductionOne} using Lemma~\ref{AbsLessThanThree}, Corollary~\ref{PathStability} or Corollary~\ref{CenterLessThanThree}.

It remains to consider the case when $B$ is a PC subuniverse.
Choose a minimal set of variables $y_{1},\ldots,y_{t}$ of  $\Upsilon$ different from $z_{1},\dots,z_{k}$ such that
$(\Upsilon^{(s)}(z_{1},\ldots,z_{k},y_{1},\ldots,y_{t}))^{(s+1)}$
does not have tuples with the first $k$ elements from $B$.
If $t=0$ and $k=2$ then 
$\Upsilon^{(s)}(z_{1},z_{2})$ has an empty intersection with $B\times B$, which contradicts Corollary~\ref{PathStability}.
If $t+k\ge 3$ then 
the relation defined by 
$\Upsilon^{(s)}(z_{1},\ldots,z_{k},y_{1},\ldots,y_{t})$
is $(B,\dots,B,D_{y_{1}}^{(s+1)},\dots,D_{y_{t}}^{(s+1)})$-essential relation, which contradicts Corollary~\ref{PCLessThanThree}.
\end{proof}

\begin{thm}\label{NextReductionThree}
Suppose $D^{(\top)}$ is a 1-consistent PC reduction for a cycle-consistent irreducible CSP instance  $\Theta$,
and $\Theta$ is not linked and not fragmented.
Then there exist 
a reduction $D^{(1)}$ of $\Theta$ 
and a minimal strategy
$D^{(1)},\ldots, D^{(s)}$ for $\Theta^{(1)}$
such that
the solution set of $\Theta^{(1)}$ is subdirect,
the reductions $D^{(2)}, \ldots, D^{(s)}$ are nonlinear,
$D_{x}^{(s)}\subseteq D_{x}^{(\top)}$ for every variable $x$.
\end{thm}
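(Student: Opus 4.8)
The plan is to realise a domain set contained in $D^{(\top)}$ as the endpoint of a strategy consisting of at most one linear reduction followed by a chain of nonlinear reductions, placing the linear reduction first precisely so that $\Theta^{(1)}$ has a subdirect solution set. I would first record the elementary facts: applying the definition of irreducibility to $\Theta$ itself and using that $\Theta$ is neither fragmented nor linked, the solution set of $\Theta^{(0)}=\Theta$ is subdirect; and since $D^{(\top)}$ is a PC reduction, no domain $D_j$ has a center or a binary absorbing subuniverse, so as long as this persists every proper reduction is forced to be PC or linear, and a PC reduction is nonlinear. Also, since reductions do not change constraint scopes, no reduct of $\Theta$ is fragmented.

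The argument then splits according to whether a direct nonlinear descent works. By Theorem~\ref{NextReductionTwo} applied to the nonlinear $1$-consistent reduction $D^{(\top)}$, there is a minimal $1$-consistent PC reduction $D^{(1)}\subseteq D^{(\top)}$; by Lemma~\ref{ProperReductionPreservesCycleConAndIrreducability} the instance $\Theta^{(1)}$ is cycle-consistent and irreducible. If it is not linked, then it is also not fragmented, so its solution set is subdirect, $D^{(1)}$ is nonlinear, and it remains only to continue downward into $D^{(\top)}$. If instead this (and every such minimal PC reduction) makes the instance linked, I would exploit the non-linked structure of $\Theta$: after the usual preprocessing that puts the hypotheses of Corollary~\ref{PathInConnectedComponent} in force — replacing constraints by the critical relations defining them and passing to a connected subformula — the reflexive bridges obtained there, combined with Corollary~\ref{LinkedLink}, convert a reflexive bridge at a variable $x_0$ with $\sigma:=\LinkedCon(\Theta,x_0)$ proper into a relation $\zeta\subseteq D_{x_0}\times D_{x_0}\times\mathbb Z_p$ satisfying $(a,b,0)\in\zeta\Leftrightarrow(a,b)\in\sigma$ and $\proj_{1,2}\zeta=\sigma^{*}$. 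This identifies the obstruction to linkedness as governed by linear equations over prime fields, and lets me choose a proper \emph{linear} reduction $D^{(1)}$ fixing the value of such a parameter; for this $D^{(1)}$, Lemma~\ref{ProperReductionPreservesCycleConAndIrreducability} again yields cycle-consistency and irreducibility of $\Theta^{(1)}$, and the choice of $D^{(1)}$ ensures its solution set is subdirect.

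With $D^{(1)}$ fixed (nonlinear in the first case, linear in the second), I would build $D^{(2)},\dots,D^{(s)}$ by iterating Theorems~\ref{NextReductionOne} and~\ref{NextReductionTwo}. As long as the current domain set is not contained in $D^{(\top)}$, some $D_x^{(i)}$ strictly contains $D_x^{(\top)}\cap D_x^{(i)}$; invoking Theorem~\ref{PreviousReductions} to see that the target survives the single reduction $D^{(1)}$ — that is, that $D_x^{(\top)}\cap D_x^{(1)}$ is nonempty, hence a proper PC subuniverse of $D_x^{(1)}$ (or a center or binary absorbing subuniverse, should the linear step have created one) — Theorem~\ref{NextReductionOne} provides a $1$-consistent reduction of that type towards it and Theorem~\ref{NextReductionTwo} makes it minimal. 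Each step is nonlinear, and since $D^{(\top)}$ admits only finitely many strictly smaller reductions, the chain terminates with $D_x^{(s)}\subseteq D_x^{(\top)}$ for every $x$.

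The step I expect to be the main obstacle is the construction of the linear reduction $D^{(1)}$ in the second case: one must verify that the obstruction to the direct PC descent is genuinely $\mathbb Z_p$-linear and extract from it a proper linear reduction that is simultaneously $1$-consistent, keeps the solution set of $\Theta^{(1)}$ subdirect, and leaves $D^{(\top)}$ reachable. The last compatibility — that restricting through a linear reduction empties no $D_x^{(\top)}$ — is exactly what forces the use of the interaction results between linear and PC reductions, namely Corollary~\ref{PCImplies}, Lemma~\ref{PCRelationsLem}, and Theorem~\ref{PreviousReductions}, and checking these carefully is where the bulk of the work lies.
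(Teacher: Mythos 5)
There is a genuine gap, in fact two. First, the subdirectness of the solution set of $\Theta^{(1)}$. The paper obtains it by \emph{choosing} $D^{(1)}$ from the non-linked structure of $\Theta$ itself: take a maximal congruence $\sigma_{x}\supseteq\LinkedCon(\Theta,x)$, let $D_{x}^{(1)}$ be the class of $\sigma_{x}$ meeting $D_{x}^{(\top)}$, and let $D_{y}^{(1)}$ consist of the elements of $D_{y}$ linked to $D_{x}^{(1)}$; then irreducibility of $\Theta$ (not of $\Theta^{(1)}$) gives that the solution set of $\Theta^{(1)}$ is subdirect, and $D^{(1)}$ is PC or linear according to whether $D_{x}/\sigma_{x}$ is PC. Your route does not deliver this. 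In your first case you invoke Lemma~\ref{ProperReductionPreservesCycleConAndIrreducability} to get irreducibility of $\Theta^{(1)}$, but that lemma requires $\Theta^{(1)}$ to have a solution, which is exactly what is not available here (Theorem~\ref{NextReductionThree} is used inside the proof of Theorem~\ref{CannotLooseSolution}, so assuming solutions survive the reduction is circular), and you have no argument that the minimal PC reduction you pick makes the instance non-linked. In your second case, Corollary~\ref{PathInConnectedComponent} and Corollary~\ref{LinkedLink} have hypotheses (a connected formula whose constraint relations are critical and rectangular, irreducible congruences coming from crucial instances) that you have not established for $\Theta$, and even granting a relation $\zeta$ over $\mathbb Z_{p}$ you do not explain why fixing the extra coordinate produces a $1$-consistent linear reduction whose restricted instance has subdirect solution set and which stays compatible with $D^{(\top)}$.

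Second, the descent into $D^{(\top)}$. The heart of the paper's proof is showing that after the absorbing/central steps $D^{(2)},\ldots,D^{(t)}$ the set $D_{y}^{(\bot)}=D_{y}^{(\top)}\cap D_{y}^{(t)}$ is still a PC subuniverse of $D_{y}^{(t)}$ for every $y$ (via Lemma~\ref{PCRelationsLem}: each maximal PC quotient either collapses or is carried isomorphically through the first reduction and through every absorbing/central reduction), and then that a $1$-consistent reduction inside $D^{(\bot)}$ exists, using Lemma~\ref{ConstraintPropagation} together with the tree-formula contradiction analysis (Corollary~\ref{PCLessThanThree} for $k\ge 3$ and the separate $k=2$ argument with the maximal PC congruence $\delta$ and the sets $F_{i}$). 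You replace all of this by an appeal to Theorem~\ref{PreviousReductions}, which is not the right tool: its hypotheses (a strategy along which the solution set of the pp-formula is subdirect at every step) are not met in this setting, and its conclusion is the nonemptiness of a relation defined by a pp-formula, not that $D_{x}^{(\top)}\cap D_{x}^{(t)}$ is a PC subuniverse or that the reduction towards it can be made $1$-consistent. Without that step, the claim that ``the chain terminates with $D_{x}^{(s)}\subseteq D_{x}^{(\top)}$'' does not follow from finiteness alone, since nothing guarantees that at each stage there is a proper nonlinear $1$-consistent reduction moving you strictly towards $D^{(\top)}$.
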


\begin{proof}
Since $\Theta$ is not linked,
there exists a maximal congruence $\sigma_{x}$ on $D_{x}$ for a variable $x$ of $\Theta$ such that
$\LinkedCon(\Theta,x)\subseteq \sigma_{x}$.
Choose an equivalence class $D_{x}^{(1)}$ of $\sigma_{x}$ with a nonempty intersection with $D_{x}^{(\top)}$.
For every variable $y$ by $D_{y}^{(1)}$ we denote the set of all elements
of $D_{y}$ linked to an element of $D_{x}^{(1)}$.
Note that for every $y$ there is a congruence $\sigma_{y}$ on $D_y$ such that 
$D_{x}/\sigma_x\cong D_{y}/\sigma_{y}$.
Then $D_{y}^{(1)}$ is an equivalence class of $\sigma_{y}$.
By Corollaries~\ref{AbsorptionQuotient} and \ref{CenterQuotient}, 
there is no nontrivial binary absorbing subuniverse or center on 
$D_{x}/\sigma_{x}$.
Then by Theorem~\ref{NextReduction}, 
$\sigma_{x}$ is either PC congruence, or linear congruence, 
which means that $D^{(1)}$ is a PC reduction or linear reduction.

Let us show that
$D_{y}^{(1)}\cap D_{y}^{(\top)}\neq\varnothing$
for every $y$. 
Since $\Theta$ is not fragmented, 
we may consider a path starting at $x$ and ending at $y$.
Since the reduction $D^{(\top)}$ is 1-consistent, 
this path connects an element of 
$D_{x}^{(1)}\cap D_{x}^{(\top)}$ 
with some element of 
$D_{y}^{(\top)}$, which is also in 
$D_{y}^{(1)}$.

Since $\Theta$ is irreducible, the solution set of $\Theta^{(1)}$ is subdirect.
We build the remaining part of the strategy in the following way.
Suppose we already have
$D^{(0)}, D^{(1)},\ldots, D^{(t)}$, where
the reductions $D^{(2)},\ldots,D^{(t)}$ are absorbing or central.
If there exists a nontrivial binary absorbing subuniverse or a nontrivial center on $D_{y}^{(t)}$ for some $y$,
then by Theorems~\ref{NextReductionOne}, \ref{NextReductionTwo} we can find the next
minimal 1-consistent absorbing or central reduction $D^{(t+1)}$.

Suppose there is no binary absorbing subuniverse or center on $D_{y}^{(t)}$ for every $y$.
Put $D_{y}^{(\bot)}= D_{y}^{(\top)}\cap D_{y}^{(t)}$ for every variable $y$.
By Lemma~\ref{SequencesOfSubuniverses}
$D_{y}^{(\bot)}$ is a PC subuniverse of 
$D_{y}^{(t)}$ for every variable $y$.
Hence, $D^{(\bot)}$ is a PC reduction of $\Theta^{(t)}$.

Then we apply Lemma~\ref{ConstraintPropagation} to find a 1-consistent reduction of $\Theta^{(t)}$ smaller than
$D^{(\bot)}$.
If we cannot find it, then there exists a tree-formula $\Upsilon$
such that $\Upsilon^{(\bot)}$ has no solutions.
Let $R$ be the solution set of $\Upsilon$. 
Note that $R^{(i)}$ is a subdirect relation for 
$i=0,1,\dots,t$ because $\Upsilon$ is a tree-formula and 
$D^{(i)}$ is a 1-consistent reduction.
By Lemma~\ref{PCBrel}, 
$R^{(\top)}$ is a PC subuniverse of $R$.
Since $D^{(\top)}$ is 1-consistent, 
the intersection
$R^{(1)}\cap R^{(\top)}$ is not empty.

Let us prove by induction on $i$ that
$R^{(i)}\cap R^{(\top)}$ is a nonempty PC subuniverse of $R^{(i)}$
for $i=1,2,\dots,t$.
By the inductive assumption, 
we assume that
$R^{(i-1)}\cap R^{(\top)}$ is a nonempty PC subuniverse of $R^{(i-1)}$
(for $i=1$ it follows from the definition).
By Lemma~\ref{PCBrel}, $R^{(i)}$ is a one-of-four subuniverse 
of $R^{(i-1)}$.
For $i\ge 2$ it is not a PC subuniverse, 
then by Theorem~\ref{PCBint}, 
the intersection of 
$R^{(i-1)}\cap R^{(\top)}$ and $R^{(i)}$, that is
$R^{(i)}\cap R^{(\top)}$, 
cannot be empty. For $i=1$ we already know that 
$R^{(1)}\cap R^{(\top)}\neq\varnothing$.
Applying Theorem~\ref{PCBsub}
to 
$R^{(i-1)}\cap R^{(\top)}\subseteq R^{(i-1)}$
and 
$R^{(i)}\subseteq R^{(i-1)}$
we derive that 
$R^{(i)}\cap R^{(\top)}$ is a nonempty PC subuniverse of $R^{(i)}$.
Thus, we proved that 
$R^{(t)}\cap R^{(\top)}$ is not empty, which contradicts 
the assumption about the tree-formula $\Upsilon$.

Hence, there exists a 1-consistent reduction $D^{(\triangle)}$ of 
$\Theta^{(t)}$ smaller than $D^{(\bot)}$
such that for every variable $y$
the new domain $D_{y}^{(\triangle)}$ can be defined by a tree-formula $\Upsilon_{y}^{(\bot)}$.
Since the solution set of $\Upsilon_{y}^{(t)}$ is subdirect, 
by Corollary~\ref{PCImplies},
the domain $D_{y}^{(\triangle)}$ is a PC subuniverse
of $D_{y}^{(t)}$. Hence $D^{(\triangle)}$ is a PC reduction of 
$\Theta^{(t)}$.
It remains to apply Theorem~\ref{NextReductionTwo} to find
a minimal PC reduction $D^{(t+1)}$ smaller than $D_{y}^{(\triangle)}$, put $s= t+1$, and finish the strategy.
\end{proof}

\subsection{Existence of a linked connected component}

In this subsection we prove that all constraints in a crucial instance have the parallelogram property,
show that we can always find a linked connected component with required properties,
and prove that we 
cannot pass from an
instance having solutions to an instance having no solutions
while applying a nonlinear reduction.

\begin{thm}\label{KeyConjunctionMain}
Suppose $D^{(1)}$ is a minimal 
1-consistent one-of-four reduction of a cycle-consistent irreducible CSP instance $\Theta$,
$\Omega(x_{1},\ldots,x_{n})$ is a subconstraint of $\Theta$,
the solution set of $\Omega^{(1)}$ is subdirect,
$\Theta\setminus\Omega$ has a solution in $D^{(1)}$,
and $\Theta$ has no solutions in $D^{(1)}$.
Then there exist instances
$\Upsilon_{1},\ldots,\Upsilon_{t}\in \ExpShort(\Omega)$
such that
$\Phi=(\Theta\setminus\Omega)\cup \Upsilon_{1}\cup\dots\cup\Upsilon_{t}$ has no solutions in $D^{(1)}$,
each $\Upsilon_{i}(x_{1},\ldots,x_{n})$ is a subconstraint 
of $\Phi$, 
and
$\Upsilon_{i}^{(1)}(x_{1},\ldots,x_{n})$ defines a subdirect key relation with the parallelogram property for every $i$.
\end{thm}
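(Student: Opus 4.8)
The relation $\rho^{(1)}$ defined by $\Omega^{(1)}(x_{1},\ldots,x_{n})$ is ``weakened'' to a family of $\ExpShort(\Omega)$--definable relations whose common restrictions still block every solution of $\Theta\setminus\Omega$, and each member of the family is arranged to be a key relation with the parallelogram property. The ``key'' side will come from maximality (Corollary~\ref{MaximalMeansKey} and Lemma~\ref{KeyDerivation}); the ``parallelogram'' side will come from the three--copy parallelogram combination (Theorem~\ref{ParPropertyForSubcontraint}).

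\textbf{Setting up the bad tuples.} Since $\Omega$ is a subconstraint, $\Theta$ is solvable in $D^{(1)}$ iff $(\Theta\setminus\Omega)$ together with the single constraint $\rho(x_{1},\ldots,x_{n})$ is, where $\rho$ is defined by $\Omega(x_{1},\ldots,x_{n})$. Let $T$ be the projection onto $(x_{1},\ldots,x_{n})$ of the solution set of $(\Theta\setminus\Omega)^{(1)}$. Then $T\neq\varnothing$ (as $\Theta\setminus\Omega$ is solvable in $D^{(1)}$), $T$ is finite, and $T\cap\rho^{(1)}=\varnothing$ (otherwise $\Theta$ would be solvable in $D^{(1)}$). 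Moreover, for any $\Omega'\in\ExpShort(\Omega)$ one has $\Omega'^{(1)}(x_{1},\ldots,x_{n})\supseteq\rho^{(1)}$ (pull the values of a solution of $\Omega^{(1)}$ back along the parent map), so every $\ExpShort(\Omega)$--definable relation at $(x_{1},\ldots,x_{n})$ is subdirect, since $\rho^{(1)}$ is. Hence it suffices to produce, for each $\alpha\in T$, a formula $\Omega_{\alpha}\in\ExpShort(\Omega)$ such that $\rho_{\alpha}:=\Omega_{\alpha}^{(1)}(x_{1},\ldots,x_{n})$ is a key relation with the parallelogram property and $\alpha\notin\rho_{\alpha}$: then $(\Theta\setminus\Omega)\cup\bigcup_{\alpha\in T}\Omega_{\alpha}$ has no solution in $D^{(1)}$ (any solution would project to some $\alpha\in T$, which is impossible as $\alpha\notin\rho_{\alpha}$), the internal variables of the $\Omega_{\alpha}$ can be renamed apart, and $t=|T|<\infty$.

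\textbf{Producing $\Omega_{\alpha}$.} Fix $\alpha\in T$. I would run a monotone closure process on formulas in $\ExpShort(\Omega)$ whose reduced relation omits $\alpha$, using two admissible enlargements, both of which keep the relation $\ExpShort(\Omega)$--definable and strictly larger. \emph{(a) Key enlargement:} if the current relation $\rho$ is not a key relation, then $\alpha$ is not a key tuple, so there is $\beta\notin\rho$ with $\alpha\notin\rho_{\beta}:=\{f(\beta)\mid f\in\VPol(\rho)\}$; replace $\rho$ by $\rho_{\beta}\supsetneq\rho$ (well defined by Lemma~\ref{KeyDerivation}), which still omits $\alpha$. \emph{(b) Parallelogram enlargement:} if the current relation fails the parallelogram property for some split of $\{x_{1},\ldots,x_{n}\}$ in a way that does not force $\alpha$ into the corresponding three--copy parallelogram combination, replace the formula by that combination; it lies in $\ExpShort(\Omega)$, enlarges the relation, and—by Theorem~\ref{ParPropertyForSubcontraint} applied with $s=1$ to the current instance—preserves unsolvability, while cycle--consistency and irreducibility are maintained by the replacements (Lemma~\ref{ExpandedConsistencyLemma}). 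As the reduced relation is a subset of the finite product of the $D_{x_{i}}^{(1)}$ and strictly grows at each step, the process stops at $\rho_{\alpha}$, which is then a key relation with the parallelogram property, subdirect, defined by some $\Omega_{\alpha}\in\ExpShort(\Omega)$, and omitting $\alpha$; combining these $\Omega_{\alpha}$ as above finishes the proof.

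\textbf{The main obstacle.} The delicate point is that the two enlargements interfere: if $\rho$ is a key relation with key tuple $\alpha$ and it fails the parallelogram property, then (using the key tuple to move the failure) the failing corner can be taken to be $\alpha$ itself, so the parallelogram combination reinserts exactly $\alpha$; thus one cannot parallelogram--close carelessly while keeping $\alpha$ excluded. The heart of the argument is to show that when neither enlargement is admissible the relation is simultaneously key and parallelogram—equivalently, that a relation which is maximal among $\ExpShort(\Omega)$--definable relations omitting $\alpha$ (hence key with key tuple $\alpha$ by Corollary~\ref{MaximalMeansKey}) already has the parallelogram property. I expect this to be forced by the instance structure: any parallelogram failure has corner $\alpha$, and feeding the corresponding parallelogram combination back into $\Theta\setminus\Omega$ and invoking Theorem~\ref{ParPropertyForSubcontraint}—or, passing through a suitable crucial instance, Theorem~\ref{ParPropertyMain}—contradicts maximality. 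Making this termination/compatibility argument precise, together with the verification that $\ExpShort(\Omega)$--definability, subdirectness, cycle--consistency and irreducibility survive all the replacements, is the bulk of the work; everything else is the bookkeeping described above.
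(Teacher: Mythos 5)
Your proposal assembles the right ingredients (Lemma~\ref{KeyDerivation}, Corollary~\ref{MaximalMeansKey}, Theorem~\ref{ParPropertyForSubcontraint}), but it hinges on a step you yourself flag as unresolved, and that step is exactly where the argument breaks: you fix a tuple $\alpha\in T$ in advance and try to close a relation omitting $\alpha$ under both the key enlargement and the parallelogram (three-copy) enlargement. As you observe, when the relation is maximal among $\ExpShort(\Omega)$-definable relations omitting $\alpha$, the parallelogram combination may reinsert precisely $\alpha$, and nothing in your write-up shows that a maximal-without-$\alpha$ relation is automatically parallelogram; Theorem~\ref{ParPropertyForSubcontraint} by itself cannot help here, because it only preserves unsolvability of an instance, not the exclusion of a prescribed tuple. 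So the ``main obstacle'' paragraph is not a technicality left to the reader --- it is the missing core of the proof, and the per-tuple maximality scheme gives no handle on it.

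The paper resolves this by making the minimality argument at the level of the whole instance rather than per tuple. Let $\Sigma$ be all constraints on $(x_{1},\ldots,x_{n})$ defined by $\Upsilon^{(1)}$ with $\Upsilon\in\ExpShort(\Omega)$, and choose $\Sigma_{0}=\{C_{1},\ldots,C_{t}\}\subseteq\Sigma$ such that $(\Theta^{(1)}\setminus\Omega^{(1)})\cup\Sigma_{0}$ has no solution but replacing any $C_{i}$ by all weaker constraints from $\Sigma$ yields one. The tuple $\alpha_{i}$ is then extracted \emph{a posteriori}: a solution of the weakened instance projects to a tuple $\alpha_{i}\notin C_{i}$ that lies in every strictly weaker constraint of $\Sigma$, so $C_{i}$ is maximal without $\alpha_{i}$ and Corollary~\ref{MaximalMeansKey} gives keyness. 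The parallelogram property then comes for free from the same criticality: by Theorem~\ref{ParPropertyForSubcontraint} the three-copy weakening of $\Omega_{i}$ keeps the instance unsolvable, and since it defines a constraint of $\Sigma$ weaker than or equal to $C_{i}$, criticality of $\Sigma_{0}$ forces it to equal $C_{i}$, i.e.\ $C_{i}$ satisfies the parallelogram identity for every split of the coordinates. This reordering (criticality first, $\alpha_{i}$ second) is what eliminates the interference between the two closures that your construction cannot control; as it stands, your proof has a genuine gap at that point.
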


\begin{thm}\label{FindPerfectConstraint}
Suppose $D^{(1)}$ is a minimal 1-consistent one-of-four reduction of a cycle-consistent irreducible CSP instance $\Theta$,
$\Theta$ is crucial in $D^{(1)}$ and not connected.
Then there exists an instance $\Theta'\in\Expanded(\Theta)$ that is crucial in $D^{(1)}$
and contains a linked connected component whose solution set is not subdirect.
\end{thm}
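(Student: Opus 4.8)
The plan is to start from $\Theta$ and repeatedly apply expansions, producing instances $\Theta=\Theta_{0},\Theta_{1},\Theta_{2},\dots$ in $\Expanded(\Theta)$, each crucial in $D^{(1)}$, until one of them contains a linked connected component with non-subdirect solution set; the role of Theorem~\ref{newcolonies} is to guarantee that this process halts. Throughout I would use the following facts. Since $D^{(0)},D^{(1)}$ is a minimal strategy and $\Theta$ is crucial in $D^{(1)}$, Theorem~\ref{ParPropertyMain} gives that every constraint of $\Theta$ is critical and has the parallelogram property, hence is rectangular; by Lemma~\ref{ExpandedConsistencyLemma} every member of $\Expanded(\Theta)$ is again cycle-consistent and irreducible, and if it is moreover crucial in $D^{(1)}$ the same theorem applies to it, so all its constraints are critical and rectangular. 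Thus each connected component of such an instance is a \emph{connected formula} in the technical sense, and Theorem~\ref{PathInConnectedComponentThm} is available, including its ``cover'' clause which becomes relevant precisely when the component is not linked. Also $\Expanded$ is closed under composition, and the operations ``weaken a constraint'', ``delete a constraint'', and ``replace a constraint relation by its cover'' all land inside $\Expanded$, so iterating keeps us in $\Expanded(\Theta)$; finally, by Remark~\ref{GetCrucialInstance} we may after each step pass to a sub-instance that is crucial in $D^{(1)}$.

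The iteration step is as follows. Given $\Theta_{i}\in\Expanded(\Theta)$ crucial in $D^{(1)}$, it has no solution in $D^{(1)}$, and since distinct connected components share no variables, some connected component $\Theta_{i,0}$ has no solution in $D^{(1)}$; in particular its solution set is empty, hence not subdirect. (The hypothesis that $\Theta$ is not connected is what we are handed: $\Theta$ itself is not a connected component, so we genuinely must start the iteration rather than return $\Theta$.) If $\Theta_{i,0}$ is linked, we are done with $\Theta'=\Theta_{i}$. Otherwise $\Theta_{i,0}$ has a variable with a proper linked congruence, and since it is connected and cycle-consistent the strengthening clause of Theorem~\ref{PathInConnectedComponentThm} applies; using it together with the freedom in $\Expanded$ to introduce copies of a variable glued by reflexive binary relations, I would build $\Theta_{i+1}'\in\Expanded(\Theta_{i,0})$ whose relevant component has strictly larger linked congruence at some variable, and then extract a crucial sub-instance $\Theta_{i+1}$. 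Because on a fixed family of domains the linked congruences cannot strictly increase forever, each step that does not terminate must create genuinely new variables and constraints, so the set of constraints that ever occur in the components $\Theta_{i,0}$ is infinite whenever the process does not stop.

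To see the process must stop, I would feed this into Theorem~\ref{newcolonies}: the organisms are the constraints occurring (cumulatively) in the bad components $\Theta_{i,0}$; the parent of a constraint is the constraint of the previous stage from which it arose by weakening or by passing to a cover, with a fixed convention for the auxiliary reflexive constraints introduced during glueing; the strength is a well-founded measure on constraints that strictly drops under these operations, in the spirit of the order on $\Gamma$ from Lemma~\ref{RecursionTwoDepth}; and friendship is adjacency in a common variable. Hypothesis~1 of Theorem~\ref{newcolonies} is the strictness of the drop in strength; hypotheses~2 and~3 follow because an expansion creates a common variable between a newcomer and an old constraint only through the variables already shared with the newcomer's parent, while the two clauses of the definition of $\Expanded$ that fix variables common to the old and new instance pin down the parents of adjacent newcomers; hypothesis~4 holds since each constraint has arity at most $k_{0}$ and stops acquiring neighbours once the iteration leaves its component; hypothesis~5 is the infinitude established above. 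The conclusion of Theorem~\ref{newcolonies} is that at some stage $N$ the constraints of $\Theta_{N,0}$ split into two nonempty pieces with no common variable, contradicting that $\Theta_{N,0}$ is a connected component. Hence the iteration halts, and it can only halt when the current bad connected component is linked, yielding the desired $\Theta'$.

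I expect the main obstacle to be the iteration step itself: realizing the passage ``not linked $\Rightarrow$ strictly larger linked congruence'' inside $\Expanded(\Theta)$ while preserving criticality in $D^{(1)}$, and then verifying that the resulting dynamics satisfy hypotheses 2--4 of Theorem~\ref{newcolonies}. The condition that no organism has infinitely many friends is the most delicate, and it is what forces the expansions to be confined to the current bad component with everything else frozen; arranging the parent relation and the strength function so that they respect both this condition and hypothesis~1 is where the bulk of the work lies.
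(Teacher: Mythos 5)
The central gap is that your iteration is aimed at the wrong target and its very first step rests on a false premise. In this paper ``connected'' is defined through adjacency of constraints (existence of reflexive bridges between the congruences $\ConOne(C,x)$ at a shared variable), not through sharing variables, so distinct connected components of a crucial instance typically \emph{do} share variables (a crucial instance is never fragmented, and the whole interaction between a component $\Omega$ and $\Theta\setminus\Omega$ through common variables is what the proof must handle). Consequently your claim that some connected component $\Theta_{i,0}$ has no solution in $D^{(1)}$ is false: since $\Theta_{i}$ is crucial in $D^{(1)}$ and not connected, any connected component omits at least one constraint, and replacing that constraint by weaker ones yields a solution in $D^{(1)}$ which still satisfies every constraint of the component; the paper uses exactly this observation ($\Omega^{(1)}$ has a solution). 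Moreover, the theorem asks for a linked component whose solution set over the \emph{original} domains is not subdirect, not a component unsatisfiable in $D^{(1)}$; your reading ``empty in $D^{(1)}$, hence not subdirect'' is both unobtainable and not what is needed later (in Theorems~\ref{CannotLooseSolution} and \ref{LinearStep} the component found does have solutions, but they fail to hit every element of some domain).

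Even granting the corrected target, the substance of the proof is missing. The paper advances by the concrete transformations $T_{2}$ (splitting a variable shared by two non-adjacent constraints and gluing the copies by $\sigma_{1}^{*}$ and $\sigma_{2}^{*}$), $T_{3}$ (duplicating a component with covers), and $T_{4}$ (the optimal-bridge construction), and the whole difficulty is Lemmas~\ref{transTwo}, \ref{transThree}, \ref{transFour}, which show these expanded coverings still have no solution in $D^{(1)}$; those proofs rely on Theorem~\ref{KeyConjunctionMain}, Lemmas~\ref{AddLinearVariables}, \ref{SameConOneForNonlinear}, \ref{MinimalsAdjacent}, \ref{OneLink}, and Theorem~\ref{PathInConnectedComponentThm}. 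Your step ``build $\Theta_{i+1}$ with a strictly larger linked congruence while preserving cruciality'' is precisely this work, and you acknowledge you do not have it. Finally, the bookkeeping for Theorem~\ref{newcolonies} in the paper is done on \emph{variables}, with the characteristic $\xi(\Theta_{i},x)$ built from minimal congruences in $\Congruences(\Theta_{i},x)$ and their adjacency classes, together with a scheduling rule for which component to process so that condition 4 (finitely many friends) holds; your constraint-based organisms and Lemma~\ref{RecursionTwoDepth}-style strength are not developed to the point where hypotheses 1--4 can be checked, and the contradiction you extract (a connected component splitting into variable-disjoint pieces) is applied to the wrong object, whereas the paper derives non-fragmentation of the whole instance and concludes that condition 5 fails, i.e.\ the process terminates.
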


\begin{thm}\label{CannotLooseSolution}
Suppose $D^{(1)}$  is a 1-consistent nonlinear reduction of a cycle-consistent irreducible CSP instance $\Theta$.
If $\Theta$ has a solution then it has a solution in $D^{(1)}$.
\end{thm}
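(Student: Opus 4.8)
The plan is to argue by contradiction. Assume $\Theta$ has a solution but no solution in $D^{(1)}$. Using Remark~\ref{GetCrucialInstance} I first replace the constraints of $\Theta$ by all weaker ones until $\Theta$ becomes crucial in $D^{(1)}$; since weakening a constraint only enlarges its relation this does not destroy the existing solution, and one checks that it preserves cycle-consistency and irreducibility (the latter through the projection characterisation of irreducibility, exactly as in the proof of Lemma~\ref{ExpandedConsistencyLemma}). Now $\Theta$ is a cycle-consistent irreducible instance, crucial in $D^{(1)}$, so by Theorem~\ref{ParPropertyMain} every constraint relation of $\Theta$ is critical and has the parallelogram property; in particular every constraint is rectangular.

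Next I reduce to the case of a linked connected instance; this is precisely what the ``linked connected component'' machinery of Theorems~\ref{ParPropertyMain}--\ref{FindPerfectConstraint} is designed for. When $\Theta$ is not connected, Theorem~\ref{FindPerfectConstraint} produces an instance $\Theta'\in\Expanded(\Theta)$, still crucial in (the restriction of) $D^{(1)}$ and, by Lemma~\ref{ExpandedConsistencyLemma}, cycle-consistent and irreducible, which contains a linked connected component $\Xi$ whose solution set is not subdirect; when $\Theta$ is connected one argues similarly, using irreducibility to see that a connected crucial instance that is not fragmented and whose solution set is not subdirect must be linked. In either case it suffices to obtain a contradiction from the following data: a linked, connected, cycle-consistent instance $\Xi$, crucial in $D^{(1)}$, whose constraints are critical with the parallelogram property, where $D^{(1)}$ is a \emph{minimal nonlinear} (absorbing, central, or PC) reduction. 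Note that $\Xi$ does have a solution, because $\Theta$ does.

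In $\Xi$ every $\LinkedCon(\Xi,x)$ is the full relation, so by Corollary~\ref{PathInConnectedComponent} any constraint $C$ and variable $x$ in it admits a reflexive bridge from $\sigma:=\ConOne(C,x)$ to $\sigma$ with $\delta(x,x,y,y)$ the full relation; Corollary~\ref{LinkedLink} then yields a prime $p$ and a relation $\zeta\subseteq D_x^2\times\mathbb Z_p$ with $(x_1,x_2,0)\in\zeta\Leftrightarrow(x_1,x_2)\in\sigma$ and $\proj_{1,2}\zeta=\sigma^{*}$. Using this $\mathbb Z_p$-coordinate together with Theorem~\ref{KeyConjunctionMain} (to replace subconstraints of $\Xi$ by key relations with the parallelogram property) and Theorem~\ref{ParPropertyForSubcontraint}, I would show that solvability of $\Xi$ in $D^{(1)}$ is controlled by a single linear equation over $\mathbb Z_p$ in suitably chosen coordinates — essentially the same mechanism that underlies Step~\ref{LinkedStep} and Theorem~\ref{LinearStep}. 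The role of the nonlinear hypothesis on $D^{(1)}$ is that, after the propagation lemmas of Section~\ref{AbsCenterPCLinear} (Corollaries~\ref{AbsImpliesCons}, \ref{CenterImpliesCons}, \ref{PCImplies}), the domains in play cannot carry a nontrivial such linear constraint compatible with the reduction; hence that equation is trivial, forcing either a solution of $\Xi$ in $D^{(1)}$ (contradicting cruciality) or no solution of $\Xi$ at all (contradicting the previous paragraph).

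The main obstacle is exactly this last step: extracting the governing linear equation from $\zeta$ and the key-relation structure, and proving it must be trivial. It requires carefully tracking how a minimal nonlinear reduction interacts with the congruences $\ConOne(C,x)$ and with $\ConLin$/$\ConPC$ of the intermediate domains, and it is naturally organised as an induction on the total size $\sum_i|D_i|$ of the domains, peeling off one intermediate absorbing, central, or PC reduction at a time via Theorem~\ref{PreviousReductions}. All the remaining steps — weakening to crucial, invoking Theorems~\ref{ParPropertyMain} and \ref{FindPerfectConstraint}, and producing the bridge and the relation $\zeta$ — are bookkeeping on top of results already in place.
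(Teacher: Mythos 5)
Your scaffolding (weaken to a crucial instance, invoke Theorems~\ref{ParPropertyMain} and~\ref{FindPerfectConstraint}, build a bridge via Corollary~\ref{PathInConnectedComponent} and turn it into a relation $\zeta\subseteq D_x\times D_x\times\mathbb Z_p$ via Corollary~\ref{LinkedLink}) matches the paper, but the step you yourself flag as ``the main obstacle'' is exactly where the proof lives, and your sketched route for it does not work as stated. The paper does not extract a ``governing linear equation'' and show it is trivial, nor does it induct on $\sum_i|D_i|$ peeling off reductions via Theorem~\ref{PreviousReductions}. Instead it replaces the variable $x$ of the chosen constraint $C$ by a fresh $x'$ and adds the constraint $\zeta(x,x',z)$, so the new instance has a solution with $z=0$ and, by cruciality (replacing $C$ by its cover), a solution inside the reduction with $z\neq 0$; then Corollaries~\ref{AbsImpliesCons}, \ref{CenterImpliesCons}, \ref{PCImplies} show that restricting all variables except $z$ to the nonlinear reduction would force the values of $z$ into a binary absorbing subuniverse, a center, or a PC subuniverse of $\mathbb Z_p$ --- impossible, since $\mathbb Z_p$ is linear. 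Your proposal never closes this loop, and the ``trivial equation'' heuristic does not by itself yield the contradiction.

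Two further gaps. First, the PC case of the final step needs more than propagation: Corollary~\ref{PCImplies} only bites if for every variable $y$ and every PC subuniverse $U$ of $D_y$ the auxiliary instance has a solution with $y\in U$. The paper arranges this by introducing the family $\mathfrak R$ of \emph{all} minimal 1-consistent nonlinear reductions and choosing $\Theta'\in\Expanded(\Theta)$ with the \emph{minimal positive} number of reductions in $\mathfrak R$ admitting no solution, combined with Theorems~\ref{NextReductionOne} and~\ref{NextReductionTwo}; nothing in your outline plays this role. Second, you conflate ``connected'' with ``linked'': the case split in the paper is on whether the crucial instance $\Theta''$ is linked, and when it is \emph{not} linked the argument is not Theorem~\ref{FindPerfectConstraint} at all but an appeal to the simultaneous-induction hypothesis (Theorem~\ref{CannotLooseSolution} for a strictly smaller reduction, via restriction to a linked component, or Theorem~\ref{NextReductionThree} when $D^{(1)}$ is a PC reduction). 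Your remark that a connected, non-fragmented crucial instance with non-subdirect solution set ``must be linked'' does not cover this case, so the non-linked branch is missing from your argument as well. These are the substantive ideas of the paper's proof, not bookkeeping.
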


\begin{thm}\label{ParPropertyMain}
Suppose $D^{(0)},\ldots,D^{(s)}$ is a minimal strategy for a cycle-consistent irreducible CSP instance
$\Theta$,
and a constraint $\rho(x_{1},\ldots,x_{n})$ 
of $\Theta$ is crucial in $D^{(s)}$.
Then $\rho$ is a critical relation with the parallelogram property.
\end{thm}

\begin{thm}\label{ParPropertyForSubcontraint}
Suppose $D^{(0)},\ldots,D^{(s)}$ is a minimal strategy for a cycle-consistent irreducible CSP instance
$\Theta$,
$\Upsilon(x_{1},\ldots,x_{n})$ is a subconstraint of $\Theta$,
the solution set of $\Upsilon^{(s)}$ is subdirect,
$k\in\{1,2,\dots,n-1\}$,
$\Var(\Upsilon) = \{x_{1},\ldots,x_{n},u_{1},\ldots,u_{t}\}$,
$$\Omega =
\Upsilon_{x_{1},\ldots,x_{k},u_{1},\ldots,u_{t}}^{y_{1},\ldots,y_{k},v_{1},\ldots,v_{t}}
\wedge
\Upsilon_{x_{k+1},\ldots,x_{n},u_{1},\ldots,u_{t}}^{y_{k+1},\ldots,y_{n},v_{t+1},\ldots,v_{2t}}
\wedge
\Upsilon_{x_{1},\ldots,x_{n},u_{1},\ldots,u_{t}}^{y_{1},\ldots,y_{n},v_{2t+1},\ldots,v_{3t}},$$
and $\Theta^{(s)}$ has no solutions.
Then $(\Theta\setminus\Upsilon)\cup\Omega$ has no solutions in $D^{(s)}$.
\end{thm}

To prove these theorems we need to introduce a partial order on domain sets.
To every domain set $D^{(\top)}$ we 
assign a tuple of integers
$\size(D^{(\top)}) = (|D_{1}|,|D_{2}|,\dots,|D_{s}|)$, 
where
$D_{1},D_{2},\ldots,D_{s}$ is the set of all different domains of $D^{(\top)}$
ordered by their size starting from the large one.
Then the lexicographic order on tuples of integers
induces a partial order on domain sets, 
that is 
we say that
$(a_{1},\ldots,a_{k})< (b_{1},\ldots,b_{l})$
if there exists $j\in\{1,2,\dots,\min(k+1,l)\}$
such that 
$a_{i} = b_{i}$ for every $i<j$, and 
$a_{j}<b_{j}$ or $j=k+1$.

It follows from the definition that $\le$ is transitive and
there does not exist an infinite descending chain of reductions.
Note that duplicating domains does not affect this partial order, 
that is why we do not make the size of a domain set larger if 
we consider an expanded covering.
At the same time, for every minimal (proper) one-of-four reduction 
$D^{(1)}$ of the instance with a domain set $D^{(0)}$ 
we have $\size(D^{(1)})<\size(D^{(0)})$.
Let us show this for a central reduction.
We replace every domain having a nontrivial center by 
a smaller domain and we do not change other domains.
Let $D_{y}^{(0)}$ be a domain of the maximal size having a nontrivial center.
Then $|D_{y}^{(0)}|$ will be replaced by smaller numbers in 
the sequence $\size(D^{(0)})$ 
making the sequence smaller.

We prove theorems of this subsection simultaneously by the induction on the size of the domain sets.
Let $D^{(\bot)}$ be a domain set.
Assume that Theorems~\ref{KeyConjunctionMain}, \ref{FindPerfectConstraint}, and \ref{CannotLooseSolution} hold
on instances $\Theta$ with a domain set $D^{(0)}$
if $\size(D^{(0)})< \size(D^{(\bot)})$, and
Theorems~\ref{ParPropertyMain} and \ref{ParPropertyForSubcontraint} hold
if $\size(D^{(s)})< \size(D^{(\bot)})$.
Let us prove Theorems~\ref{KeyConjunctionMain}, \ref{FindPerfectConstraint}, and \ref{CannotLooseSolution}
on instances $\Theta$ with a domain set $D^{(0)}$
if $\size(D^{(0)})=\size(D^{(\bot)})$,
and Theorems~\ref{ParPropertyMain} and \ref{ParPropertyForSubcontraint} for
$\size(D^{(s)})=\size(D^{(\bot)})$.

\begin{THMKeyConjunctionMain}
Suppose $D^{(1)}$ is a minimal 
1-consistent one-of-four reduction of a cycle-consistent irreducible CSP instance $\Theta$,
$\Omega(x_{1},\ldots,x_{n})$ is a subconstraint of $\Theta$,
the solution set of $\Omega^{(1)}$ is subdirect,
$\Theta\setminus\Omega$ has a solution in $D^{(1)}$,
and $\Theta$ has no solutions in $D^{(1)}$.
Then there exist instances
$\Upsilon_{1},\ldots,\Upsilon_{t}\in \ExpShort(\Omega)$
such that
$\Phi=(\Theta\setminus\Omega)\cup \Upsilon_{1}\cup\dots\cup\Upsilon_{t}$ has no solutions in $D^{(1)}$,
each $\Upsilon_{i}(x_{1},\ldots,x_{n})$ is a subconstraint 
of $\Phi$, 
and
$\Upsilon_{i}^{(1)}(x_{1},\ldots,x_{n})$ defines a subdirect key relation with the parallelogram property for every $i$.
\end{THMKeyConjunctionMain}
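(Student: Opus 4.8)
The plan is to realise each $\Omega_i$ as a maximal expanded covering of $\Omega$ that avoids a fixed ``bad'' tuple, so that Corollary~\ref{MaximalMeansKey} forces it to be a key relation, and then to repair the parallelogram property by iterating the three-copy construction of Theorem~\ref{ParPropertyForSubcontraint}.

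First I would pass to $D^{(1)}$ and isolate the two relations that govern the situation: let $\rho$ be the relation defined by $\Omega^{(1)}(x_{1},\ldots,x_{n})$ and let $\rho_{\mathrm{rest}}$ be the relation defined by $(\Theta^{(1)}\setminus\Omega^{(1)})(x_{1},\ldots,x_{n})$, both obtained by projecting out auxiliary variables. Since $\Theta\setminus\Omega$ has a solution in $D^{(1)}$, the set $\rho_{\mathrm{rest}}$ is nonempty and finite; since $\Omega$ and $\Theta\setminus\Omega$ share only $x_{1},\ldots,x_{n}$ and $\Theta$ has no solution in $D^{(1)}$, we get $\rho\cap\rho_{\mathrm{rest}}=\varnothing$; and $\rho$ is subdirect by hypothesis. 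I would also note that some $D^{(1)}_{x}$ has more than one element (otherwise the $1$-consistent reduction $D^{(1)}$ would already be a solution), so by Theorem~\ref{NextReduction} a proper reduction of $D^{(1)}$ always exists; this is the room the simultaneous induction needs.

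Next I would build the key coverings. Fix $\gamma\in\rho_{\mathrm{rest}}$, so $\gamma\notin\rho$. Among the finitely many relations definable by $\Upsilon^{(1)}(x_{1},\ldots,x_{n})$ with $\Upsilon\in\ExpShort(\Omega)$ that do not contain $\gamma$ --- a family that is nonempty because it contains $\rho$ --- choose a maximal one $\rho_{\gamma}$, witnessed by some $\Omega_{\gamma}\in\ExpShort(\Omega)$. By Corollary~\ref{MaximalMeansKey}, $\gamma$ is a key tuple for $\rho_{\gamma}$, so $\rho_{\gamma}$ is a key relation; and $\rho_{\gamma}\supseteq\rho$ makes it subdirect. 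Doing this for every $\gamma\in\rho_{\mathrm{rest}}$ and renaming auxiliary variables so that the $\Omega_{\gamma}$ are pairwise disjoint and disjoint from $\Theta\setminus\Omega$, the instance $\Theta_{1}:=(\Theta\setminus\Omega)\cup\bigcup_{\gamma}\Omega_{\gamma}$ lies in $\Expanded(\Theta)$, hence is cycle-consistent and irreducible by Lemma~\ref{ExpandedConsistencyLemma}, and it has no solution in $D^{(1)}$ since the relation it induces on $(x_{1},\ldots,x_{n})$ is $\bigcap_{\gamma}\rho_{\gamma}$, which misses every tuple of $\rho_{\mathrm{rest}}$.

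Finally I would repair the parallelogram property. The parallelogram closure of the relation defined by an expanded covering, taken over one split of the coordinates, is itself defined by an expanded covering --- this is exactly the three-copy formula in Theorem~\ref{ParPropertyForSubcontraint} --- so iterating over all splits and permutations stays inside $\ExpShort(\Omega)$ and terminates by finiteness. To see that the closure never reintroduces an avoided tuple I would invoke Theorem~\ref{ParPropertyForSubcontraint}, which by the structure of the simultaneous induction is available at the current level $D^{(\bot)}=D^{(1)}$ with $s=1$ (its proof uses Theorems~\ref{CannotLooseSolution} and \ref{ParPropertyMain} only at strictly smaller domain sets): applied to the cycle-consistent irreducible instance $\Theta_{1}$, the strategy $D^{(0)},D^{(1)}$, and the subconstraint $\Omega_{\gamma}$, it shows that replacing $\Omega_{\gamma}$ inside $\Theta_{1}$ by its parallelogram closure keeps the instance solution-free in $D^{(1)}$; combined with the maximality of $\rho_{\gamma}$ this forces $\rho_{\gamma}$ to have been closed under parallelogram completion already, hence to have the parallelogram property, while staying key by Corollary~\ref{MaximalMeansKey}. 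Taking $\Omega_{1},\ldots,\Omega_{t}$ to be these $\Omega_{\gamma}$ then completes the proof. The hard part will be this last step: keeping all intermediate instances cycle-consistent, irreducible, and equipped with a minimal strategy so that Theorem~\ref{ParPropertyForSubcontraint} really applies, and --- more delicate --- guaranteeing that the parallelogram closure of $\Omega_{\gamma}$ still excludes the specific tuple $\gamma$ (not merely that the whole conjunction stays solution-free), which will likely require choosing the $\Omega_{\gamma}$ so that every other tuple of $\rho_{\mathrm{rest}}$ already lies in $\rho_{\gamma}$, or processing the tuples of $\rho_{\mathrm{rest}}$ one at a time.
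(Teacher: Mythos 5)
Your construction of the coverings themselves is sound and matches the paper's mechanism: relations defined by formulas in $\ExpShort(\Omega)$ contain the subdirect relation $\rho$, so they are subdirect, and Corollary~\ref{MaximalMeansKey} turns ``maximal without a tuple'' into keyness; the conjunction over all $\gamma\in\rho_{\mathrm{rest}}$ indeed kills every potential solution. The genuine gap is in the last step. Your minimality notion --- $\rho_{\gamma}$ maximal among $\ExpShort(\Omega)$-definable relations not containing the single tuple $\gamma$ --- is too weak to force the parallelogram property. If for some split the three-copy extension $\rho_{\gamma}'$ of $\rho_{\gamma}$ is strictly larger, maximality only tells you that $\gamma\in\rho_{\gamma}'$; this does not contradict the conclusion of Theorem~\ref{ParPropertyForSubcontraint} (that the modified instance is still unsolvable in $D^{(1)}$), because $\gamma$ need not satisfy the other constraints $\Omega_{\gamma'}$, $\gamma'\neq\gamma$ --- each of those was only chosen to exclude its own tuple and may well exclude $\gamma$ too. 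So the inference ``solution-freeness of the replacement plus maximality of $\rho_{\gamma}$ forces $\rho_{\gamma}'=\rho_{\gamma}$'' breaks exactly at the point you yourself flagged as the hard part, and the two remedies you sketch are not secured: a relation excluding $\gamma$ but containing all other tuples of $\rho_{\mathrm{rest}}$ need not exist in $\Sigma$, and ``processing tuples one at a time'' is not an argument.

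The paper closes this hole with a different minimality structure, and you could adopt it with little change to the rest of your write-up: instead of fixing one maximal constraint per tuple of $\rho_{\mathrm{rest}}$, first choose a subfamily $\Sigma_{0}\subseteq\Sigma$ (where $\Sigma$ is the set of all constraints defined by $\Upsilon^{(1)}(x_{1},\ldots,x_{n})$, $\Upsilon\in\ExpShort(\Omega)$) such that $(\Theta^{(1)}\setminus\Omega^{(1)})\cup\Sigma_{0}$ has no solution but replacing \emph{any} member of $\Sigma_{0}$ by all weaker constraints from $\Sigma$ creates one. This cruciality first yields, for each $C_{i}\in\Sigma_{0}$, a tuple $\alpha_{i}$ such that $C_{i}$ is maximal without $\alpha_{i}$ (otherwise replacing $C_{i}$ by the maximal constraints without each $\alpha\notin C_{i}$ keeps the induced relation equal to $C_{i}$ and contradicts cruciality), giving keyness; and, second, it is exactly what the parallelogram step needs: the three-copy extension lies in $\Sigma$ and is weaker than or equal to $C_{i}$, so if it were strictly weaker, replacing $C_{i}$ by it would produce a solution by cruciality, contradicting Theorem~\ref{ParPropertyForSubcontraint}. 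Hence each $C_{i}$ already has the parallelogram property for every split, with no iteration of closures and no bookkeeping about which tuples the other members exclude.
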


\begin{proof}
Let $\Sigma$ be the set of all relations defined by $\Upsilon^{(1)}(x_{1},\ldots,x_{n})$
where $\Upsilon\in\ExpShort(\Omega)$.
To every relation $\rho\in\Sigma$ we assign 
a constraint $((x_{1},\ldots,x_{n});\rho)$, 
which we denote by $C(\rho)$.
We can find 
$\Sigma_0\subseteq \Sigma$ such that
the instance $(\Theta^{(1)}\setminus\Omega^{(1)})\cup C(\Sigma_{0})$ has no solutions,
but if we replace any relation of $\Sigma_{0}$ by all bigger relations from $\Sigma$ (weaker in terms of constraints)
then we get an instance with a solution.

Let $\Sigma_{0} = \{\rho_{1},\ldots,\rho_{t}\}$.
For each $\rho_{i}$ and each 
$\alpha\notin\rho_{i}$ we consider an inclusion-maximal 
relation $\rho_{i,\alpha}\supseteq \rho_{i}$ from $\Sigma$ such that 
$\alpha\notin \rho_{i,\alpha}$.
Since $\rho_{i} = \bigcap_{\alpha\notin \rho_{i}} \rho_{i,\alpha}$, 
if $\rho_{i}\neq \rho_{i,\alpha}$ for each $\alpha$
then $\rho_{i}$ could be replace by bigger relations
that are still in $\Sigma$, which contradicts our assumptions.
Then 
for each $\rho_{i}$ 
there exists a tuple $\alpha_{i}$ such that $\rho_{i}$ is an inclusion-maximal relation without $\alpha_{i}$ in $\Sigma$.

By Corollary~\ref{MaximalMeansKey}, $\rho_{i}$ is a key relation for every $i$.
Therefore we get a sequence of instances $\Upsilon_{1},\ldots,\Upsilon_{t}\in\ExpShort(\Omega)$
such that 
$\Upsilon_{i}^{(1)}$ defines 
$\rho_{i}$ for every $i$.
Put 
$\Phi = (\Theta\setminus\Omega)\cup \Upsilon_{1}\cup\dots\cup\Upsilon_{t}$.
We choose variables in the instance so that
the only common variables of $\Upsilon_{1},\ldots,\Upsilon_{t}$ are $x_{1},\ldots,x_{n}$,
which guarantees that 
$\Upsilon_{i}(x_{1},\ldots,x_{n})$ is a subconstraint of
$\Phi$.

Since $\Phi$ is a covering of 
$\Theta$, by Lemma~\ref{ExpandedConsistencyLemma}, $\Phi$ is cycle-consistent and irreducible.
Assume that 
$\rho_{i}$ does not have the parallelogram property.
Without loss of generality we assume that 
the failing partition is 
$\{x_{1},\dots,x_{k}\}$,
$\{x_{k+1},\dots,x_{n}\}$.
Define the instance 
$\Omega_{i}$ from $\Upsilon_{i}$ 
using the construction from Theorem~\ref{ParPropertyForSubcontraint}.
Then the relation defined by 
$\Omega_{i}^{(1)}(x_{1},\dots,x_{n})$ is
bigger than $\rho_{i}$
and $\Omega_{i}\in\ExpShort(\Omega)$, which 
means that 
$(\Phi\setminus\Upsilon_{i})\cup\Omega_{i}$
has a solution in $D^{(1)}$
and contradicts the inductive assumption for Theorem~\ref{ParPropertyForSubcontraint}.
Hence, $\rho_{i}$ has the parallelogram property for every $i$.
\end{proof}

To prove the next theorem we will need additional definitions and few auxiliary lemmas.
First, we assign a characteristic to every variable of an instance, then we introduce a partial order on the set of characteristics.
After that, we define three transformations of the instance 
giving an expanded covering of the original instance.
We will prove that these transformations change 
the characteristics in a good way, so they can be used to 
generate an instance required in Theorem~\ref{FindPerfectConstraint}.

Let us assign a characteristic to every variable of an instance $\Phi$
whose constraints are critical and rectangular.
For a variable $x$ let $\mathfrak C_{1}$ be the set of all minimal congruences among the set $\Congruences(\Phi,x)$.
Then let $\mathfrak C_2$ be the set of all minimal congruences
among the congruences of $\Congruences(\Phi,x)$  that are not adjacent with any congruence from $\mathfrak C_{1}$.
Thus, we assign a pair $(\mathfrak C_1,\mathfrak C_2)$ to every variable $x$, which we denote $\xi(\Phi,x)$ and call \emph{characteristic}.

Let us introduce a partial order on the set of all characteristics.
For two sets of irreducible congruences $\mathfrak C_{1}$ and $\mathfrak C_{2}$
we write $\mathfrak C_{1}\le \mathfrak C_{2}$ if
for every $\sigma\in \mathfrak C_1$
there exists $\delta\in \mathfrak C_2$ such that $\delta\subseteq \sigma$.
We write $\mathfrak C_{1}<\mathfrak C_{2}$ if
$\mathfrak C_{1}\le \mathfrak C_{2}$ and $\mathfrak C_{2}\not\le\mathfrak C_{1}$.
It is easy to see that $\le$ is a transitive relation.

By $\uparrow\Opt(\mathfrak C)$ we denote the set of all congruences
$\sigma$ such that 
$\sigma\supseteq \delta$ for some $\delta\in\Opt(\mathfrak C)$.
We write $(\mathfrak C_1,\mathfrak C_2)\lesssim(\mathfrak C_1',\mathfrak C_2')$ if
one of the following conditions holds:
\begin{enumerate}
\item
$\mathfrak C_{1}<\mathfrak C_{1}'$;
\item
$\mathfrak C_1 = \mathfrak C_1'$
and $\mathfrak C_{2}\le \mathfrak C_{2}'$;
\item
$\mathfrak C_1 = \mathfrak C_1'$,
$\mathfrak C_{2}\not \le \mathfrak C_{2}'$,
$\mathfrak C_{2}'\not \le \mathfrak C_{2}$,
$\mathfrak C_{2}\setminus (\uparrow\Opt(\mathfrak C_{1}))<\mathfrak C_{2}'\setminus (\uparrow\Opt(\mathfrak C_{1}))$.
\end{enumerate}

\begin{lem}
$\lesssim$ is a transitive relation.
\end{lem}
\begin{proof}
Assume that $(\mathfrak C_1,\mathfrak C_2)\lesssim(\mathfrak C_1',\mathfrak C_2')$ 
and 
$(\mathfrak C_1',\mathfrak C_2')\lesssim(\mathfrak C_1'',\mathfrak C_2'')$.

If $\mathfrak C_1<\mathfrak C_{1}'$ or $\mathfrak C_{1}'<\mathfrak C_{1}''$, 
then $\mathfrak C_1<\mathfrak C_1''$, which completes this case.

Thus, we assume that 
$\mathfrak C_1=\mathfrak C_{1}'=\mathfrak C_{1}''$. 
It follows from (2) and (3) that
$$\mathfrak C_{2}\setminus (\uparrow\Opt(\mathfrak C_{1}))\le
\mathfrak C_{2}'\setminus (\uparrow\Opt(\mathfrak C_{1}))\le
\mathfrak C_{2}''\setminus (\uparrow\Opt(\mathfrak C_{1}))
.$$
If $\mathfrak C_2\le\mathfrak C_{2}'\le \mathfrak C_{2}''$, 
then $\mathfrak C_{2}\le\mathfrak C_{2}''$, which completes this case.
Thus, we assume that at least one of the above 
comparisons is strict (comes from (3)).
Hence,
$\mathfrak C_{2}\setminus (\uparrow\Opt(\mathfrak C_{1}))<
\mathfrak C_{2}''\setminus (\uparrow\Opt(\mathfrak C_{1}))$.
Therefore, 
$\mathfrak C_{2}''\not\le\mathfrak C_2$
and (2) or  (3) holds for 
$(\mathfrak C_1,\mathfrak C_2)$ and $(\mathfrak C_1'',\mathfrak C_2'')$, which completes the proof.
\end{proof}

\begin{remark}
Note that $\le$ is not a partial order in general, but it is a partial order on sets of mutually non-inclusive congruences.
Similarly, $\lesssim$ is not a partial order in general, 
but it is a partial order if we consider only 
pairs $(\mathfrak C_{1},\mathfrak C_2)$ such that 
all the congruences of $\mathfrak C_{i}$ are not included into each other
for $i=1,2$.
Thus, as it follows from the definition of the characteristic,
we defined a partial order on the set of all characteristics.
\end{remark}
A variable $x$ of an instance $\Theta$ is called \emph{stable} if all the congruences in $\Congruences(\Theta,x)$ are adjacent.
We say that variables 
$y_1$ and $y_2$ are \emph{friends} in $\Theta$ if they appear in the
scope of some constraint of $\Theta.$

\textbf{Transformation $T_{1}(\Theta)$: make an instance crucial in $D^{(1)}$.}
Using Remark~\ref{GetCrucialInstance},
we replace constraints by all weaker constraints until we get a CSP instance that is crucial in $D^{(1)}$.

Note that $T_{1}(\Theta)\in\Expanded(\Theta)$.

Below we assume that the instance $\Theta$ is crucial in $D^{(1)}$,
which by the inductive assumption for Theorem~\ref{ParPropertyMain} means that  every constraint in $\Theta$ has the parallelogram property and is critical.

\textbf{Transformation $T_{2}(\Theta, \sigma_{1},\sigma_{2},x)$: split a variable.}
Let $\Omega_{i}$ be the set of all constraints
$C\in \Theta$ such that $\ConOne(C,x) = \sigma_{i}$ for $i\in\{1,2\}$.
Let $\Omega_{0}$ be the set of all constraints $C\in\Theta\setminus (\Omega_{1}\cup\Omega_{2})$ containing $x$.
We transform our instance in the following way:
\begin{enumerate}
\item Choose 2 new variables $x_1$ and $x_2$;
\item Rename $x$ by $x_1$ in all constraints from $\Omega_0$ and $\Omega_1$;
\item Rename $x$ by $x_2$ in all constraints from $\Omega_2$;
\item Add the constraints $\sigma_{1}^{*}(x_{1},x_{2})$ and $\sigma_{2}^{*}(x_{1},x_{2})$;
\item For every $\sigma \in \Congruences(\Omega_{0},x)$ add
the constraint $\sigma(x_{1},x_{2})$.
\end{enumerate}

Note that 
$T_{2}(\Theta, \sigma_{1},\sigma_{2},x)$ is an expanded covering 
of $\Theta$, where the parent of $x_{1}$ and $x_{2}$ is $x$.

\begin{lem}\label{transTwo}
Suppose $D^{(1)}$ is a minimal 1-consistent one-of-four reduction of a cycle-consistent irreducible CSP instance $\Theta$,
$\Theta$ is crucial in $D^{(1)}$, 
and congruences $\sigma_{1}, \sigma_{2}\in\Congruences(\Theta,x)$ are not adjacent.
Then the instance $T_{2}(\Theta, \sigma_{1},\sigma_{2},x)$ has no solutions in $D^{(1)}$.
\end{lem}

\begin{proof}
Let $\Theta' = T_{2}(\Theta, \sigma_{1},\sigma_{2},x)$, $\sigma$ be the intersection of all congruences from $\Congruences(\Omega_{0},x)$.

Assume that $\Theta'$ has a solution in $D^{(1)}$.
Suppose $(x_{1},x_{2})=(a_{1},a_{2})$ in this solution.
Put $\Upsilon = \sigma_{1}(x_{1},x)\wedge\sigma_{2}(x_{2},x)\wedge\sigma(x_2,x).$
Consider the instance $\Theta'\wedge\Upsilon$. 
Since $(x_{1},x_{2})\in \sigma$ (by the definition 
of the transformation) and 
$(x_{2},x)\in\sigma$, 
we have $(x,x_{1})\in \sigma$.
Then each solution of $\Theta'\wedge\Upsilon$ can be taken as a 
solution of  $\Theta$ (we just ignore $x_{1}$ and $x_{2}$).
Hence, the instance $\Theta'\wedge \Upsilon$ has no solutions in $D^{(1)}$.
We apply Theorem~\ref{KeyConjunctionMain} to the subconstraint $\Upsilon(x_{1},x_{2})$ to obtain 
a sequence of formulas
$\Omega_{1},\ldots,\Omega_{t}\in \ExpShort(\Upsilon)$
such that 
$\Theta'\cup \Omega_{1}\cup\dots\cup\Omega_{t}$ has no solutions in $D^{(1)}$,
and
$\Omega_{i}^{(1)}(x_{1},x_2)$ defines a subdirect key relation 
$\rho_{i}$ with the parallelogram property for every $i$.
Note that the relation $\rho_{i}$ is reflexive, therefore, 
$\rho_{i}$ is a congruence on $D_{x_{1}}^{(1)}$.
If the reduction $D^{(1)}$ is nonlinear then
by $\omega_{i}$ we denote the relation defined by $\Omega_{i}(x_{1},x_{2})$.
If the reduction $D^{(1)}$ is linear then
by $\omega_{i}$ we denote the relation defined by $\Omega_{i}'(x_{1},x_{2},u_{1},\dots,u_r)$
from Lemma~\ref{AddLinearVariables}.
We know from Lemmas~\ref{AddLinearVariables} and \ref{SameConOneForNonlinear}
that
$\ConOne(\omega_{i},1)^{(1)} = \ConOne(\rho_{i},1)=\rho_{i}$.
Every
constraint in $\Omega_{i}$, 
which contains $x_1$ must have $\sigma_{1}$ for its constraint relation; thus the
first variable of $\omega_{i}$ is stable under $\sigma_{1}$ and $\ConOne(\omega_{i},1)\supseteq\sigma_{1}$.
Consider two cases:

Case 1. Assume that $\rho_{i}\neq \sigma_{1}^{(1)}$ for every $i$,
then $\ConOne(\omega_{i},1)\supseteq\cover{\sigma_{1}}$.
Hence $\rho_{i}\supseteq {(\cover{\sigma_{1}})}^{(1)}$ for every $i$.
Then
we may put $x_{1} = a_{1}$ and $x= x_{2} = a_{2}$ to get a solution
of
$\Theta'\cup\Omega_{1}\cup\dots\cup\Omega_{t}$ in $D^{(1)}$,
which contradicts the properties of the sequence 
$\Omega_{1},\dots,\Omega_{t}$.

Case 2. Assume that $\rho_{i}= \sigma_{1}^{(1)}$ for some $i$.
Since $(a_{1},a_{2})\in (\cover{\sigma_{1}})^{(1)}\setminus \sigma_{1}$
and $\ConOne(\omega_{i},1)^{(1)} = \rho_{i}$,
we have $\ConOne(\omega_{i},1)\not \supseteq\cover{\sigma_{1}}$.
Hence $\ConOne(\omega_{i},1)=\sigma_{1}$.
Suppose $D^{(1)}$ is a nonlinear reduction.
$\Upsilon(x_{1},x_{2})$ contains
$\sigma_{2}\cap\sigma$, and therefore
$\sigma_{2}\cap\sigma\subseteq \ConOne(\omega_{i},1)=\sigma_{1}$.
The symmetric conclusion $\sigma_{1}\cap\sigma\subseteq \sigma_{2}$
can be obtained by a symmetric argument, switching the roles of $\sigma_{1}$ and $\sigma_{2}$.
Since, $(a_{1},a_{2}) \in \sigma\setminus\sigma_{1}$, 
by Lemma~\ref{MinimalsAdjacent}
$\sigma_{1}$ and $\sigma_{2}$ are adjacent, which contradicts our assumptions.
Similarly, if $D^{(1)}$ is a linear reduction,
we can show that 
$\sigma_{2}\cap\sigma\cap \ConLin(D_{x})\subseteq \ConOne(\omega_{i},1) =\sigma_{1}$.
Indeed, suppose $(c,d)\in 
\sigma_{2}\cap\sigma\cap \ConLin(D_{x})$.
To witness that 
 $(c,d) \in \ConOne(\omega_{i},1)$ we need to define two tuples 
 from $\omega_{i}$ that differ only in the first component.
To obtain the first tuple 
we assign $c$ to every variable of $\Omega'_{i}$.
To obtain the second tuple we 
assign $d$ to all variables whose parent is $x_{1}$ or $x$,
and $c$ to the remaining variables (including $u_{1},\ldots,u_{r}$).
Thus, we can show that 
$\sigma_{2}\cap\sigma\cap \ConLin(D_{x})\subseteq\sigma_{1}$
and 
$\sigma_{1}\cap\sigma\cap \ConLin(D_{x})\subseteq\sigma_{2}$.
Since $(a_{1},a_{2}) \in (\sigma\cap\ConLin(D_{x}))\setminus\sigma_{1}$,
Lemma~\ref{MinimalsAdjacent} implies that 
$\sigma_{1}$ and $\sigma_{2}$ are adjacent, 
which contradicts our assumptions.
\end{proof}

Informally speaking, the following lemma
states that 
when we apply $T_{1}(T_{2}(\Theta,\sigma_{1},\sigma_{2},x))$
the characteristic of every new variable 
is less than the characteristic of its parent, 
the characteristic of old variables does not change, and 
if a stable variable gets a new friend then 
the friend's parent is not its friend anymore.

\begin{lem}\label{TransTwoStrength}
Suppose $D^{(1)}$ is a minimal 1-consistent one-of-four reduction of a cycle-consistent irreducible CSP instance $\Theta$,
$\Theta$ is crucial in $D^{(1)}$, congruences $\sigma_{1}, \sigma_{2}$ 
are minimal congruences among $\Congruences(\Theta,x)$, 
$\sigma_{1}$ and $\sigma_{2}$ are not adjacent,
$\Theta' = T_{1}(T_{2}(\Theta))$.
Then 
\begin{enumerate}
    \item $\xi(\Theta',y')< \xi(\Theta,y)$,
    if $y$ is a parent of $y'$ and $y'\neq y$;    
    \item $\xi(\Theta',y) = \xi(\Theta,y)$
    if $y\in \Var(\Theta)\cap\Var(\Theta')$;
\item if $y$ is stable in $\Theta$,
    $y'\in\Var(\Theta')\setminus\Var(\Theta)$,
    then 
    $y$ cannot be a friend of both 
    $y'$ and the parent of $y'$ in $\Theta'$;
\item $\Theta$ and $\Theta'$ have a common variable.    
\end{enumerate}
\end{lem}

\begin{proof}
By Lemma~\ref{transTwo} and the definition of 
$T_{1}$, $\Theta'$ is crucial, then 
by Lemma~\ref{KeepCrucialConstraint} 
for every constraint $C$ 
in $\Theta$ there exists a constraint $C'$ in
$\Theta'$ whose image in $\Theta$ is $C$.
Therefore, when we apply $T_{1}$ 
we weaken only binary constraints we added in $T_{2}$ but not 
the constraints from $\Theta$.
Then Claim (2) follows from the definition of the transformation.

$\ConOne(\Theta',x_{1})$ 
has all the congruences 
of 
$\ConOne(\Theta,x)$ but $\sigma_{2}$.
Additionally, it may contain congruences
$\delta$ such that 
$\delta\supseteq \sigma_{1}^{*}$,
$\delta\supseteq \sigma_{2}^{*}$,
or $\delta\supsetneq \sigma$
for $\sigma\in\ConOne(\Omega_0,x)$.
None of these congruences are minimal, 
so they cannot affect the first 
coordinate of $\xi(\Theta',x_{1})$.
Thus, 
$\xi(\Theta',x_{1})<\xi(\Theta,x)$.
Similarly, we can show that 
$\xi(\Theta',x_{2})<\xi(\Theta,x)$, which completes 
Claim (1).

Claim (3) follows from the fact that 
$x$, which is the only parent of 
variables from $\Var(\Theta')\setminus\Var(\Theta)$, 
is not in $\Theta'$.

Since a crucial instance cannot have just one variable, 
$\Theta$ and $\Theta'$ have a common variable, 
which is Claim (4).
\end{proof}

For an instance $\Omega\subseteq \Theta$ by
$\MinVar(\Omega,\Theta)$ we denote the
set of all variables $x$ such that
there exists $\sigma\in\ConOne(\Omega,x)$ 
that is minimal among $\Congruences(\Theta,x)$.

\textbf{Transformation $T_{3}(\Theta, \Omega)$ for a connected component $\Omega$.}
Let $\MinVar(\Omega,\Theta) = \{x_{1},\ldots,x_{s}\}$, where $s\ge 1$.
Let us define the new instance in the following way:
\begin{enumerate}
\item Choose new variables $x_{1}',\ldots,x_{s}'$;
\item Rename the variables $x_{1},\ldots,x_{s}$ by $x_{1}',\ldots,x_{s}'$ in $\Theta\setminus \Omega$;
\item Add the covers of all constraints from $\Omega$ with
$x_{1}',\ldots,x_{s}'$ instead of $x_{1},\ldots,x_{s}$;
\item For every $j$ and every $\sigma \in \Congruences(\Theta\setminus\Omega,x_{j})$ add
the constraint $\cover{\sigma}(x_{j},x_{j}')$;
\item 
For every $j$ and $\sigma\in\Congruences(\Theta\setminus\Omega,x_{j})$ such that $\LinkedCon(\Omega,x_{j})\not\subseteq \sigma$
add the constraint $\delta_{j}(x_{j},x_{j}')$, where
$\{\delta_{j}\} = \Opt(\Congruences(\Omega,x_{j}))$.
\end{enumerate}
Note that 
by Corollary~\ref{PathInConnectedComponent}
all congruences of $\Congruences(\Omega,x_{j})$
are adjacent. 
Then by Lemma~\ref{OptimalForAdjacent}
$\Opt(\Congruences(\Omega,x_{j}))$ contains just one element
and 
$\delta_{j}$ is well-defined in (5). 
Also, $T_{3}(\Theta, \Omega)$ is an expanded covering 
of $\Theta$, where the parent of every $x_{i}'$ is $x_{i}$.

\begin{lem}\label{transFour}
Suppose $D^{(1)}$ is a minimal 1-consistent one-of-four reduction of a cycle-consistent irreducible CSP instance $\Theta$,
$\Theta$ is crucial in $D^{(1)}$, $\Omega$ is a connected component of $\Theta$,
the solution set of $\Omega$ is subdirect,
$\Omega$ has a solution in $D^{(1)}$,
and for every $x\in\Var(\Omega)$
any two congruences that are minimal among $\ConOne(\Theta, x)$ are adjacent.
Then the instance $T_{3}(\Theta, \Omega)$ has no solutions in $D^{(1)}$.
\end{lem}
\begin{proof}
Suppose $\Var(\Omega)\setminus \MinVar(\Omega,\Theta) = \{z_{1},\ldots,z_{n}\}$.
Since the solution set of $\Omega$ is subdirect,
by Lemma~\ref{ProperReductionPreservesSubdirectness}
we know that the solution set of
$\Omega^{(1)}$ is subdirect.
We consider two cases:

Case 1:
$\LinkedCon(\Omega,y)= \ConOne(C,y)$ for every variable $y$ and every constraint $C\in \Omega$ 
having $y$ in the scope.
This means that 
$\Congruences(\Omega,y)$ contains exactly one congruence 
for every variable $y\in\Var(\Omega)$.
Since 
$\ConOne(C,x_{j})$ is minimal among $\Congruences(\Theta,x_{j})$ for every $j$ and every constraint $C\in \Omega$ containing $x_{j}$,
we have
$\LinkedCon(\Omega,x_{j}) \subsetneq \sigma$
for every $j$ and every $\sigma\in \Congruences(\Theta\setminus\Omega,x_{j})$.
Notice that for any constraint $C\in \Omega$ having $y_{1}$ and $y_{2}$ 
in the scope we have
$\LinkedCon(\Omega,y_1)\supseteq\ConOne(\proj_{y_1,y_{2}}(C),y_{1})$.
Since all constraints  of $\Omega$ are rectangular and critical, 
Lemma~\ref{RectangularCriticalArityTwo}
together with $\LinkedCon(\Omega,y_{1})= \ConOne(C,y_{1})$
imply that
the constraint $C$ should be binary. Thus, 
all the constraint relations are binary.

Assume that $n=0$.
Since $\Theta$ is crucial in $D^{(1)}$, 
the instance $\Omega$, viewed as a graph whose vertexes are variables, cannot have a cycle (otherwise, removing a constraint(edge) from the 
cycle would not affect the solution set, which contradicts 
the fact that $\Theta$ is crucial).
Hence, we can choose a constraint $C\in\Omega$ with a variable $x_{j}$
that appears just once in $\Omega$.
We replace the variable $x_{j}$ in $\Theta\setminus\{C\}$ by $x_{j}'$
and add the constraint 
$\sigma_{0}^{*}(x_{j},x_{j}')$, where $\sigma_{0}=\ConOne(C,x_{j})$. 
The obtained instance we denote by $\Theta'$.
Since the constraint $C$ is crucial in $D^{(1)}$, 
$\Theta'$ has a solution in $D^{(1)}$.
Since $\sigma_{0}^{*}\subseteq\sigma$
for every $\sigma\in \Congruences(\Theta\setminus\Omega,x_{j})$,
the solution of $\Theta'$ gives a solution of $\Theta$ in $D^{(1)}$,
which contradicts our assumptions.

Suppose $n>0$.
By $\Omega'$ we denote the copy of $\Omega$ with covers instead of constraints we introduced in (3).
For every variable $y$ of $\Omega$ by 
$\sigma_{y}$ we denote the minimal congruence 
such that 
$\sigma_{y}\supsetneq \LinkedCon(\Omega,y)$.
Since $\LinkedCon(\Omega,y)$ is an irreducible congruence, $\sigma_{y}$ is well-defined.
For every constraint 
$C = \rho(u,v)$ of $\Omega$, 
by $C'$ we denote the constraint $\rho'(u,v)$, 
where 
$\rho'(u,v) = \exists u' \exists v'\; \rho(u',v')\wedge \sigma_{u}(u,u')\wedge \sigma_{v}(v,v')$.
Let us show that 
$\rho'$ is a rectangular relation such that 
$\ConOne(\rho',1) = \sigma_{u}$ 
and 
$\ConOne(\rho',2) = \sigma_{v}$, 
that is a bijective mapping between equivalence classes of 
$\sigma_{u}$ and $\sigma_{v}$.
Since $\rho$ is rectangular, 
the congruence $\sigma_{u}$ generates a congruence 
on $D_{v}$ that is strictly greater than
$\ConOne(\rho,2)$, and therefore containing $\sigma_{v}$.
Therefore, $\sigma_{v}$ has at least as many equivalence classes 
as $\sigma_{u}$. The same is true for $\sigma_{u}$, 
which means that the congruence 
generated on $D_{v}$ from $\sigma_{u}$ using $\rho$ is equal to $\sigma_{v}$.
Therefore,
$\rho'$ is a rectangular relation such that 
$\ConOne(\rho',1) = \sigma_{u}$ 
and 
$\ConOne(\rho',2) = \sigma_{v}$.
Note that $\rho'\supsetneq \rho$. 

Since $n>0$ and $\Omega$ is not fragmented, 
there exists a path in $\Omega$ connecting 
a variable $z_{i}$ with a variable $x_{j}$ for every $i$ and $j$.
Then we glue a path going from 
$x_{j}$ to $z_{i}$ in $\Omega$ 
with a path going 
from $z_{i}$ to $x_{j}'$ in $\Omega'$.
For every constraint $C\in \Omega$
the constraint $C'$ (defined above) is weaker
or equivalent to its cover in $\Omega'$.
Therefore, every constraint $C$ in the obtained path from 
$x_{j}$ to $x_{j}'$ is not weaker than
$C'$, which means 
(by the properties of $C'$) that 
$x_{j}$ and $x_{j}'$ should be equivalent modulo $\sigma_{x_{j}}$ 
for every $j$ in any solution of $T_{3}(\Theta, \Omega)$.

Assume that $T_{3}(\Theta, \Omega)$ has a solution in $D^{(1)}$ with
$$(x_{1},\ldots,x_{s}, x_{1}',\ldots,x_{s}') = (b_{1},\ldots,b_{s},b_{1}',\ldots,b_{s}').$$
Since $\sigma_{x_{j}}\subseteq\sigma$
for every $\sigma\in \Congruences(\Theta\setminus\Omega)$, 
we have $(b_{i},b_{i}')\in\sigma$.
Therefore, 
we can assign
$$(x_{1},\ldots,x_{s}, x_{1}',\ldots,x_{s}') = (b_{1},\ldots,b_{s},b_{1},\ldots,b_{s}).$$
to get a solution of $\Theta^{(1)}$ (the remaining variables take on the same values).
This contradiction completes this case.

Case 2:
$\LinkedCon(\Omega,z)\neq \ConOne(C_{z},z)$ for some variable $z$ and some constraint $C_{z}\in \Omega$

Assume that $n=0$ and
$\LinkedCon(\Omega,x_{j}) \subseteq \sigma$
for every $j$ and every $\sigma\in \Congruences(\Theta\setminus\Omega,x_{j})$.
We rename the variable 
$z$ in $C_{z}$ by 
$z'$
and add the constraint 
$\sigma_{L}(z,z')$,
where 
$\sigma_{L} = \LinkedCon(\Omega,z)$.
Since $\Theta$ is crucial in $D^{(1)}$, the new instance has a solution $\beta$ in $D^{(1)}$.
Let $z$ be equal to $c$ in $\beta$.
Since the solution set of $\Omega^{(1)}$ is subdirect,
there exists a solution $\gamma$ of $\Omega^{(1)}$ with $z = c$.
Note that the corresponding elements of 
$\beta$ and $\gamma$ are linked in $\Omega$.
Since $\LinkedCon(\Omega,x_{j}) \subseteq \sigma$
for every $j$ and every $\sigma\in \Congruences(\Theta\setminus\Omega,x_{j})$,
we can build a solution of $\Theta^{(1)}$ with the values for $x_{j}$ from
$\gamma$ and the values for the remaining variables from $\beta$, which gives us a contradiction.

Thus, 
we assume that 
$n>0$ or 
$\LinkedCon(\Omega,x_{h}) \not\subseteq\sigma$
for some $h$ and $\sigma\in \Congruences(\Theta\setminus\Omega,x_{h})$.
In this case we consider a different transformation defined as follows:
\begin{enumerate}
\item Choose new variables $x_{1}',\ldots,x_{s}'$ and $x_{1}'',\ldots,x_{s}''$.
\item Add a copy of $\Omega$ to $\Theta$ with all the variables $x_{1},\ldots,x_{s}$ replaced by
$x_{1}',\ldots,x_{s}'$. We denote the copy by $\Omega'$.
\item Rename $x_{1},\ldots,x_{s}$ in $\Theta\setminus\Omega$ by $x_{1}'',\ldots,x_{s}''$.
\item For every $i$ and every $\sigma \in \Congruences(\Theta\setminus\Omega,x_{i})$ add a new variable $y$
and add the constraints $\sigma(x_{i}',y)$ and $\sigma(x_{i}'',y)$.
\item For every $i$ and every $\sigma \in \Congruences(\Theta\setminus\Omega,x_{i})$ add
the constraint $\cover{\sigma}(x_{i},x_{i}'')$.
\item For every $j$ and $\sigma\in\Congruences(\Theta\setminus\Omega,x_{j})$ such that $\LinkedCon(\Omega,x_{j})\not\subseteq \sigma$
add the constraint $\delta_{j}(x_{j},x_{j}')$, where
$\{\delta_{j}\} = \Opt(\Congruences(\Omega,x_{j}))$.
\end{enumerate}
Since here we just copied $\Omega$,
any solution of the obtained instance would give a solution to 
$\Theta$ (we use values of $x_{1}',\dots,x_{s}',z_{1},\dots,z_{n}$ to generate a solution), hence the obtained instance has no solutions in $D^{(1)}$.
We replace constraints from $\Omega'$ containing at least one of the variables $x_{1}',\ldots,x_{s}'$
by their covers step by step.
Thus, in one step we replace just one constraint from $\Omega'$.
We consider two cases.

Assume that after all replacements we get an instance 
$\Theta_{0}$
without solutions in $D^{(1)}$.
Any solution of $T_{3}(\Theta, \Omega)$ gives a solution
of $\Theta_{0}$: if $x_{i}' = a_{i}$ in the solution of $T_{3}(\Theta, \Omega)$,
then we put $x_{i}' = x_{i}'' = y = a_{i}$ in $\Theta_{0}$ for every $i$ and the corresponding $y$'s
(the remaining variables take the same values). 
Therefore, $T_{3}(\Theta, \Omega)$
has no solutions in $D^{(1)}$, which completes this case.

Assume that after some replacement the instance gets a solution in $D^{(1)}$.
Suppose the instance before this replacement is $\Theta'$ and
the corresponding constraint to be replaced is $C$. Choose a variable $x_{l}'\in \Var(C)$.

Let $\delta = \ConOne(C,x_{l}')$, $\rho$ be an optimal bridge from $\delta$ to $\delta$.
Let us define a new bridge by
$$\rho'(u_{1},u_{2},u_{3},u_{4}) = \exists v_{1}\exists v_{2}
\rho(u_1,u_2,v_{1},v_{2})\wedge \rho(u_3,u_4,v_1,v_2)\wedge \rho(u_{1},u_{1},u_{3},u_{3})\wedge
\delta^{*}(u_{3},u_{4}).$$
Since 
$\rho'(x,x,y,y) = \rho(x,x,y,y)$, 
$\rho'$ is also an optimal bridge.
Additionally, $\rho'$ has the following property:
if $(a,b,c,d)\in\rho'$ then $(a,c)\in\widetilde{\rho}$ 
and $(c,d)\in\delta^{*}$.

Then we change $\Theta'$ in the following way.
We add three new variables $u_{1}$, $u_{2}$, $x_{l}'''$, replace $x_{l}'$ in $C$ by $x_{l}'''$,
add the constraint $\rho'(x_{l}',x_{l}''',u_{1},u_{2})$
and the constraint $\delta(u_{1},u_{2})$.
We denote the new instance by $\Theta''$. 
By the definition of a bridge, $\Theta''$ has no solutions in $D^{(1)}$.

By $\Upsilon$ we denote all constraints of $\Theta''$ containing $x_{j}'$ for some $j$ or $x_{l}'''$.
Let $\{y_{1},\ldots, y_{t}\}$ be the set of all variables of $\Upsilon$ except
for $z_{1},\ldots,z_{n}$, $x_{1},\ldots,x_{s}$, $x_{1}',\ldots,x_{s}'$,  $u_{1},u_{2}$, and $x_{l}'''$.
Suppose that the variable $x_{i_{j}}$ is the corresponding variable
and $\sigma_{j}$ is the corresponding congruence for $y_{j}$ (see step (4) of the transformation).

If we remove the constraint
$\delta(u_{1},u_{2})$ from $\Theta''$, 
then it is equivalent to making a constraint $C$ of $\Theta'$ weaker, which means that we get a solution of $\Theta''$ in $D^{(1)}$ after the removal.
Let
\begin{multline*}
(x_{1},\ldots,x_{s},x_{1}',\ldots,x_{s}',x_{1}'',\ldots,x_{s}'', y_{1},\ldots,y_{t}, z_{1},\ldots,z_{n},u_1,u_2)
=\\(a_{1},\ldots,a_{s},a_{1}',\ldots,a_{s}', a_{1}'',\ldots,a_{s}'', d_{1},\ldots,d_{t}, b_{1},\ldots,b_{n},c_1,c_2)
\end{multline*}
in this solution.

First, we want to show that $(a_{l},a_{l},c_{1},c_{1})\in \rho'$. 
By the definition 
of $\rho'$ and $\Theta''$, 
we have $(a_{l}',c_{1})\in \widetilde {\rho}$. We consider two subcases.
Case 2A. Suppose $n>0$.
Gluing a path from $x_{l}$ to $z_{1}$ in $\Omega$ 
and a path from $z_{1}$ to $x_{1}'$ in $\Omega'$, 
we show that $a_{l}$ and $a_{l}'$ are linked in
$\Omega'$.
We apply Theorem~\ref{PathInConnectedComponentThm} to get a bridge 
from $\delta$ to $\delta$ containing $(a_{l},a_{l},a_{l}',a_{l}')$.
Then we compose this bridge with the bridge $\rho$
to obtain a bridge from $\delta$ to $\delta$
containing $(a_{l},a_{l},c_{1},c_{1})$.
Since the bridge $\rho'$ is optimal, we have
$(a_{l},a_{l},c_{1},c_{1})\in \rho'$.

Case 2B. 
$\LinkedCon(\Omega,x_{h}) \not\subseteq\sigma$
for some $h$ and $\sigma\in \Congruences(\Theta\setminus\Omega,x_{h})$.
Let $\zeta\in\Congruences(\Omega,x_{h})$
and $\xi_{0}$ be an optimal bridge from 
$\zeta$ to $\zeta$.
Note that by step (6) of the new transformation
$(a_{h},a_{h}')\in\widetilde\zeta$.
We know that $a_{l}$ and $a_{h}$ are linked in $\Omega$,
$a_{h}'$ and $a_{l}'$ are linked in
$\Omega'$.
We apply Theorem~\ref{PathInConnectedComponentThm} to get a bridge $\xi_{1}$
from $\delta$ to $\zeta$ containing $(a_{l},a_{l},a_{h},a_{h})$
and a bridge $\xi_2$ from $\zeta$ to $\delta$
containing $(a_{h}',a_{h}',a_{l}',a_{l}')$.
Then we compose 
$\xi_{1}$, $\xi_{0}$, $\xi_{2}$ and $\rho$ (in this order) 
to obtain a bridge from $\delta$ to $\delta$
containing $(a_{l},a_{l},c_{1},c_{1})$.
Since the bridge $\rho'$ is optimal, we have
$(a_{l},a_{l},c_{1},c_{1})\in \rho'$.

Consider a subconstraint $\Upsilon(y_{1},\ldots,y_{t},x_{1},\dots,x_{s},z_{1},\ldots,z_{n},u_1,u_2)$.
The constraint $\delta(u_1,u_2)$ is isolated in $\Theta''\setminus\Upsilon$, 
hence 
$\Theta''\setminus\Upsilon$ has a solution in $D^{(1)}$.
Using Theorem~\ref{KeyConjunctionMain}, we find
$\Upsilon_{1},\dots,\Upsilon_{v}\in \ExpShort(\Upsilon)$ such that
$\Upsilon_{i}^{(1)}(y_{1},\ldots,y_{t},x_{1}\dots,x_{s},z_{1},\ldots,z_{n},u_1,u_2)$
defines a key relation $\rho_{i}$ with the parallelogram property for every $i$.
Since 
$(\Theta''\setminus\Upsilon)
\cup\Upsilon_{1}\cup\dots\cup\Upsilon_{v}$ has no solutions in $D^{(1)}$, 
we can choose $k$ such that $\rho_{k}$
omits the tuple
$(d_{1},\ldots,d_{t},a_{1},\dots,a_{s},b_{1},\ldots,b_{n},c_1,c_1)$.
By the definition of $\Upsilon$, 
we can substitute 
the value $c_{1}$ instead of $u_{1}$ and $u_2$ and put 
$x_{i} = x_{i}' = x_{i}''=a_{i}$
and $y_{j} = x_{i_{j}}$ for every $i$ and $j$
to get a solution of $\Upsilon$.
Precisely, for every $j$ we put
$d_{j}' = a_{i_{j}}$,
then
$(d_{1}',\ldots,d_{t}',a_{1},\dots,a_s,b_{1},\ldots,b_{n},c_{1},c_{1})\in \rho_{k}$
(here we used that $(a_{l},a_{l},c_1,c_1)\in\rho'$).
Also we know that 
$$(d_{1},\ldots,d_{t},a_{1},\dots,a_s,b_{1},\ldots,b_{n},c_1,c_1)\notin\rho_{k},$$
$$(d_{1},\ldots,d_{t},a_{1},\dots,a_s,b_{1},\ldots,b_{n},c_1,c_2)\in\rho_{k}.$$
Then we consider the minimal $j$ 
such that 
$(d_{1}',\ldots,d_{j}',d_{j+1},\dots,d_{t},a_{1},\dots,a_s,b_{1},\ldots,b_{n},c_1,c_1)\in\rho_{k}$.
It follows from the definition of $\Theta'$ 
that 
$(d_{j},d_{j}')\in\sigma_{j}^{*}$, 
and from the definition of 
$\rho'$ that $(c_{1},c_{2})\in\delta^{*}$.
Then by Lemma~\ref{SubconstraintConnectivity}
there exists a bridge $\zeta_{1}$ from $\delta$ to $\sigma_{j}$ such that 
$\widetilde{\zeta_{1}}$ contains $\Upsilon(u_2,y_{j})$, 
and therefore it contains 
$\Omega(x_{l},x_{i_{j}})$.

Suppose
$\delta_{0}\in\ConOne(\Omega,x_{i_{j}})$.
Applying Theorem~\ref{PathInConnectedComponentThm} to $\Omega$ and the variables $x_{i_{j}}$ and $x_{l}$,
we get a bridge $\zeta_{2}$  from $\delta_{0}$ to $\delta$
such that $\widetilde\zeta_{2}$ contains 
all elements linked in $\Omega$, and 
therefore it contains $\Omega(x_{i_j},x_{l})$.
Composing the bridges $\zeta_{2}$ and $\zeta_1$ we get a  
bridge from $\delta_{0}$ to $\sigma_{j}$.
Since $\Omega(x_{l},x_{i_j})$ is subdirect, 
the obtained bridge is reflexive.
Hence $\delta_{0}$ and $\sigma_{j}$ are adjacent, which contradicts the fact that $\sigma_{j}\in \Congruences(\Theta\setminus\Omega,x_{i_{j}})$.
\end{proof}


Below we prove a property of the transformation 
$T_{3}$ similar to the property of $T_{2}$ proved in  Lemma~\ref{TransTwoStrength}.

\begin{lem}\label{TransFourStrength}
Suppose $D^{(1)}$ is a minimal 1-consistent one-of-four reduction of a cycle-consistent irreducible CSP instance $\Theta$,
$\Theta$ is crucial in $D^{(1)}$, $\Omega$ is a connected component,
the solution set of $\Omega$ is subdirect,
$\Omega$ has a solution in $D^{(1)}$,
for every $x\in\Var(\Omega)$
any two congruences that are minimal among $\ConOne(\Theta, x)$ are adjacent,
and 
$\Theta' = T_{1}(T_{3}(\Theta,\Omega))$.
Then 
\begin{enumerate}
    \item $\xi(\Theta',y')< \xi(\Theta,y)$,
    if $y$ is a parent of $y'$ and $y'\neq y$;    
    \item $\xi(\Theta',y) \le \xi(\Theta,y)$,
    if $y\in \Var(\Theta)\cap\Var(\Theta')$;
    \item $\xi(\Theta',y) < \xi(\Theta,y)$,
    if $y\in \MinVar(\Omega,\Theta)$
    and $y$ not stable in $\Theta$;
\item if $y$ is stable in $\Theta$,
    $y'\in\Var(\Theta')\setminus\Var(\Theta)$,
    then 
    $y$ cannot be a friend of both 
    $y'$ and the parent of $y'$ in $\Theta'$;
\item $\Theta$ and $\Theta'$ have a common variable.    
    
\end{enumerate}
\end{lem}

\begin{proof}
By Lemma~\ref{transFour} $\Theta'$ is crucial, then 
by Lemma~\ref{KeepCrucialConstraint} 
for every constraint $C$ 
in $\Theta$ there exists a constraint $C'$ in
$\Theta'$ whose image in $\Theta$ is $C$.
Therefore, when we apply $T_{1}$ 
we weaken only binary constraints and covers we added in $T_{3}$ 
but not 
the constraints from $\Theta$.

First, let us show that
$\xi(\Theta,z)= \xi(\Theta',z)$
for every $z\in \Var(\Theta)\setminus \{x_{1},\dots,x_{s}\}$.
The only constraints with $z$ we added are
the constraints we obtained from the covers 
of constraints from $\Omega$ using $T_{1}$.
Let $C''$ be the cover of a constraint 
$C'$ from $\Omega$.
Let $C'''$ be a constraint obtained from 
$C''$ using $T_{1}$.
By Lemma~\ref{RectangularCriticalArityTwo}, 
$\ConOne(C'',z)\supsetneq\ConOne(C',z)$.
By the definition of $T_{1}$, 
$\ConOne(C''',z)\supseteq\ConOne(C'',z)$.
Hence, 
$\ConOne(C''',z)\supsetneq\ConOne(C',z)$.
Since $\ConOne(C',z)$ is not adjacent with a minimal congruence among $\ConOne(\Theta,z)$, 
$\ConOne(C''',z)$ cannot affect 
the characteristic of $z$.
Therefore, $\xi(\Theta,z)= \xi(\Theta',z)$.

Second, let us 
show that $\xi(\Theta',x_{i}) \le \xi(\Theta,x_{i})$
for every $i$.
Since any two congruences that are minimal among 
$\ConOne(\Theta,x_{i})$ are adjacent
and the congruence we add in (5)
cannot be a new minimal congruence 
among $\ConOne(\Theta,x_{i})$, 
the first components of 
$\xi(\Theta',x_{i})$ and $\xi(\Theta,x_{i})$
are equal.
If the variable $x_{i}$ is stable then 
we do not add anything in (4) and (5),
hence the second components of 
$\xi(\Theta',x_{i})$ and $\xi(\Theta,x_{i})$
are empty, which completes Claim (2) for 
this case.
Assume that $x_{i}$ is not stable.
Then the second component 
of $\xi(\Theta',x_{i})$
has congruences appeared in (4) and (5)
instead of congruences 
from $\Congruences(\Theta\setminus\Omega,x_i)$.
The congruences we added in (4) are bigger than 
the corresponding congruences 
from $\ConOne(\Theta\setminus\Omega,x_{i})$.
Hence if we added nothing in (5) then 
$\xi(\Theta',x_{i})<\xi(\Theta,x_{i})$ because of the second components.
Otherwise, consider a minimal congruence 
$\sigma\in\ConOne(\Theta\setminus\Omega,x_{i})$
such that $\LinkedCon(\Omega,x_{i})\not\subseteq\sigma$.
By Corollary~\ref{PathInConnectedComponent}, 
congruences we obtain using (5) are greater than 
$\LinkedCon(\Omega,x_{i})$,
hence they cannot be smaller than $\sigma$.
Therefore, 
$\sigma$ belongs to the second component of 
$\xi(\Theta,x_{i})$ and 
the second component of 
$\xi(\Theta',x_{i})$ does not have a congruence that is equal to or smaller than $\sigma$.
We conclude that either second components of 
$\xi(\Theta,x_{i})$ and $\xi(\Theta',x_{i})$
are incomparable, 
or the second component of $\xi(\Theta',x_{i})$ is smaller.
Note that all the congruences 
we obtain in (5) are from $\uparrow\Opt(\Congruences(\Omega,x_{i}))$, 
which 
means that 
$\xi(\Theta',x_{i}) < \xi(\Theta,x_{i})$ in this case.
Thus we proved Claims (2) and (3).

To prove Claim (1) we need to show that 
$\xi(\Theta',x_{i}')<\xi(\Theta,x_{i})$ for every $i$.
Every congruence 
from $\ConOne(\Theta\setminus\Omega,x_{i})$ 
is bigger than some congruence 
from $\ConOne(\Omega,x_{i})$.
Hence, $\xi(\Theta',x_{i}')<\xi(\Theta,x_{i})$ 
because of the first components.

To prove Claim (4) consider two cases.
Case 1. Suppose $x_{i}$ is a stable variable.
Since $x_{i}$ cannot be a friend of $x_{j}'$, 
we obtain the necessary condition.
Case 2. Suppose $z\in\Var(\Theta)\setminus\{x_{1},\dots,x_{s}\}$ is a stable variable.
By the definition of being stable we conclude that 
$z\notin\Var(\Omega)$. Hence $z$ cannot be a friend 
of $x_{j}$, which proves Claim (4).

The Claim (5) follows from the fact that 
$x_{1}$ should be in both $\Theta$ and $\Theta'$.
\end{proof}

\begin{THMFindPerfectConstraint}
Suppose $D^{(1)}$ is a minimal 1-consistent one-of-four reduction of a cycle-consistent irreducible CSP instance $\Theta$,
$\Theta$ is crucial in $D^{(1)}$ and not connected.
Then there exists an instance $\Theta'\in\Expanded(\Theta)$ that is crucial in $D^{(1)}$
and contains a linked connected component whose solution set is not subdirect.
\end{THMFindPerfectConstraint}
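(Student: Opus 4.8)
The plan is to build $\Theta'$ by repeatedly applying the transformations $T_{1}$--$T_{4}$ defined above, using Theorem~\ref{newcolonies} to guarantee that the process halts. First I would apply $T_{1}$, making the instance crucial in $D^{(1)}$; by Theorem~\ref{ParPropertyMain} every constraint then has the parallelogram property, hence is critical and rectangular, which is precisely what the hypotheses of $T_{2}$, $T_{3}$, $T_{4}$ (Lemmas~\ref{transTwo}, \ref{transThree}, \ref{transFour}) demand. I would then argue by induction on the number of constraints: if the instance is fragmented, restricting to a fragment that still has no solution in $D^{(1)}$ gives a strictly smaller cycle-consistent irreducible instance in $\Expanded(\Theta)$, and a good component for it is a good component for the whole; so we may assume the instance is not fragmented.

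One step of the iteration: let $\Theta'\in\Expanded(\Theta)$ be crucial in $D^{(1)}$ and not fragmented, and suppose it has no linked connected component whose solution set is not subdirect (otherwise we are done). If $\Theta'$ is not connected in the sense of the adjacency relation, two of its constraints $C_{1},C_{2}$ share a variable $x$ but are not adjacent in $x$, and I apply $T_{2}(\Theta',C_{1},C_{2},x)$ (Lemma~\ref{transTwo} preserves having no solution in $D^{(1)}$). If $\Theta'$ is connected, then it is a single connected component, hence not fragmented, so by Lemma~\ref{ExpandedConsistencyLemma} and irreducibility it is linked or has a subdirect solution set over $D$; since it is crucial it has no solution in $D^{(1)}$, so an empty solution set over $D$ already makes it the linked component we want, while a nonempty subdirect one lets me apply $T_{3}$ when $\LinkedCon(\Omega,y)=\ConOne(C,y)$ for all $y,C$ (Lemma~\ref{transThree}) and --- together with components carrying a solution in $D^{(1)}$ that earlier $T_{2}$-splittings produce --- $T_{4}$ otherwise (Lemma~\ref{transFour}), leaving only the case $\MinVar(\Omega,\Theta')=\Var(\Omega)$, in which one shows directly that the connected component $\Omega$ under consideration is already linked. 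In every case the new instance stays in $\Expanded(\Theta)$, cycle-consistent, irreducible, not fragmented, and without a solution in $D^{(1)}$, so a fresh application of $T_{1}$ returns a crucial instance and we iterate.

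To see the iteration stops, I would instantiate Theorem~\ref{newcolonies} with the variables as organisms: the parent of a variable created by $T_{2}$, $T_{3}$ or $T_{4}$ is the variable it is split off from or renamed from; the strength of a variable is its rank in the finite partial order of Lemma~\ref{RecursionTwoDepth} coming from the constraint relations incident to it, which the transformations only ever replace by their covers, hence by strictly weaker relations; and two variables are friends if they occur together in a constraint of the current instance. Conditions~1--4 of Theorem~\ref{newcolonies} then follow by inspection, because a new variable is attached only to covers of its parent's constraints and to variables that were already friends of the parent (or to its sibling). Since each of $T_{2}$--$T_{4}$ keeps the instance not fragmented, if the iteration never stopped Theorem~\ref{newcolonies} would exhibit a fragmentation of some stage, a contradiction; hence the iteration stops, at which point none of $T_{2},T_{3},T_{4}$ applies, and the case analysis of the previous paragraph forces a linked connected component with non-subdirect solution set.

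The hard part will be the two verifications compressed above: that the strength/friendship data genuinely satisfy hypotheses~1--4 of Theorem~\ref{newcolonies} after each of the four transformations --- especially $T_{3}$ and $T_{4}$, which copy an entire connected component and replace its constraints by covers --- and the exhaustive structural analysis, when no transformation is applicable, of a connected component according to whether $\LinkedCon(\Omega,y)=\ConOne(C,y)$, whether $\MinVar(\Omega,\Theta')=\Var(\Omega)$, and whether the component has a solution in $D^{(1)}$, showing each alternative either produces the desired component or leaves one more transformation available.
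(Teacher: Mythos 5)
Your overall architecture matches the paper's proof: iterate $T_{1}$--$T_{4}$, keep the instance crucial in $D^{(1)}$, and invoke Theorem~\ref{newcolonies} to force termination because crucial instances are never fragmented. However, there is a genuine gap exactly at the point you compress to ``follow by inspection'': your choice of strength. You propose to measure a variable by the rank, in the partial order of Lemma~\ref{RecursionTwoDepth}, of the constraint relations incident to it, claiming the transformations ``only ever replace by their covers, hence by strictly weaker relations.'' That is false for $T_{2}$: when $x$ is split into $x_{1},x_{2}$, the constraints of $\Omega_{0}$ and $\Omega_{1}$ are attached to $x_{1}$ \emph{unchanged} (and those of $\Omega_{0},\Omega_{2}$ to $x_{2}$), with only the two extra binary constraints $\sigma_{1}^{*},\sigma_{2}^{*}$ added; nothing incident to the children is replaced by a cover. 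Likewise in $T_{3}$ and $T_{4}$ the renamed variables $x_{i}'$ carry the constraints of $\Theta\setminus\Omega$ verbatim, and the originals keep the $\Omega$-constraints verbatim. So with your strength, a child need not be strictly weaker than its parent, and condition~1 of Theorem~\ref{newcolonies} fails outright. The paper's proof replaces this by a purpose-built characteristic $\xi(\Phi,x)=(\Omega_{1},\Omega_{2})$, where $\Omega_{1}$ is the set of minimal congruences in $\Congruences(\Phi,x)$ and $\Omega_{2}$ the minimal ones not adjacent to $\Omega_{1}$, ordered by a three-case relation $\lesssim$ involving $\Opt(\Omega_{1})$; it is the loss of adjacency classes under splitting (not any weakening of incident relations) that makes children strictly weaker. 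Without an invariant of this kind your termination argument does not get off the ground.

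Condition~4 is also not ``by inspection.'' The paper has to schedule the choices of connected components fairly (every variable that is not stable must eventually have a component containing one of its minimal congruences processed), argue that every variable eventually becomes stable, and check that a stable variable acquires only finitely many friends because each new friend is strictly weaker than the one it replaces and the old constraint is removed. Your friendship relation (co-occurrence in a constraint of the current instance, accumulated over time) is the same as the paper's, but the finiteness argument depends on the congruence-and-adjacency characteristic above, so it cannot be salvaged with your strength either. Finally, two smaller points in your case analysis: an instance in $\Expanded(\Theta)$ can never be connected when $\Theta$ is not connected (Corollary~\ref{PathInConnectedComponent} pulls adjacency back to $\Theta$), so the ``connected'' branch you try to handle does not occur; and in the case $\MinVar(\Omega,\Theta')=\Var(\Omega)$ the paper does not show $\Omega$ ``is already linked'' --- it either derives a contradiction with cruciality (when $\LinkedCon(\Omega,x_{j})\subseteq\sigma$ for all exterior congruences $\sigma$) or applies $T_{4}(\Theta',\Omega,x_{h})$, and $T_{4}$ needs $\Omega$ itself to have a solution in $D^{(1)}$ and a subdirect solution set, which comes from cruciality of $\Theta'$ and Lemma~\ref{ProperReductionPreservesSubdirectness}, not from earlier $T_{2}$-splittings.
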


\begin{proof}
We build a sequence 
of instances
$\Theta_{1},\Theta_2,\Theta_{3},\ldots$
such that
$\Theta_{i+1}\in \Expanded(\Theta_{i})$, 
and every $\Theta_{i}$ is crucial in $D^{(1)}$.
Recall that by the inductive assumption for Theorem~\ref{ParPropertyMain}
all constraint relations 
of each $\Theta_{i}$ are critical relations with the parallelogram property.
We start with $\Theta_{1} = \Theta$.
We want the final element of this sequence to 
contain a linked connected component whose solution set is not subdirect.
Suppose we already defined $\Theta_{i}$.

If there exist congruences $\sigma_{1},\sigma_{2}\in\Congruences(\Theta_{i},x)$ for some variable $x$ 
that are not adjacent and 
minimal among $\Congruences(\Theta,x)$,
then put $\Theta_{i+1} := T_{1}(T_{2}(\Theta_{i},\sigma_{1},\sigma_{2},x))$.
By Lemma~\ref{transTwo}, $\Theta_{i+1}$ is crucial in $D^{(1)}$.

Otherwise, we know that any two minimal congruences in $\Congruences(\Theta,x)$ for every variable $x$ are adjacent.
By Lemma~\ref{StayNotConnected}, $\Theta_{i}$ is not connected.
Since $\Theta_{i}$ is crucial, it is also not fragmented.
Then there exist a variable $x$ that is not stable.
Choose ``an oldest'' nonstable variable in $\Theta_{i}$, 
that is a variable $x$ with the minimal number $j$ 
such that $x\in \Var(\Theta_{j})$.
Choose the connected component $\Omega$ containing a minimal congruence of $\Congruences(\Theta,x)$.
Put $\Theta_{i+1} := T_{1}(T_{3}(\Theta_{i},\Omega))$.

If $\Omega$ is not linked, then irreducibility of $\Theta$ implies that the solution set of $\Omega$ is subdirect.
If $\Omega$ is linked and the solution set of $\Omega$ is not subdirect, then the theorem is proved and we stop the process.
Thus, we assume that the solution set of $\Omega$ is subdirect.
Since $\Theta_{i}$ is crucial in $D^{(1)}$ and not connected, $\Omega^{(1)}$ has a solution.
Then by Lemma~\ref{transFour}, $\Theta_{i+1}$ 
is crucial in $D^{(1)}$.

Thus, the next element of the sequence is defined either by 
$T_{1}(T_{2}(\Theta_{i},\sigma_{1},\sigma_{2},x))$,
or by $T_{1}(T_{3}(\Theta_{i},\Omega))$.
Now, we want to prove that the sequence 
$\Theta_{1},\Theta_{2},\Theta_{3},\dots$ cannot be infinite, 
which means that 
the last element with the required property exists.
To prove this we are going to use 
Theorem~\ref{newcolonies}.
First, we extend a partial order $\lesssim$ on
characteristics to a linear order $\le$
such that $(\Omega_1,\Omega_2)\lesssim(\Omega_1',\Omega_2')$ implies
$(\Omega_1,\Omega_2)\le(\Omega_1',\Omega_2')$.
Second, we consider the set of all pairs $(x,\xi(\Theta_{i},x))$, where $x\in\Var(\Theta_{i})$, as the set of organisms.
Two organisms are friends if the corresponding variables 
were friends in $\Theta_{i}$ for some $i$ (if they've ever been friends). 
If $x\in \Var(\Theta_{i})\cap\Var(\Theta_{i+1})$ 
and $\xi(\Theta_{i+1},x)<\xi(\Theta_{i},x)$, 
then we say that 
$(x,\xi(\Theta_{i},x))$ is the parent of 
$(x,\xi(\Theta_{i+1},x))$. 
Also, 
if $x\in\Var(\Theta_{i+1})\setminus \Var(\Theta_{i})$, 
and $x'$ is a parent of $x$ in $\Theta_{i}$, 
then 
$(x',\xi(\Theta_{i},x'))$ is the parent of 
$(x,\xi(\Theta_{i+1},x))$. 
The characteristic $\xi(\Theta_{i},x)$ 
is considered as the strength
of the organism $(x,\xi(\Theta_{i},x))$. 
Then the set of organisms
$X_{i}$ is the set of all pairs $(x,\xi(\Theta_{j},x))$ for $j\le i$.

Let us check all the assumptions we have in Theorem~\ref{newcolonies}.
Condition (1) follows from Lemma~\ref{TransTwoStrength} (claims 1,2)  and Lemma~\ref{TransFourStrength} (claims 1,2).
Conditions (2) and (3) follow from the fact that 
each $\Theta_{i+1}$ is from $\Expanded(\Theta_{i})$.

Since the transformation $T_{2}$ (followed by $T_{1}$) replace a variable by two variables with smaller characteristic and does not change the characteristic of other variables, for the sequence to be infinite, we need to apply 
the transformation $T_{3}$ infinitely many times.
By Lemma~\ref{TransFourStrength} (claim 3)
we always reduce the characteristic of 
the chosen nonstable variable when we apply $T_{3}$, 
which means that every variable will be stable at some moment.

It remains to show that condition (4) holds.
As we noticed above, every variable will be stable at some moment.
It remains to show that a variable $z$ stable in $\Theta_{i}$ cannot get infinitely many friends (here we care only about the variables but not about the organisms).
By Lemma~\ref{TransTwoStrength} (claim 3)  and Lemma~\ref{TransFourStrength} (claim 4), 
the variable $z$ cannot be a friend of some variable $y$ appeared 
in $\Theta_{j}$ for $j> i$ and the parent of $y$.
Thus, if we consider the set of friends of $z$
in $\Theta_{j}$ for $j> i$, then we see that 
going from $\Theta_{j}$ to $\Theta_{j+1}$
we can replace an old friend by new friends (that are weaker) but we cannot add a new friend keeping its parent. Therefore, after getting stable a variable cannot get infinitely many friends and condition (4) holds. 

Since $\Theta_{i}$ is crucial in $D^{(1)}$, it is not fragmented.
By Lemma~\ref{TransTwoStrength} (claim 4)  and Lemma~\ref{TransFourStrength} (claim 5), 
$\Theta_{i}$ and $\Theta_{i+1}$ have at least one common variable for every $i$.
Therefore, the set of all organisms cannot be divided into two disjoint sets
with no friendship between them.
Thus, condition (5) of Theorem~\ref{newcolonies} cannot hold, which proves that the process will stop at some $\Theta_{i}$
having a linked connected component whose solution set is not subdirect.
\end{proof}

\begin{THMCannotLooseSolution}
Suppose $D^{(1)}$  is a 1-consistent nonlinear reduction of a cycle-consistent irreducible CSP instance $\Theta$.
If $\Theta$ has a solution then it has a solution in $D^{(1)}$.
\end{THMCannotLooseSolution}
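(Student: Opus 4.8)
The plan is to argue by contradiction using the induction on the size of the reduction that is already set up in this subsection, and to reduce the statement to the existence of a ``good'' linked connected component obtained from Theorem~\ref{FindPerfectConstraint}, which in turn is handled by the key-relation machinery of Theorem~\ref{KeyConjunctionMain}. So suppose $\Theta$ has a solution but has no solution in $D^{(1)}$. First I would invoke Remark~\ref{GetCrucialInstance} (transformation $T_{1}$) to replace constraints by all weaker constraints, obtaining an instance $\Theta'$ that is still cycle-consistent and irreducible (by Lemma~\ref{ExpandedConsistencyLemma}), still has a solution, still has no solution in $D^{(1)}$, and is now crucial in $D^{(1)}$. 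By Lemma~\ref{ProperReductionPreservesCycleConAndIrreducability} nothing is lost here, and by Theorem~\ref{ParPropertyMain} every constraint of $\Theta'$ now has the parallelogram property, hence is critical and rectangular.

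Next I would split into two cases according to whether $\Theta'$ is connected. If $\Theta'$ is not connected, apply Theorem~\ref{FindPerfectConstraint}: there is $\Theta''\in\Expanded(\Theta')$ that is crucial in $D^{(1)}$ and contains a linked connected component $\Omega$ whose solution set is not subdirect. Since $\Theta''$ is obtained by expanded coverings, it is still cycle-consistent and irreducible (Lemma~\ref{ExpandedConsistencyLemma}); but a linked connected component with non-subdirect solution set directly contradicts irreducibility of $\Theta''$, because the projection of $\Theta''$ onto $\Var(\Omega)$ is a sub-instance consisting of projections of constraints that is linked, not fragmented, and has non-subdirect solution set. So the non-connected case cannot occur. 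If $\Theta'$ \emph{is} connected, then by Corollary~\ref{PathInConnectedComponent} every two constraints with a common variable are adjacent, and $\ConOne(C,x)=\LinkedCon(\Theta',x)$ for all $C$, $x$; here I would use that $\Theta'$ has a genuine solution together with the fact that it is crucial in $D^{(1)}$ to derive a contradiction via the bridge/key-relation argument — pick a constraint whose relation properly contains its $\ConOne$-class cover structure, use Theorem~\ref{PathInConnectedComponentThm} to build a bridge carrying linked pairs, and combine with the subdirectness of the solution set to transport a solution of $\Theta'$ into $D^{(1)}$, contradicting the assumption.

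A cleaner way to organize the connected case is to observe that if $\Theta'$ is connected and crucial in $D^{(1)}$, then the whole instance behaves like a single connected component with subdirect solution set, and one applies Theorem~\ref{KeyConjunctionMain} to a suitable subconstraint: writing $\Theta'=(\Theta'\setminus\Omega)\cup\Omega$ with $\Omega$ a single constraint that is crucial, Theorem~\ref{KeyConjunctionMain} produces key formulas $\Omega_{1},\dots,\Omega_{t}\in\ExpShort(\Omega)$ such that $(\Theta'\setminus\Omega)\cup\Omega_{1}\cup\dots\cup\Omega_{t}$ has no solution in $D^{(1)}$ with each $\Omega_{i}^{(1)}$ a subdirect key relation with the parallelogram property; one then derives a contradiction with the existence of a global solution using that key relations in a connected cycle-consistent instance are too rigid, via Lemma~\ref{OneLink}, Corollary~\ref{LinkedLink}, and the optimality of bridges. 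I expect the main obstacle to be exactly the connected case: making the bridge composition precise so that the existence of a solution of $\Theta'$ forces a solution in $D^{(1)}$, since this is where the subtle interaction between the nonlinearity of the reduction $D^{(1)}$ (needed so that Corollaries~\ref{AbsImpliesCons}, \ref{CenterImpliesCons}, \ref{PCImplies} apply to transport tuples through covers) and the key-relation structure has to be controlled. The non-connected case, by contrast, reduces cleanly to Theorem~\ref{FindPerfectConstraint} plus irreducibility.
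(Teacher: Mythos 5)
There is a genuine gap, and it sits exactly where your argument claims the problem disappears. In the non-connected case you assert that a linked connected component $\Omega$ with non-subdirect solution set ``directly contradicts irreducibility of $\Theta''$''. It does not: by the paper's definition, irreducibility requires every derived instance to be fragmented, \emph{linked}, or have subdirect solution set, so a linked instance with non-subdirect solutions is perfectly compatible with irreducibility. Theorem~\ref{FindPerfectConstraint} is not a contradiction machine; it merely produces the linked connected component that the real argument then has to work on. Consequently both of your cases collapse into the same unresolved situation (a crucial instance with a linked connected component), and your sketch for it --- ``use bridges to transport a solution of $\Theta'$ into $D^{(1)}$'', or alternatively apply Theorem~\ref{KeyConjunctionMain} and appeal to rigidity of key relations --- is not an argument; you yourself flag that you do not know how to make it precise.

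The paper's proof closes this case by a quite different mechanism: for a constraint $C\in\Omega$ and a variable $x$, Lemma~\ref{CriticalMeansIrreducible} and Corollary~\ref{PathInConnectedComponent} (using linkedness) give a bridge from $\ConOne(C,x)$ to itself whose diagonal is full; Corollary~\ref{LinkedLink} turns this into a relation $\zeta\subseteq D_{x}\times D_{x}\times\mathbb Z_{p}$ whose third coordinate is $0$ exactly on $\ConOne(C,x)$. One then splits the variable $x$ of $C$, adds the constraint $\zeta(x,x',z)$ with a fresh variable $z$ over $\mathbb Z_{p}$, and observes that the genuine solution gives $z=0$ while cruciality (replacing $C$ by its cover) gives a solution inside the reduction with $z\neq 0$; Corollaries~\ref{AbsImpliesCons}, \ref{CenterImpliesCons}, \ref{PCImplies} would then force a proper binary absorbing subuniverse, center, or PC subuniverse on $\mathbb Z_{p}$, which cannot exist. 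Your proposal also omits two ingredients this argument needs: the preliminary choice of $\Theta'\in\Expanded(\Theta)$ minimizing the number of reductions in $\mathfrak{R}$ without solutions (essential for the PC case, where one needs solutions in \emph{all} reductions of $\mathfrak{R}$ to invoke Corollary~\ref{PCImplies}), and the separate treatment of the case when the crucial instance is not linked (Theorem~\ref{NextReductionThree} plus induction for PC reductions, restriction to a linked component for absorbing/central ones). Without these, the proof does not go through.
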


\begin{proof}
Assume the contrary, 
that is, $\Theta$ has a solution but 
$\Theta^{(1)}$ has no solutions.
By Theorem~\ref{NextReductionTwo},
there exists a minimal 1-consistent 
nonlinear reduction 
such that $\Theta$ has no solutions in it.

First, we consider the set of all minimal 1-consistent nonlinear reductions of $\Theta$,
which we denote by $\mathfrak{R}$.
Then we consider an instance $\Theta'\in\Expanded(\Theta)$
with the minimal positive number of reductions $D^{(\vartriangle)}\in \mathfrak{R}$
such that $\Theta'$ has no solutions in $D^{(\vartriangle)}$.
Note that this transformation of $\Theta$ to $\Theta'$ can be omitted if $D^{(1)}$ is not a PC reduction.
Then we weaken the instance $\Theta'$ (replace any constraint by all weaker constraints) while we
still have a reduction $D^{(\vartriangle)}\in \mathfrak{R}$ such that $\Theta'$ has no solutions in $D^{(\vartriangle)}$.
After that we remove all dummy variables from constraints
and denote the obtained instance by $\Theta''$.
Note that $\Theta''$ is not fragmented (since it is crucial
in some $D^{(\vartriangle)}$), $\Theta''\in\Expanded(\Theta)$, and 
for any reduction $D^{(\vartriangle)}\in \mathfrak{R}$ the instance $\Theta''$ is either crucial in $D^{(\vartriangle)}$, or has a solution in $D^{(\vartriangle)}$.
The last property also holds for any expanded covering 
if $\Theta''$ which is crucial in some reduction $D^{(\vartriangle)}$.
Choose a reduction $D^{(\vartriangle)}$ from $\mathfrak{R}$
such that $\Theta''$ is crucial in it.

Assume that $\Theta''$ is not linked.
If $D^{(\vartriangle)}$ is a PC reduction, then
we apply Theorem~\ref{NextReductionThree} 
to find a reduction $D^{(1)}$ (it is a different 
reduction $D^{(1)}$)
and a strategy $D^{(1)},\dots,D^{(s)}$ 
for 
$\Theta''^{(1)}$
such that 
the solution set of $\Theta''^{(1)}$
is subdirect, 
the strategy has only nonlinear reductions,
$D_{y}^{(s)}\subseteq 
D_{y}^{(\vartriangle)}$ for every $y$.
Then 
$\Theta''^{(1)}$ is cycle-consistent and irreducible.
By the inductive assumption 
$\Theta''^{(2)}$ has a solution,
then by Lemma~\ref{ProperReductionPreservesCycleConAndIrreducability} $\Theta''^{(2)}$
is cycle-consistent and irreducible, 
by the inductive assumption 
$\Theta''^{(3)}$ has a solution,
and so on.
Thus we can prove that 
$\Theta''^{(s)}$ has a solution,
which means that 
$\Theta''^{(\vartriangle)}$ has a solution
and contradicts our assumption.

If $D^{(\vartriangle)}$ is an absorbing or central reduction,
then we choose a variable $x$ of $\Theta''$ and an element $c\in D_{x}^{(\vartriangle)}$,
and for every variable $y$ by $D_{y}^{(\top)}$ we denote the set of all elements of $D_{y}$ linked to $c$.
Since $\Theta''$ is irreducible, the solution set of $\Theta''^{(\top)}$ is subdirect.
Therefore, $\Theta''^{(\top)}$ is irreducible and cycle-consistent.
By Lemmas
\ref{AbsImplies},
\ref{CenterImplies} the reduction $D^{(\bot)}$, defined by $D_{y}^{(\bot)} = D_{y}^{(\top)}\cap D_{y}^{(\vartriangle)}$ for every variable $y$,
is an absorbing or central reduction for $\Theta''^{(\top)}$.
Since $D^{(\vartriangle)}$ is a 1-consistent reduction and $D^{(\top)}$ is just a linked component, 
the reduction $D^{(\bot)}$ is also 1-consistent.
By the inductive assumption, $\Theta''^{(\bot)}$ has a solution, which gives a contradiction.

Thus, we assume that $\Theta''$ is linked.
Recall that by the inductive assumption for Theorem~\ref{ParPropertyMain}, every constraint of
$\Theta''$ is critical and has the parallelogram property.
If $\Theta''$ is not connected, then by Theorem~\ref{FindPerfectConstraint}, there exists an instance $\Upsilon\in\Expanded(\Theta'')$
that is crucial in $D^{(\vartriangle)}$ and contains a linked connected subinstance $\Omega$.
If $\Theta''$ is connected, then $\Theta''$ is a linked connected component itself and we put $\Upsilon = \Omega = \Theta''$.
At the moment we have $\Upsilon\in\Expanded(\Theta'')$
that is crucial in $D^{(\vartriangle)}$ and 
a linked connected subinstance $\Omega$.

Let $x_{1}$ be the first variable in a constraint $C\in\Omega$.
By Lemma~\ref{CriticalMeansIrreducible}, $\ConOne(C,x_1)$ is irreducible.
By Corollary~\ref{PathInConnectedComponent},
there exists a bridge $\delta$
from $\ConOne(C,x_1)$
to $\ConOne(C,x_1)$ such that $\delta(x,x,y,y)$ is a full relation.
By Corollary~\ref{LinkedLink},
there exists a relation
$\zeta\subseteq D_{x_1}\times D_{x_1}\times \mathbb Z_{p}$
such that
$(y_{1},y_{2},0)\in \zeta\Leftrightarrow (y_{1},y_{2})\in\ConOne(C,x_1)$
and $\proj_{1,2}(\zeta) = \cover{\ConOne(C,x_1)}$.
Let us replace the variable $x_1$ of $C$ in $\Upsilon$ by $x_1'$
and add the constraint $\zeta(x_1,x_1',z)$.
The obtained instance we denote by $\Upsilon'$.
Let 
$\Var(\Upsilon) = 
\{x_{1},\ldots,x_{n}\}$,
$\Upsilon'(x_{1},\ldots,x_{n},z)$
define the relation $S$, which is 
the projection of the solution set of $\Upsilon'$
onto all variables but $x_1'$.
Let $C = R(x_1, x_{i_1},\ldots,x_{i_s})$, 
$R'(x_1, x_{i_1},\ldots,x_{i_s})
= \exists x_1' 
R(x_1', x_{i_1},\ldots,x_{i_s})\wedge 
(x_1,x_1')\in \cover{\ConOne(C,x_{1})}$.
The projection 
of $S$ onto the first $n$ variables 
is the solution set of the instance 
$\Upsilon$ whose constraint $C$
is replaced by the weaker constraint $R'(x_1, x_{i_1},\ldots,x_{i_s})$.
Since $\Upsilon$ is crucial in $D^{(\vartriangle)}$, 
the solution set 
$S$ contains a tuple whose first $n$ elements are from $D^{(\vartriangle)}$.
Moreover, the last element of all such tuples is not equal to 0, 
since otherwise this would imply that 
$\Upsilon$ has a solution in $D^{(\vartriangle)}$.

By the assumption, $\Theta$ has a solution, 
and therefore $\Upsilon$ has a solution,
which means that $\Upsilon'$ has a solution
with $z=0$
and, equivalently, 
$S$ has a tuple whose last element is $0$.
Since $\mathbb Z_{p}$ does not have proper subalgebras of size greater than 1, 
we have $\proj_{n+1}(S) = \mathbb Z_{p}$.

Let us show for $i\in\{1,2,\dots,n\}$ that 
$(\proj_{i}(S))^{(\vartriangle)}$ is 
a one-of-four subuniverse of $\proj_{i}(S)$
of the same type as $D^{(\vartriangle)}$.
For absorbing and central reductions 
it follows from Lemma~\ref{PCBsubNonPC}.
For the PC type we consider a PC congruence $\sigma$
on $D_{x_{i}}$.
By Theorems~\ref{NextReductionOne}, \ref{NextReductionTwo},
for every equivalence class $U$ of $\sigma$ 
there exists a minimal 1-consistent PC reduction $D^{(\triangledown)}\in \mathfrak{R}$ such that
$D_{x_{i}}^{(\triangledown)} \subseteq U$.
As we assumed earlier, 
for any reduction from $\mathfrak{R}$ the instance $\Upsilon$ is either crucial in it, or has a solution in it.
Therefore, $\Upsilon'$ has a solution in any reduction from $\mathfrak{R}$,
and
$\Upsilon'$ has a solution with $x_{i}\in U$.
Hence, $\sigma$ restricted to $\proj_{i}(S)$ 
is still a PC congruence.
Moreover, 
$(\proj_{i}(S))^{(\vartriangle)}$ 
is an intersection of equivalence classes 
of the corresponding PC congruences on $\proj_{i}(S)$.
Thus, we showed that 
$(\proj_{i}(S))^{(\vartriangle)}$ is 
a one-of-four subuniverse of $\proj_{i}(S)$
of the same type as $D^{(\vartriangle)}$.

By Lemma~\ref{PCBrel}, $S^{(\vartriangle)}$ is a 
nonlinear one-of-four subuniverse of $S$ (here we do not reduce the last variable).
Also, by Lemma~\ref{PCBrel}, the set of all tuples from $S$ whose last element is 0 is a linear subuniverse of $S$,
we denote this subuniverse by $S_{0}$.
By Lemma~\ref{IntersectionOfTwoSubuniverses},
the intersection $S^{(\vartriangle)}\cap S_{0}$ is not empty, which means that $\Upsilon$ has a solution in $D^{(\vartriangle)}$ and contradicts our assumptions.
\end{proof}

Note that Theorem~\ref{ParPropertyMain} could be derived from 
Theorem~\ref{ParPropertyForSubcontraint}, 
but we decided to keep the original proof 
of Theorem~\ref{ParPropertyMain}
because it demonstrates the idea for both theorems 
in an easier way.

\begin{THMParPropertyMain}
Suppose $D^{(0)},\ldots,D^{(s)}$ is a minimal strategy for a cycle-consistent irreducible CSP instance
$\Theta$,
and a constraint $\rho(x_{1},\ldots,x_{n})$ 
of $\Theta$ is crucial in $D^{(s)}$.
Then $\rho$ is a critical relation with the parallelogram property.
\end{THMParPropertyMain}

\begin{proof}
Since $\rho(x_{1},\ldots,x_{n})$ is crucial, $\rho$ is a critical relation.
Let $\Theta'$ be obtained from
$\Theta$ by
replacement of $\rho(x_{1},\ldots,x_{n})$ by
all weaker constraints.
Since $\Theta$ is crucial in $D^{(s)}$, 
$\Theta'$ has a solution in $D^{(s)}$.
By Lemma~\ref{ExpandedConsistencyLemma}, $\Theta'$ is cycle-consistent and irreducible.

Assume that $|D^{(s)}_{x}|=1$ for every variable $x$. Since the reduction $D^{(s)}$ is 1-consistent,
we get a solution, which contradicts the fact that
$\Theta$ has no solutions in $D^{(s)}$.

If we have a nontrivial binary absorbing subuniverse, or a nontrivial center, or a nontrivial PC subuniverse on some domain $D_{x}^{(s)}$, then
by Theorems~\ref{NextReductionOne},~\ref{NextReductionTwo},
there exists a minimal nonlinear 1-consistent reduction $D^{(s+1)}$ for $\Theta$.
As we explained before, $\size(D^{(s+1)})<\size(D^{(s)})$.

Then, by Lemma~\ref{ProperReductionPreservesCycleConAndIrreducability},
$\Theta'^{(s)}$ is cycle-consistent and irreducible.
By Theorem~\ref{CannotLooseSolution},
$\Theta'$ has a solution in $D^{(s+1)}$.
Hence, $\rho(x_{1},\ldots,x_{n})$ is crucial in $D^{(s+1)}$.
By the inductive assumption $\rho$ has the parallelogram property.

It remains to consider the case when $\ConLin(D_{x}^{(s)})$ is proper for every $x$ such that $|D_{x}^{(s)}|>1$.
Let $\alpha$ be a solution of $\Theta'$ in $D^{(s)}$.
Let the projection of $\alpha$ onto the variables $x_{1},\ldots,x_n$ be $(a_{1},\ldots,a_{n})$.

Assume that $\rho$ does not have the parallelogram property.
Without loss of generality we can assume that there exist $c_{1},\ldots,c_{n}$
and $d_{1},\ldots,d_{n}$
such that
\begin{align*}
(c_{1},\ldots,c_{k},c_{k+1},\ldots,c_{n})&\notin\rho,\\
(c_{1},\ldots,c_{k},d_{k+1},\ldots,d_{n})&\in\rho,\\
(d_{1},\ldots,d_{k},c_{k+1},\ldots,c_{n})&\in\rho,\\
(d_{1},\ldots,d_{k},d_{k+1},\ldots,d_{n})&\in\rho.
\end{align*}
Put
\begin{align*}
\rho'(x_{1},\ldots,x_{n}) =
\exists y_{1}\dots\exists y_{n}\;
\rho(x_{1},\ldots,x_{k},y_{k+1},\ldots,y_{n})
\wedge&\\
\rho(y_{1},\ldots,y_{k},x_{k+1},\ldots,x_{n})
\wedge
\rho(y_{1},\ldots,y_{k},y_{k+1},\ldots,y_{n}).
\end{align*}
Obviously, $\rho\subsetneq\rho'$
and $\rho'\in\Gamma$,
therefore $(a_{1},\ldots,a_{n})\in\rho'.$
Hence, there exist $b_{1},\ldots,b_{n}$ such that
\begin{align*}
(a_{1},\ldots,a_{k},b_{k+1},\ldots,b_{n})&\in\rho,\\
(b_{1},\ldots,b_{k},a_{k+1},\ldots,a_{n})&\in\rho,\\
(b_{1},\ldots,b_{k},b_{k+1},\ldots,b_{n})&\in\rho.
\end{align*}
By Lemma~\ref{ParProperty},
there exists a tuple
$(e_{1},\ldots,e_{n})\in \rho$
such that
$(a_{i},e_{i})\in\LinCon(D_{x_{i}}^{(s)})$  for every $i$.

Consider the minimal linear reduction $D^{(s+1)}$ of 
$\Theta^{(s)}$ such that 
$\alpha\in D^{(s+1)}$.
Then we have $(e_{1},\ldots,e_{n})\in \rho^{(s+1)}$,
and by Lemma~\ref{ProperReductionPreservesSubdirectness},
$D^{(s+1)}$ is a 1-consistent reduction of $\Theta^{(s)}$.
Since $\Theta'$ has a solution in $D^{(s+1)}$,
$\rho(x_{1},\ldots,x_{n})$ is crucial in $D^{(s+1)}$.
We get a longer minimal strategy with smaller $\size(D^{(s+1)})$, hence by the inductive assumption
the relation $\rho$ is a critical relation with the parallelogram property.
\end{proof}

\begin{THMParPropertyForSubcontraint}
Suppose $D^{(0)},\ldots,D^{(s)}$ is a minimal strategy for a cycle-consistent irreducible CSP instance
$\Theta$,
$\Upsilon(x_{1},\ldots,x_{n})$ is a subconstraint of $\Theta$,
the solution set of $\Upsilon^{(s)}$ is subdirect,
$k\in\{1,2,\dots,n-1\}$,
$\Var(\Upsilon) = \{x_{1},\ldots,x_{n},u_{1},\ldots,u_{t}\}$,
$$\Omega =
\Upsilon_{x_{1},\ldots,x_{k},u_{1},\ldots,u_{t}}^{y_{1},\ldots,y_{k},v_{1},\ldots,v_{t}}
\wedge
\Upsilon_{x_{k+1},\ldots,x_{n},u_{1},\ldots,u_{t}}^{y_{k+1},\ldots,y_{n},v_{t+1},\ldots,v_{2t}}
\wedge
\Upsilon_{x_{1},\ldots,x_{n},u_{1},\ldots,u_{t}}^{y_{1},\ldots,y_{n},v_{2t+1},\ldots,v_{3t}},$$
and $\Theta^{(s)}$ has no solutions.
Then $(\Theta\setminus\Upsilon)\cup\Omega$ has no solutions in $D^{(s)}$.
\end{THMParPropertyForSubcontraint}

\begin{proof}

Put $\Theta' = (\Theta\setminus\Upsilon)\cup\Omega$.
Since $\Omega$ is a covering of $\Upsilon$, 
Lemma~\ref{ExpandedConsistencyLemma} implies that 
$\Theta\cup\Omega$ is cycle-consistent and irreducible.
Assume that $\Theta'$ has a solution in $D^{(s)}$.

We recursively build a strategy
$D^{(s)},D^{(s+1)},\ldots,D^{(q)}$
for $\Theta\cup\Omega = \Theta'\cup\Upsilon$ 
satisfying the following conditions:
\begin{enumerate}
    \item 
    $D^{(s)},D^{(s+1)},\dots,D^{(q)}$ is a minimal strategy for $\Theta'^{(s)}$;
    \item if $s\le j<q$ and $D^{(j+1)}$ is a linear reduction, then for each $i\in\{1,2,\dots,t\}$
    $$D_{u_{i}}^{(j+1)} = 
    \proj_{n+i}(\rho'\cap 
    (D_{x_{1}}^{(j+1)}\times\dots\times D_{x_{n}}^{(j+1)}
    \times
    D_{u_{1}}^{(j)}\times\dots\times D_{u_{t}}^{(j)})),
    $$
    where $\rho'$ is the relation defined by 
    $\Upsilon(x_{1},\ldots,x_{n},u_1,\ldots,u_t)$;
\item the solution set of $\Upsilon^{(j)}$ is subdirect for $s\le j\le q$;
\item $\Theta'$ has a solution in $D^{(q)}$.
\end{enumerate}

Note that here we allow 
$D^{(j)}$ to be equal to $D^{(j+1)}$ in a strategy, which can happen 
if $D^{(j+1)}$ is a proper reduction for 
$\Upsilon^{(j)}$ but not proper for $\Theta'^{(j)}$.

We will prove that we can make this sequence longer while
$|D^{(q)}_{x_{i}}|>1$ for some $i$.
By Theorem~\ref{NextReduction}, 
there exists a nontrivial one-of-four subuniverse on 
$D^{(q)}_{x}$ if $|D^{(q)}_{x}|>1$.
We consider two cases:

Case 1. There exists a nontrivial binary absorbing subuniverse, or a nontrivial center, or a nontrivial PC congruence on some domain $D_{x}^{(q)}$. 
Then applying Theorems~\ref{NextReductionOne}, \ref{NextReductionTwo}
to the strategy 
$D^{(0)},D^{(1)},\dots,D^{(q)}$ of $\Theta\cup\Omega$, 
we conclude that there exists a minimal 1-consistent nonlinear reduction $D^{(q+1)}$
for $(\Theta\cup\Omega)^{(q)}$.
By Lemma~\ref{ProperReductionPreservesCycleConAndIrreducability},
$\Theta'^{(q)}$ and  
$\Upsilon^{(q)}$ are cycle-consistent and irreducible.
By Theorem~\ref{CannotLooseSolution},
$\Theta'$ has a solution in $D^{(q+1)}$
and
$\Upsilon$ has a solution in $D^{(q+1)}$.
By Lemma~\ref{ProperReductionPreservesSubdirectness},
the solution set of $\Upsilon^{(q+1)}$ is subdirect.
Thus, we made the sequence longer.

Case 2. $\ConLin(D_{x}^{(q)})$ is proper for every $x$ such that $|D_{x}^{(q)}|>1$.
Let $\alpha$ be a solution of $\Theta'$ in $D^{(q)}$.
We define the new linear reduction 
$D^{(q+1)}$ as follows.
For all variables 
but $u_{1},\ldots,u_{t}$,
we choose an equivalence class
of $\ConLin(D_{x}^{(q)})$
containing the corresponding element of 
the solution $\alpha$.
For the variable $u_{i}$ we define 
$D_{u_{i}}^{(q+1)}$ by the formula in (2) 
from the above list for $j=q$.
By Lemma~\ref{ProperReductionPreservesSubdirectness},
$D^{(q+1)}$ is  
1-consistent for $\Theta'$.
Note that it does not follow from the definition 
that 
$D_{u_{i}}^{(q+1)}$ is not empty and we will prove this later.

Let the projection of $\alpha$ onto the variables $x_{1},\ldots,x_n$ be $(a_{1},\ldots,a_{n})$.
Suppose $\Upsilon^{(s)}(x_{1},\ldots,x_{n})$ defines a relation $\rho$.
Since $\alpha$ is a solution of $\Theta'^{(s)}$, there exist
$b_{1},\ldots,b_{n}$ such that
\begin{align*}
(a_{1},\ldots,a_{k},b_{k+1},\ldots,b_{n})&\in\rho,\\
(b_{1},\ldots,b_{k},a_{k+1},\ldots,a_{n})&\in\rho,\\
(b_{1},\ldots,b_{k},b_{k+1},\ldots,b_{n})&\in\rho.
\end{align*}

Since the solution set of $\Upsilon^{(j)}$ is subdirect 
for $s\le j\le q$, we can apply Lemma~\ref{ParProperty}
to $\rho$ and the strategy $D^{(s)},\dots,D^{(q)}$.
Hence, there exists a tuple
$(d_{1},\ldots,d_{n})\in \rho$
such that
$(a_{i},d_{i})\in\LinCon(D_{x_{i}}^{(q)})$ for every $i$.
Therefore, $(\Upsilon^{(s)}(x_{1},\ldots,x_{n}))^{(q+1)}$ is not empty.

Let us show by induction on $j=s,s+1,\dots,q$ that $(\Upsilon^{(j)}(x_{1},\ldots,x_{n}))^{(q+1)}$ is not empty. For $j=s$ we already know this.
Assume that 
$(\Upsilon^{(j)}(x_{1},\ldots,x_{n}))^{(q+1)}$ is not empty.
If the reduction $D^{(j+1)}$ is not linear then 
we apply Theorem~\ref{PreviousReductions}
to $(\Upsilon^{(j)}(x_{1},\ldots,x_{n}))^{(q+1)}$ 
and the strategy $D^{(s)},\dots,D^{(q)}$,
and obtain that
 $(\Upsilon^{(j+1)}(x_{1},\ldots,x_{n}))^{(q+1)}$
is not empty.
If the reduction $D^{(j+1)}$ is linear then it follows from the definition of $D_{u_i}^{(j+1)}$
that $(\Upsilon^{(j+1)}(x_{1},\ldots,x_{n}))^{(q+1)}$ is not empty.
Thus, we can prove that $(\Upsilon^{(q)}(x_{1},\ldots,x_{n}))^{(q+1)}$ is not empty, 
and therefore $D_{u_i}^{(q+1)}$ is not empty
for every $i$. 
Considering the solution set of $\Upsilon$ and 
applying Corollary~\ref{LinearImplies}, 
we derive that 
$D_{u_{i}}^{(q+1)}$ is a linear subuniverse of 
$D_{u_{i}}^{(q)}$. Hence,
the reduction $D^{(q+1)}$ is a 1-consistent linear reduction
for $\Upsilon^{(q)}$.

By Lemma~\ref{ProperReductionPreservesSubdirectness},
$(\Upsilon^{(q)}(x_{1},\ldots,x_{n}))^{(q+1)}$
is subdirect. 
From the definition of 
$D_{u_{i}}^{(q+1)}$ 
we derive that 
the projection of
the solution set of $\Upsilon^{(q+1)}$ onto 
$u_{i}$ is $D_{u_{i}}^{(q+1)}$ 
for every $i$, which means that 
the solution set of $\Upsilon^{(q+1)}$
is subdirect.
Hence, we get a longer strategy having all the necessary properties.

Thus, we showed that we can make the sequence longer 
until  $|D^{(q)}_{x_{i}}|=1$ for every $i$.
Assume that we reached this final state.
Since both $\Upsilon$ and $\Theta'$ have a solution in $D^{(q)}$
and $x_{1},\dots,x_{n}$ are 
their only common variables,
$\Theta$ has a solution in $D^{(q)}$,
which contradicts the fact that
$\Theta$ has no solutions in~$D^{(s)}$.
\end{proof}

\subsection{Theorems from Section~\ref{CorretnessSection}}

In this subsection we assume that the variables of the instance $\Theta$ are $x_{1},\ldots,x_{n}$,
and the domain of $x_{i}$ is $D_{i}$ for every $i$.
The first two theorems are proved together.

\begin{thmAbsorptionCenterStep}
Suppose $\Theta$ is a cycle-consistent irreducible CSP instance, and 
$B$ is a nontrivial binary absorbing subuniverse or a nontrivial center of $D_{i}$.
Then $\Theta$ has a solution if and only if
$\Theta$ has a solution with $x_{i}\in B$.
\end{thmAbsorptionCenterStep}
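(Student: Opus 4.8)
The \emph{only if} direction is immediate (a solution with $x_i\in B$ is in particular a solution), so the plan is to prove the forward direction: assuming $\Theta$ has a solution, I would produce one with $x_i\in B$. We may of course assume $B\subsetneq D_i$, since otherwise there is nothing to prove. The idea is to realize $B$ as the $i$-th component of a \emph{minimal $1$-consistent nonlinear reduction} of $\Theta$ and then invoke Theorem~\ref{CannotLooseSolution}.

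I would start from the length-$0$ strategy $D^{(0)}=D$, which is a legitimate strategy since $\Theta$ is cycle-consistent. Suppose first that $B$ is a binary absorbing set of $D_i$. Applying the first bullet of Theorem~\ref{NextReductionOne} (with $s=0$ and $x=x_i$) produces a $1$-consistent absorbing reduction $D^{(\top)}$ of $\Theta$ with $D^{(\top)}_{x_i}\subseteq B$. Feeding $D^{(\top)}$ into Theorem~\ref{NextReductionTwo} then yields a \emph{minimal} $1$-consistent reduction $D^{(1)}$ of the same (absorbing) type with $D^{(1)}_{x}\subseteq D^{(\top)}_{x}$ for every variable $x$, in particular $D^{(1)}_{x_i}\subseteq B$. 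An absorbing reduction is nonlinear by definition, so $D^{(1)}$ is a minimal $1$-consistent nonlinear reduction of the cycle-consistent irreducible instance $\Theta$, and Theorem~\ref{CannotLooseSolution} hands us a solution of $\Theta$ in $D^{(1)}$; any such solution assigns $x_i$ a value in $D^{(1)}_{x_i}\subseteq B$. When $B$ is a center of $D_i$ the argument is word-for-word the same, using the second bullet of Theorem~\ref{NextReductionOne} to get a $1$-consistent central reduction, refining it via Theorem~\ref{NextReductionTwo}, and noting that a central reduction is again nonlinear.

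I would also record that the same three-step template proves Theorem~\ref{PCStepThm}: its extra hypothesis that no $D_j$ has a binary absorbing subuniverse or a center is exactly what licenses the third bullet of Theorem~\ref{NextReductionOne}, producing a $1$-consistent PC reduction whose $i$-th component lies in the prescribed equivalence class $E$; Theorems~\ref{NextReductionTwo} and~\ref{CannotLooseSolution} then apply verbatim (a PC reduction is also nonlinear).

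There is no genuine obstacle internal to this proof — it is an assembly of Theorems~\ref{NextReductionOne}, \ref{NextReductionTwo}, and~\ref{CannotLooseSolution}. The only points needing care are that the length-$0$ strategy $D^{(0)}=D$ meets the definition of a strategy, that absorbing, central, and PC reductions all qualify as nonlinear reductions so that Theorem~\ref{CannotLooseSolution} is applicable, and that the notion of minimality furnished by Theorem~\ref{NextReductionTwo} coincides with the minimality demanded in the hypothesis of Theorem~\ref{CannotLooseSolution}. The real difficulty is buried in Theorem~\ref{CannotLooseSolution} and, through its proof, in Theorems~\ref{ParPropertyMain}, \ref{KeyConjunctionMain}, and~\ref{FindPerfectConstraint}, all of which are established by a simultaneous induction on the size of the reductions; that induction is the technical heart of this part of the paper.
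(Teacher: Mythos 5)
Your proposal is correct and follows essentially the same route as the paper: the paper's proof of Theorem~\ref{AbsorptionCenterStep} (given jointly with Theorem~\ref{PCStepThm}) is precisely the chain Theorem~\ref{NextReductionOne} $\rightarrow$ Theorem~\ref{NextReductionTwo} $\rightarrow$ Theorem~\ref{CannotLooseSolution}, with all the substance deferred to those results. Your write-up merely spells out the bookkeeping (the trivial strategy $D^{(0)}=D$, containment $D^{(1)}_{x_i}\subseteq B$, and that absorbing/central reductions are nonlinear) that the paper leaves implicit.
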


\begin{thmPCStepThm}
Suppose $\Theta$ is a cycle-consistent irreducible CSP instance, 
there does not exist a nontrivial binary absorbing subuniverse or a nontrivial center on $D_{j}$
for every $j$,
$(D_{i};w)/\sigma$ is a polynomially complete algebra, 
and 
$E$ is an equivalence class of $\sigma$.
Then $\Theta$ has a solution if and only if
$\Theta$ has a solution with $x_{i}\in E$.
\end{thmPCStepThm}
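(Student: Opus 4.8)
The plan is to prove both theorems at once, since only the degenerate data differs. The implication ``solution with $x_i\in B$ (resp.\ $x_i\in E$) $\Rightarrow$ solution'' is trivial, so in each case I would reduce to the claim: \emph{if $\Theta$ has a solution, then it has one whose $x_i$-value lies in $B$ (resp.\ in $E$)}. First I would clear away the degenerate cases: if $B=D_i$ there is nothing to prove, so assume $B\subsetneq D_i$; likewise if $\sigma$ is the full congruence then $E=D_i$ and the statement is vacuous, so assume $\sigma$ is proper, which makes $E$ a \emph{proper} PC subuniverse of $D_i$.

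Now fix a solution of $\Theta$. The idea is to manufacture a minimal $1$-consistent nonlinear reduction of $\Theta$ confining $x_i$ to $B$ (resp.\ $E$) and then apply Theorem~\ref{CannotLooseSolution}. Viewing the one-term sequence $D^{(0)}=D$ as a (trivial) strategy for the cycle-consistent instance $\Theta$, I would invoke Theorem~\ref{NextReductionOne}: in the absorbing/center case it yields a $1$-consistent absorbing, resp.\ central, reduction $D^{(\top)}$ with $D_i^{(\top)}\subseteq B$; in the PC case, the hypothesis that no $D_j$ carries a binary absorbing subuniverse or a center, together with $E$ being a proper PC subuniverse of $D_i$, yields a $1$-consistent PC reduction $D^{(\top)}$ with $D_i^{(\top)}\subseteq E$. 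In all three cases $D^{(\top)}$ is nonlinear, so Theorem~\ref{NextReductionTwo} (again with the trivial strategy) produces a \emph{minimal} $1$-consistent reduction $D^{(1)}$ of the same type with $D_x^{(1)}\subseteq D_x^{(\top)}$ for every variable $x$; in particular $D_i^{(1)}\subseteq B$ (resp.\ $D_i^{(1)}\subseteq E$). Finally $D^{(1)}$ is a minimal $1$-consistent nonlinear reduction of the cycle-consistent irreducible instance $\Theta$, so Theorem~\ref{CannotLooseSolution} applies and $\Theta$ has a solution lying in $D^{(1)}$; its $x_i$-value lies in $D_i^{(1)}\subseteq B$ (resp.\ $\subseteq E$), which finishes the argument.

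The hard part is, in fact, not in this proof at all: the real work — exploiting cycle-consistency, irreducibility, and the nonlinear-versus-linear dichotomy — is hidden inside Theorem~\ref{CannotLooseSolution} and the existence-of-next-reduction theorems, which I would treat as black boxes. What remains here is bookkeeping: verifying that the hypotheses of Theorems~\ref{NextReductionOne}, \ref{NextReductionTwo}, and \ref{CannotLooseSolution} are met, and in particular that an equivalence class of a proper PC congruence qualifies as a proper PC subuniverse so that the PC clause of Theorem~\ref{NextReductionOne} is available. I do not anticipate any genuine obstacle in this assembly step.
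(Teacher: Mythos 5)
Your proposal is correct and follows essentially the same route as the paper: the paper's own proof is precisely the chain Theorem~\ref{NextReductionOne} $\to$ Theorem~\ref{NextReductionTwo} to obtain a smaller minimal 1-consistent (nonlinear) reduction confining $x_i$ to $B$ (resp.\ $E$), followed by Theorem~\ref{CannotLooseSolution} to keep a solution inside it. Your added bookkeeping (discarding the degenerate cases and noting that an equivalence class of a proper PC congruence is a proper PC subuniverse) is exactly the implicit content of the paper's two-line argument.
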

\begin{proof}
By Theorems~\ref{NextReductionOne}, \ref{NextReductionTwo}, there exists a minimal 1-consistent nonlinear reduction 
$D^{(1)}$
such that 
$D_{x_{i}}^{(1)}\subseteq B$ for 
Theorem~\ref{AbsorptionCenterStep}, 
and 
$D_{x_{i}}^{(1)}\subseteq E$ for 
Theorem~\ref{PCStepThm}.
By Theorem~\ref{CannotLooseSolution}, there exists a solution in $D^{(1)}$.
\end{proof}

The next theorem will be used in the proof of Theorem~\ref{LinearStep} 
from Section~\ref{CorretnessSection}.
\begin{thm}\label{LinearStepHelp}
Suppose the following conditions hold:
\begin{enumerate}
\item $\Theta$ is a linked cycle-consistent irreducible CSP instance;
\item there does not exist a nontrivial binary absorbing subuniverse or a nontrivial center on $D_{j}$ for every $j$;
\item if we replace every constraint of $\Theta$ by all weaker constraints then the obtained instance 
has a solution with $x_{i} = b$ for every $i$ and $b\in D_{i}$ (the obtained instance has a subdirect solution set);
\item 
$D^{(1)}$ is a minimal linear reduction for $\Theta$;
\item $\Theta$ is crucial in $D^{(1)}$.
\end{enumerate}
Then there exists a constraint
$\rho(x_{i_1},\ldots,x_{i_s})$ in $\Theta$
and a subuniverse $\zeta$ of $\mathbf{D_{i_1}}\times\dots\times \mathbf{D_{i_s}}\times \mathbf{\mathbb Z_{p}}$
such that
the projection of $\zeta$ onto the first $s$ coordinates
is bigger than $\rho$ but
the projection of $\zeta\cap (D_{i_1}\times\dots\times D_{i_s}\times \{0\})$
onto the first $s$ coordinates is equal to $\rho$.
\end{thm}

\begin{proof}
We consider two cases. 
Case 1. Assume that $\Theta$ contains just one constraint
$\rho(x_1,\ldots,x_{n})$.
By Corollary~\ref{LinearImplies},
$D_{n}'=\proj_{n}(\rho\cap (D_{1}^{(1)}\times\dots\times D_{n-1}^{(1)}
\times D_{n}))$
is a linear subuniverse of $D_{n}$.
By Lemma~\ref{LinearAlgebrasFact}, 
$D_{n}^{(1)}$ and $D_{n}'$ can be viewed as products of affine subspaces and can be defined by linear equations.
Since $D_{n}^{(1)}\cap D_{n}'=\varnothing$ 
and $D_{n}^{(1)}$ is a minimal reduction, 
we can take an equation defining $D_{n}'$ that does not hold on $D_{n}^{(1)}$ to get a maximal linear congruence $\sigma$ on $D_{n}$ such that $D_{n}^{(1)}$ and $D_{n}'$ are in different equivalence classes of $\sigma$.
Note that $D_{n}/\sigma\cong\mathbb Z_{p}$ for some $p$.
Let $\psi$ be the corresponding homomorphism 
from $D_{n}$ to $\mathbb Z_{p}$.
Put
$$\zeta(x_{1},\ldots,x_{n},z) = \exists x_{n}'\;
\rho(x_{1},\ldots,x_{n-1},x_{n}')
\wedge (\psi(x_{n})=\psi(x_{n}') + z),$$
where 
the expression $(\psi(x_{n})=\psi(x_{n}') + z)$
defines ternary subalgebra of $D_{n}\times D_{n}\times \mathbb Z_{p}$. Thus, we have $\rho$ and
$\zeta$ with the required properties.

Case 2. $\Theta$ contains more than one constraint. 
Then by condition (5), every constraint $C^{(1)}$ is not empty,
which by Lemma~\ref{ProperReductionPreservesSubdirectness} implies 
that $C^{(1)}$ is subdirect. Then $D^{(1)}$ is a minimal 1-consistent linear reduction.
By Theorem~\ref{ParPropertyMain}, every constraint in $\Theta$ is critical and has the parallelogram property.
If $\Theta$ is not connected, then by Theorem~\ref{FindPerfectConstraint} there exists an instance $\Theta'\in\Expanded(\Theta)$
that is crucial in $D^{(1)}$ and contains a linked connected component $\Omega$ such that
the solution set of $\Omega$ is not subdirect.
By condition (3), since the solution set of $\Omega$ is not subdirect, $\Omega$ should contain a constraint relation from
the original instance $\Theta$.
If $\Theta$ is connected, then $\Theta$ is a linked connected component itself and we put $\Omega = \Theta$.
Thus, in both cases we have a linked connected instance 
$\Omega$ having a constraint relation $\rho$ from $\Theta$. 
Let $\rho(x_{i_1},\ldots,x_{i_s})$ be a constraint of $\Theta$.

By Lemma~\ref{CriticalMeansIrreducible}, $\ConOne(\rho,1)$ is an irreducible congruence.
By Corollary~\ref{PathInConnectedComponent},
there exists a bridge $\delta$
from $\ConOne(\rho,1)$
to $\ConOne(\rho,1)$ such that $\widetilde\delta$ is a full relation.
By Corollary~\ref{LinkedLink},
there exists a relation
$\xi\subseteq D_{i_{1}}\times D_{i_{1}}\times \mathbb Z_{p}$
such that
$(x_{1},x_{2},0)\in \xi\Leftrightarrow (x_{1},x_{2})\in\ConOne(\rho,1)$
and $\proj_{1,2}(\xi) = \cover{\ConOne(\rho,1)}$.

It remains to put $\zeta(x_{i_1},\ldots,x_{i_s},z) = \exists x_{i_{1}}'\; \rho(x_{i_1}',x_{i_2},\ldots,x_{i_s})\wedge
\xi(x_{i_{1}},x_{i_{1}}',z)$.
\end{proof}

\begin{THMmainLinearStep}
Suppose the following conditions hold:
\begin{enumerate}
\item $\Theta$ is a linked cycle-consistent irreducible CSP instance with domain set
$(D_{1},\ldots,D_{n})$;
\item there does not exist a nontrivial binary absorbing subuniverse or a nontrivial center on $D_{j}$ for every $j$;
\item if we replace every constraint of $\Theta$ by all weaker constraints then the obtained instance
has a solution with $x_{i} = b$ for every $i$ and $b\in D_{i}$ (the obtained instance has a subdirect solution set);
\item $L_{i} = D_{i}/\sigma_{i}$ for every $i$, where $\sigma_{i}$ is the minimal linear congruence on $D_{i}$;
\item $\phi:\mathbb Z_{q_{1}}\times \dots \times \mathbb Z_{q_{k}}
\to L_{1}\times\dots\times L_{n}$ is a homomorphism,
where $q_{1},\dots,q_{k}$ are prime numbers;
\item if we replace any constraint of 
$\Theta$ by all weaker constraints then for every $(a_{1},\ldots,a_{k})\in \mathbb Z_{q_{1}}\times \dots \times \mathbb Z_{q_{k}}$ 
there exists a solution of the obtained instance in 
$\phi(a_{1},\ldots,a_{k})$.
\end{enumerate}
Then 
$\{(a_{1},\dots,a_{k})\mid \Theta \text{ has a solution in }\phi(a_1,\dots,a_{k})\}$ is
either empty, or is full, or is an affine subspace of $\mathbb Z_{q_{1}}\times \dots \times \mathbb Z_{q_{k}}$ of codimension 1 (the solution set of a single linear equation).
\end{THMmainLinearStep}
\begin{proof}
Put 
$B=\{(a_{1},\dots,a_{k})\mid \Theta \text{ has a solution in }\phi(a_1,\dots,a_{k})\}$.
If $B$ is full then there is nothing to prove.
Assume that $B$ is not full, then consider 
$(b_1,\ldots,b_{k})\notin B$. 
It follows from condition (6) that 
$\Theta$ is crucial in 
$\phi(b_{1},\ldots,b_{k})$.
Note that 
$\phi(b_{1},\ldots,b_{k})$ defines a minimal linear reduction for $\Theta$.

By Theorem~\ref{LinearStepHelp} there exists a constraint
$\rho(x_{i_1},\ldots,x_{i_s})$ in $\Theta$
and a subuniverse $\zeta$ of $\mathbf{D_{i_1}}\times\dots\times \mathbf{D_{i_s}}\times \mathbf{\mathbb Z_{p}}$
such that
the projection of $\zeta$ onto the first $s$ coordinates
is bigger than $\rho$ but
the projection of $\zeta\cap (D_{i_1}\times\dots\times D_{i_s}\times \{0\})$
onto the first $s$ coordinates is equal to $\rho$.

Then we add a new variable $z$ with domain $\mathbb Z_{p}$
and replace
$\rho(x_{i_1},\ldots,x_{i_s})$ by
$\zeta(x_{i_1},\ldots,x_{i_s},z)$.
We denote the obtained instance by $\Upsilon$.
Let
$L$ be the set of all tuples $(a_{1},\ldots,a_{k},b)\in
\mathbb Z_{q_{1}}\times \dots \times \mathbb Z_{q_{k}} \times \mathbb Z_{p}$
such that $\Upsilon$ has a solution with $z=b$ in $\phi(a_{1},\ldots,a_{k})$.
We know that the projection of $L$ onto
the first $k$ coordinates is a full relation
and $(b_{1},\dots,b_{k},0)\notin L$.
Therefore $L$ is defined by one linear equation.
If this equation is $z = b$ for some $b\neq 0$, then
$B$ is empty.
Otherwise, we put $z=0$ in this equation and get
an equation describing all $(a_{1},\ldots,a_{k})$ such that
$\Theta$ has a solution in $\phi(a_{1},\ldots,a_{k})$.
\end{proof}

\section{Conclusions}\label{ConclusionsSection}

Even though the main problem has been resolved, 
there are many important questions that are still open.
In this section we will discuss 
some consequences of this result, as well as some open questions 
and generalizations of the CSP.

\subsection{A general algorithm for the CSP}
The algorithm presented in the paper, 
as well as the algorithm of Andrei Bulatov \cite{BulatovProofCSP,BulatovProofCSPFOCS},
uses detailed knowledge of the algebra and depends exponentially on the size of the domain. Is there a ``truly polynomial algorithm''?
By $\CSPWNU$ we denote the following decision 
problem:
given a formula
$$\rho_{1}(v_{1,1},\ldots,v_{1,n_{1}})
\wedge
\dots
\wedge
\rho_{s}(v_{s,1},\ldots,v_{1,n_{s}}),$$
where all relations 
$\rho_{1},\dots,\rho_{s}$ are preserved by 
a WNU (we just know it exists);
decide whether this formula is satisfiable.

\begin{problem}
Does there exist a polynomial algorithm for $\CSPWNU$?
\end{problem}

If the domain is fixed then $\CSPWNU$ can be solved by 
the algorithm presented in this paper. 
In fact, we know from \cite[Theorem 4.2]{cyclicterms}
that from a WNU  on a domain of size $k$ we can always derive 
a WNU (and also a cyclic operation) of any prime arity greater than $k$. Thus, we can find finitely many 
WNU operations on domain of size $k$ such that any constraint language preserved by a WNU is preserved by one of them. It remains to 
apply the algorithm for each WNU 
and return a solution if one of them gave a solution.

\subsection{A simplification of the algorithm.}

We believe that 
the algorithm presented in the paper can be simplified. 
For instance, 
we strongly believe that 
the function $\WeakenEveryConstraint$ 
can be removed from the main function 
$\Solve$ without any consequences.

\begin{problem}
Would the algorithm still work if 
the function $\WeakenEveryConstraint$ 
was removed from the function 
$\Solve$?
 \end{problem}
 
 This would reduce the complexity of the algorithm significantly (the depth of the recursion would be 
 $|A|$ instead of $|A|+|\Gamma|$, see Lemma~\ref{RecursionDepth}).

\subsection{A generalization for the nonWNU case.}

Another important question is whether 
some results and ideas introduced in this paper can be 
applied for constraint languages not preserved by a WNU.
For example, it is not clear what assumptions are sufficient to reduce safely a domain to a binary absorbing subuniverse.

\begin{problem}
What are the weakest assumptions for Theorems~\ref{AbsorptionCenterStep}
and 
\ref{PCStepThm} to hold.
\end{problem}

\subsection{Infinite domain CSP}

If we allow the domain to be infinite, the situation is changing significantly.
As it was shown in \cite{bodirskyInfiniteHell} 
every computational problem is
equivalent (under polynomial-time Turing reductions) to a problem of the form $\CSP(\Gamma)$.
In \cite{InfiniteDomainSurvey} the authors 
gave a nice example of a constraint language $\Gamma$ 
such that $\CSP(\Gamma)$ is undecidable. 
Let
$\Gamma$ consists
of three relations (predicates)
$x+y=z$, $x\cdot y=z$ and $x = 1$
over the set of all integers $\mathbb Z$.
Then 
the Hilbert's 10-th problem can be 
expressed as $\CSP(\Gamma)$, which proves undecidability of 
$\CSP(\Gamma)$.

A reasonable assumption on $\Gamma$ which sends the CSP back to the class NP is that $\Gamma$ is a reduct of a finitely bounded homogeneous structure. 
A nice result for such constraint  languages is the full complexity classification of the CSPs over the reducts of $(\mathbb Q;<)$ \cite{bodirskyforrationals}.
This additional assumption allows to 
formulate a statement of the algebraic dichotomy conjecture
for the complexity of the infinite domain CSP 
\cite{barto2016algebraic}.
For more information about the infinite domain CSP 
and the algebraic approach see \cite{bodirsky2012complexity, InfiniteDomainSurvey}.
For a method of reducing an infinite domain CSP to CSPs over finite domains see \cite{bodirsky2016dichotomy}.

\subsection{Valued CSP}
A natural generalization of 
the Constraint Satisfaction Problem 
is the \emph{Valued Constraint Satisfaction Problem} ($\VCSP$), 
where constraint relations are replaced by mappings
to the set of rational numbers, 
and conjunctions are replaced by sum \cite{VCSPIntroduction}.
For a finite set $A$ and a set $\Gamma$ of mappings 
$A\to \mathbb Q\cup \{\infty\}$ 
by $\VCSP(\Gamma)$
we denote the following problem:
given a formula
$$f(x_{1},\dots,x_{n}) = f_{1}(v_{1,1},\ldots,v_{1,n_{1}})
+
\dots
+
f_{s}(v_{s,1},\ldots,v_{s,n_{s}}),$$
where all the mappings 
$f_{1},\dots,f_{s}$ are from $\Gamma$
and $v_{i,j}\in \{x_{1},\ldots,x_{n}\}$
for every $i,j$;
find an assignment 
$(a_{1},\dots,a_{n})$ that minimizes 
$f(x_{1},\dots,x_{n})$.

In \cite[Theorem  21]{VCSPDichotomy},
the authors 
proved that the dichotomy conjecture for CSP 
would imply the dichotomy conjecture for the Valued CSP,
and described all 
sets of mappings $\Gamma$ such that 
$\VCSP(\Gamma)$ is tractable (modulo the CSP Dichotomy Conjecture).
Thus, the result obtained in this paper 
implies the characterization of the complexity 
of $\VCSP(\Gamma)$ for all $\Gamma$.

\subsection{Quantified CSP}

An equivalent definition of $\CSP(\Gamma)$ is to
evaluate 
a sentence 
$\exists x_1 \dots \exists x_n \ (\rho_{1}(\dots)\wedge\dots
\wedge \rho_{s}(\dots))$, 
where $\rho_1,\dots,\rho_s$ are from the constraint language
$\Gamma$.
Then a natural generalization of CSP is the \emph{Quantified Constraint Satisfaction Problem} ($\QCSP$), 
where we allow to use both existential and 
universal quantifiers.
For a constraint language $\Gamma$,
$\QCSP(\Gamma)$
is the problem 
to evaluate 
a sentence of the form $\forall x_1 \exists y_1 \dots \forall x_n \exists y_n \ (\rho_{1}(\dots)\wedge\dots
\wedge \rho_{s}(\dots))$, 
where $\rho_1,\dots,\rho_s$ are  relations from the constraint language $\Gamma$ (see \cite{BBCJK,hubie-sicomp,Meditations,QC2017}).

It was conjectured by Hubie Chen
 \cite{Meditations,MFCS2017} that 
for any constraint language $\Gamma$
the problem $\QCSP(\Gamma)$ 
is either solvable in polynomial time, 
or NP-complete, or PSpace-complete.
Recently, this conjecture was disproved in \cite{QCSPMonsters},
where the authors 
found constraint languages $\Gamma$ 
such that 
$\QCSP(\Gamma)$ is coNP-complete (on 3-element domain), DP-complete (on 4-element domain),
$\Theta_{2}^{P}$-complete (on 10-element domain).
Also the authors classified the complexity of 
the Quantified Constraint Satisfaction Problem 
for constraint languages on 3-element domain containing all unary singleton relations (so called idempotent case), that is, they showed that for such languages  
$\QCSP(\Gamma)$ is either tractable, or NP-complete, or coNP-complete, 
or PSpace-complete. Nevertheless, for higher domain 
as well as for the nonidempotent case the complexity is not known.

\begin{problem}
What can be the complexity of 
$\QCSP(\Gamma)$? 
\end{problem}

Now it is hard to believe that there will be a simple answer to 
this question, that is why it is interesting to 
start with 3-element domain (nonidempotent case)
and 4-element domain.
Another natural question is how many complexity classes can be expressed by $\QCSP(\Gamma)$ up to polynomial equivalence.
Probably more important problem is to describe all tractable cases.

\begin{problem}
Describe all constraint languages $\Gamma$ such that 
$\QCSP(\Gamma)$ is tractable.
\end{problem}

\subsection{Promise CSP}

Another natural generalization of the CSP is 
\emph{the Promise Constraint Satisfaction Problem}, 
where a promise about the input is given
(see \cite{brakensiek2018promise,PCSPAlgebraicApproach}).
Let 
$\Gamma = \{(\rho_{1},\sigma_{1}), \dots
,(\rho_{t},\sigma_{t})\}$, 
where 
$\rho_{i}$ and $\sigma_{i}$
are relations of the same arity 
over the domains 
$A$ and $B$, respectively.
Then 
$\PCSP(\Gamma)$ is the following decision problem:
given two formulas 
\begin{align*}&\rho_{i_1}(v_{1,1},\ldots,v_{1,n_{1}})
\wedge\dots\wedge \rho_{i_s}(v_{s,1},\ldots,v_{s,n_{s}}),\\
&\sigma_{i_1}(v_{1,1},\ldots,v_{1,n_{1}})
\wedge\dots\wedge \sigma_{i_s}(v_{s,1},\ldots,v_{s,n_{s}}),
\end{align*}
where 
$(\rho_{i_j},\sigma_{i_j})$ are from 
$\Gamma$ for every $i$ and 
$v_{i,j}\in \{x_{1},\ldots,x_{n}\}$
for every $i,j$;
distinguish between 
the case when both of them are satisfiable, 
and when both of them are not satisfiable.
Thus, we are given two CSP instances and a promise that 
if one has a solution then another has a solution.
Usually it is also assumed 
that there exists a mapping (homomorphism) 
$h\colon A\to B$ such that 
$h(\rho_{i})\subseteq \sigma_{i}$ for every $i$.
In this case, 
the satisfiability of the first formula 
implies the satisfiability of the second one.
To make sure that the promise can actually make 
an NP-hard problem tractable, see example 2.8 in \cite{PCSPAlgebraicApproach}.

The most popular example of the Promise CSP
is graph $(k,l)$-colorability, where we need to distinguish between $k$-colorable graphs and not even $l$-colorable, 
where $k\le l$.
This problem can be written as follows.

\begin{problem}
Let $|A|=k$, $|B|=l$, $\Gamma = \{(\neq_{A},\neq_{B})\}$.
What is the complexity of 
$\PCSP(\Gamma)$?
\end{problem}

Recently, it was proved \cite{PCSPAlgebraicApproach} that $(k,l)$-colorability 
is NP-hard for $l = 2k-1$ and $k\ge 3$
but even the complexity of 
$(3,6)$-colorability is still not known.

Even for two element domain the problem is widely open, but 
recently a dichotomy for symmetric Boolean PCSP
was proved \cite{ficak2019dichotomy}.

\begin{problem}
Let $A= B = \{0,1\}$.
Describe the complexity of 
$\PCSP(\Gamma)$ for all $\Gamma$.
\end{problem}

\subsection{Surjective CSP}

Another modification of the CSP is 
\emph{the Surjective Constraint Satisfaction Problem}.
For a constraint language $\Gamma$ over 
a domain $A$, $\SurjCSP(\Gamma)$ is the following decision problem:
given a formula 
$$\rho_{1}(\dots)
\wedge
\dots
\wedge
\rho_{s}(\dots),$$
where all relations 
$\rho_{1},\dots,\rho_{s}$ are from $\Gamma$;
decide whether there exists a surjective solution,
that is a solution with 
$\{x_{1},\dots,x_{n}\} = A$.
Only few results are known about the complexity of 
the Surjective CSP \cite{chen2014algebraic}.
That is why, we suggest to start studying this question with 
a very concrete constraint language on a 3-element domain.

\begin{problem}
Suppose 
$A = \{a,b,c\}$, 
$R = \{(x,y,z)\mid \{x,y,z\}\neq A\}$.
What is the complexity of 
$\SurjCSP(\{R\})$?
\end{problem}

After this problem (called \emph{no-rainbow problem}) we can move to the general question.

\begin{problem}
Describe the complexity 
of $\SurjCSP(\Gamma)$ for all constraint languages $\Gamma$.
\end{problem}

\section*{Acknowledgement}

I want to dedicate this paper to my father 
Nikolay Zhuk who was a great chemist 
and a great person.
He explained me trigonometric functions
when I was in primary school, 
he gave me a programmable calculator on which I wrote 
my first program, and
he taught me to love science. 

I am very grateful to Zarathustra Brady whose comments and remarks allowed to fill many gaps in the original proof and
to significantly improve the text.
Also, I want to thank my colleagues and friends for very fruitful discussions, especially Andrei Bulatov,
Marcin Kozik, Libor Barto, Ross Willard,  Jakub Opr\v{s}al, Jakub Bul\'in, Valeriy Kudryavtsev, Alexey Galatenko, Stanislav Moiseev, and Grigoriy Bokov.
Additionally, I want to thank 
the referees of the paper whose suggestions helped
me to rewrite the whole paper.
They deserve to be the authors of this paper.

The author has received funding from the European Research Council (ERC) under the European Unions Horizon 2020 research and innovation programme (grant
agreement No 771005) and from Russian Foundation for Basic Research (grant 19-01-00200).

\bibliographystyle{plain}
\bibliography{refs}

\begin{thebibliography}{10}

\bibitem{DecidingAbsorption}
Libor Barto and Alexandr Kazda.
\newblock Deciding absorption.
\newblock {\em International Journal of Algebra and Computation},
  26(05):1033--1060, 2016.

\bibitem{cyclicterms}
Libor Barto and Marcin Kozik.
\newblock {Absorbing Subalgebras, Cyclic Terms, and the Constraint Satisfaction
  Problem}.
\newblock {\em {Logical Methods in Computer Science}}, {Volume 8, Issue 1},
  February 2012.

\bibitem{barto2017absorption}
Libor Barto and Marcin Kozik.
\newblock Absorption in universal algebra and csp.
\newblock 2017.

\bibitem{barto2009csp}
Libor Barto, Marcin Kozik, and Todd Niven.
\newblock The csp dichotomy holds for digraphs with no sources and no sinks (a
  positive answer to a conjecture of bang-jensen and hell).
\newblock {\em SIAM Journal on Computing}, 38(5):1782--1802, 2009.

\bibitem{barto2012near}
Libor Barto, Marcin Kozik, and Ross Willard.
\newblock Near unanimity constraints have bounded pathwidth duality.
\newblock In {\em 2012 27th Annual IEEE Symposium on Logic in Computer
  Science}, pages 125--134. IEEE, 2012.

\bibitem{bartopolymorphisms}
Libor Barto, Andrei Krokhin, and Ross Willard.
\newblock Polymorphisms, and how to use them.
\newblock In {\em Dagstuhl Follow-Ups}, volume~7. Schloss
  Dagstuhl-Leibniz-Zentrum fuer Informatik, 2017.

\bibitem{barto2016algebraic}
Libor Barto and Michael Pinsker.
\newblock The algebraic dichotomy conjecture for infinite domain constraint
  satisfaction problems.
\newblock In {\em 2016 31st Annual ACM/IEEE Symposium on Logic in Computer
  Science (LICS)}, pages 1--8. IEEE, 2016.

\bibitem{bergman2011universal}
Clifford Bergman.
\newblock {\em Universal algebra: Fundamentals and selected topics}.
\newblock CRC Press, 2011.

\bibitem{bodirsky2012complexity}
Manuel Bodirsky.
\newblock Complexity classification in infinite-domain constraint satisfaction.
\newblock {\em arXiv preprint arXiv:1201.0856}, 2012.

\bibitem{bodirskyInfiniteHell}
Manuel Bodirsky and Martin Grohe.
\newblock Non-dichotomies in constraint satisfaction complexity.
\newblock In {\em International Colloquium on Automata, Languages, and
  Programming}, pages 184--196. Springer, 2008.

\bibitem{bodirskyforrationals}
Manuel Bodirsky and Jan K{\'a}ra.
\newblock The complexity of temporal constraint satisfaction problems.
\newblock {\em Journal of the ACM (JACM)}, 57(2):9, 2010.

\bibitem{InfiniteDomainSurvey}
Manuel Bodirsky and Marcello Mamino.
\newblock Constraint satisfaction problems over numeric domains.
\newblock In {\em Dagstuhl Follow-Ups}, volume~7. Schloss
  Dagstuhl-Leibniz-Zentrum fuer Informatik, 2017.

\bibitem{bond1}
V.~G. Bodnarchuk, L.~A. Kaluzhnin, V.~N. Kotov, and B.~A. Romov.
\newblock Galois theory for post algebras. {I}.
\newblock {\em Kibernetika}, (3):1--10, 1969.
\newblock (in Russian).

\bibitem{bond2}
V.~G. Bodnarchuk, L.~A. Kaluzhnin, V.~N. Kotov, and B.~A. Romov.
\newblock Galois theory for post algebras. {II}.
\newblock {\em Kibernetika}, (5):1--9, 1969.
\newblock (in Russian).

\bibitem{BBCJK}
Ferdinand B{\"o}rner, Andrei~A. Bulatov, Hubie Chen, Peter Jeavons, and
  Andrei~A. Krokhin.
\newblock The complexity of constraint satisfaction games and qcsp.
\newblock {\em Inf. Comput.}, 207(9):923--944, 2009.

\bibitem{brakensiek2018promise}
Joshua Brakensiek and Venkatesan Guruswami.
\newblock Promise constraint satisfaction: Structure theory and a symmetric
  boolean dichotomy.
\newblock In {\em Proceedings of the Twenty-Ninth Annual ACM-SIAM Symposium on
  Discrete Algorithms}, pages 1782--1801. SIAM, 2018.

\bibitem{CSPconjecture}
Andrei Bulatov, Peter Jeavons, and Andrei Krokhin.
\newblock Classifying the complexity of constraints using finite algebras.
\newblock {\em SIAM J. Comput.}, 34(3):720--742, March 2005.

\bibitem{bulatov2003conservative}
Andrei~A Bulatov.
\newblock Tractable conservative constraint satisfaction problems.
\newblock In {\em Logic in Computer Science, 2003. Proceedings. 18th Annual
  IEEE Symposium on}, pages 321--330. IEEE, 2003.

\bibitem{BulatovForThree}
Andrei~A. Bulatov.
\newblock A dichotomy theorem for constraint satisfaction problems on a
  3-element set.
\newblock {\em J. ACM}, 53(1):66--120, January 2006.

\bibitem{BulatovProofCSP}
Andrei~A. Bulatov.
\newblock A dichotomy theorem for nonuniform csps.
\newblock {\em CoRR}, abs/1703.03021, 2017.

\bibitem{BulatovProofCSPFOCS}
Andrei~A Bulatov.
\newblock A dichotomy theorem for nonuniform csps.
\newblock In {\em 2017 IEEE 58th Annual Symposium on Foundations of Computer
  Science (FOCS)}, pages 319--330. IEEE, 2017.

\bibitem{BulatovAboutCSP}
Andrei~A. Bulatov and Matthew~A. Valeriote.
\newblock Recent results on the algebraic approach to the csp.
\newblock In Nadia Creignou, PhokionG. Kolaitis, and Heribert Vollmer, editors,
  {\em Complexity of Constraints}, volume 5250 of {\em Lecture Notes in
  Computer Science}, pages 68--92. Springer Berlin Heidelberg, 2008.

\bibitem{PCSPAlgebraicApproach}
Jakub Bul{\'\i}n, Andrei Krokhin, and Jakub Opr{\v{s}}al.
\newblock Algebraic approach to promise constraint satisfaction.
\newblock In {\em Proceedings of the 51st Annual ACM SIGACT Symposium on Theory
  of Computing}, pages 602--613. ACM, 2019.

\bibitem{MFCS2017}
Catarina Carvalho, Barnaby Martin, and Dmitriy Zhuk.
\newblock The complexity of quantified constraints using the algebraic
  formulation.
\newblock In {\em 42nd International Symposium on Mathematical Foundations of
  Computer Science, {MFCS} 2017, August 21-25, 2017 - Aalborg, Denmark}, pages
  27:1--27:14, 2017.

\bibitem{hubie-sicomp}
Hubie Chen.
\newblock The complexity of quantified constraint satisfaction: Collapsibility,
  sink algebras, and the three-element case.
\newblock {\em SIAM J. Comput.}, 37(5):1674--1701, 2008.

\bibitem{Meditations}
Hubie Chen.
\newblock Meditations on quantified constraint satisfaction.
\newblock In {\em Logic and Program Semantics - Essays Dedicated to Dexter
  Kozen on the Occasion of His 60th Birthday}, pages 35--49, 2012.

\bibitem{chen2014algebraic}
Hubie Chen.
\newblock An algebraic hardness criterion for surjective constraint
  satisfaction.
\newblock {\em Algebra universalis}, 72(4):393--401, 2014.

\bibitem{Num4}
Martin~C. Cooper.
\newblock Characterising tractable constraints.
\newblock {\em Artificial Intelligence}, 65(2):347--361, 1994.

\bibitem{FederVardi}
Tom\'{a}s Feder and Moshe~Y. Vardi.
\newblock The computational structure of monotone monadic snp and constraint
  satisfaction: A study through datalog and group theory.
\newblock {\em SIAM J. Comput.}, 28(1):57--104, February 1999.

\bibitem{ficak2019dichotomy}
Miron Ficak, Marcin Kozik, Miroslav Olsak, and Szymon Stankiewicz.
\newblock Dichotomy for symmetric boolean pcsps.
\newblock {\em arXiv preprint arXiv:1904.12424}, 2019.

\bibitem{freese1987commutator}
Ralph Freese and Ralph McKenzie.
\newblock {\em Commutator theory for congruence modular varieties}, volume 125.
\newblock CUP Archive, 1987.

\bibitem{geiger1968closed}
David Geiger.
\newblock Closed systems of functions and predicates.
\newblock {\em Pacific journal of mathematics}, 27(1):95--100, 1968.

\bibitem{greub2012linear}
Werner~H Greub.
\newblock {\em Linear algebra}, volume~23.
\newblock Springer Science \& Business Media, 2012.

\bibitem{hell1990complexity}
Pavol Hell and Jaroslav Ne{\v{s}}et{\v{r}}il.
\newblock On the complexity of h-coloring.
\newblock {\em Journal of Combinatorial Theory, Series B}, 48(1):92--110, 1990.

\bibitem{idziak2010tractability}
Pawe{\L} Idziak, Petar Markovi{\'c}, Ralph McKenzie, Matthew Valeriote, and
  Ross Willard.
\newblock Tractability and learnability arising from algebras with few
  subpowers.
\newblock {\em SIAM Journal on Computing}, 39(7):3023--3037, 2010.

\bibitem{istinger1979characterization}
M~Istinger and HK~Kaiser.
\newblock A characterization of polynomially complete algebras.
\newblock {\em Journal of Algebra}, 56(1):103--110, 1979.

\bibitem{jeavons1998algebraic}
Peter Jeavons.
\newblock On the algebraic structure of combinatorial problems.
\newblock {\em Theoretical Computer Science}, 200(1-2):185--204, 1998.

\bibitem{Num20}
Peter Jeavons, David Cohen, and Marc Gyssens.
\newblock Closure properties of constraints.
\newblock {\em J. ACM}, 44(4):527--548, July 1997.

\bibitem{Num22}
Peter~G. Jeavons and Martin~C. Cooper.
\newblock Tractable constraints on ordered domains.
\newblock {\em Artificial Intelligence}, 79(2):327--339, 1995.

\bibitem{agnes}
K.~A. Kearnes and \'A. Szendrei.
\newblock Clones of algebras with parallelogram terms.
\newblock {\em Internat. J. Algebra Comput.}, 22, 2012.

\bibitem{Num23}
Lefteris~M. Kirousis.
\newblock Fast parallel constraint satisfaction.
\newblock {\em Artificial Intelligence}, 64(1):147--160, 1993.

\bibitem{VCSPDichotomy}
Vladimir Kolmogorov, Andrei Krokhin, and Michal Rolinek.
\newblock The complexity of general-valued csps.
\newblock {\em SIAM Journal on Computing}, 46(3):1087--1110, 2017.

\bibitem{kozik2016weak}
Marcin Kozik.
\newblock Weak consistency notions for all the csps of bounded width.
\newblock In {\em 2016 31st Annual ACM/IEEE Symposium on Logic in Computer
  Science (LICS)}, pages 1--9. IEEE, 2016.

\bibitem{lau}
D.~Lau.
\newblock {\em Function algebras on finite sets}.
\newblock Springer, 2006.

\bibitem{lausch2000algebra}
Hans Lausch and Wilfred Nobauer.
\newblock {\em Algebra of polynomials}, volume~5.
\newblock Elsevier, 2000.

\bibitem{Num26}
A.~K. Mackworth.
\newblock Consistency in networks of relations.
\newblock {\em Artificial Intelligence}, 8(1):99--118, 1977.

\bibitem{miklos}
M.~Mar{\'o}ti and R.~Mckenzie.
\newblock Existence theorems for weakly symmetric operations.
\newblock {\em Algebra universalis}, 59(3--4):463--489, 2008.

\bibitem{maroti2011tree}
Mikl{\'o}s Mar{\'o}ti.
\newblock Tree on top of malcev.
\newblock {\em Manuscript, available at http://www. math. u-szeged.
  hu/mmaroti/pdf/200x\% 20Tree\% 20on\% 20top\% 20of\% 20Maltsev. pdf}, 2011.

\bibitem{QC2017}
Barnaby Martin.
\newblock {Quantified Constraints in Twenty Seventeen}.
\newblock In Andrei Krokhin and Stanislav Zivny, editors, {\em The Constraint
  Satisfaction Problem: Complexity and Approximability}, volume~7 of {\em
  Dagstuhl Follow-Ups}, pages 327--346. Schloss Dagstuhl--Leibniz-Zentrum fuer
  Informatik, Dagstuhl, Germany, 2017.

\bibitem{Num30}
U.~Montanari.
\newblock Networks of constraints: Fundamental properties and applications to
  picture processing.
\newblock {\em Information Sciences}, 7:95--132, 1974.

\bibitem{bodirsky2016dichotomy}
Antoine Mottet and Manuel Bodirsky.
\newblock A dichotomy for first-order reducts of unary structures.
\newblock {\em Logical Methods in Computer Science}, 14, 2018.

\bibitem{Post}
E.~L. Post.
\newblock {\em The {T}wo-{V}alued {I}terative {S}ystems of {M}athematical
  {L}ogic}.
\newblock Annals of Mathematics Studies, no. 5. Princeton University Press,
  Princeton, N. J., 1941.

\bibitem{rosmax}
I.~Rosenberg.
\newblock \"uber die funktionale vollst\"andigkeit in den mehrwertigen logiken.
\newblock {\em Rozpravy \v{C}eskoslovenske Akad. V\v{e}d., Ser. Math. Nat.
  Sci.}, 80:3--93, 1970.

\bibitem{Schaefer}
Thomas~J. Schaefer.
\newblock The complexity of satisfiability problems.
\newblock In {\em Proceedings of the Tenth Annual ACM Symposium on Theory of
  Computing}, STOC '78, pages 216--226, New York, NY, USA, 1978. ACM.

\bibitem{VCSPIntroduction}
Thomas Schiex, Helene Fargier, Gerard Verfaillie, et~al.
\newblock Valued constraint satisfaction problems: Hard and easy problems.
\newblock {\em IJCAI (1)}, 95:631--639, 1995.

\bibitem{RossSlides}
Ross Willard.
\newblock Similarity, critical relations, and zhuk's bridges.
\newblock AAA98.Arbeitstagung Allgemeine Algebra - 98-th Workshop on General
  Algebra. Dresden, Germany, 2019.

\bibitem{mybook}
Dmitriy Zhuk.
\newblock {\em The lattice of closed classes of self-dual functions in
  three-valued logic}.
\newblock Izdatelstvo MGU, 2011.
\newblock (in Russian).

\bibitem{dm_post}
Dmitriy Zhuk.
\newblock The predicate method to construct the {Post} lattice.
\newblock {\em Discrete Mathematics and Applications}, 21(3):329--344, 2011.

\bibitem{MinimalClones}
Dmitriy Zhuk.
\newblock The cardinality of the set of all clones containing a given minimal
  clone on three elements.
\newblock {\em Algebra Universalis}, 68(3--4):295--320, 2012.

\bibitem{VardiProblem}
Dmitriy Zhuk.
\newblock The existence of a near-unanimity function is decidable.
\newblock {\em Algebra Universalis}, 71(1):31--54, 2014.

\bibitem{mvlsc}
Dmitriy Zhuk.
\newblock The lattice of all clones of self-dual functions in three-valued
  logic.
\newblock {\em Journal of Multiple-Valued Logic and Soft Computing},
  24(1--4):251--316, 2015.

\bibitem{KeyRelations}
Dmitriy Zhuk.
\newblock Key (critical) relations preserved by a weak near-unanimity function.
\newblock {\em Algebra Universalis}, 77(2):191--235, 2017.

\bibitem{zhuk2018modification}
Dmitriy Zhuk.
\newblock A modification of the csp algorithm for infinite languages.
\newblock {\em arXiv preprint arXiv:1803.07465}, 2018.

\bibitem{QCSPMonsters}
Dmitriy Zhuk and Barnaby Martin.
\newblock Qcsp monsters and the demise of the chen conjecture.
\newblock {\em arXiv preprint arXiv:1907.00239}, 2019.

\end{thebibliography}


\end{document}